\crefname{assumption}{Assumption}{Assumptions}
\Crefname{assumption}{Assumption}{Assumptions}
\renewcommand{\ge}{\geqslant}
\renewcommand{\le}{\leqslant}
\renewcommand{\geq}{\geqslant}
\renewcommand{\leq}{\leqslant}
\tikzset{grid/.style={gray!30,very thin}}
\tikzset{axis/.style={gray!50,->,>=stealth'}}
\tikzset{convex/.style={draw=none,fill=lightgray,fill opacity=0.7}}
\tikzset{convexborder/.style={very thick}}
\tikzset{point/.style={blue!50}}
\tikzset{hs/.style={fill opacity=0.3,fill=orange,draw=none}}
\tikzset{hsborder/.style={orange,ultra thick,dashdotted}}
\newcommand{\overbar}[1]{\mkern 1.5mu\overline{\mkern-1.5mu#1\mkern-1.5mu}\mkern 1.5mu}
\newcommand{\Z}{\mathbb{Z}}
\newcommand{\N}{\mathbb{N}} 
\newcommand{\R}{\mathbb{R}}    
\newcommand{\Q}{\mathbb{Q}}
\newcommand{\trop}{\bar{\R}}
\newcommand{\dgraph}{\mathscr{G}}
\newcommand{\gameval}{\chi}
\newcommand{\set}{\mathscr{S}}
\newcommand{\setp}{\mathscr{S}'}
\newcommand{\setI}{\mathscr{S}_1}
\newcommand{\setII}{\mathscr{S}_2}
\newcommand{\dominion}{\mathscr{D}}
\newcommand{\dominionp}{\mathscr{D}'}
\newcommand{\dominionI}{\mathscr{D}_1}
\newcommand{\dominionII}{\mathscr{D}_2}
\newcommand{\domMaxVal}{{\mathscr{D}_{\max}}}
\newcommand{\edges}{\mathscr{E}}
\newcommand{\vertexset}{\mathscr{V}}
\newcommand{\vertexsetD}{{\mathscr{V}_D}}
\newcommand{\vertexsetT}{{\mathscr{V}_T}}
\newcommand{\vertexsetP}{{\mathscr{V}_P}}
\newcommand{\vertexsetMin}{{\mathscr{V}_{\mathrm{Min}}}}
\newcommand{\vertexsetMax}{{\mathscr{V}_{\mathrm{Max}}}}
\newcommand{\vertexsetNat}{{\mathscr{V}_{\mathrm{Nat}}}}
\newcommand{\vertexsubset}[2][]{\mathscr{V}_{\mathrm{#1}}^{#2}}
\newcommand{\vertexsubsetit}[2][]{\mathscr{V}_{#1}^{#2}}
\newcommand{\setAlgo}{\mathscr{S}}
\newcommand{\setpAlgo}{\mathscr{S}'}
\newcommand{\dominionAlgo}{\mathscr{D}}
\newcommand{\shapley}{F}
\newcommand{\bias}{u}
\newcommand{\mybot}{\mathbf{b}}
\newcommand{\mytop}{\mathbf{t}}
\newcommand{\Extend}{\textsc{Extend}}
\newcommand{\TopClass}{\textsc{TopClass}}
\newcommand{\dunion}{\uplus}
\newcommand{\supnorm}[1]{\|#1\|_{\infty}}
\newcommand{\abs}[1]{|{#1}|}
\newcommand{\card}[1]{|{#1}|}
\newcommand{\pert}{\lambda}
\newcommand{\zero}{{-\infty}}
\newcommand{\comd}{M}
\newcommand{\rstates}{s}
\newcommand{\Payoff}{W}
\newcommand{\transpose}{\intercal}
\newcommand{\ind}{\mathbf{1}}
\newcommand{\polyh}{\mathcal{W}}
\newcommand{\conv}{\mathrm{conv}}
\newcommand{\cone}{\mathrm{cone}}
\newcommand{\ie}{i.e.}
\newcommand*\maxstate[1]{\tikz[baseline=(char.base)]{
            \node[shape=circle,draw,inner sep=1.1pt,minimum size=0.5cm] (char) {#1};}}
\newcommand*\minstate[1]{\tikz[baseline=(char.base)]{
            \node[shape=rectangle,draw,inner sep=2.1pt,minimum size=0.5cm] (char) {#1};}}
\newcommand{\ec}{\operatorname{erg}}
\newcommand{\specrad}{\rho}
\newcommand{\minstates}{n}
\newcommand{\lcw}{\underline{\mathrm{cw}}}
\newcommand{\ucw}{\overline{\mathrm{cw}}}
\newcommand{\unitvector}{\mathbf{1}}
\newcommand{\zerovector}{\mathbf{0}}
\newcommand{\defi}{\coloneqq}
\newcommand{\oneto}[1]{[#1]}
\newcommand{\HNorm}[1]{{\| #1 \|}_{\rm{H}}}
\newcommand{\states}{n}
\newcommand{\transition}{P}
\newcommand{\nstates}{n}
\newcommand{\invmes}{\pi}
\newcommand{\enc}[1]{\langle #1 \rangle}
\newcommand{\Enc}{\phi}
\newcommand{\apshapley}{\tilde{\shapley}} \newcommand{\mshapley}{T}
\newcommand{\Weights}{W} \newcommand{\eweight}{m}
\newcommand{\mgameval}{V^\infty}
\newcommand{\recshapley}{\hat{\shapley}}
\newcommand{\revproj}{\mathbf{i}}
\newcommand{\proj}{\mathbf{p}}
\DeclareMathOperator{\separ}{sep}
\newtheorem{theorem}{Theorem}
\newtheorem{proposition}{Proposition}
\newtheorem{corollary}{Corollary}
\newtheorem{lemma}{Lemma}
\newtheorem{assumption}{Assumption}
\newtheorem{definition}{Definition}
\theoremstyle{remark}
\newtheorem{remark}{Remark}
\newtheorem{example}{Example}
\begin{document}

\title{Universal Complexity Bounds Based on Value Iteration for Stochastic Mean Payoff Games and Entropy Games}

\author[1]{Xavier Allamigeon}
\ead{xavier.allamigeon@inria.fr}

\author[1]{Stéphane Gaubert\corref{cor1}}
\ead{stephane.gaubert@inria.fr}

\author[2]{Ricardo D.~Katz}
\ead{katz@cifasis-conicet.gov.ar}

\author[3]{Mateusz Skomra}
\ead{mateusz.skomra@laas.fr}

\cortext[cor1]{Corresponding author}

\affiliation[1]{organization={INRIA and CMAP, \'Ecole polytechnique, IP Paris, CNRS},
city = {Palaiseau},
country = {France}}

\affiliation[2]{organization={CIFASIS-CONICET},
addressline = {Bv. 27 de Febrero 210 bis},
postcode = {2000},
city = {Rosario},
country = {Argentina}}

\affiliation[3]{organization={LAAS-CNRS, Universit\'e de Toulouse, CNRS},
city = {Toulouse},
country = {France}}

\begin{abstract}
  We develop value iteration-based algorithms to solve in a unified
  manner different classes of combinatorial zero-sum games with
  mean-payoff type rewards.  These algorithms rely on an oracle,
  evaluating the dynamic programming operator up to a given precision.
  We show that the number of calls to the oracle needed to determine
  exact optimal (positional) strategies is, up to a factor polynomial
  in the dimension, of order $R/\operatorname{sep}$, where the
  ``separation'' $\operatorname{sep}$ is defined as the minimal
  difference between distinct values arising from strategies, and $R$
  is a metric estimate, involving the norm of approximate sub and
  super-eigenvectors of the dynamic programming operator.  We
  illustrate this method by two applications. The first one is a new
  proof, leading to improved complexity estimates, of a theorem of
  Boros, Elbassioni, Gurvich and Makino, showing that turn-based mean-payoff games with a fixed number of random positions can be solved
  in pseudo-polynomial time.  The second one concerns entropy games, a
  model introduced by Asarin, Cervelle, Degorre, Dima, Horn and
  Kozyakin.  The {\em rank} of an entropy game is defined as the
  maximal rank among all the ambiguity matrices determined by
  strategies of the two players.  We show that entropy games with a
  fixed rank, in their original formulation, can be solved in
  polynomial time, and that an extension of entropy games
  incorporating weights can be solved in pseudo-polynomial time under
  the same fixed rank condition.
  \end{abstract}

\begin{keyword}
Mean-payoff games \sep entropy games \sep value iteration \sep Perron root \sep separation bounds \sep parameterized complexity
\end{keyword}

\maketitle

\section{Introduction}
\subsection{Motivation}
Deterministic and turn-based stochastic Mean-Payoff games
are fundamental classes of games
with an unsettled complexity.
They belong
to the complexity class NP $\cap$ coNP~\cite{condon,karzanovlebedev,zwick_paterson}
but they are not known
to be polynomial-time solvable. Various algorithms have
been developed and analyzed.  The pumping
algorithm is a pseudo-polynomial iterative scheme
introduced by Gurvich, Karzanov and Khachiyan~\cite{gurvich}
to solve the optimality equation of deterministic mean-payoff games.
Zwick and Paterson~\cite{zwick_paterson}
derived peudo-polynomial bounds for the same games
by analyzing value iteration.  Friedmann showed
that policy iteration,  originally introduced by Hoffman and Karp in the
setting of zero-sum games~\cite{hoffman_karp}, and albeit being experimentally fast on typical instances, is generally exponential~\cite{friedmanncex}.  \todo{add other algorithms}
We refer to the survey of~\cite{andersson_miltersen} and to the recent
collective synthesis~\cite{collectivegraphs} for more
information and additional references.

Entropy games have been introduced by Asarin et al.~\cite{asarin_entropy}. 
They are combinatorial games, in which one player, called Tribune,
wants to maximize a topological entropy, whereas its opponent, called Despot,
wishes to minimize it. This topological entropy quantifies the freedom
of a half-player, called People. 
Although the formalization of entropy games is recent,
specific classes or variants of entropy games appeared earlier in
several fields, including the control of branching processes, population dynamics and growth maximization~\cite{Sladky1976,rothblumwhittle,rothblum,zijmjota}, 
risk sensitive control~\cite{Howard-Matheson,anantharam}, mathematical finance~\cite{Akian2001}, or matrix multiplication games~\cite{asarin_entropy}. 
Asarin et al.\ showed that entropy games also belong to the class NP $\cap$ coNP. Akian et al.\ showed
in~\cite{entropygamejournal} that entropy games reduce to ordinary stochastic mean-payoff games with infinite action spaces (actions consist of probability
measures and the payments are given by relative entropies), and deduced
that the subclass of entropy games in which Despot
has a fixed number of significant positions (positions with a non-trivial choice)
can be solved in polynomial time. The complexity of entropy games
without restrictions on the number of (significant) Despot positions
is an open problem.

\subsection{Main Results}
We develop value iteration-based algorithms
to solve in a unified manner different classes of combinatorial zero-sum games with
mean-payoff type rewards. These algorithms rely on an oracle,
evaluating approximately the dynamic programming operator of the game.
Our main results include universal estimates, providing explicit bounds
for the error of approximation of the value, as a function of two
characteristic quantities, of a metric nature. The first one
is the {\em separation} $\separ$, defined as the minimal difference
between distinct values induced by (positional) strategies. 
The second one, $R$, is defined in terms of the norm of approximate
sub and super-optimality certificates. These certificates
are vectors, defined as sub or super-solutions of non-linear eigenproblems.
For games such that the mean payoff is independent of the initial state,
we show that we can decide the winner in $O(R/\separ)$ iterations (\Cref{Th:NitsBound,Th:NitsBound:approx}). Morever,
an $\epsilon$-approximation of the mean payoff can be
computed in $O(R/\epsilon)$ iterations (\Cref{pr:approx_constant_value}).
Then, for concrete classes of games with finite action spaces, we obtain (exact) optimal strategies in a number of calls to the oracle bounded by the ratio $R/\separ$, up to a factor polynomial in the number
of states.\todo[color=red!30]{RK : [previously quoted : see \cref{Th:NitsBound,topclass_oracle,cor-perturb}] Are these the right results to refer to here? SG: no, thanks, I changed the discussion and changed the references, check.}
More precisely, we provide the two following applications of this principle.

The first application is a new
proof of an essential part of the theorem of Boros, Elbassioni,
   Gurvich and Makino~\cite{boros_gurvich_makino}, showing that turn-based stochastic mean-payoff games with
   a fixed number of random positions can be solved in pseudo-polynomial time.
   The original proof relies on a deep analysis of a generalization
   to the stochastic case of the ``pumping algorithm'' of~\cite{gurvich}. Our analysis of value iteration leads to improved complexity estimates,
   see~\Cref{cor-perturb} and~\Cref{finding_top_class_complexity0}.\todo[color=red!20]{SG: added specific references}
Indeed, we bound the characteristic numbers $R$ and $\separ$ in a tight way, by exploiting bit-complexity estimates for the solutions
of Fokker--Planck and Poisson-type equations of discrete Markov chains,
see especially~\Cref{th-bias} showing the existence of a ``short'' bias.\todo[color=red!20]{SG: also added this ref}

   The second application concerns entropy games. Let us recall that in such a game, the value of a pair of (positional)
   strategies of the two players is given by the Perron root of a certain principal
   submatrix of a nonnegative matrix, which we call the {\em ambiguity matrix}, as it measures
   the number of nondeterministic choices of People.  We show that entropy
   games with a fixed rank, and in particular, entropy games with a fixed
   number of People's states, can be solved in pseudo-polynomial time;  see \cref{cor-fp}. These results concern the extended model of entropy games introduced
   in~\cite{entropygamejournal}, taking into account weights. Then, entropy
   games in the sense of~\cite{asarin_entropy}
   (implying a unary encoding of weights)
   that have a fixed rank can be solved in polynomial time. These
   results rely on separation bounds for algebraic numbers arising
   as the eigenvalues of integer matrices with a fixed rank.

   \subsection{Related Work}
   Value iteration is a fundamental technique in game theory, we refer the reader to the survey of Chatterjee and Henzinger for a general perspective~\cite{Chatterjee2008}.
   The idea of applying value iteration to analyze the complexity of deterministic
   mean-payoff games goes back to the classical
   work of  Zwick and Paterson~\cite{zwick_paterson}.
   In some sense, the present approach extends this
   idea to more general classes of games. 
When specialized to stochastic
mean-payoff games with perfect information, our bounds should be compared with
the ones of Boros, Elbassioni, Gurvich, and Makino~\cite{boros_gurvich_makino,boros_gurvich_makino_convex}.
The authors of~\cite{boros_gurvich_makino} generalize the 
``pumping'' algorithm, developed for deterministic games by
Gurvich, Karzanov, and Khachiyan~\cite{gurvich}, to the case of stochastic games. The resulting algorithm is
also pseudopolynomial
if the number of random positions is fixed,
see \cref{rk-compar-gurvich} for a detailed comparison.
The algorithm of Ibsen-Jensen and Miltersen~\cite{ibsen-jensen_miltersen}
yields a stronger bound in the case of simple stochastic games,
still assuming that the number of random positions
is fixed. A different approach, based on an analysis
of strategy iteration, was developed by Gimbert and Horn~\cite{gimbert_horn}
and more recently by Auger, Badin de Montjoye and Strozecki~\cite{auger}.
Eisentraut, Kelmendi, K\v{r}et\'\i nsk\'y and Wininger
developed a value iteration algorithm with guaranteed anytime
upper and lower bounds,
to solve simple stochastic games~\cite{Eisentraut2022}.
The value iteration algorithm for concurrent mean-payoff games, under
an ergodicity condition, has been studied by Chatterjee and Ibsen-Jensen~\cite{chatterjee_ibsen-jensen}.
Theorem~18 there gives a $O(|\log\epsilon|/\epsilon)$ bound
for the number of iterations needed to get an $\epsilon$-approximation
of the mean payoff. When specialized to this case, \cref{pr:approx_constant_value} below
improves this bound by a factor of $|\log \epsilon|$, see \cref{rk-concurrent}.

Another approach to compute the mean payoff is based on {\em relative value iteration} combined with Krasnoselskii--Mann damping~\cite{baillonbruck}, see~\cite{stott2020}. Corollary~14 of \cite{stott2020}
entails that as soon as the ergodic eigenproblem is solvable,
we can get an $\epsilon$-approximation of the mean payoff
as well as an approximate optimality certificate
in $O(1/\epsilon^2)$ iterations.
In contrast, the present \cref{pr:approx_constant_value} does not require
this eigenproblem to be solvable (we only need the mean payoff to be independent
of the initial state), while getting bound of $O(1/\epsilon)$.
Under a restrictive
assumption, requiring every pair of policies of the two players
to yield a unichain transition matrix, this bound
has been recently refined in~\cite{akian2023} to $O(|\log\epsilon|)$.\todo[color=red!30]{SG: reference to \cite{akian2023} added following suggestion of RK} 
See \cref{rk-comparewithkm,rk-concurrent} for a more detailed
comparison with relative Krasnoselskii--Mann iteration.
We build on the operator approach for zero-sum games, see~\cite{bewley_kohlberg,Ney03,RS01}. Our study of entropy games is inspired
by the works of Asarin et al.~\cite{asarin_entropy}
and Akian et al.~\cite{entropygamejournal}.
We rely on the existence of optimal
positional strategies for entropy games, established
in~\cite{entropygamejournal} by an o-minimal geometry approach~\cite{bolte2013},
and also on results of non-linear Perron--Frobenius theory,
especially the Collatz--Wielandt variational formulation
of the escape rate of an order preserving and additively
homogeneous mapping~\cite{nussbaum,gaubert_gunawardena,polyhedra_equiv_mean_payoff,agn}.

The present work, providing complexity bounds based on value iteration,
grew out from an effort to understand the surprising
speed of value iteration on random stochastic games examples arising
from tropical geometry~\cite{issac2016jsc}, by investigating suitable notions of condition numbers~\cite{mtns2018}.
An initial version of some of the present results (concerning turn-based
stochastic games) appeared in the PhD thesis of one of the authors~\cite{skomra_phd}, moroever, an announcement of the present results appeared in the conference paper~\cite{icalp2022}.%
\subsection{Organization of the Paper}
In \cref{sec-preliminaries} we recall the definitions and basic properties of turn-based stochastic mean-payoff games and entropy games, and also key notions in the ``operator approach'' of zero-sum games, including the Collatz--Wielandt optimality certificates.

The universal complexity bounds based on value iteration
are presented in \cref{sec-complexity}. First, we deal
with games whose value is independent of the initial
state, and then, we extend these results to determine
the set of initial states with a maximal value.

The applications to turn-based stochastic mean-payoff
games and to entropy games are provided in \cref{sec:appl_smpg} and \cref{sec-entropy}. %

\section{Preliminaries on Dynamic Programming Operators and Games}\label{sec-preliminaries}

\subsection{Introducing Shapley Operators: The Example of Stochastic Turn-Based Zero-Sum Games}\label{subsec-introducing}
Shapley operators are the two-player version of  Bellman operators (a.k.a.~dynamic programming or one-day operators) which are classically used to study Markov decision processes. In this section we introduce the simplest example of Shapley operator,
arising from stochastic turn-based zero-sum games.

A \emph{stochastic turn-based zero-sum game} is a game played on a digraph $(\vertexset,\edges)$ in which the set of vertices
$\vertexset$ has a non-trivial partition: $\vertexset=\vertexsetMin \dunion \vertexsetMax \dunion \vertexsetNat$. There are two players, called {\em Min} and {\em Max}, and a half-player, {\em Nature}. The sets $\vertexsetMin$, $\vertexsetMax$ and  $\vertexsetNat$ represent the sets of states in which Min, Max, and Nature respectively play. The set of edges $\edges$ represents the allowed moves. We assume $\edges\subset
\vertexsetMin \times \vertexsetMax \cup \vertexsetMax \times \vertexsetNat \cup \vertexsetNat \times \vertexsetMin$, meaning that Min, Max, and Nature
alternate their moves. More precisely, a turn consists of three successive moves: when the current state is $j \in \vertexsetMin$, Min selects
and edge $(j,i)$ in $\edges$ and the next state
is $i \in \vertexsetMax$. Then, Max selects an edge $(i,k)$ in $\edges$
and the next state is $k \in \vertexsetNat$. Next, Nature chooses an edge $(k,j')\in \edges$ and the next state is $j' \in \vertexsetMin$. This
process can be repeated, alternating moves of Min, Max, and Nature.

We make the following assumption.
\begin{assumption}\label{assum-fin}\label{as:entropy_nondeg}
 Each player
 has at least one available action in each state in which he has\todo{MS: Can we change this to ``they have to play'' for more gender neutrality? SG. I am happy with both writings.} to play,
i.e., for all $j \in \vertexsetMin, i\in\vertexsetMax$, and $k\in \vertexsetNat$,
the sets $\{i' \colon (j,i')\in \edges\}$, $\{k' \colon (i,k')\in \edges\}$ and $\{j' \colon (k,j')\in \edges\}$ are non-empty.
\end{assumption}
Furthermore, every state $k \in \vertexsetNat$ controlled by Nature is equipped with a probability distribution on its outgoing edges, i.e., we are given a vector $(P_{kj})_{j \in \vertexsetMin}$ with rational entries such that $P_{kj} \ge 0$ for all $j$ and $\sum_{(k,j) \in \edges} P_{kj} = 1$. We suppose that Nature makes its decisions according to this probability distribution, i.e., it chooses an edge $(k,j)$ with probability $P_{kj}$. Moreover, 
an integer $A_{ij}$ is associated with each edge $(j,i)$ in $\edges\cap(\vertexsetMin \times \vertexsetMax)$, and an integer $B_{ik}$ is associated with each edge $(i,k)$ in $\edges\cap(\vertexsetMax \times \vertexsetNat)$. These integers encode the payoffs of the game in the following way: if the current state of the game is $j \in \vertexsetMin$ and Min selects the edge $(j,i)$, then Min pays to Max the amount $-A_{ij}$. Similarly, if the current state of the game is $i \in \vertexsetMax$ and Max selects the edge $(i,k)$, then Max receives from Min the payment $B_{ik}$.

We first consider the {\em game in horizon $N$}, in which each of the two players Min and Max
makes $N$ moves, starting from a known initial state, which by convention we require to be controlled
by Min. In this setting, a \emph{history} of the game consists of the sequence of states visited
up to a given stage. A (deterministic) \emph{strategy} of a player is a function which assigns to a history of
the game a decision of this player. A pair of strategies $(\sigma,\tau)$ of players
Min and Max induces a probability measure on the set of finite sequences of states. Then, the
expected reward of Max, starting from the initial position $j_0 \in \vertexsetMin$, is defined
by
\(
R_{j_0}(\sigma,\tau) \coloneqq \mathbb{E}_{\sigma \tau} \Bigl(\sum_{p = 0}^{N-1} (-A_{i_p j_p} + B_{i_p k_p}) \Bigr)%
\),
in which the expectation $\mathbb{E}_{\sigma,\tau}$ refers to the probability measure induced by $(\sigma,\tau)$, and
$j_0, i_0, k_0, j_1, i_1, k_1, \dots$ is the random sequence of states visited
when applying this pair of strategies. The objective of Max is to maximize this reward, while Min wants to minimize it. The game in horizon $N$ starting from state $j_0 \in \vertexsetMin$ is known
to have a {\em value} $v^N_{j_0}$ and optimal strategies $\sigma^*$ and $\tau^*$,
meaning that
\(
R_{j_0}(\sigma^*, \tau) \leq v^N_{j_0} \coloneqq R_{j_0}(\sigma^*, \tau^*)  \leq R_{j_0}(\sigma, \tau^*) 
\)
for all strategies $\sigma$ of Min and $\tau$ of Max.
The {\em value vector} $v^N \coloneqq (v^N_j)_{j\in \vertexsetMin}$ keeps track
of the values of all initial states. A classical dynamic programming
argument, see e.g.~\cite[Th.~IV.3.2]{sorin_repeated_games}, shows that
\(
v^0 = \zerovector,  v^N=\shapley(v^{N-1}) \),
where $\zerovector$  denotes the vector that has all entries equal to $0$ and the {\em Shapley operator} $\shapley$ is the map from $\R^\vertexsetMin$ to $\R^\vertexsetMin$ defined by
\begin{align}
\shapley_j(x) \coloneqq \min_{(j,i) \in \edges} \Bigl(-A_{ij}+ \max_{(i,k) \in \edges}\bigl(B_{ik}+ \sum_{(k,l) \in \edges} P_{kl} x_l\bigr)\Bigr), \text{ for all } j\in \vertexsetMin \, .\label{e-elemshapley}
\end{align}
\Cref{assum-fin} guarantees that $\shapley$ is well defined.

One can also consider the stochastic {\em mean-payoff} game with initial state $j_0$, in which the payment $g_{j_0}(\sigma,\tau)$ received by Max becomes the limiting average of the sum of instantaneous payments, i.e.,
\begin{align}
g_{j_0}(\sigma, \tau) \defi \liminf_{N \to +\infty} \mathbb{E}_{\sigma \tau} \Bigl(\frac{1}{N} \sum_{p = 0}^{N-1} (-A_{i_p j_p} + B_{i_p k_p}) \Bigr) \, .\label{e-def-mp}
\end{align}
We say that a strategy is {\em positional} if the decision of the player depends
only of the current state.
A result of Liggett and Lippman~\cite{liggett_lippman} entails that the mean-payoff game with initial state $j_0$ has a value
$\gameval_{j_0}$, and that there exists a pair of
positional strategies $(\sigma^*, \tau^*)$
optimal for all choices of $j_0$, meaning
that 
\(
g_{j_0}(\sigma^*, \tau) \leq \gameval_{j_0} \coloneqq g_{j_0}(\sigma^*, \tau^*) \leq g_{j_0}(\sigma, \tau^*) \),
for every initial state ${j_0} \in \vertexsetMin$ and
for every pair of non-necessarily positional strategies $(\sigma, \tau)$ of players Min and Max. 
A result of Mertens and Neyman~\cite{mertens_neyman} entails in particular that the value of the mean-payoff game
coincides with the limit of the normalized value of the games in horizon $N$, i.e.,
\[
\gameval  = \lim_{N\to \infty} \frac{v^N}{N} = \lim_{N\to \infty}  \frac{\shapley^N(\zerovector)}{N} \enspace,
\]
where $\shapley^N
= \shapley\circ \dots\circ \shapley$ denotes the $N$th iterate of $\shapley$.

\begin{example}\label{ex-smpg}
  An example of stochastic mean-payoff game is shown in \cref{fig-ex-smp}.
  Here $\vertexsetMin = \{\minstate{1},\minstate{2},\minstate{3}\}$,
  and $\vertexsetMax = \{\maxstate{1},\maxstate{2},\maxstate{3}\}$. The
  states in which nature plays are represented by small diamond
  symbols: in each of these states, any of the two outgoing arcs
  is selected with probability $1/2$. The payments are represented on the arcs.
  The operator $F$ is given by\todo{SG: check, check in particular sign convention for payments}
  \begin{align*}
    F_1(x) &= - 1+ \max\big(2+ \frac{x_2+x_3}{2}, -3 + \frac{x_1+x_3}{2}\big),\\
    F_2(x) &= \min\Big( 5 + \max\big(2+ \frac{x_2+x_3}{2}, -3 + \frac{x_1+x_3}{2}\big),
    1+ 3 + \frac{x_1+x_2}{2}\Big)\enspace ,\\
    F_3(x) &= - 4+\frac{x_2+x_3}{2} \enspace .
  \end{align*}
  We shall compute the mean-payoff vector in \cref{ex-smpg-cont}.
  \end{example}
  
  \begin{figure}[htbp]
     \centering
     \begin{tikzpicture}[scale=0.7,>=stealth',max/.style={draw,rectangle,minimum size=0.5cm},min/.style={draw,circle,minimum size=0.5cm},av/.style={draw, diamond, inner sep = 0pt,minimum size = 0.2cm}]

\node[min] (min3) at (-5, 0.8) {$3$};
\node[min] (min1) at (5, 0.4) {$1$};
\node[min] (min2) at (1, 0.1) {$2$};

\node[max] (max1) at (0, -1.2) {$1$};
\node[max] (max2) at (-1, 2) {$2$};
\node[max] (max3) at (-1, 0) {$3$};

\node[av] (av13) at (-2.5,-0.5){};
\node[av] (av23) at (-2.5, 1){};
\node[av] (av12) at (3, 0.4){};

\node[av] (av22) at (0,0.9){};

\draw[->] (max3) to node[below=-0.2ex, font=\small]{$4$} (min2);
\draw[->] (max2) to[out = 170, in = 40] node[above left=-1ex, font=\small]{$-5$} (min3);
\draw[->] (max2) to[out = 25, in = 140] node[above, font=\small]{$-1$} (min1);

\draw[->] (min3) to node[above right=-0.7ex and -1.8ex, font=\small]{$-3$} (av13);
\draw[->] (min3) to node[above right, font=\small]{$2$} (av23);
\draw[->] (min1) to node[below, font=\small]{$3$} (av12);

\draw[->] (max1) to[out = 180, in = -90] node[above, font=\small]{$1$} (min3);

\draw[->] (av13) to (max1);
\draw[->] (av13) to (max3);
\draw[->] (av23) to (max2);
\draw[->] (av23) to (max3);
\draw[->] (av12) to[out=-120, in = 0] (max1);
\draw[->] (av12) to[out=130, in = 0] (max2);
\draw[->] (min2) to node[above right, font=\small]{$0$} (av22);
\draw[->] (av22) to (max2);
\draw[->] (av22) to (max3);

  \end{tikzpicture}
  \caption{A stochastic mean-payoff game.}\label{fig-ex-smp}
  \end{figure}
  
\begin{remark}\label{rk-key}
  In our model, players Min, Max, and Nature play successively,
  so that a turn decomposes in three stages,
  resulting in a Shapley operator of the form~\cref{e-elemshapley}.
  Alternative models, like the one of~\cite{boros_gurvich_makino},
  in which a turn consists of a single move, reduce
  to our model by adding linearly many dummy states, and rescaling the mean payoff by a factor $3$.
  We also refer to~\cite{andersson_miltersen} for more information on equivalent formulations of stochastic mean-payoff games.
\end{remark}

\subsection{The Operator Approach to Zero-Sum Games}
\label{sec:perron_frobenius}

We shall develop a general approach, which applies to various classes of zero-sum games
with a mean-payoff type payment. To do so, it is convenient to introduce an abstract version
of Shapley operators,  following the ``operator approach'' of stochastic games~\cite{RS01,Ney03}.
This will allow us to apply notions from nonlinear Perron--Frobenius theory, especially
sub and super eigenvectors, and Collatz-Wielandt numbers, which play a key
role in our analysis.
We set $[n]:=\{1,\dots,n\}$. We shall use the sup-norm $\| x \|_\infty \defi \max_{i\in \oneto{n}} |x_i|$, and also the {\em Hilbert's seminorm}~\cite{gaubert_gunawardena}, which is defined by $\HNorm{x} \defi \mytop(x) - \mybot(x)$, where $\mytop (x) \defi \max_{i\in \oneto{n}} x_i$ (read ``top'') and $\mybot (x) \defi \min_{i\in \oneto{n}} x_i$ (read ``bottom''). %
 We endow $\R$ with the standard order $\leq$, which is extended to vectors entrywise. 

A self-map $\shapley$ of $\R^n$ is said to be {\em order-preserving} when
\begin{align}
x\leq y \implies \shapley(x) \leq \shapley(y) \; \text{for all} \; x,y\in \R^n \, ,\label{e-order}
\end{align}
and {\em additively homogeneous} when 
\begin{align}
\shapley(\lambda  + x) =\lambda + \shapley(x) \; \text{for all} \; \lambda \in \R \; \text{and} \; x\in \R^n \, ,\label{e-hom}
\end{align}
where, for any $z \in \R^n$, $\lambda + z$ stands for the vector with entries $\lambda + z_i$. 

\begin{definition}\label{def-abstract-shapley}
  A self-map $\shapley$ of $\R^n$ is an \emph{(abstract) Shapley operator}
  if it is order-preserving and additively homogeneous.
\end{definition}
A basic example is provided by the Shapley operator of a
turn-based stochastic mean-payoff game~\cref{e-elemshapley}. Here, the
additive homogeneity
axiom captures the absence of discount. 
We shall see
in the next section a different example, arising from
entropy games.  

We point out that any order-preserving and additively homogeneous self-map $\shapley$ of $\R^n$ is nonexpansive in the sup-norm, meaning that
\[
\|\shapley(x) - \shapley(y)\|_\infty \leq \|x - y\|_\infty \; \text{for all} \; x,y\in \R^n \; .
\]
Using the nonexpansiveness property, we get that the existence and the value of the limit $\lim_{N \rightarrow \infty} (\shapley^N(x)/N)$ are independent of the choice of $x\in \R^n$. We call this limit the {\em escape rate} of $\shapley$, and denote it by $\chi(\shapley)$.
When $\shapley$ is the Shapley operator
of a turn-based stochastic mean-payoff game, fixing $x=\zerovector$, we see that $\shapley^N(x)$ coincides with the value vector in horizon $N$, and so
$\chi_j(\shapley)$ yields
the mean payoff when the initial state is $j$, consistently with our notation $\chi_j$ in \cref{subsec-introducing}. 

The escape rate is known to exist under some ``rigidity'' assumptions.
The case of semialgebraic maps is treated in~\cite{Ney03},
whereas the generalization to o-minimal structures (see~\cite{Dries98} for background),
which is needed in the application to entropy games, is established in~\cite{bolte2013}.
\begin{theorem}[\cite{Ney03} and~\cite{bolte2013}]\label{th:escape_omin}
  Suppose that the function $\shapley \colon \R^n\to \R^n$ is nonexpansive in any norm
  and that it is semialgebraic, or, more generally, defined
  in an o-minimal structure. Then, the escape rate $\chi(\shapley)$
  does exist. 
\end{theorem}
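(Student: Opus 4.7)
The plan is to establish existence of $\chi(\shapley)$ via a vanishing-discount argument that exploits definability, followed by a Tauberian-type transfer from the discounted to the ergodic limit using nonexpansiveness. First, I would introduce the family $\shapley_\epsilon(x) \coloneqq (1-\epsilon)\shapley(x)$ for $\epsilon \in (0,1]$. Nonexpansiveness of $\shapley$ (in some fixed norm, which we may take to be the sup-norm when $\shapley$ is also order-preserving and additively homogeneous) makes each $\shapley_\epsilon$ a strict contraction with factor $1-\epsilon$, so the Banach fixed-point theorem produces a unique fixed point $v_\epsilon \in \R^n$. Combining the fixed-point equation $v_\epsilon=(1-\epsilon)\shapley(v_\epsilon)$ with the crude bound $\|\shapley(v_\epsilon)\|_\infty \leq \|\shapley(0)\|_\infty + \|v_\epsilon\|_\infty$ yields the a priori estimate $\epsilon\|v_\epsilon\|_\infty \leq \|\shapley(0)\|_\infty$, so the rescaled vector $\epsilon v_\epsilon$ stays uniformly bounded as $\epsilon \to 0^+$.

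Next, because $\shapley$ is definable in the given o-minimal structure, the implicit curve $\epsilon \mapsto v_\epsilon$, cut out by the fixed-point equation, is also definable in the same structure. Applying the o-minimal monotonicity theorem componentwise to the bounded definable map $\epsilon \mapsto \epsilon v_\epsilon$ ensures that each coordinate is eventually monotonic near $0^+$, producing a vector limit
\[
\mu \coloneqq \lim_{\epsilon \to 0^+} \epsilon \, v_\epsilon \in \R^n \, .
\]
In Neyman's original semialgebraic setting, this step amounts to a Puiseux expansion of $v_\epsilon$ at $\epsilon = 0$ obtained via the curve selection lemma; the extension to arbitrary o-minimal structures of Bolte et al.\ is obtained by replacing curve selection with the monotonicity theorem.

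The main obstacle, and the heart of the proof, is the Tauberian-type step identifying $\mu$ with $\chi(\shapley)$. The key observation is that $v_\epsilon$ behaves as an approximate additive eigenvector of $\shapley$: the identity $\shapley(v_\epsilon) = v_\epsilon + \epsilon v_\epsilon/(1-\epsilon)$ shows that applying $\shapley$ once shifts $v_\epsilon$ by a vector approaching $\mu$. To upgrade this to the ergodic convergence $\shapley^N(x)/N \to \mu$, I would invoke a Kohlberg-style invariant half-line argument in the definable setting: the definability of $\epsilon \mapsto v_\epsilon$, together with the curve selection or monotonicity theorem, should allow one to select a base point $v \in \R^n$ such that the half-line $\{v + t\mu : t \geq 0\}$ is $\shapley$-invariant, giving $\shapley^N(v) = v + N\mu$ directly. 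Nonexpansiveness then transfers the convergence $\shapley^N(v)/N \to \mu$ to arbitrary initial points $x$, since $\|\shapley^N(x) - \shapley^N(v)\|_\infty \leq \|x - v\|_\infty$ for all $N$. The definability assumption is indispensable throughout to rule out oscillatory behavior of $\epsilon v_\epsilon$ as $\epsilon \to 0^+$ and thereby guarantee a well-defined ergodic slope.
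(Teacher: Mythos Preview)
The paper does not supply its own proof of this theorem; it is quoted from \cite{Ney03} and \cite{bolte2013} and used as a black box. So there is no paper-proof to compare against, and I assess your outline on its own merits.

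Your architecture --- introduce the discounted fixed point $v_\epsilon$, observe that $\epsilon\mapsto v_\epsilon$ is definable, and deduce that $\epsilon v_\epsilon$ converges to some $\mu$ via the monotonicity theorem --- is the standard route and is correct up to that point. The gap is in your Tauberian step. You propose to extract, from definability of the curve $\epsilon\mapsto v_\epsilon$, a base point $v$ such that the half-line $\{v+t\mu : t\ge 0\}$ is $\shapley$-invariant. This cannot work in general: the paper explicitly remarks, immediately after \cref{Theo:Kohlberg}, that for Shapley operators of entropy games --- which \emph{are} definable in an o-minimal structure, namely the real exponential field --- an invariant half-line may fail to exist. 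So definability alone does not manufacture the object you are hoping for; Kohlberg's half-line theorem genuinely relies on piecewise linearity, not mere definability. Even in cases where a half-line does happen to exist, neither curve selection nor the monotonicity theorem produces an \emph{affine} curve: they yield definable, eventually monotone curves, which is a much weaker conclusion.

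The actual transfer from $\mu=\lim_{\epsilon\to 0} \epsilon v_\epsilon$ to $\mu=\lim_{N} \shapley^N(0)/N$ in \cite{bolte2013} does not go through an invariant half-line. What o-minimality buys, beyond the bare limit, is regularity of the curve $\epsilon\mapsto v_\epsilon$ near $0$ (piecewise monotonicity, hence bounded variation of $\epsilon\mapsto\epsilon v_\epsilon$, and more generally a Hardy-field type asymptotic expansion). That regularity, combined with nonexpansiveness, is what allows one to compare $\shapley^N(0)$ with the discounted values along a suitable scale and conclude. You should replace the half-line step by an argument of this kind.
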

This applies in particular to Shapley operators of turn-based mean-payoff
games, since in this case the operator $\shapley$, given by~\cref{e-elemshapley},
is piecewise affine (meaning that its domain can be covered by finitely many polyhedra such that $F$ restricted to any of them is affine), and a fortiori semialgebraic.
 In the case of entropy games,
we shall see in the next section
that the relevant Shapley operator is defined by a finite expression involving the maps $\log$, $\exp$, as well as the arithmetic operations, and
so that it is definable in a richer structure, which is still o-minimal.
We emphasize that no knowledge of o-minimal techniques is needed to follow
the present paper, it suffices to admit that the escape rate
does exist for all the classes of maps considered here,
and this follows from \cref{th:escape_omin}.

When the map $\shapley$ is piecewise affine, a result finer than \cref{th:escape_omin} holds:

\begin{theorem}[\cite{kohlberg}]\label{Theo:Kohlberg}
A piecewise affine self-map $\shapley$ of $\R^n$ that is nonexpansive in any norm admits an {\em invariant half-line}, meaning that there exist $z, w\in \R^n$ such that 
\[
\shapley(z+\beta w) = z + (\beta +1) w
\]  
for any $\beta \in \R$ large enough. In particular, the escape rate $\chi(\shapley)$ exists, and is given by $w$.
\end{theorem}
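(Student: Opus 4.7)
The plan is to combine the piecewise affine structure of $\shapley$ with a discounted-perturbation trick. For $t \in (0, 1)$, the map $x \mapsto t\shapley(x)$ is a strict contraction in the sup-norm (since $\shapley$ is $1$-Lipschitz), and so by Banach's theorem has a unique fixed point $x_t \in \R^n$. I expect a Laurent-type expansion $x_t = w/(1-t) + z + o(1)$ as $t \to 1^-$, and the pair $(z, w)$ will yield the invariant half-line.

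First, I would localize $x_t$ in a single cell. Let $\R^n = P_1 \cup \dots \cup P_m$ be a polyhedral subdivision on whose (full-dimensional) cells $\shapley$ is affine, $\shapley|_{P_i}(x) = A_i x + b_i$. On the cell $P_{i(t)}$ containing $x_t$, the fixed-point equation reads $x_t = t(I - tA_{i(t)})^{-1} b_{i(t)}$, so the graph $\{(t, x_t) : t \in (0,1)\}$ is semialgebraic, and each $\{t : x_t \in P_i\}$ is a finite union of intervals. Since these sets cover $(0,1)$, one of them, say $\{t : x_t \in P_{i^\ast}\}$, contains an interval $(t_0, 1)$. Writing $A \coloneqq A_{i^\ast}$, $b \coloneqq b_{i^\ast}$, one has $x_t = t(I - tA)^{-1} b$ for $t \in (t_0, 1)$.

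Next, I would analyze $(I - tA)^{-1}$ as $t \to 1^-$. Nonexpansiveness of $\shapley$ in any norm transfers to $A$ in the corresponding operator norm (as $A$ is the differential of $\shapley$ on the interior of $P_{i^\ast}$), so $A$ is power-bounded. Hence its spectrum lies in the closed unit disk and its Jordan blocks at eigenvalue $1$ are trivial (otherwise $\|A^k\|$ would grow polynomially). Consequently $(I - tA)^{-1}$ has a simple pole at $t = 1$, yielding
\[
x_t = \frac{w}{1 - t} + z + O(1 - t),
\]
with $w = \Pi b \in \ker(I - A)$ and $z = (I - A)^{\dagger}(b - w)$, where $\Pi$ is the spectral projector onto $\ker(I - A)$ and $(I - A)^{\dagger}$ is the reduced resolvent on its complement. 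In particular $A w = w$ and $A z + b = z + w$.

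Finally, I would verify that $\{z + \beta w : \beta \geq \beta_0\} \subset P_{i^\ast}$ for some $\beta_0$. Writing $P_{i^\ast} = \{y : Cy \leq d\}$, the inclusion $x_t \in P_{i^\ast}$ with $x_t = z + \beta(t) w + o(1)$, $\beta(t) \coloneqq 1/(1-t)$, forces $(Cw)_j \leq 0$ for every row $j$ (a positive coordinate would violate $(Cx_t)_j \leq d_j$ for $t$ close to $1$), and a routine case analysis then shows $C(z + \beta w) \leq d$ for $\beta$ large. On this ray, $\shapley(z + \beta w) = A(z + \beta w) + b = (A z + b) + \beta A w = z + (\beta + 1) w$, as required; iteration yields $\chi(\shapley) = w$. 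The main obstacle is the spectral step, namely ruling out nontrivial Jordan blocks of $A$ at eigenvalue $1$: this is where nonexpansiveness in a well-chosen norm is essential, since in a norm adapted to the Jordan form a nontrivial block would force $\|A^k\| \to \infty$, contradicting $\|A\| \leq 1$.
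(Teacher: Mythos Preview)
The paper does not give its own proof of this theorem: it is quoted from Kohlberg's original article and used as a black box. So there is no ``paper's proof'' to compare against. That said, your sketch is essentially Kohlberg's own argument, and the paper even restates the theorem later (\cref{th:kohlberg}) in exactly the Laurent-series form you derive, with the discounted fixed point $x(\alpha)$ satisfying $\shapley\bigl((1-\alpha)x(\alpha)\bigr)=x(\alpha)$; after the substitution $t=1-\alpha$ this is precisely your equation $t\shapley(x_t)=x_t$.

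Your argument is correct in substance. A few small points to tighten:
\begin{itemize}
\item You invoke the sup-norm, but the hypothesis is only nonexpansiveness in \emph{some} norm $\|\cdot\|$; use that norm throughout (the contraction and the bound $\|A\|\le 1$ both hold there, which is all you need).
\item Your formula $z=(I-A)^{\dagger}(b-w)$ is off by a multiple of $w$ from the actual constant term of the expansion of $x_t$ (which is $-w+(I-A)^{\dagger}(b-w)$, since $x_t = t(I-tA)^{-1}b$ carries an extra factor $t=1-(1-t)$). This does not affect the conclusion, because any $z$ with $Az+b=z+w$ works and the family of such $z$ is stable under translation by $w$; but be consistent when you use the expansion to show $z+\beta w\in P_{i^\ast}$.
\item For the localization step, make sure you select a \emph{full-dimensional} cell $P_{i^\ast}$ (always possible), so that nonexpansiveness on its interior genuinely yields $\|A\|\le 1$ on all of $\R^n$ and not just on a subspace.
\end{itemize}
With these adjustments the proof goes through, and it is the same route as in the cited reference.
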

This entails that $F^N(z+\beta w) = z+(\beta +N) w$, and so,
by nonexpansiveness of $F$, for all $x\in \R^n$,
$\shapley^N(x)=N\chi(\shapley) + O(1)$ as $N\to \infty$.
This expansion is more precise than \cref{th:escape_omin}, which only states
that $\shapley^N(x)= N\chi(\shapley)+o(N)$.

For a general order-preserving and additively homogeneous self-map of $\R^n$,
and in particular, for the Shapley operators of the entropy games considered
below, an invariant half-line may not exist.
However, we can still recover information about the sequences $(\shapley^N(x)/N)_N$ through non-linear
spectral theory methods. Assuming that $\shapley$ is an order-preserving and additively homogeneous self-map of $\R^n$, the {\em upper Collatz--Wielandt number} of $\shapley$ is defined by:
\begin{equation}\label{DefUCW}
\ucw (\shapley) \defi \inf \{ \mu \in \R \colon \exists z \in \R^n , \shapley(z)\leq \mu + z \} \, ,
\end{equation}
and the {\em lower Collatz--Wielandt number} of $\shapley$ by:
\begin{equation}\label{DefLCW}
\lcw (\shapley) \defi \sup \{ \mu \in \R \colon \exists z \in \R^n , \shapley(z)\geq \mu + z \} \, .
\end{equation}
It follows from Fekete's subadditive lemma that the
limits $\lim_{N\to \infty} \mytop (\shapley^N(\zerovector)/N)$
and $\lim_{N\to \infty} \mybot (\shapley^N(\zerovector)/N)$, which
may be thought of as upper and lower regularizations
of the escape rate, always exist, see~\cite{gaubert_gunawardena}.
In the examples of interest to us, the escape rate
$\chi(\shapley)$ does exist, it represents the mean-payoff vector,
and then $\lim_{N\to \infty} \mytop (\shapley^N(\zerovector)/N)= \mytop (\chi(\shapley))=\max_j \chi_j(\shapley)$
is the maximum of the mean payoff among all the initial states. Similarly,
$\lim_{N\to \infty} \mybot (\shapley^N(\zerovector)/N)= \mybot (\chi(\shapley))$ is the minimum
of these mean payoffs. 

The interest of the vectors $z$ arising in the definition
of Collatz-Wielandt numbers is to provide
{\em approximate optimality certificates},
allowing us to bound mean payoffs from above
and from below.
Indeed, if $F(z) \leq \mu +z$, using the order-preserving property and additively homogeneity
of $F$, we get that $F^N(z) \leq N\mu + z$ for all $N \in \mathbb{N}$, and, by nonexpansiveness
of $F$,
\[ \lim_{N \rightarrow \infty} \mytop(F^N(\zerovector)/N) =\lim_{N \rightarrow \infty} \mytop(F^N(z)/N) \leq \mu\enspace .
\]Similarly,
if $F(z)\geq \mu +z$, we deduce that
\[ \lim_{N \rightarrow \infty} \mybot(F^N(\zerovector)/N) \geq \mu\enspace.
\]
The following
result of~\cite{gaubert_gunawardena}, which can also be obtained as a corollary of a minimax result of Nussbaum~\cite{nussbaum},
see~\cite{polyhedra_equiv_mean_payoff}, shows that these bounds
are optimal. 

\begin{theorem}[{\cite[Prop.~2.1]{gaubert_gunawardena}}, {\cite[Lemma~2.8 and Rk.~2.10]{polyhedra_equiv_mean_payoff}}]\label{Theo:UCW}
  Let $\shapley$ be an order-preserving and additively homogeneous self-map of $\R^n$. Then,
  \[ \lim_{N \rightarrow \infty} \mytop (\shapley^N(x)/N) = \ucw (\shapley)\text{ and } \lim_{N \rightarrow \infty} \mybot (\shapley^N(x)/N) = \lcw (\shapley)
\enspace,\]
for any $x\in \R^n$.
\end{theorem}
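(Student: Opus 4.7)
The plan is to establish $\ell^+ \defi \lim_N \mytop(\shapley^N(x)/N) = \ucw(\shapley)$; the dual equality $\lim_N \mybot(\shapley^N(x)/N) = \lcw(\shapley)$ then follows by applying the first identity to the auxiliary self-map $x \mapsto -\shapley(-x)$, which is again order-preserving and additively homogeneous and whose $\ucw$ equals $-\lcw(\shapley)$. As noted in the paper, $\ell^+$ exists and is independent of $x$; moreover, the sequence $m_N \defi \mytop(\shapley^N(\zerovector))$ is subadditive as a direct consequence of monotonicity and additive homogeneity (from $\shapley^M(\zerovector) \le m_M + \zerovector$ one deduces $\shapley^{N+M}(\zerovector) \le m_M + \shapley^N(\zerovector)$, whence $m_{N+M} \le m_M + m_N$), so Fekete's lemma strengthens the limit to $\ell^+ = \inf_N m_N/N$.

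The easy direction $\ell^+ \le \ucw(\shapley)$ is routine: if $z \in \R^n$ and $\mu \in \R$ satisfy $\shapley(z) \le \mu + z$, iterating monotonicity and additive homogeneity gives $\shapley^N(z) \le N\mu + z$ for every $N$; taking $\mytop$, dividing by $N$, and letting $N \to \infty$ yields $\ell^+ \le \mu$.

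The harder direction $\ucw(\shapley) \le \ell^+$ amounts to constructing, for every $\mu > \ell^+$, an \emph{approximate super-eigenvector} $z \in \R^n$ with $\shapley(z) \le \mu + z$. Using $\ell^+ = \inf_N m_N/N$, I pick $N \ge 1$ with $\shapley^N(\zerovector) \le N\mu$ (componentwise) and set
\[
z \defi \min_{0 \le k \le N-1} \bigl(\shapley^k(\zerovector) - k\mu\bigr) \, ,
\]
with the minimum taken componentwise. For each such $k$, the inequality $z \le \shapley^k(\zerovector) - k\mu$ combined with monotonicity and additive homogeneity of $\shapley$ gives $\shapley(z) \le \shapley^{k+1}(\zerovector) - k\mu$; taking the componentwise minimum over $k \in \{0, \dots, N-1\}$ and re-indexing $j = k + 1$ yields
\[
\shapley(z) \le \mu + \min_{1 \le j \le N} \bigl(\shapley^j(\zerovector) - j\mu\bigr) \, .
\]
Since $\shapley^N(\zerovector) - N\mu \le \zerovector$ componentwise by the choice of $N$, the $j = N$ term is dominated at each coordinate by the $j = 0$ term $\zerovector$, so the minimum over $\{1, \dots, N\}$ is bounded above by the minimum over $\{0, \dots, N-1\}$, which equals $z$; hence $\shapley(z) \le \mu + z$. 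Thus $\ucw(\shapley) \le \mu$ for every $\mu > \ell^+$, proving the non-trivial inequality. The main subtlety is the choice of componentwise $\min$ in the construction of $z$: since $\shapley$ is only monotone (not concave), a sum or max over shifted iterates would produce an inequality of the wrong direction, whereas $\min$ combined with monotonicity gives $\shapley(\min_k y_k) \le \min_k \shapley(y_k)$, turning the re-indexing $k \mapsto k + 1$ into an upper bound; the choice of $N$ then ensures that replacing the $k = N$ term by the $k = 0$ term leaves the minimum unchanged or smaller, closing the telescoping argument.
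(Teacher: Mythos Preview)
Your proof is correct and follows essentially the same approach as the cited references: the paper does not prove this theorem itself, but the construction you use---the componentwise minimum of the shifted iterates $\shapley^k(\zerovector)-k\mu$---is precisely the one from \cite{gaubert_gunawardena}, and it reappears in the paper as \cref{certify_bias} (the approximate-arithmetic version). The Fekete argument for existence of $\ell^+$ and the duality via $x\mapsto -\shapley(-x)$ are likewise standard and correctly handled.
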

Thus, when $\shapley$ is the Shapley operator of a game, the quantities $\ucw(\shapley)$ and $\lcw(\shapley)$ respectively correspond to the greatest and smallest mean payoff among all the initial states.

A simpler situation arises when 
there is a vector $v \in \R^n$ and a scalar $\lambda\in\R$ such that
\begin{align}
\shapley(v) = \lambda + v  \, .
\label{e-ergodic}
\end{align}
The scalar $\lambda$, which is
unique, is known as the {\em ergodic constant}, 
and~\cref{e-ergodic} is referred to as the {\em ergodic equation}.
Then, $\lcw(F)=\ucw(F)=\lambda$. The vector $v$ is known
as a {\em bias} or {\em potential}. It will be convenient
to have a specific notation for the ergodic constant $\lambda$
when the ergodic equation is solvable, then, we set
$\ec(F):=\lambda$.
The existence of a solution $(\lambda,v)$ of~\cref{e-ergodic}
is guaranteed by certain ``ergodicity''
assumptions~\cite{hochartdominion}. 

A tool used to study the ergodic equation~\cref{e-ergodic} is the recession operator of the Shapley operator.  
Recall that if $F$ is a self-map of $\R^n$, the {\em recession operator} of $F$
is defined by
\begin{align}
  \hat{\shapley}(x)\coloneqq \lim_{\beta\to\infty} \beta^{-1}F(\beta x),\qquad x\in \R^n \enspace .
  \label{e-def-recession}
\end{align}
All the Shapley operators of the concrete games considered
in this paper admit recession functions (see Lemmas~\ref{le:recession_smpg} and~\ref{le:recession_entropy} below). Among the properties of the recession operator, we shall make use of the following existence condition. We denote by $\unitvector_n$ (or by $\unitvector$ if the dimension is clear)
the unit vector of $\R^n$.
\begin{theorem}[Coro.\ of Theorems~9 and~13 of~\cite{gaubert_gunawardena}]\label{th-exists-rec}
  Let $\shapley$ be an abstract Shapley operator, such that the recession operator $\hat{\shapley}$
  exists. Suppose that $\hat{\shapley}$ has only trivial fixed points, meaning that $\hat{\shapley}(x)
  =x$ implies $x\in \R\unitvector$. Then, the ergodic equation~\eqref{e-ergodic} is solvable.
\end{theorem}
Besides, we shall repeatedly use the following property, noted by several authors.
\begin{proposition}[see e.g.~\cite{RS01} or Prop.~3.1~of~\cite{ergodicity_conditions}]\label{prop-rec}
  Let $\shapley$ be an abstract Shapley operator. If the recession operator $\hat{\shapley}$ and the escape rate $\chi(\shapley)$ exist, then $\hat{\shapley} (\chi(\shapley)) = \chi(\shapley)$.
\end{proposition}

Another remarkable situation arises when the Shapley operator $\shapley$ is piecewise affine.
Then, it follows form Kohlberg's theorem (\cref{Theo:Kohlberg}) that the ergodic equation~\cref{e-ergodic}
is solvable if and only if the mean payoff is independent of the initial state.
\begin{example}\label{ex-smpg-cont}
  Consider again the stochastic mean-payoff game of \cref{ex-smpg}.
  We can check that $F(v)=\lambda +v$ where $v=(-5.5,0,-10.5)$
  and $\lambda=1.25$, which entails that $1.25$
  is the mean payoff of all initial states.
  \end{example}
Denote $\trop \coloneqq \R \cup \{-\infty\}$. Properties~\cref{e-order} and \cref{e-hom} also make
sense for self-maps of $\trop^n$, by requiring them to
hold for all $x,y\in \trop^n$ and $\lambda\in \trop$. Any order-preserving and additively homogeneous self-map $\shapley$ of $\R^n$
  admits a unique continuous extension $\bar \shapley$ to $\trop^n$,
  obtained by setting, for $x\in \trop^n$,
\begin{align}
\bar \shapley(x) \coloneqq \inf \{\shapley(y) \colon y\in \R^n, y\geq x \} \enspace .\label{e-extended}
\end{align}
Moreover, $\bar \shapley$ is still order-preserving
and additively homogeneous, see~\cite{burbanks_nussbaum_sparrow}
for details. Hence, in the sequel, we assume that any order-preserving and additively homogeneous self-map $\shapley$ of $\R^n$ is canonically extended to $\trop^n$, and we
will not distinguish between $\shapley$ and $\bar{\shapley}$.

\subsection{Entropy Games}
\label{sec-prelim-entropy}
Entropy games were introduced in~\cite{asarin_entropy}. We follow the presentation
of~\cite{entropygamejournal} since it extends the original model, see \cref{rk-compare}
for a comparison.

Similarly to stochastic turn-based zero-sum games, an {\em entropy game} is played on a digraph $(\vertexset,\edges)$ in which the set of vertices $\vertexset$ has a non-trivial partition: $\vertexset=\vertexsetMin \dunion \vertexsetMax \dunion \vertexsetNat$. As in the case of stochastic turn-based games, players Min, Max, and Nature control the states in $\vertexsetMin$, $\vertexsetMax$ and $\vertexsetNat$ respectively, and they alternate their moves, i.e., $\edges\subset
\vertexsetMin \times \vertexsetMax \cup \vertexsetMax \times \vertexsetNat \cup \vertexsetNat \times \vertexsetMin$. We also suppose that the underlying graph satisfies \cref{assum-fin}. In the context of entropy games, player Min is called \emph{Despot}, player Max is called \emph{Tribune}, and Nature is called \emph{People}. For this reason, we denote $\vertexsetD \coloneqq \vertexsetMin$, $\vertexsetT \coloneqq \vertexsetMin$, and $\vertexsetP \coloneqq \vertexsetNat$. The name ``Tribune'' coined in~\cite{asarin_entropy}, refers to the magistrates interceding on behalf of the plebeians in ancient Rome.

The first difference between stochastic turn-based games and entropy games lies in the behavior of Nature: while in stochastic games Nature makes its decisions according to some fixed probability distribution, in entropy games People is a \emph{nondeterministic} player, i.e., nothing is assumed about the behavior of People. The second difference lies in the definition of the payoffs received by Tribune. We suppose that every edge $(p,d)\in \edges$ with $p\in \vertexsetP$ and $d\in \vertexsetD$
is equipped with a {\em multiplicity} $\eweight_{pd}$ which is a (positive)
natural number. The {\em weight} of a path is defined as the product of the multiplicities of the arcs arising on this path. For instance, the path
$(d_0,t_0,p_0,d_1,t_1,p_1,d_2,t_2)$ where
$d_i\in \vertexsetD$, $t_i\in \vertexsetT$ and $p_i\in \vertexsetP$, 
makes $2$ and $1/3$ turn, and its weight is $m_{p_0d_1}m_{p_1d_2}$. A \emph{game in horizon $N$} is then defined as follows: if $(\sigma, \tau)$ is a pair of strategies of Despot and Tribune, then we denote by $R^N_d(\sigma,\tau)$ the sum of the weights of paths with initial state $d$ that make $N$ turns and that are consistent with the choice of $(\sigma,\tau)$. Tribune wants to maximize this quantity, while Despot wants to minimize it. As for stochastic turn-based games, a dynamic programming argument given in~\cite{entropygamejournal} shows that the value $V^N \in \R_{>0}^{\vertexsetD}$ of this game does exist, and that it satisfies the recurrence 
\begin{align}
  V^0= \unitvector, \qquad V^N=\mshapley(V^{N-1}) \, ,
  \label{e-dp-entropygame}
\end{align}
where
the operator $\mshapley \colon \R_{>0}^{\vertexsetD} \to \R_{>0}^{\vertexsetD}$ is defined by
\begin{equation}\label{eq:entropy_operator}
\mshapley_d(x) \coloneqq \min_{(d,t) \in \edges} \max_{(t,p) \in \edges} \sum_{(p,l) \in \edges} \eweight_{pl} x_{l}, \text{ for all } d\in \vertexsetD \, .
\end{equation}
To define a game that lasts for an infinite number of turns, we consider the limit
\[
\mgameval_d(\sigma,\tau)\coloneqq \limsup_{N\to+\infty} (R^N_d(\sigma,\tau))^{1/N} \, ,
\]
which may be thought of as a measure of the freedom of People. The logarithm of this limit is known as a {\em topological entropy}
in symbolic dynamics~\cite{keane}.
The following result shows that the value of the entropy
game $V_d^\infty$ does exist and that it coincides with the limit of the renormalized value $(V_d^N)^{1/N}=[T^N(\unitvector)]_d^{1/N}$ of the finite horizon entropy game, so that the situation is similar to the case of stochastic turn-based games, albeit the renormalization now involves a $N$th geometric mean owing to the multiplicative nature of the payment. 
\todo{two notations ofr the unit vector $\unitvector$ and $e_n$ later. rename $\unitvector_n$ later.}
\begin{theorem}[\cite{entropygamejournal}]\label{th-valexists}
  The entropy game with initial state $d$ has a value $\mgameval_d$.
  Moreover, there are (positional) strategies $\sigma^*$ and $\tau^*$ of Despot and Tribune, such that,
  for all $d\in \vertexsetD$,
 \(
\mgameval_d(\sigma^*,\tau)
\leq \mgameval_d= \mgameval_d(\sigma^*,\tau^*)
\leq \mgameval_d(\sigma,\tau^*),
\)
for all strategies $\sigma$ and $\tau$ of the two players.
In addition, the value vector $\mgameval\coloneqq (\mgameval_d)_{d\in \vertexsetD}$
coincides with the vector
  \( \lim_{N \to \infty}\bigl(\mshapley^{N}(\unitvector)\bigr)^{1/N} \in \R_{>0}^{\vertexsetD}, %
  \)
in which the operation $\cdot ^{1/N}$ is understood entrywise.%
\end{theorem}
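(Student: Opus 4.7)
My plan is to reduce the entropy game to the abstract operator framework of \cref{sec:perron_frobenius} by taking a logarithmic change of variables, apply the o-minimal escape-rate result \cref{th:escape_omin}, and finally extract optimal positional strategies from a bias vector of the transformed Shapley operator.

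I would define $\shapley \colon \R^{\vertexsetD} \to \R^{\vertexsetD}$ by $\shapley(x) \coloneqq \log(\mshapley(\exp(x)))$, with $\exp$ and $\log$ applied entrywise, so that
\[
\shapley_d(x) = \min_{(d,t) \in \edges} \max_{(t,p) \in \edges} \log\Bigl(\sum_{(p,l) \in \edges} \eweight_{pl}\, e^{x_l}\Bigr).
\]
Reading off \cref{eq:entropy_operator}, $\mshapley$ is a min-max of linear forms in $x$ with nonnegative coefficients, hence it is order-preserving and positively homogeneous of degree one; these two properties transport to $\shapley$ being order-preserving and additively homogeneous, making it an abstract Shapley operator in the sense of \cref{def-abstract-shapley}. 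Moreover $\shapley$ is a finite composition of $\min$, $\max$, $+$, scalar multiplication, $\log$ and $\exp$, so it is definable in the o-minimal structure $\R_{\exp}$. Consequently \cref{th:escape_omin} applies and yields the existence of an escape rate $\chi(\shapley) \in \R^{\vertexsetD}$. Exponentiating entrywise, $\mshapley^N(\unitvector)^{1/N} \to \exp(\chi(\shapley))$, and the dynamic programming recurrence $V^N = \mshapley^N(\unitvector)$ from \cref{e-dp-entropygame} identifies this limit with the third assertion of the theorem; we set $\mgameval \coloneqq \exp(\chi(\shapley))$.

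For the saddle-point part, the key observation is that, for any pair of positional strategies $(\sigma, \tau)$, People's nondeterministic optimization reduces to a sum over paths: $R^N_d(\sigma,\tau) = \bigl(M(\sigma,\tau)^N \unitvector\bigr)_d$, where $M(\sigma,\tau)$ is the nonnegative $\vertexsetD \times \vertexsetD$ matrix with entries $\eweight_{\tau(\sigma(d))\,d'}$ when $(\tau(\sigma(d)), d') \in \edges$ and $0$ otherwise. Perron--Frobenius then guarantees that $\mgameval_d(\sigma,\tau) = \lim_N R^N_d(\sigma,\tau)^{1/N}$ exists and equals the Perron root of the principal diagonal block of $M(\sigma,\tau)$ reachable from $d$, which produces sandwich bounds $\min_\sigma \max_\tau \mgameval_d(\sigma,\tau) \leq \mgameval_d \leq \max_\tau \min_\sigma \mgameval_d(\sigma,\tau)$ of the candidate value by finite min-max quantities.

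The main obstacle is to close this inequality by exhibiting a single pair $(\sigma^*,\tau^*)$ of positional strategies that is optimal from every initial state. Since $\shapley$ is not piecewise affine, Kohlberg's \cref{Theo:Kohlberg} is unavailable; one has to invoke its o-minimal refinement proved in~\cite{bolte2013}, which provides, in the ergodic case where $\chi(\shapley)$ is a constant vector, a bias $u$ satisfying the ergodic equation $\shapley(u) = \lambda + u$ of \cref{e-ergodic}. On such a class, the one-step optimal selection in the $\min$ and $\max$ defining $\shapley(u)$ produces positional strategies $\sigma^*,\tau^*$, and the Collatz--Wielandt certificates of \cref{Theo:UCW} upgrade them to a saddle point of the infinite-horizon game. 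The non-ergodic case is handled by decomposing $\vertexsetD$ along the level sets of $\chi(\shapley)$ and proceeding inductively from the top class downward. The truly delicate step is to set up this class decomposition consistently in the o-minimal framework, showing in particular that the restricted subgames remain well-posed and o-minimal; an alternative route, followed in~\cite{entropygamejournal}, is to embed the entropy game into a stochastic mean-payoff game with infinite action spaces on simplices of distributions over People's outgoing edges, where the existence of positional optima follows from classical Liggett--Lippman-type arguments.
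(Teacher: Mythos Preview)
The paper does not prove this theorem in detail; it is quoted from~\cite{entropygamejournal}, and the surrounding text only explains that the limit assertion follows by conjugating $\mshapley$ into the additively homogeneous operator $\shapley=\log\circ\mshapley\circ\exp$, observing that $\shapley$ is definable in $\R_{\exp}$, and invoking \cref{th:escape_omin}. Your sketch of this part is exactly that argument, so on the escape-rate/limit claim you match the paper's account.

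For the saddle-point and positional optimality, your outline is in the right spirit but contains a real gap. You assert that the ``o-minimal refinement proved in~\cite{bolte2013}'' yields, in the ergodic case, a bias vector $u$ with $\shapley(u)=\lambda+u$, from which you would read off optimal positional selectors. This is not what~\cite{bolte2013} provides: that paper gives the existence of the escape rate (and the uniform value) for definable nonexpansive maps, but it does \emph{not} guarantee solvability of the ergodic equation~\cref{e-ergodic}. For a general order-preserving additively homogeneous map, $\chi(\shapley)$ being constant does not imply that a bias exists; this is precisely why the paper distinguishes the Kohlberg invariant-half-line result (\cref{Theo:Kohlberg}), available only in the piecewise-affine case, from the weaker o-minimal escape-rate statement. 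So your selector argument, which rests on having a bias $u$, is unsupported as written.

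The actual route taken in~\cite{entropygamejournal}, which the paper alludes to and which you mention only as an ``alternative'', is the substantive one: entropy games are embedded into stochastic mean-payoff games with action spaces given by probability simplices over People's outgoing edges and relative-entropy payments, and positional optimality is obtained in that setting. Your level-set decomposition sketch for the non-ergodic case is also only a heuristic; making it rigorous again requires the machinery of~\cite{entropygamejournal}. In short, your escape-rate argument is correct and matches the paper, but your path to positional optimal strategies relies on an existence claim (a bias vector from o-minimality alone) that is not available.
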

Entropy games can be cast in the general operator setting of \cref{sec:perron_frobenius}, by introducing the conjugate operator $\shapley \colon \R^\vertexsetD \to \R^\vertexsetD$,
\begin{align}
  \shapley \coloneqq \log \circ T \circ \exp
  \label{e-def-conj}
\end{align}
in which $\exp \colon \R^\vertexsetD \mapsto \R_{>0}^\vertexsetD$ is the map
which applies the exponential entrywise, and $\log \coloneqq \exp^{-1}$.
Since the maps $\log$ and $\exp$ are order-preserving, and since
the weights $m_{pl}$ appearing in the expression of $T(x)$ in~\cref{eq:entropy_operator}
are nonnegative, the operator $F$ is order-preserving. Moreover, using the morphism
property of the maps $\log$ and $\exp$ with respect to multiplication and addition,
we see that $F$ is also additively homogeneous, hence, it is an abstract
Shapley operator in the sense of \cref{def-abstract-shapley}.
Moreover, it is definable in the real exponential field,
which was shown to be an o-minimal structure by Wilkie~\cite{wilkie},
and this is precisely how \cref{th-valexists}
is derived in~\cite{entropygamejournal} from \cref{th:escape_omin}.
Actually, entropy games are studied in~\cite{entropygamejournal} in a more general setting, allowing history dependent strategies and showing that positional
strategies are optimal. It is also shown there that the game
has a uniform value in the sense of Mertens and Neyman~\cite{mertens_neyman}.

When the (positional) strategies $\sigma,\tau$ are fixed, the
value can be characterized by a classical result of Perron--Frobenius
theory.
\begin{definition}\label{def-ambiguity}
Given a pair of strategies
$(\sigma,\tau)$ of Despot and Tribune,
we define the {\em ambiguity matrix} $M^{\sigma,\tau}\in \R_{\geq 0}^{\vertexsetD \times \vertexsetD}$,
with entries $(M^{\sigma,\tau})_{k,l}= \eweight_{\tau(\sigma(k)),l}$ if $\bigl(\tau(\sigma(k)),l\bigr) \in \edges$ and $(M^{\sigma,\tau})_{k,l}=0$ otherwise.
\end{definition}
In other words, $M^{\sigma,\tau}$ is the ``weighted transition matrix''
of the subgraph $\mathscr{G}^{\sigma,\tau}$ obtained by keeping only
the arcs $\vertexsetD \to \vertexsetT$ and $\vertexsetT\to \vertexsetP$ determined by the two strategies.\todo[color=blue!30]{MS: Not exactly, since $\mathscr{G}^{\sigma,\tau}$ contains states that are not in $\vertexsetD$ -- this is a transition matrix of a ``reduced'' graph in which we only have states in $\vertexsetD$. But maybe this is not important. SG. Yes, this is an abuse of terminology. I moved the sentence outside the definition
and put quotation marks.} 
The digraph $\mathscr{G}^{\sigma,\tau}$ can generally be decomposed in strongly
connected components $\mathscr{C}_1, \dots,\mathscr{C}_s$, and each of these
components, $\mathscr{C}_i$,
determines a principal submatrix of $M^{\sigma,\tau}$, denoted
by $M^{\sigma,\tau}[\mathscr{C}_i]$, obtained by keeping only the
rows and columns in $\mathscr{C}_i\cap \vertexsetD$.
We denote by $\specrad(\cdot)$ the spectral radius of a matrix,
which is also known as the {\em Perron root} when the matrix
is nonnegative and irreducible, see~\cite{berman} for background.
\begin{proposition}[\cite{rothblumwhittle}, {\cite[Th.~5.1]{zijmjota}}]\label{prop-rothblum}
  The value of the subgame with initial state $d$, induced by a pair
  of strategies $\sigma,\tau$, coincides with
  \[
  \max \{\specrad(M^{\sigma,\tau}[\mathscr{C}_i] ) \colon \text{there is a dipath }
  d\to \mathscr{C}_i \text{ in }\mathscr{G}^{\sigma,\tau} \} \enspace .
  \]
\end{proposition}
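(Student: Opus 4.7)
The plan is to reduce the claim to the classical Perron--Frobenius growth formula for a reducible nonnegative matrix. Once $\sigma,\tau$ are fixed, a turn starting from a Despot state $d$ goes deterministically to $\tau(\sigma(d)) \in \vertexsetP$, and then to any Despot state $d'$ such that $(\tau(\sigma(d)),d') \in \edges$, with weight $\eweight_{\tau(\sigma(d)),d'}$; all other arcs contribute a factor of $1$. A direct induction on $N$ then gives
\[
R_d^N(\sigma,\tau) = \bigl(M^N \unitvector\bigr)_d \, , \qquad M \defi M^{\sigma,\tau}\, ,
\]
so that $\mgameval_d(\sigma,\tau) = \limsup_{N\to\infty} \bigl((M^N \unitvector)_d\bigr)^{1/N}$. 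It remains to show that this quantity equals $\max\{\specrad(M[\mathscr{C}_i]) : d \to \mathscr{C}_i\}$.

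For the upper bound, I would topologically order the SCCs $\mathscr{C}_1,\dots,\mathscr{C}_s$ so that $M$ becomes block upper triangular with irreducible diagonal blocks $M[\mathscr{C}_i]$. Expanding $(M^N \unitvector)_d$ as a sum indexed by dipaths $\mathscr{C}_{i_1} \to \cdots \to \mathscr{C}_{i_r}$ in the DAG of SCCs (with $\mathscr{C}_{i_1}$ containing $d$), together with a composition $k_1+\cdots+k_r$ of $N-(r-1)$ recording the number of steps spent in each block, each term is a product of factors $\|M[\mathscr{C}_{i_j}]^{k_j}\|_\infty$ and of inter-block transitions. The Perron--Frobenius theorem applied to each irreducible block yields $\|M[\mathscr{C}_{i_j}]^{k_j}\|_\infty = O\bigl(k_j^{\card{\mathscr{C}_{i_j}}-1}\specrad(M[\mathscr{C}_{i_j}])^{k_j}\bigr)$. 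Since the number of such terms is polynomial in $N$, taking $N$th roots absorbs all polynomial prefactors and yields the desired bound.

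For the lower bound, I would fix any SCC $\mathscr{C}$ reachable from $d$ by a dipath of length $L$, select $j \in \mathscr{C}$ with $(M^L)_{d,j} > 0$, and let $v > 0$ be a Perron eigenvector of the irreducible block $M[\mathscr{C}]$, so that $M[\mathscr{C}]^N v = \specrad(M[\mathscr{C}])^N v$. Choosing $c > 0$ such that $\unitvector_{\mathscr{C}} \geq c v$ coordinatewise, I would estimate
\[
(M^{N+L} \unitvector)_d \;\geq\; (M^L)_{d,j}\, (M[\mathscr{C}]^N \unitvector_{\mathscr{C}})_j \;\geq\; c\, v_j\, (M^L)_{d,j}\, \specrad(M[\mathscr{C}])^N \, ,
\]
and take $(N+L)$th roots. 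Maximizing over reachable $\mathscr{C}$ completes the argument.

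The main subtlety is in the upper bound: organizing the block upper triangular expansion carefully and checking that the polynomial-in-$N$ prefactors arising from periodicity within the irreducible blocks do not alter the exponential growth rate. Both directions are classical Perron--Frobenius theory, which is presumably why the proposition is stated with references to~\cite{rothblumwhittle,zijmjota}.
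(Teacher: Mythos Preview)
The paper does not supply its own proof of this proposition; it is quoted as a known result from the references~\cite{rothblumwhittle} and~\cite[Th.~5.1]{zijmjota}. Your sketch is therefore not a variant of the paper's argument but an independent derivation, and it is essentially correct: the identity $R_d^N(\sigma,\tau)=(M^N\unitvector)_d$ with $M=M^{\sigma,\tau}$ follows directly from the definition of the ambiguity matrix, and the asymptotic $\limsup_N ((M^N\unitvector)_d)^{1/N}=\max\{\specrad(M[\mathscr{C}_i]):d\to\mathscr{C}_i\}$ is the classical growth formula for reducible nonnegative matrices that your two-sided estimate establishes.

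Two minor points to tighten. For the upper bound, the factor $k_j^{\card{\mathscr{C}_{i_j}}-1}$ is unnecessary when the block is irreducible (the Perron root is simple, so $\|M[\mathscr{C}_{i_j}]^{k}\|_\infty\sim c\,\specrad^{k}$); a polynomial factor in $N$ does appear, but it comes from counting the compositions $k_1+\cdots+k_r=N-(r-1)$ and the finitely many dipaths in the SCC-DAG, exactly as you say. For the lower bound, your Perron eigenvector argument only applies to SCCs with $\specrad(M[\mathscr{C}])>0$; trivial SCCs (single vertex, no self-loop) have spectral radius $0$ and contribute nothing, so no separate argument is needed for them. Under \cref{assum-fin} at least one reachable SCC is nontrivial, so the maximum is in fact positive, consistent with the bound $\mgameval_d\geq 1$ used elsewhere in the paper.
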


 \begin{figure}
   \centering
     \begin{tikzpicture}[scale=0.7,>=stealth',max/.style={draw,rectangle,minimum size=0.5cm},min/.style={draw,circle,minimum size=0.5cm},av/.style={draw, diamond, inner sep = 0pt,minimum size = 0.2cm}]

\node[min] (min3) at (-5, 0.8) {$3$};
\node[min] (min1) at (5, 0.4) {$1$};
\node[min] (min2) at (1, 0.1) {$2$};

\node[max] (max1) at (0, -1.2) {$1$};
\node[max] (max2) at (-1, 2) {$2$};
\node[max] (max3) at (-1, 0) {$3$};

\node[av] (av13) at (-2.5,-0.5){};
\node[av] (av23) at (-2.5, 1){};
\node[av] (av12) at (3, 0.4){};

\node[av] (av22) at (0,0.9){};

\draw[->] (max3) to node[below=-0.2ex, font=\small]{} (min2);
\draw[->] (max2) to[out = 170, in = 40] node[above left=-1ex, font=\small]{} (min3);
\draw[->] (max2) to[out = 25, in = 140] node[above, font=\small]{} (min1);

\draw[->] (min3) to node[above right=-0.7ex and -1.8ex, font=\small]{} (av13);
\draw[->] (min3) to node[above right, font=\small]{} (av23);
\draw[->] (min1) to node[below, font=\small]{} (av12);

\draw[->] (max1) to[out = 180, in = -90] node[above, font=\small]{} (min3);

\draw[->] (av13) to node[above right=-0.7ex,font=\small]{$2$} (max1);
\draw[->] (av13) to node[above right=-0.3ex,font=\small]{$1$} (max3);
\draw[->] (av23) to node[below right=-1ex,font=\small]{$3$} (max2);
\draw[->] (av23) to node[above right=-0.9ex,font=\small]{$1$} (max3);
\draw[->] (av12) to[out=-120, in = 0] node[below right=0.0ex,font=\small]{$2$} (max1);
\draw[->] (av12) to[out=130, in = 0] node[above right=-0.7ex,font=\small]{$5$} (max2);
\draw[->] (min2) to (av22);
\draw[->] (av22) to node[above right=0.0ex,font=\small]{$4$} (max2);
\draw[->] (av22) to node[above left=-0.9ex,font=\small]{$2$} (max3);

  \end{tikzpicture}
  \caption{An entropy game. Despot's states are represented by squares. Tribune's states are represented by circles. People's state are represented by small diamonds. The multiplicies are indicated on the arcs.}
  \label{fig-entropy-new}
 \end{figure}

 \todo{SG: \Cref{fig-entropy-new} and \Cref{ex-entropy} added}
 \begin{example}\label{ex-entropy}
   An example of entropy game is given in \cref{fig-entropy-new}. The graph is similar
   to the one of the stochastic game of \cref{fig-ex-smp}, but now, the Min states (squares) are
   interpreted as Despot's states, the Max states (circles) are interpreted
   as Tribune's states, and the former ``nature'' states (small diamonds)
   become People's states. The multiplicities are shown on the graph.
   The associated operator $T$ is given by
   \begin{align*}
   T_1(x)&=\max(3x_2+x_3, 2x_1+x_3)\enspace,\\
   T_2(x) &=\min(\max(3x_2+x_3, 2x_1+x_3),2x_1+5x_2)\enspace,\\
   T_3(x)&=4x_2+2x_3 \enspace.
   \end{align*}
   Suppose that Despot plays according to the strategy $\sigma$,
   which makes the move $\minstate{2}\to\maxstate{3}$
   (the other moves  $\minstate{1}\to \maxstate{3}$, 
   $\minstate{3}\to\maxstate{2}$ are trivial), and that Tribune plays
   according to the strategy $\tau$, which makes a move from $\maxstate{3}$ to the upper diamond state
(and the other choices are trivial). Then, the
   associated ambiguity matrix is
   \[
   M^{\sigma,\tau} = \left(\begin{array}{ccc}  0 & 3 & 1\\ 0 & 3 & 1 \\ 0 & 4 & 2
     \end{array}\right) \enspace  .
   \]
   By \cref{prop-rothblum}, under this pair of strategies, and for all initial states,
   the value of the induced subgame is $\mu = (5+\sqrt{17})/2\approx 4.56$, the Perron root
   of $M^{\sigma,\tau} $. It can be checked that $T(V)=\mu V$, where $V=(a,a,1)$
   and $a=1/(\mu-3)\approx 0.64$, which entails that $\mu$ is indeed the value of the original game.
     \end{example}

\begin{remark}\label{rk-compare}
  In the original model of Asarin et al.~\cite{asarin_entropy},
  an entropy game is specified by finite sets of states of Depot and Tribune,
  $D$ and $T$, respectively, by a finite alphabet $\Sigma$ representing
  actions, and by a transition relation $\Delta \subset T \times \Sigma \times D \cup   D \times \Sigma \times T$. A turn consists
  of four successive moves by Despot, People, Tribune,
  and People: in state $d\in D$, Despot selects an action $a\in \Sigma$,
  then, People moves to a state $t\in P$ such that $(d,a,t)\in\Delta$.
  Then, Tribune selects an action $b\in\Sigma$, and People moves
  to a state $d'\in D$ such that $(t,b,d')\in \Delta$.
  This reduces to the model of~\cite{entropygamejournal} by introducing
  dummy states, identifying a turn in the game of~\cite{asarin_entropy}
  to a succession of two turns in the game of~\cite{entropygamejournal}.
  Another difference is that the payment,
  in~\cite{asarin_entropy}, corresponds to $\max_{d\in D}\limsup_{N\to\infty}(R_d^N)^{1/N}$,
  and this is equivalent to letting Tribune choose the initial state
  before playing the game. Then, the value of the game in~\cite{asarin_entropy} coincides with the maximum of the values of the initial states, $\max_d \mgameval_d$, see \cite[Prop.~11]{entropygamejournal}.
  Finally in~\cite{asarin_entropy}, the arcs have multiplicity one,
  whereas we allow integer multiplicies (coded in binary), as in~\cite{entropygamejournal}.
\end{remark}

\section{Bounding the Complexity of Value Iteration}%
\label{sec-complexity}
In this section, $\shapley$ is an (abstract) Shapley operator, i.e.,
an order-preserving and additively homogeneous self-map of $\R^n$.

\subsection{A Universal Complexity Bound for Value Iteration}\label{sec:basic_val_iter}
The most straightforward idea to solve
a mean-payoff game is probably value iteration:
we infer whether or not the mean-payoff
game is winning for Player Max (meaning that the value is nonnegative) by solving the finite horizon game,
for a large enough horizon. This is formalized in \cref{algo-vi}.

\begin{figure}[t]
\begin{footnotesize}
\begin{algorithmic}[1]
\Procedure {ValueIteration}{$\shapley$}
\\ \Lcomment{$\shapley$ a Shapley operator from $\R^n$ to $\R^n$ }
\State $u \coloneqq \zerovector \in \R^n$ %
\Repeat 
\label{step-while}
\label{state-iter} $u \coloneqq \shapley(u)$  
\Lcomment{At iteration $\ell$, $u=\shapley^\ell(\zerovector)$ is the value vector of the game in finite horizon $\ell$}
\Until{$\mytop (u)\leq 0$ or $\mybot (u)\geq 0$}
\If{$\mytop (u)\leq 0$}\label{step-termmax}
\Return ``$ \ucw(\shapley)\leq 0$'' \Lcomment{Player Min wins for all initial states}
\Else \label{step-termmin}
\ \Return ``$\lcw(\shapley)\geq 0$'' \Lcomment{Player Max wins for all initial states}
\EndIf
\EndProcedure
\end{algorithmic}
\end{footnotesize}
\caption{Basic value iteration algorithm.} \label{algo-vi}
\end{figure}

When the non-linear eigenproblem $\shapley(w)=\ec(\shapley) + w$
is solvable, we shall use the following metric estimate, which
represents the minimal Hilbert's seminorm of a bias vector
\[
R(\shapley) \defi \inf \left\{ \HNorm{w} \colon w \in \R^n , \; \shapley(w)=\ec(\shapley) + w \right\} \, .  
\]
In general, however, this non-linear eigenproblem may not be solvable.
Then, we consider, for $\lambda\in\R$, the sets
\[
S_\lambda(F) \defi \{v\in \R^n\colon \lambda +v \leq F(v)\}
\; \makebox{ and }\;
S^\lambda(F) \defi \{v\in \R^n\colon \lambda +v \geq F(v)\} \enspace .
\]

\begin{theorem}\label{Th:NitsBound}
  Procedure \textsc{ValueIteration} (\cref{algo-vi})
  is correct as soon as $\lcw(F)>0$ or $\ucw(F)<0$,
  and it terminates in a number of iterations $N_{\textrm{vi}}$
  bounded
  by
  \begin{equation}\label{NitsBound}
   \inf\bigg\{ \frac{\HNorm{v}}{\lambda}\colon \lambda>0 \, , \; v\in S_\lambda(F) \cup S^{-\lambda}(F)
\bigg\} \enspace . 
  \end{equation}
  In particular, if $F$ has a bias vector
  and $\ec(F)\neq 0$,
we have
$ 
N_{\textrm{vi}} \leq \frac{R(F)}{|\ec(F)|}$.
\end{theorem}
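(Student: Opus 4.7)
The plan is to establish correctness and termination separately, and then read off the bias-vector bound as a specialization.

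For correctness, I would show that if \textsc{ValueIteration} exits at iteration $\ell$ with $\mytop(u) \leq 0$, where $u = F^\ell(\zerovector)$, then $u \leq \zerovector$ entrywise. By the order-preserving property of $F$, iterating gives $F^{k\ell}(\zerovector) = (F^\ell)^k(\zerovector) \leq \zerovector$ for every $k \geq 1$. Dividing by $k\ell$, taking the top coordinate and letting $k \to \infty$, \cref{Theo:UCW} yields $\ucw(F) \leq 0$. The symmetric argument handles $\mybot(u) \geq 0$ and yields $\lcw(F) \geq 0$, so the output of the algorithm is correct in both cases.

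For termination, I focus on the case $\lcw(F) > 0$; the case $\ucw(F) < 0$ is symmetric. Fix any $\lambda > 0$ and $v \in S_\lambda(F)$, so that $F(v) \geq \lambda + v$. Iterating this inequality using monotonicity and additive homogeneity gives $F^k(v) \geq k\lambda + v$. By sup-norm nonexpansiveness of $F$, the iterates $F^k(\zerovector)$ and $F^k(v)$ differ coordinatewise by at most $\supnorm{v}$, so $\mybot(F^k(\zerovector)) \geq k\lambda + \mybot(v) - \supnorm{v}$. To turn this into a bound involving only $\HNorm{v}$, I would exploit additive homogeneity: the translate $v' \coloneqq v - \mybot(v)\,\unitvector$ still lies in $S_\lambda(F)$, with $\mybot(v') = 0$ and $\supnorm{v'} = \HNorm{v}$. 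Applying the estimate to $v'$ gives $\mybot(F^k(\zerovector)) \geq k\lambda - \HNorm{v}$, so the exit condition $\mybot(u) \geq 0$ is met as soon as $k \geq \HNorm{v}/\lambda$. Taking the infimum over admissible $(\lambda, v)$ and combining with the symmetric case via $S^{-\lambda}(F)$ yields the bound~\cref{NitsBound}.

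The final assertion is then a specialization: any bias vector $w$ with $F(w) = \ec(F) + w$ lies in $S_{\ec(F)}(F)$ when $\ec(F) > 0$ or in $S^{\ec(F)}(F)$ when $\ec(F) < 0$, so taking the infimum of $\HNorm{w}/|\ec(F)|$ over bias vectors gives $R(F)/|\ec(F)|$. The main obstacle in the whole argument is obtaining the tight Hilbert-seminorm constant rather than a coarser sup-norm estimate, which is overcome by the translation trick made possible by additive homogeneity; everything else is a routine consequence of monotonicity, additive homogeneity, sup-norm nonexpansiveness, and \cref{Theo:UCW}.
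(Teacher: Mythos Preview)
Your proposal is correct and follows essentially the same approach as the paper. The paper's termination argument reaches the same inequality $\mybot(F^\ell(\zerovector)) \geq \ell\lambda - \HNorm{v}$ via the sandwich $F^\ell(v)-\mytop(v)\leq F^\ell(\zerovector)$ (a direct consequence of order preservation and additive homogeneity), whereas you obtain it from nonexpansiveness plus the translation $v\mapsto v-\mybot(v)\unitvector$; these are equivalent packagings of the same idea. For correctness the paper invokes \cref{top_and_bot} directly, while you effectively reprove the relevant half of that lemma. One small point worth making explicit: to match the infimum over $S_\lambda(F)\cup S^{-\lambda}(F)$ in~\cref{NitsBound}, observe that when $\lcw(F)>0$ the set $S^{-\lambda}(F)$ is empty for every $\lambda>0$ (since $v\in S^{-\lambda}(F)$ would force $\ucw(F)\leq -\lambda<0\leq\lcw(F)$), so the union reduces to $S_\lambda(F)$ in that case, and symmetrically in the other case.
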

To prove \cref{Th:NitsBound} and some other results, we need the following lemma.

\begin{lemma}[{\cite[Theorem~8]{gaubert_gunawardena}}]\label{top_and_bot}
  We have $\mybot(\shapley^{\ell}(\zerovector)) \le {\ell}\lcw(\shapley) \le {\ell}\ucw(\shapley) \le \mytop(\shapley^{\ell}(\zerovector))$
for ${\ell} \in \N$.
\end{lemma}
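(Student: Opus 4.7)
The plan is to deduce the outer inequalities from a subadditivity/superadditivity argument combined with Fekete's lemma and the previously stated \cref{Theo:UCW}, and then to obtain the middle inequality by passage to the limit. Set $a_\ell \defi \mytop(\shapley^\ell(\zerovector))$ and $b_\ell \defi \mybot(\shapley^\ell(\zerovector))$ throughout.

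I will first establish that $(a_\ell)_{\ell \in \N}$ is subadditive. The key observation is that $\shapley^k(\zerovector) \le a_k + \zerovector$ componentwise (where, consistently with the paper's convention, $a_k + \zerovector$ denotes the constant vector with entries $a_k$). Applying $\shapley^\ell$ to both sides and exploiting the order-preserving and additive homogeneity axioms of a Shapley operator, we obtain
\[
\shapley^{k+\ell}(\zerovector) = \shapley^\ell\bigl(\shapley^k(\zerovector)\bigr) \le \shapley^\ell(a_k + \zerovector) = a_k + \shapley^\ell(\zerovector) \, ,
\]
whence $a_{k+\ell} \le a_k + a_\ell$ after taking the componentwise maximum. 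Symmetrically, starting from $\shapley^k(\zerovector) \ge b_k + \zerovector$ yields the superadditivity $b_{k+\ell} \ge b_k + b_\ell$.

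With (super)additivity in hand, Fekete's subadditive lemma gives $\lim_{N\to\infty} a_N/N = \inf_N a_N/N \le a_\ell/\ell$, and symmetrically $\lim_{N\to\infty} b_N/N = \sup_N b_N/N \ge b_\ell/\ell$, for every $\ell \in \N$. By \cref{Theo:UCW} applied with $x = \zerovector$, these two limits equal $\ucw(\shapley)$ and $\lcw(\shapley)$ respectively, so multiplying by $\ell$ yields the extreme inequalities $\mybot(\shapley^\ell(\zerovector)) \le \ell\lcw(\shapley)$ and $\ell\ucw(\shapley) \le \mytop(\shapley^\ell(\zerovector))$. The middle inequality $\ell\lcw(\shapley) \le \ell\ucw(\shapley)$ follows either from $b_\ell \le a_\ell$ combined with the two outer bounds, or directly by monotonicity of the two limits since $b_N \le a_N$ for every $N$.

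The proof is essentially routine, and the only step that deserves care is the subadditivity computation: it crucially uses \emph{both} order preservation (to propagate an inequality through $\shapley^\ell$) and additive homogeneity (to pull the scalar $a_k$ out of the argument of $\shapley^\ell$). Everything else is a direct appeal to Fekete's lemma and the cited Collatz--Wielandt characterization of the growth rate.
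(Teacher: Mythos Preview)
The paper does not supply its own proof of this lemma: it is simply quoted from \cite[Theorem~8]{gaubert_gunawardena}. Your argument is correct and self-contained within the paper's logical order, since \cref{Theo:UCW} is stated (and cited) independently before the lemma. The subadditivity/superadditivity of $a_\ell=\mytop(\shapley^\ell(\zerovector))$ and $b_\ell=\mybot(\shapley^\ell(\zerovector))$ is exactly the mechanism the paper alludes to in the paragraph preceding \cref{Theo:UCW} (``It follows from Fekete's subadditive lemma that the two limits \ldots\ always exist''); you have just made that mechanism explicit and combined it with the identification of the limits provided by \cref{Theo:UCW}.

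One tiny remark: your first suggested route for the middle inequality (``from $b_\ell\le a_\ell$ combined with the two outer bounds'') does not actually yield $\lcw(\shapley)\le\ucw(\shapley)$, since the outer bounds go the wrong way for that purpose. Your second route---pass to the limit in $b_N\le a_N$---is the correct one, and it suffices.
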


\begin{proof}[Proof of \cref{Th:NitsBound}.]
Since $v+ \mybot (w-v)\leq w \leq v + \mytop (w-v)$ for any $v,w \in \R^n$, and $F^\ell$ is order-preserving and additively homogeneous for any $\ell \in \N$, note that 
\[F^\ell(v)- \mytop (v) = F^\ell(v) + \mybot (-v) \leq   F^\ell(\zerovector) 
\leq F^\ell(v) + \mytop (-v) = F^\ell(v)- \mybot (v) \; ,
\]
for all $v \in \R^n$ and $\ell \in \N$.

To prove the theorem, in the first place suppose that $\lcw(F)>0$. Then, by the definition~\cref{DefLCW} of $\lcw(F)$ we know that there exist $\lambda > 0$ and $v\in \R^n$ such that $v\in S_\lambda(F)$. Moreover, observe that in this case for any $\lambda > 0$ and 
$v\in S_\lambda(F) \cup S^{-\lambda}(F)$ we necessarily have $v\in S_\lambda(F)$, because otherwise we would have $v \in S^{-\lambda}(F)$, which implies that $\ucw(F)<0$ and so that $\lcw(F)<0$, contradicting our assumption. Therefore, for any $\lambda > 0$ and $v\in \R^n$ such that $v\in S_\lambda(F) \cup S^{-\lambda}(F)$, we have $\lambda + v \leq F(v)$ and so
\[
\ell \lambda - \HNorm{v} =\ell \lambda + \mybot (v) - \mytop (v) = \mybot(\ell \lambda + v - \mytop (v))  \leq \mybot(F^\ell(v)- \mytop (v))\leq \mybot( F^\ell(\zerovector)) 
\]
for all $\ell \in \N$. Thus, we conclude that $0 \leq \mybot( F^\ell(\zerovector))$ if $\ell \in \N$ is  greater than~\cref{NitsBound}.
This shows that Procedure \textsc{ValueIteration} terminates in a number of iterations $N_{\textrm{vi}}$ bounded by~\cref{NitsBound} when $\lcw(F)>0$. The proof in the case $\ucw(F)<0$ is analogous.

Finally, observe that the correctness of Procedure \textsc{ValueIteration} readily follows from \cref{top_and_bot}. 
\end{proof}

A special case of \cref{Th:NitsBound} in which the existence of a bias vector is assumed appeared (without proof)
in~\cite{mtns2018}.
\begin{remark}
 The infimum in~\cref{NitsBound} is generally not attained. Consider for instance
  $\shapley: \R^2\to \R^2$ given by $F(x)=(\log(\exp(x_1)+\exp(x_2)),x_2)-\alpha$, where $\alpha >0$.
  Then, since $\shapley_2(x)=x_2-\alpha<x_2$, we have $S_{\lambda}(F)= \emptyset$ for $\lambda>0$. Besides, since $x-\lambda \geq \shapley(x)$ if and only if $x_1 -\lambda \geq \log(\exp(x_1)+\exp(x_2)) -\alpha$ and $x_2-\lambda \geq x_2-\alpha$, it follows that $S^{-\lambda}(F)\neq \emptyset$ if and only if $\lambda < \alpha$. Now let $v \in S^{-\lambda}(F)$ for some $\lambda < \alpha$. Without loss of generality, we may assume $\mybot (v) =0$. Then, we have $v_1 -\lambda \geq \log(\exp(v_1)+\exp(v_2)) - \alpha \geq \log 2 - \alpha$ and so $\frac{\HNorm{v}}{\lambda} \geq 1 + \frac{\log 2 - \alpha}{\lambda}$. We conclude that the infimum in~\cref{NitsBound} is equal to $\frac{\log 2}{\alpha}$ but it is not attained.  
\end{remark}
\subsection{Value Iteration in Finite Precision Arithmetic}\label{sec:finite_prec_val_iter}
\begin{figure}[t]
\begin{footnotesize}
  \begin{algorithmic}[1]
\Procedure{FPValueIteration}{$\tilde{\shapley}$}
\State $u \coloneqq \zerovector \in \R^n$, $\ell\coloneqq 0\in \N$, $\epsilon\in \R_{>0}$ 
\Repeat
\label{state-iternew} $u \coloneqq \tilde{\shapley}(u)$; $\ell\coloneqq \ell+1$  
\Lcomment{We suppose that the operator $\shapley$ is evaluated in approximate arithmetic, so that $\tilde{\shapley}(u)$ is at most at distance $\epsilon$ in the sup-norm from its true value $\shapley(u)$.}
\Until{$\ell\epsilon+\mytop (u)\leq 0$ or $-\ell\epsilon+\mybot (u)\geq 0$}

\If{$\ell\epsilon+ \mytop (u)\leq  0$}\label{step-termmax2new}
\Return  ``$ \ucw(F)\leq 0$'' \Lcomment{Player Min wins for all initial states}
\EndIf
\If{$-\ell\epsilon+ \mybot (u)\geq  0$}\label{step-termmax3new}
\Return  ``$ \lcw(F)\geq 0$'' \Lcomment{Player Max wins for all initial states}
\EndIf
\EndProcedure
\end{algorithmic}
\end{footnotesize}
\caption{Value iteration in finite precision arithmetic.} %
\label{algo-vi2}
\end{figure}

The algorithm in \cref{algo-vi} can be adapted to work
in finite precision arithmetic. Consider the variant
of the main body of this algorithm, given in \cref{algo-vi2}. 
Now we assume that each evaluation of the Shapley operator $F$
is performed with an error of at most $\epsilon>0$ in the sup-norm. In what follows, we denote by $\apshapley \colon \R^{n} \to \R^{n}$ the operator which approximates $F$, as in Procedure \textsc{FPValueIteration}, so it satisfies:
\begin{equation}\label{DefApproxOpMain}
\supnorm{\apshapley(x) - \shapley(x)} \le \epsilon \enspace \makebox{for all } x \in \R^n.
\end{equation}

The following result is established by exploiting nonexpansiveness properties of Shapley operators. 

\begin{theorem}\label{Th:NitsBound:approx}
  Procedure \textsc{FPValueIteration} (\cref{algo-vi2})
  is correct as soon as $\lcw(F)>2\epsilon$ or $\ucw(F)<-2\epsilon$,
  and it terminates in a number of iterations $N^\epsilon_{\textrm{vi}}$
  bounded
  by
  \begin{equation}\label{FPNitsBound}
    \inf \bigg\{ \frac{\HNorm{v}}{\lambda-2\epsilon}\colon \lambda>2\epsilon \, , \; v\in S_\lambda(F) \cup S^{-\lambda}(F)
\bigg\} \enspace .
  \end{equation}
  In particular, if $F$ has a bias vector and $|\ec(F)|>2\epsilon$,
we have
$ 
N^\epsilon_{\textrm{vi}} \leq \frac{R(F)}{|\ec(F)|-2\epsilon}$.

\end{theorem}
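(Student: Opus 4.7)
The plan is to mimic the proof of \cref{Th:NitsBound} while carefully tracking the $\epsilon$-error that accumulates through each application of $\apshapley$.

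First I would establish the basic error propagation lemma: if $u_\ell$ denotes the iterate produced by the algorithm at step $\ell$, that is $u_\ell = \apshapley^\ell(\zerovector)$ interpreted as the $\ell$-fold composition of $\apshapley$, then $\supnorm{u_\ell - \shapley^\ell(\zerovector)} \le \ell \epsilon$. The proof is a one-line induction: by \cref{DefApproxOpMain} applied to $x=\shapley^{\ell-1}(\zerovector)$ and $x=u_{\ell-1}$, and by the sup-norm nonexpansiveness of $\shapley$ recalled after \cref{def-abstract-shapley}, one gets
\[
\supnorm{u_\ell - \shapley^\ell(\zerovector)} \le \supnorm{\apshapley(u_{\ell-1})-\shapley(u_{\ell-1})} + \supnorm{\shapley(u_{\ell-1})-\shapley(\shapley^{\ell-1}(\zerovector))} \le \epsilon + \ell\epsilon - \epsilon.
\]

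For correctness, suppose the algorithm stops at iteration $\ell$ with $\ell\epsilon + \mytop(u_\ell)\leq 0$. Since $\shapley^\ell(\zerovector) \le u_\ell + \ell\epsilon\, \unitvector$ entrywise, we have $\mytop(\shapley^\ell(\zerovector)) \le \mytop(u_\ell) + \ell\epsilon \le 0$, and \cref{top_and_bot} then yields $\ell\,\ucw(\shapley) \le 0$, hence $\ucw(\shapley) \le 0$ (the case $\ell=0$ being trivial since then $\mytop(\zerovector)=0$). The other branch is symmetric, using $\lcw(\shapley) \le \mybot(\shapley^\ell(\zerovector))/\ell$.

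For termination, assume $\lcw(\shapley)>2\epsilon$, fix any $\lambda>2\epsilon$ and $v\in S_\lambda(\shapley)$, so $\shapley(v)\ge \lambda+v$; iterating gives $\shapley^\ell(v)\ge \ell\lambda + v$ for every $\ell$. Arguing exactly as in the proof of \cref{Th:NitsBound}, since $v+\mybot(-v)\leq \zerovector$ and $\shapley^\ell$ is order-preserving and additively homogeneous,
\[
\mybot(\shapley^\ell(\zerovector)) \ge \mybot(\shapley^\ell(v)) - \mytop(v) \ge \ell\lambda + \mybot(v) - \mytop(v) = \ell\lambda - \HNorm{v}.
\]
Combining with the error propagation bound, $\mybot(u_\ell) \ge \mybot(\shapley^\ell(\zerovector)) - \ell\epsilon \ge \ell(\lambda-\epsilon) - \HNorm{v}$, so the termination test $-\ell\epsilon + \mybot(u_\ell) \ge 0$ is met as soon as $\ell(\lambda - 2\epsilon)\ge \HNorm{v}$. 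The case $v\in S^{-\lambda}(\shapley)$ (relevant when $\ucw(\shapley)<-2\epsilon$) is symmetric, and one can moreover check, as in the exact case, that under $\lcw(\shapley)>2\epsilon$ only elements of $S_\lambda(\shapley)$ can contribute to the infimum (and dually). Taking the infimum over admissible $(\lambda,v)$ gives the bound~\cref{FPNitsBound}. The specialization to a bias vector $w$ with $\shapley(w)=\ec(\shapley)+w$ is obtained by taking $v=w$ and $\lambda = \ec(\shapley)$ (or $-\ec(\shapley)$, whichever is positive).

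The main obstacle is purely bookkeeping: one must ensure that the accumulated error $\ell\epsilon$ is compensated both inside the termination test (the $\pm \ell\epsilon$ offsets) and in the lower-bound estimate on $\mybot(u_\ell)$, and that is precisely why the bound degrades from $1/\lambda$ to $1/(\lambda-2\epsilon)$ — one factor of $\epsilon$ coming from the error in the iterate and one from the offset in the stopping criterion.
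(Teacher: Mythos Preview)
Your proof is correct and follows essentially the same approach as the paper: you establish the error-propagation bound $\supnorm{u_\ell-\shapley^\ell(\zerovector)}\le \ell\epsilon$ (the paper isolates this as \cref{le:approx_value_iter}), use it together with \cref{top_and_bot} for correctness, and combine it with the sub/super-eigenvector estimate from the proof of \cref{Th:NitsBound} for termination, including the observation that under $\lcw(\shapley)>2\epsilon$ only $S_\lambda(\shapley)$ contributes (and dually).
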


The proof relies on the next lemma.

\begin{lemma}\label{le:approx_value_iter}
Denote $u^{\ell} \coloneqq \shapley^{\ell}(\zerovector)$ and $\tilde{u}^{\ell} \coloneqq \apshapley^{\ell}(\zerovector)$ for $\ell \in \N$. Then, we have $\supnorm{\tilde{u}^{\ell} - u^{\ell}} \le \ell \epsilon$, $\abs{\mytop(\tilde{u}^{\ell}) - \mytop(u^{\ell})} \le \ell \epsilon$, and $\abs{\mybot(\tilde{u}^{\ell}) - \mybot(u^{\ell})} \le \ell \epsilon$ for any $\ell \in \N$.
\end{lemma}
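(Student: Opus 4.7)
The plan is to prove all three inequalities simultaneously by induction on $\ell$, since the bounds on $\mytop$ and $\mybot$ follow immediately from the sup-norm bound using the fact that both $\mytop$ and $\mybot$ are $1$-Lipschitz with respect to $\supnorm{\cdot}$ (for any $x,y\in\R^n$, $|\mytop(x)-\mytop(y)|\le \supnorm{x-y}$, and analogously for $\mybot$). So once I establish $\supnorm{\tilde u^\ell - u^\ell}\le \ell\epsilon$, the other two bounds come for free.

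For the sup-norm bound, the base case $\ell=0$ is trivial, since $u^0=\tilde u^0=\zerovector$. For the inductive step, I would use the nonexpansiveness of $\shapley$ in the sup-norm (stated explicitly in \cref{sec:perron_frobenius} as a consequence of order-preservation and additive homogeneity) together with the approximation guarantee~\cref{DefApproxOpMain}. Concretely, writing $\tilde u^{\ell+1}=\apshapley(\tilde u^\ell)$ and $u^{\ell+1}=\shapley(u^\ell)$, I would insert $\shapley(\tilde u^\ell)$ and apply the triangle inequality:
\begin{align*}
\supnorm{\tilde u^{\ell+1} - u^{\ell+1}}
&\le \supnorm{\apshapley(\tilde u^\ell) - \shapley(\tilde u^\ell)} + \supnorm{\shapley(\tilde u^\ell) - \shapley(u^\ell)} \\
&\le \epsilon + \supnorm{\tilde u^\ell - u^\ell} \le \epsilon + \ell\epsilon = (\ell+1)\epsilon,
\end{align*}
where the first term is bounded by $\epsilon$ using~\cref{DefApproxOpMain} and the second by nonexpansiveness of $\shapley$, followed by the inductive hypothesis.

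The argument is entirely routine, so there is no substantive obstacle; the only thing to double-check is that the approximation hypothesis~\cref{DefApproxOpMain} is applied at the point $\tilde u^\ell$ (the iterate actually produced by the algorithm), not at $u^\ell$, which is why the sup-norm bound is the right quantity to track inductively rather than, say, a direct bound on $\tilde\shapley^\ell(\zerovector)-\shapley^\ell(\zerovector)$ involving the exact iterates. The Lipschitz property of $\mytop$ and $\mybot$ can be stated as a one-line observation just before invoking the sup-norm bound to finish.
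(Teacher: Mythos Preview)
Your proposal is correct and follows essentially the same approach as the paper: induction on $\ell$ using the triangle inequality together with~\cref{DefApproxOpMain} and nonexpansiveness of $\shapley$. The only cosmetic difference is that the paper spells out the $\mytop$ and $\mybot$ bounds by picking an index achieving the maximum (respectively minimum), which is exactly the proof of the $1$-Lipschitz property you invoke.
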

\begin{proof}
We prove the first claim by induction on $\ell$. We have $\supnorm{\tilde{u}^{1} - u^{1}} \le \epsilon$ by~\cref{DefApproxOpMain}. Furthermore, since $\shapley$ is nonexpansive we get
\begin{align*}
\supnorm{\tilde{u}^{\ell} - u^{\ell}} &= \supnorm{\apshapley(\tilde{u}^{\ell - 1}) - \shapley(u^{\ell - 1})} \\
&\le \supnorm{\apshapley(\tilde{u}^{\ell - 1}) - \shapley(\tilde{u}^{\ell - 1})} + \supnorm{\shapley(\tilde{u}^{\ell - 1}) - \shapley(u^{\ell - 1})} \\
&\le \epsilon + \supnorm{\tilde{u}^{\ell - 1} - u^{\ell - 1}} \le \ell \epsilon \, .
\end{align*}
To prove the other claims, fix $\ell \in \N$ and let $k \in [n]$ be such that $\mytop(u^{\ell}) = u^{\ell}_{k}$. Then, we have $\mytop(u^{\ell}) = u^{\ell}_{k} \le \tilde{u}^{\ell}_k + \ell \epsilon \le \mytop(\tilde{u}^{\ell}) + \ell \epsilon$. Similarly, if $k' \in [n]$ is such that $\mytop(\tilde{u}^{\ell}) = \tilde{u}^{\ell}_{k'}$, then $\mytop(\tilde{u}^{\ell}) = \tilde{u}^{\ell}_{k'} \le u^{\ell}_{k'} + \ell \epsilon \le \mytop(u^{\ell}) + \ell \epsilon$. Thus, we get $\abs{\mytop(\tilde{u}^{\ell}) - \mytop(u^{\ell})} \le \ell \epsilon$. The proof of the fact that $\abs{\mybot(\tilde{u}^{\ell}) - \mybot(u^{\ell})} \le \ell \epsilon$ is analogous. 
\end{proof}

\begin{proof}[Proof of \cref{Th:NitsBound:approx}.]
To prove that Procedure \textsc{FPValueIteration} returns the correct answer, suppose that it stops at the $\ell$th iteration and, in the first place, that the condition $-\ell\epsilon+\mybot (u) =-\ell\epsilon+ \mybot (\apshapley^\ell(\zerovector)) \geq 0$ is satisfied. Then, we have $\mybot( F^{\ell}(\zerovector)) \geq 0$ by \cref{le:approx_value_iter}, and so we conclude that $\lcw(F) \geq 0$ by \cref{top_and_bot}. Therefore, player Max wins for all initial states if the condition $-\ell\epsilon+ \mybot (\apshapley^\ell(\zerovector)) \geq 0$ is satisfied. If this condition is not satisfied, we necessarily have $\ell\epsilon+ \mytop (\apshapley^\ell(\zerovector)) \leq 0$ due to the stopping condition. Then, we can prove that $\ucw(F) \leq 0$ using  symmetric arguments to the ones applied in the case in which the condition $-\ell\epsilon+ \mybot (\apshapley^\ell(\zerovector)) \geq 0$ is satisfied, and so player Min wins for all initial states. 

In order to bound the number of iterations required by \textsc{FPValueIteration}, suppose that $\lcw(\shapley) > 2\epsilon$. Then, by definition there exist $\lambda > 2\epsilon$ and $v \in S_\lambda(\shapley)$. Moreover, if $\lambda$ and $v$ are such that $\lambda > 2\epsilon$ and $v \in S_\lambda(\shapley) \cup S^{-\lambda}(\shapley)$, then we have $v \in S_\lambda(\shapley)$, as $v \in S^{-\lambda}(\shapley)$ would imply that $\ucw(\shapley) < -2\epsilon$. Therefore $\lambda + v \le \shapley(v)$ and, by \cref{le:approx_value_iter},
\[
\ell \lambda - \HNorm{v} = \mybot(\ell \lambda + v - \mytop (v))  \leq \mybot(\shapley^\ell(v)- \mytop (v))\leq \mybot( \shapley^\ell(\zerovector)) \le \mybot(\apshapley^{\ell}(\zerovector)) + \ell \varepsilon \, .
\]
Hence, if $\ell \ge \frac{\HNorm{v}}{\lambda - 2\epsilon}$, we have $-\ell \epsilon +\mybot(\apshapley^{\ell}(\zerovector)) \ge 0$. Therefore, \textsc{FPValueIteration} terminates in a number of iterations bounded by \cref{FPNitsBound}. The proof in the case where $\ucw(F)<-2\epsilon$ is analogous. 
\end{proof}

\begin{figure}[t]
\begin{footnotesize}
  \begin{algorithmic}[1]
    \Procedure{ApproximateConstantMeanPayoff}{$\shapley$}
    \State $u,x,y \coloneqq \zerovector \in \R^n$, $\ell\coloneqq 0\in \N$,
    $\delta \in \R_{>0}$
\Lcomment{The number $\delta$ is the desired precision of approximation.}
\Repeat
\label{state-iternew2} $u \coloneqq \tilde{\shapley}(u)$; $\ell\coloneqq \ell+1$  
\Lcomment{The operator $\shapley$ is evaluated in approximate arithmetic, so that $\tilde{\shapley}(u)$ is at most at distance $\epsilon \coloneqq \delta/8$ in the sup-norm from its true value $\shapley(u)$.}
\Until{$\mytop (u)- \mybot (u)\leq (3/4)\delta \ell$}\\
$\kappa \coloneqq \mybot (u)/\ell$; $\lambda \coloneqq \mytop (u)/\ell$\\
$u \coloneqq \zerovector$
\For{$i = 1,2,\dots,\ell-1$} 
$u \coloneqq \tilde{\shapley}(u)$; $x \coloneqq \max\{x,-i \kappa + u \}$; $y \coloneqq \min\{y,-i \lambda + u \}$
\EndFor\\
\Return ``$[\lcw(F),\ucw(F)]$ is included in the interval $[\kappa - \delta/8, \lambda + \delta/8]$, which is of width at most $\delta$. Furthermore, we have $\kappa - \delta/8 + x \le \shapley(x)$ and $\lambda + \delta/8 + y \ge \shapley(y)$.''
\Lcomment{All initial states have a value in $[\kappa - \delta/8, \lambda + \delta/8]$.}
\EndProcedure
\end{algorithmic}
\end{footnotesize}
\caption{Approximating the value of a mean-payoff game when it is independent of the initial state, and computing approximate optimality certificates, working in finite precision arithmetic.} %
\label{algo-vi3}
\end{figure}

When \todo[color=red!30]{RK: In light of the modifications made in \cref{pr:approx_constant_value}, I slightly modified this paragraph (mainly the first part).} %
Procedure~\textsc{ApproximateConstantMeanPayoff} of \cref{algo-vi3} halts, it returns sub and super-eigenvectors $x$ and $y$ satisfying $\kappa - \delta/8 + x \le \shapley(x)$ and $\lambda + \delta/8 + y \ge \shapley(y)$, 
for some $\kappa, \lambda \in \R$ such that the interval $[\kappa - \delta/8, \lambda + \delta/8]$ is of width at most some desired precision $\delta$.  As discussed above, by \cref{Theo:UCW}
this entails that the mean payoff of every initial state
is included in the interval $[\kappa - \delta/8, \lambda + \delta/8]$. The construction of these sub and sup-eigenvectors, by taking infima and suprema of normalized orbits of $F$, is inspired by~\cite[Proof of Lemma~2]{gaubert_gunawardena}.

\begin{lemma}[cf.\ {\cite[Lemma~2]{gaubert_gunawardena}}]\label{certify_bias}
Suppose that $\mybot (\apshapley^{\ell}(\zerovector)) \ge \pert \ell$ for some $\pert \in \R$ and $\ell \in \N$. If we define 
\[
\hat{\bias} \coloneqq \zerovector \lor \Bigl(- \pert + \apshapley(\zerovector) \Bigr) \lor \dots \lor \Bigl(  - (\ell-1)\pert + \apshapley^{\ell-1}(\zerovector) \Bigr) \, ,
\]
then $\pert - \epsilon + \hat{\bias} \le \shapley(\hat{\bias})$. Analogously, if $\mytop (\apshapley^{\ell}(\zerovector)) \le \pert \ell$ and we define
\[
\overbar{\bias} \coloneqq \zerovector \land \Bigl(- \pert + \apshapley(\zerovector) \Bigr) \land \dots \land \Bigl(  - (\ell-1)\pert + \apshapley^{\ell-1}(\zerovector) \Bigr) \, ,
\]
then $\pert + \epsilon + \overbar{\bias} \ge \shapley(\overbar{\bias})$.
\end{lemma}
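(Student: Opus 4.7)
The plan is to combine three features of $\shapley$: monotonicity, additive homogeneity, and the pointwise $\epsilon$-approximation link with $\apshapley$ provided by~\cref{DefApproxOpMain}. Setting $u^i \coloneqq \apshapley^i(\zerovector)$, I would rewrite $\hat{\bias}$ as the pointwise supremum $\bigvee_{i=0}^{\ell-1}(-i\pert + u^i)$, and similarly $\overbar{\bias}$ as the infimum $\bigwedge_{i=0}^{\ell-1}(-i\pert + u^i)$.

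\textbf{Main steps.} For the first claim, I would apply $\shapley$ term by term. Since $\hat{\bias} \geq -i\pert + u^i$ for every $i \in \{0,\dots,\ell-1\}$, order-preservation combined with additive homogeneity yields
\[
\shapley(\hat{\bias}) \;\geq\; -i\pert + \shapley(u^i) \;\geq\; -i\pert + \apshapley(u^i) - \epsilon \;=\; \pert - \epsilon + \bigl(-(i+1)\pert + u^{i+1}\bigr),
\]
where the middle step uses $\shapley(u^i) \geq \apshapley(u^i) - \epsilon$, which follows from \cref{DefApproxOpMain}. Taking the supremum over $i$ gives
\[
\shapley(\hat{\bias}) \;\geq\; \pert - \epsilon + \bigvee_{j=1}^{\ell}\bigl(-j\pert + u^j\bigr).
\]

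\textbf{Finishing and hard part.} To conclude, I need the right-hand supremum (indexed by $j = 1,\dots,\ell$) to dominate $\hat{\bias}$ (indexed by $j = 0,\dots,\ell-1$). The terms with $j = 1,\dots,\ell-1$ are common, so the only thing to verify is that the ``new'' term $-\ell\pert + u^\ell$ dominates the ``missing'' initial term $\zerovector$; but this is exactly the hypothesis $\mybot(u^\ell) \geq \pert\ell$, which is equivalent to $-\ell\pert + u^\ell \geq \zerovector$. The dual statement for $\overbar{\bias}$ follows by the mirror argument: replace $\vee$ by $\wedge$, reverse all inequalities, and use $\shapley(u^i) \leq \apshapley(u^i) + \epsilon$, with the analogous boundary check $-\ell\pert + u^\ell \leq \zerovector$ guaranteed by $\mytop(u^\ell) \leq \pert\ell$. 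The only nontrivial point is the boundary identification absorbing the $\zerovector$ endpoint via the hypothesis on $\mybot$ or $\mytop$; everything else is routine telescoping using monotonicity, additive homogeneity, and the approximation bound.
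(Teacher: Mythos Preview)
Your proof is correct and follows essentially the same approach as the paper's: both use order-preservation and additive homogeneity to push $\shapley$ past each term of the supremum, apply the $\epsilon$-approximation bound to replace $\shapley(u^i)$ by $u^{i+1}-\epsilon$, and then absorb the shifted index via the hypothesis $\mybot(u^\ell)\ge\pert\ell$. The only cosmetic difference is that the paper invokes $\shapley(x\vee y)\ge \shapley(x)\vee \shapley(y)$ explicitly before unfolding the chain of inequalities, whereas you argue term by term and take the supremum at the end; these are equivalent.
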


\begin{proof}
Since $\shapley$ is order-preserving, we have $\shapley(x \lor y) \ge \shapley(x) \lor \shapley(y)$ for every $x,y \in \R^n$. Therefore, we get
\begin{align*}
&\shapley(\hat{\bias}) \ge \shapley(\zerovector) \lor \Bigl(- \pert + \shapley\bigl(\apshapley(\zerovector)\bigr) \Bigr) \lor \dots \lor \Bigl(  - (\ell-1)\pert + \shapley\bigl(\apshapley^{\ell-1}(\zerovector)\bigr) \Bigr) \\
&\ge -\epsilon + \apshapley(\zerovector) \lor \Bigl(- \pert - \epsilon + \apshapley^2(\zerovector) \Bigr) \lor \dots \lor \Bigl(  - (\ell-1)\pert - \epsilon + \apshapley^{\ell}(\zerovector) \Bigr) \\
&\ge -\epsilon + \apshapley(\zerovector) \lor \Bigl(- \pert - \epsilon + \apshapley^2(\zerovector) \Bigr) \lor \dots \lor \Bigl(  - (\ell-2)\pert - \epsilon + \apshapley^{\ell-1}(\zerovector) \Bigr) \lor (\pert - \epsilon ) \\
&= \pert - \epsilon + \hat{\bias} \, .
\end{align*}
The proof of the fact that $\pert + \epsilon + \overbar{\bias} \ge \shapley(\overbar{\bias})$ if $\mytop (\apshapley^{\ell}(\zerovector)) \le \pert \ell$ is analogous. 
\end{proof}

\begin{theorem}\label{pr:approx_constant_value}
When it halts, Procedure~\textsc{ApproximateConstantMeanPayoff} (\cref{algo-vi3}) returns an interval  containing
$[\lcw(F),\ucw(F)]$ of width at most the desired precision of approximation $\delta$. Moreover, the procedure does halt when $\ucw(F)=\lcw(F)$,
and the number of iterations of the first loop is bounded
by  $\lceil 8R/\delta \rceil$ where 
 $R \coloneqq \max\{\HNorm{v}, \HNorm{w}\}$ for any vectors $v,w \in \R^{n}$ that satisfy $\rho - \delta/8 + v \le \shapley(v)$ and $\rho + \delta/8 + w \ge \shapley(w)$, denoting $\rho$ the common value of $\ucw(F)$ and $\lcw(F)$.
\end{theorem}

\begin{proof}
Let $\epsilon \coloneqq \delta/8$ as in \cref{algo-vi3} and $\tilde{u}^{\ell} \coloneqq \tilde{\shapley}^{\ell}(\zerovector)$ for $\ell \in \N$ as before.
 
Assume first that Procedure~\textsc{ApproximateConstantMeanPayoff}  halts, \ie, that $\mytop (\tilde{u}^{\ell})- \mybot (\tilde{u}^{\ell})\leq (3/4)\delta \ell$ for some $\ell$. Combining \cref{top_and_bot,le:approx_value_iter} we get 
\[
 \mybot(\tilde{u}^\ell) - \ell \epsilon \le \mybot\bigl(\shapley^{\ell}(\zerovector)\bigr) \le {\ell}\lcw(\shapley) \le {\ell}\ucw(\shapley) \le \mytop\bigl(\shapley^{\ell}(\zerovector)\bigr) \le \mytop(\tilde{u}^\ell) + \ell \epsilon \; .
\]
It follows that the interval $[\mybot (\tilde{u}^\ell)/\ell - \delta/8, \mytop (\tilde{u}^\ell)/\ell + \delta/8]$ returned by the procedure contains $[\lcw(F),\ucw(F)]$ and that it has width at most $\delta$ since $\mytop (\tilde{u}^{\ell})- \mybot (\tilde{u}^{\ell})\leq (3/4)\delta \ell$.

Now, assume $\ucw(F)=\lcw(F)$. Let $\rho \coloneqq\ucw(F)$ and $R \coloneqq \max\{\HNorm{v}, \HNorm{w}\}$, where $v,w \in \R^{n}$ are any two vectors that satisfy $\rho - \epsilon + v \le \shapley(v)$ and $\rho + \epsilon + w \ge \shapley(w)$ (note that such vectors exist due to the fact that $\ucw(F)=\lcw(F)=\rho$ and $\epsilon > 0$). %
Since $v \le \mytop(v) + \zerovector$, for every $\ell \in \N$ we get $\ell \rho - \ell \epsilon + v \le \shapley^{\ell}(v) \le \mytop(v) + \shapley^{\ell}(\zerovector)$. Thus, $\shapley^{\ell}(\zerovector) \ge \ell \rho - \ell \epsilon + v - \mytop(v) \ge \ell \rho - \ell \epsilon - \HNorm{v}$. Analogously, we get $\shapley^{\ell}(\zerovector) \le \ell \rho + \ell \epsilon + \HNorm{w}$. Then,  \cref{le:approx_value_iter} %
shows that $\mytop(\tilde{u}^\ell) \le \ell \rho + \ell \epsilon + R + \ell \epsilon = \ell \rho + \ell \delta/4 + R$, and analogously that $\mybot(\tilde{u}^\ell) \ge \ell \rho - \ell \delta/4 - R$. Hence $\mytop(\tilde{u}^\ell)  - \mybot(\tilde{u}^\ell) \le \ell \delta/2 +2R$. In particular, for every $\ell \ge 8R/\delta$ we have $\mytop (\tilde{u}^\ell) - \mybot (\tilde{u}^\ell) \le (3/4)\delta \ell$ and so the stopping condition of the first loop is achieved within the first $\lceil 8R/\delta \rceil$ iterations. 

Finally, observe that if $x$ and $y$ are the vectors defined in the procedure, then \cref{certify_bias} implies $\kappa - \delta/8 + x = \kappa - \epsilon + x \le \shapley(x)$ and $\lambda + \delta/8 + y \ge \shapley(y)$. 
\end{proof}

\begin{remark}\label{rk-comparewithkm}
  An alternative approach to compute the ergodic constant $\rho$ of the game with Shapley operator $F$ is to apply relative value iteration with Krasno\-selskii--Mann damping, see~\cite{stott2020}, which consists in computing the sequences
  $y^k=F(x^{k-1})-\mytop(F(x^{k-1}))\mathbf{1}$, and $x^k = \alpha y^k + (1-\alpha) x^{k-1}$, for any $0<\alpha<1$. 
We have $\mybot(F(x^k)-x^k)
  \leq \rho \leq\mytop(F(x^k)-x^k)$, and it is shown
  there,
  as a consequence of a general result on Krasnoselskii--Mann iteration
  in Banach spaces~\cite{baillonbruck},  that as soon as $F$ admits a bias vector,
  $\|x^k - F(x^{k})\|_H \leq O(1/\sqrt{k})$,
  see~\cite[Coro.~13]{stott2020}.
In this way,
  we obtain an approximation of $\rho$ with a precision
  $\epsilon$ in $O(1/\epsilon^2)$ iterations. In contrast, \cref{pr:approx_constant_value} provides an approximation of the same quality
  in only $O(1/\epsilon)$ iterations. However, note that relative
  value iteration with Krasnoselskii--Mann damping produces a vector $x:=x^k$ such that $
  \rho  + x -O(1/\sqrt{k}) \leq F(x)\leq \rho  + x +O(1/\sqrt{k})$
  when it stops after $k$ iterations, whereas  Procedure~\textsc{ApproximateConstantMeanPayoff} returns {\em two} vectors $x$ and $y$ such that
  $\rho+ x -O(1/k) \leq F(x)$
  and $F(y)\leq \rho+y +O(1/k)$.
  In particular, this procedure returns a {\em pair} $(x,y)$ of approximate optimality certificates.
  Hence, by ``relaxing'' the constraint that ``$x=y$''
  in the optimality certificates,   
  we passed from an iteration complexity of $O(1/\epsilon^2)$
  to $O(1/\epsilon)$ to get an $\epsilon$-approximation.
  Note also that Procedure~\textsc{ApproximateConstantMeanPayoff}
  differs from Krasnoselskii--Mann damping in the fact that it replaces
  a linear averaging of $x^{k-1}$ and $y^k$ by a non-linear ``averaging''
  operation,
  taking a supremum or infimum of a normalized orbit, as in \cref{certify_bias}.
  Alternatively, we could rely on the recent result~\cite{Contreras2022},
  showing that a Halpern-type iteration reaches
  a $O(1/\epsilon)$ iteration complexity bound,
  this would also lead to an $O(1/\epsilon)$ bound in our
  application, but only for games for which a bias vector does exist,
  whereas \Cref{pr:approx_constant_value} only requires $\ucw(F)=\lcw(F)$.
\todo[color=red!30]{SG: added remark/ref to the result of contreras and cominetti}
  \end{remark}

\subsection{Deciding Whether the Value Is Independent of the Initial State}
\label{sec:abstract_cst}

In this section, we will show how the value iteration algorithm can be adapted to decide whether or not a given game has constant value.  %
Our analysis is based on an abstract notion of dominion. 

As previously, we suppose that $\shapley \colon \R^{n} \to \R^{n}$ is an order-preserving and additively homogeneous operator.
Recall that thanks to~\cref{e-extended}, $\shapley$ is canonically extended
to define a self-map of $\trop^n$. 
Furthermore, given a nonempty set $\set \subset [n]$, we define the operator $\shapley^{\set} \colon \trop^{\set} \to \trop^\set$ as $\shapley^{\set} \coloneqq \proj^{\set} \circ {\shapley} \circ \revproj^{\set}$, where $\proj^{\set} \colon \trop^n \to \trop^{\set}$ is the projection on the coordinates in $\set$ which is defined as usual by $\proj^{\set}_j(x) = x_j$ for $j \in \set$, and $\revproj^{\set} \colon \trop^{\set} \to \trop^n$ is defined by $\revproj^{\set}_j(x) = x_j$ if $j \in \set$ and $\revproj^{\set}_j(x) = \zero$ otherwise. 
The next two lemmas provide elementary properties of the operators $\shapley^{\set}$.
\begin{lemma}
  If $\set \subset [n]$ is any set, then the operator $\shapley^{\set}$ is continuous, order-preserving, and additively homogeneous. 
\end{lemma}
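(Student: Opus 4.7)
The plan is simple: each of the three properties claimed for $\shapley^{\set}$ will follow by composition from the corresponding property of the canonical extension of $\shapley$ to $\trop^n$ (guaranteed by the paragraph preceding the statement, via~\cref{e-extended}) once I verify that both $\proj^{\set}$ and $\revproj^{\set}$ are themselves continuous, order-preserving, and compatible with scalar addition, adopting the standard convention $\lambda + \zero = \zero$ for $\lambda \in \R$.

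First I would record the elementary properties of the auxiliary maps. The projection $\proj^{\set}$ is continuous as a coordinate projection on the product space $\trop^n$, order-preserving because it is defined coordinatewise, and it satisfies $\proj^{\set}(\lambda + y) = \lambda + \proj^{\set}(y)$ for every $\lambda \in \R$ and $y \in \trop^n$. Similarly, $\revproj^{\set}$ is continuous since each of its components is either the constant $\zero$ or a coordinate projection, and it is order-preserving coordinatewise.

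The only slightly non-formal point, which I would check explicitly, is that $\revproj^{\set}(\lambda + x) = \lambda + \revproj^{\set}(x)$ for every $\lambda \in \R$ and $x \in \trop^{\set}$: on coordinates $j \in \set$ both sides equal $\lambda + x_j$, while on coordinates $j \notin \set$ both sides equal $\zero$ under the convention $\lambda + \zero = \zero$. Once this identity is in hand, additive homogeneity of $\shapley^{\set}$ follows by chaining:
\[
\shapley^{\set}(\lambda + x) = \proj^{\set}\bigl(\shapley(\lambda + \revproj^{\set}(x))\bigr) = \proj^{\set}\bigl(\lambda + \shapley(\revproj^{\set}(x))\bigr) = \lambda + \shapley^{\set}(x) \, .
\]
Continuity and order-preservation of $\shapley^{\set}$ are then immediate from the corresponding properties of its three factors.

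There is no real obstacle here; the only care needed is the bookkeeping involving $\zero = -\infty$ on the coordinates outside $\set$, which is handled uniformly by the convention $\lambda + \zero = \zero$. I would therefore keep the written proof short, confining the substance to the verification of the identity $\revproj^{\set}(\lambda + x) = \lambda + \revproj^{\set}(x)$ and citing the continuity/monotonicity of $\proj^{\set}$ and $\revproj^{\set}$ as evident from their coordinatewise definitions.
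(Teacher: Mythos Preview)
Your proposal is correct and follows exactly the same approach as the paper's proof, which is the one-liner ``By definition, $\shapley^{\set}$ is a composition of continuous, order-preserving, and additively homogeneous maps.'' You have simply unpacked the verification that $\proj^{\set}$ and $\revproj^{\set}$ possess these properties (in particular the additive homogeneity of $\revproj^{\set}$ via the convention $\lambda + \zero = \zero$), which the paper leaves implicit.
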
     
\begin{proof}
By definition, $\shapley^{\set}$ is a composition of continuous, order-preserving, and additively homogeneous maps. 
\end{proof}

\begin{lemma}\label{le:smaller_ucw}
  If $\setI$ and $\setII$ are two nonempty subsets of $[n]$ such that $\setI \subset \setII$, then $\bigl(\shapley^{\setI}\bigr)^{\ell}_j(\zerovector) \le \bigl(\shapley^{\setII}\bigr)^{\ell}_j(\zerovector)$ for all $\ell \in \N$ and $j \in \setI$.
\end{lemma}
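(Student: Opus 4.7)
The plan is to prove the inequality by straightforward induction on $\ell$, using that $\shapley$ (extended to $\trop^n$ via \cref{e-extended}) is order-preserving, together with the elementary observation that the ``fill with $\zero$'' extension operators $\revproj^{\set}$ respect inclusion in a monotone sense. The main point to get right is the comparison of $\revproj^{\setI}$ and $\revproj^{\setII}$ on vectors that already satisfy the inductive hypothesis on the smaller index set $\setI$.

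\medskip

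\textbf{Base case.} Since $\zerovector$ has all coordinates equal to $0$, we have $(\shapley^{\setI})^{0}_j(\zerovector) = 0 = (\shapley^{\setII})^{0}_j(\zerovector)$ for every $j \in \setI$.

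\medskip

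\textbf{Inductive step.} Set $a^{\ell} \coloneqq (\shapley^{\setI})^{\ell}(\zerovector) \in \trop^{\setI}$ and $b^{\ell} \coloneqq (\shapley^{\setII})^{\ell}(\zerovector) \in \trop^{\setII}$, and assume that $a^{\ell}_j \le b^{\ell}_j$ for all $j \in \setI$. I claim that $\revproj^{\setI}(a^{\ell}) \le \revproj^{\setII}(b^{\ell})$ as vectors in $\trop^n$: indeed, for $j \in \setI$ both sides equal $a^{\ell}_j$ and $b^{\ell}_j$ respectively (so the inductive hypothesis applies), for $j \in \setII \setminus \setI$ the left-hand side equals $\zero$ while the right-hand side equals $b^{\ell}_j \ge \zero$, and for $j \notin \setII$ both sides equal $\zero$. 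Applying the order-preserving extension $\shapley \colon \trop^n \to \trop^n$ and then projecting onto coordinates in $\setI$ (which is again order-preserving) gives
\[
a^{\ell+1}_j = \proj^{\setI}_j\!\bigl(\shapley(\revproj^{\setI}(a^{\ell}))\bigr) \le \proj^{\setI}_j\!\bigl(\shapley(\revproj^{\setII}(b^{\ell}))\bigr) = \proj^{\setII}_j\!\bigl(\shapley(\revproj^{\setII}(b^{\ell}))\bigr) = b^{\ell+1}_j
\]
for every $j \in \setI$, completing the induction.

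\medskip

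\textbf{Collatz--Wielandt comparison.} Taking the maximum over $j\in\setI$ of the entrywise inequality and using $\setI \subset \setII$, we obtain $\mytop(a^{\ell}) = \max_{j \in \setI} a^{\ell}_j \le \max_{j \in \setI} b^{\ell}_j \le \max_{j \in \setII} b^{\ell}_j = \mytop(b^{\ell})$. Dividing by $\ell$ and letting $\ell \to \infty$, \cref{Theo:UCW} applied to both $\shapley^{\setI}$ and $\shapley^{\setII}$ yields $\ucw(\shapley^{\setI}) \le \ucw(\shapley^{\setII})$. The only subtlety is that the results of \cref{sec:perron_frobenius} were stated for self-maps of $\R^n$, whereas here one works on $\trop^n$; but this is harmless since whenever an orbit starting from $\zerovector$ produces $\zero$ entries the corresponding $\max$ is still dominated by the remaining finite entries, and the conclusion for $\ucw$ only requires the one-sided bound produced by the induction. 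I do not expect this proof to present any real obstacle; the care required is entirely bookkeeping about the behaviour of $\revproj^{\setI}$ versus $\revproj^{\setII}$ on coordinates in $\setII \setminus \setI$.
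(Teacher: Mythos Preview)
Your proof is correct and follows essentially the same route as the paper's: both argue by induction on $\ell$, the key step being that $\revproj^{\setI}(a^{\ell}) \le \revproj^{\setII}(b^{\ell})$ in $\trop^n$ (checked coordinatewise using $\setI \subset \setII$), after which the order-preserving property of $\shapley$ and a projection give the inductive step; the $\ucw$ comparison then follows from \cref{Theo:UCW}. The only cosmetic difference is that the paper isolates the one-step monotonicity ``if $x_j \le y_j$ on $\setI$ then $\shapley^{\setI}_j(x) \le \shapley^{\setII}_j(y)$'' as a standalone observation before inducting, whereas you fold it directly into the induction.
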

\begin{proof}
Observe that if $x \in \trop^{\setI}$ and $y \in \trop^{\setII}$ are such that $x_j \le y_j$ for all $j \in \setI$, then $\shapley^{\setI}_j(x) \le \shapley^{\setII}_j(y)$ for all $j \in \setI$. Indeed, the facts that $x_j \le y_j$ for all $j \in \setI$ and that $\setI \subset \setII$ imply $\revproj^{\setI}(x) \le \revproj^{\setII}(y)$, and so for any $j \in \setI$ we have $\shapley^{\setI}_j(x) = {\shapley}_j(\revproj^{\setI}(x)) \le {\shapley}_j(\revproj^{\setII}(y)) = \shapley^{\setII}_j(y)$ because ${\shapley}$ is order-preserving. 
 Hence, by setting $(x,y) \coloneqq (\zerovector,\zerovector)$ we get the first claim for $\ell = 1$. Then, this claim follows by induction setting $x \coloneqq (\shapley^{\setI})^{\ell - 1}(\zerovector)$ and $y \coloneqq (\shapley^{\setII})^{\ell - 1}(\zerovector)$. 
\end{proof}

\begin{definition}
A \emph{dominion (of Player Max)} is a nonempty set $\dominion \subset [n]$ such that $\shapley^{\dominion}$ preserves $\R^{\dominion}$, i.e., such that  $\shapley^{\dominion}(x) \in \R^{\dominion}$ for all $x \in \R^{\dominion}$.
\end{definition}
As discussed in \cite{issac2016jsc,hochartdominion}, for stochastic mean-payoff games (with finite action spaces), a dominion of a player can be interpreted as a set of states such that the player can force the game to stay in this set if the initial state belongs to it. This terminology differs from the one of~\cite{jurdzinski_subexponential}, in which a dominion is required in addition to consist only of initial states that are winning for this player. 

\begin{lemma}\label{domin_at_zero}
A set $\dominion \subset [n]$ is a dominion if and only if $\shapley^\dominion(\zerovector) \in \R^{\dominion}$.
\end{lemma}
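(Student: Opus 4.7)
The plan is to prove the two implications separately, with the forward direction being immediate from the definition, and the reverse direction relying on a sandwiching argument using $\zerovector$ together with the order-preserving and additively homogeneous properties of $\shapley^\dominion$ established in the previous lemma.

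For the ``only if'' direction, I would simply note that $\zerovector \in \R^{\dominion}$, so that the very definition of a dominion applied to $x=\zerovector$ immediately yields $\shapley^\dominion(\zerovector) \in \R^{\dominion}$.

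For the ``if'' direction, consider an arbitrary $x \in \R^{\dominion}$. Since all entries of $x$ are finite, $\mybot(x)$ and $\mytop(x)$ are elements of $\R$, and one has the bounds $\mybot(x) + \zerovector \leq x \leq \mytop(x) + \zerovector$ componentwise on $\dominion$. Applying $\shapley^\dominion$, which is order-preserving and additively homogeneous by the preceding lemma, and using that the shifts $\mybot(x)$, $\mytop(x)$ are finite reals, one obtains
\[
\mybot(x) + \shapley^\dominion(\zerovector) \leq \shapley^\dominion(x) \leq \mytop(x) + \shapley^\dominion(\zerovector) \, .
\]
By the hypothesis $\shapley^\dominion(\zerovector) \in \R^{\dominion}$, both the lower and upper bounds belong to $\R^{\dominion}$, forcing $\shapley^\dominion(x) \in \R^{\dominion}$ as well, and so confirming that $\dominion$ is a dominion.

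The only point requiring care, which is the main (and small) obstacle, is that we are working with the canonical extension of $\shapley$, hence of $\shapley^{\dominion}$, to $\trop^\dominion$, so one must verify that the identity $\shapley^\dominion(\lambda + y) = \lambda + \shapley^\dominion(y)$ for $\lambda \in \R$ is legitimately applied to $y = \zerovector$. This is fine because $\revproj^\dominion$ commutes with the addition of a finite real scalar (a finite shift does not alter the $\zero$ coordinates introduced outside of $\dominion$), so the full chain $\proj^\dominion \circ \shapley \circ \revproj^\dominion$ inherits additive homogeneity from the extended $\shapley$. With this verification, the argument above goes through without further issue.
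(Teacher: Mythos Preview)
Your proof is correct and follows essentially the same approach as the paper's. The only minor difference is that the paper uses just the lower bound $\mybot(x) + \shapley^{\dominion}(\zerovector) \le \shapley^{\dominion}(x)$, which already suffices to rule out $-\infty$ entries; your additional upper bound via $\mytop(x)$ is harmless but unnecessary, and your closing remark on additive homogeneity of the extended operator is simply a careful unpacking of what the preceding lemma already asserts.
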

\begin{proof}
If $\dominion \subset [n]$ is a dominion, then $\shapley^\dominion(\zerovector) \in \R^{\dominion}$. Conversely, given any $x \in \R^{\dominion}$, we have $\shapley^{\dominion}(x) \ge \mybot(x) + \shapley^{\dominion}(\zerovector)$ because $x \ge \mybot(x) + \zerovector$ and $\shapley^{\dominion}$ is order-preserving and additively homogeneous. Thus, if $\shapley^\dominion(\zerovector) \in \R^{\dominion}$, we conclude that $\shapley^{\dominion}(x) \in \R^{\dominion}$ for all $x \in \R^{\dominion}$. 
\end{proof}

The procedures that we discuss in this section require an additional assumption on the structure of the Shapley operator $\shapley$.

\begin{assumption}\label{as:good_operators}
We assume that the limit $\gameval^{\dominion} \coloneqq \lim_{\ell \to \infty}\frac{(\shapley^{\dominion})^{\ell}(\zerovector)}{\ell} \in \R^\dominion$ exists for every dominion $\dominion \subset [n]$. Furthermore, we assume that the set $\domMaxVal \coloneqq \{j \in [n] \colon \gameval^{[n]}_j = \ucw(\shapley)\}$ is a dominion and that it satisfies $\lcw(\shapley^{\domMaxVal}) = \ucw(\shapley^{\domMaxVal}) = \ucw(\shapley)$.
\end{assumption}

\begin{remark}\label{rem:o-minimal-dominion}
We note that the first part of \cref{as:good_operators} holds automatically when the Shapley operator $\shapley \colon \R^n\to \R^n$ is definable in an o-minimal structure. Indeed, in this case the relation~\cref{e-extended} implies that $\shapley^{\dominion}$ is definable in the same structure for every dominion $\dominion$, so $\gameval^{\dominion}$ exists by \cref{th:escape_omin}. We will see
that the second part of the assumption applies to the games considered in this paper.
\end{remark}
  \Cref{as:good_operators} will allow us to make an induction on the number states, by a reduction
  to a simpler game with a reduced state space $\dominion$. In particular,
  the assumption that the limit
  $\gameval^{\dominion}= \lim_{\ell \to \infty}\frac{(\shapley^{\dominion})^{\ell}(\zerovector)}{\ell}$
  exists will allow us to apply value iteration to the Shapley operator of the reduced
  game, $\shapley^{\dominion}$.

\begin{lemma}\label{le:smaller_val_domin}
If $\dominionI$ and $\dominionII$ are two  dominions such that $\dominionI \subset \dominionII$, then  $\gameval^{\dominionI}_j \le \gameval^{\dominionII}_{j}$ for all $j \in \dominionI$.
\end{lemma}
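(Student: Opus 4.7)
The statement is essentially the ``passage to the limit'' counterpart of \cref{le:smaller_ucw}, and the plan is to exploit exactly this. The key observation is that \cref{le:smaller_ucw} already gives the pointwise inequality $(\shapley^{\dominionI})^{\ell}_j(\zerovector) \le (\shapley^{\dominionII})^{\ell}_j(\zerovector)$ for all $\ell\in\N$ and all $j\in \dominionI$, using only that $\dominionI$ and $\dominionII$ are nonempty subsets of $[n]$ with $\dominionI \subset \dominionII$; the dominion structure is not needed for this finite-horizon inequality.

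The proof then proceeds by dividing both sides by $\ell$ and letting $\ell\to\infty$. At this stage, the hypothesis that $\dominionI$ and $\dominionII$ are dominions enters, via \cref{as:good_operators}: it guarantees that the limits
\[
\gameval^{\dominionI} = \lim_{\ell\to\infty} \frac{(\shapley^{\dominionI})^{\ell}(\zerovector)}{\ell}, \qquad \gameval^{\dominionII} = \lim_{\ell\to\infty} \frac{(\shapley^{\dominionII})^{\ell}(\zerovector)}{\ell}
\]
exist as vectors in $\R^{\dominionI}$ and $\R^{\dominionII}$, respectively. Taking limits in the entrywise inequality therefore yields $\gameval^{\dominionI}_j \le \gameval^{\dominionII}_j$ for every $j\in \dominionI$, which is the claim.

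There is no real obstacle: the work has already been done in \cref{le:smaller_ucw}, and \cref{as:good_operators} is exactly what is needed to justify the limit. The only point that deserves a brief mention in the writeup is that, since $\dominionI$ and $\dominionII$ are dominions, $(\shapley^{\dominionI})^{\ell}(\zerovector)$ and $(\shapley^{\dominionII})^{\ell}(\zerovector)$ remain in $\R^{\dominionI}$ and $\R^{\dominionII}$ respectively (by \cref{domin_at_zero} and induction), so that the inequality after dividing by $\ell$ is a genuine inequality between finite real numbers before passing to the limit.
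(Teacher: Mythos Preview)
Your proposal is correct and follows exactly the same approach as the paper: invoke \cref{le:smaller_ucw} for the finite-horizon inequality, then divide by $\ell$ and pass to the limit using \cref{as:good_operators}. The paper's proof is just a one-line version of what you wrote.
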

\begin{proof}
Since $\gameval^{\dominionI} = \lim_{\ell \to \infty}\frac{(\shapley^{\dominionI})^{\ell}(\zerovector)}{\ell}$ and $\gameval^{\dominionII} = \lim_{\ell \to \infty}\frac{(\shapley^{\dominionII})^{\ell}(\zerovector)}{\ell}$, the lemma readily follows from \cref{le:smaller_ucw}. 
\end{proof}

\begin{lemma}\label{le:dominion_containing_topclass}
  Suppose that $\dominion \subset [n]$ is a dominion that contains the set
  of states of maximal value, i.e., $  \domMaxVal= \{j \in [n] \colon \gameval_j = \ucw(\shapley)\} \subset \dominion$. Then, $\ucw(\shapley) = \ucw(\shapley^\dominion)$ and $\domMaxVal
  = \{j \in \dominion \colon \gameval_j^{\dominion} = \ucw(\shapley^\dominion)\}$.
\end{lemma}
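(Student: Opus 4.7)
The plan is to establish both claims by combining the monotonicity properties of $\ucw$ and $\chi^{\dominion}$ with respect to inclusion of dominions (\cref{le:smaller_ucw,le:smaller_val_domin}), and with the hypothesis that the inner dominion $\domMaxVal = \{j \in [n] \colon \chi_j = \ucw(F)\}$ satisfies $\lcw(F^{\domMaxVal}) = \ucw(F^{\domMaxVal}) = \ucw(F)$ from \cref{as:good_operators}. Throughout, I identify $\chi = \chi^{[n]}$.

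For the first claim, I would note that $\domMaxVal \subset \dominion \subset [n]$, so \cref{le:smaller_ucw} yields
\[
\ucw(F) = \ucw(F^{\domMaxVal}) \leq \ucw(F^{\dominion}) \leq \ucw(F^{[n]}) = \ucw(F) \, ,
\]
forcing equality throughout. In particular, $\ucw(F^{\dominion}) = \ucw(F)$.

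For the forward inclusion of the set equality, I would pick $j \in \domMaxVal$; then $j \in \dominion$ since $\domMaxVal \subset \dominion$. The key observation is that $\chi^{\domMaxVal}_j = \ucw(F)$ for every $j \in \domMaxVal$: indeed, \cref{Theo:UCW} applied to $F^{\domMaxVal}$ gives $\lim_N \mytop\bigl((F^{\domMaxVal})^N(\zerovector)/N\bigr) = \ucw(F^{\domMaxVal})$ and $\lim_N \mybot\bigl((F^{\domMaxVal})^N(\zerovector)/N\bigr) = \lcw(F^{\domMaxVal})$; since these coincide by \cref{as:good_operators} and the limit $\chi^{\domMaxVal}$ exists, every entry of $\chi^{\domMaxVal}$ must equal this common value $\ucw(F)$. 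Then, \cref{le:smaller_val_domin} (applied twice, to $\domMaxVal \subset \dominion$ and to $\dominion \subset [n]$) yields
\[
\ucw(F) = \chi^{\domMaxVal}_j \leq \chi^{\dominion}_j \leq \chi_j = \ucw(F) \, ,
\]
so $\chi^{\dominion}_j = \ucw(F) = \ucw(F^{\dominion})$, as required.

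For the reverse inclusion, I would pick $j \in \dominion$ with $\chi^{\dominion}_j = \ucw(F^{\dominion})$. By \cref{le:smaller_val_domin}, $\chi^{\dominion}_j \leq \chi_j$. On the other hand, \cref{Theo:UCW} applied to $F$ together with the existence of $\chi$ (ensured by \cref{as:good_operators}) implies $\chi_j \leq \mytop(\chi) = \ucw(F)$. Combining these with the first claim $\ucw(F^{\dominion}) = \ucw(F)$ gives $\ucw(F) = \chi^{\dominion}_j \leq \chi_j \leq \ucw(F)$, hence $\chi_j = \ucw(F)$, i.e., $j$ belongs to $\{j \in [n] \colon \chi_j = \ucw(F)\}$. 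The main conceptual point, and the only non-routine step, is the identification $\chi^{\domMaxVal}_j = \ucw(F)$ for all $j \in \domMaxVal$, which uses the equality $\lcw(F^{\domMaxVal}) = \ucw(F^{\domMaxVal})$ built into \cref{as:good_operators}; the rest is a direct sandwich argument from the monotonicity lemmas.
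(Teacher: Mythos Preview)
Your proof is correct and follows essentially the same approach as the paper: both rely on the sandwich $\gameval^{\domMaxVal}_j \le \gameval^{\dominion}_j \le \gameval_j$ from \cref{le:smaller_val_domin} together with $\gameval^{\domMaxVal}_j = \ucw(\shapley)$ from \cref{as:good_operators}. The only cosmetic difference is that you establish $\ucw(\shapley^\dominion)=\ucw(\shapley)$ first via \cref{le:smaller_ucw}, whereas the paper derives it as a byproduct of showing $\gameval^{\dominion}_j=\ucw(\shapley)$ on $\domMaxVal$ and $\gameval^{\dominion}_j<\ucw(\shapley)$ off $\domMaxVal$.
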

\begin{proof}
  We have $\gameval_j^{\dominion} \le \gameval_j$ for all $j \in \dominion$ by \cref{le:smaller_val_domin}. %
The set $\domMaxVal$ is a dominion of the operator $\shapley$ by \cref{as:good_operators}. Moreover, since $\domMaxVal \subset \dominion$, we have $\gameval^{\domMaxVal}_j \le \gameval_j^{\dominion} \le \gameval_j = \ucw(\shapley)$ for all $j \in \domMaxVal$ by \cref{le:smaller_val_domin}. As $\gameval^{\domMaxVal}_j = \ucw(\shapley)$ for all $j \in \domMaxVal$ by \cref{as:good_operators}, it follows that $\gameval_j^{\dominion} = \ucw(\shapley)$ for all $j \in \domMaxVal$. Furthermore, if $j \notin \domMaxVal$, then $\gameval_j^{\dominion} \le \gameval_j < \ucw(\shapley)$, which finishes the proof. 
\end{proof}

From now on, we denote $\gameval \coloneqq \gameval^{[n]}$.
The following theorem applies to Shapley operators for which an a priori
separation bound is known: if $\ucw(F)>\lcw(F)$, 
it requires an apriori bound $\delta>0$ such that
$\ucw(F)-\lcw(F)>\delta$.
We note that the existence
of the approximate sub and super-eigenvectors
$v$ and $w$ used in this theorem
follows from \cref{as:good_operators}.%
\begin{figure}[t]
\begin{footnotesize}
  \begin{algorithmic}[1]
    \Procedure{DecideConstantValue}{${\shapley},\delta,R$}
    \State $u \coloneqq \zerovector \in \R^n$, $\ell\coloneqq 0\in \N$.

    \State $\tilde{\shapley}\coloneqq$ any map such that $\tilde{\shapley}(u)$ is at most at distance $\epsilon \coloneqq \delta/8$ in the sup-norm from $\shapley(u)$.
\Repeat
\label{state-iternew3} $u \coloneqq \tilde{\shapley}(u)$; $\ell\coloneqq \ell+1$  
\Until{$\mytop (u)- \mybot (u) \le (3/4)\delta \ell$ or $\ell = 1 + \lceil 8R/\delta\rceil$}
\If{$\ell = 1 + \lceil 8R/\delta\rceil$}
\State
$\setAlgo \coloneqq \{i\colon u_i = \mybot(u)\}$
\State\Return{$\setAlgo$} \Lcomment{The value of the game depends on the initial state. We have $\gameval_{i} < \ucw(\shapley)$ for all $i\in \setAlgo$.}

\Else
\State\Return{$\varnothing$}
\Lcomment{The value of the game is independent of the initial state}.
\EndIf
\EndProcedure
\end{algorithmic}
\end{footnotesize}
\caption{Algorithm that decides if the value is constant.} %
\label{algo-vi4}
\end{figure}

\begin{theorem}\label{Th-algo-const-value}
  Suppose that $\shapley$ is such that either
  $\ucw(\shapley)= \lcw(\shapley)$ or
  $\ucw(\shapley) - \lcw(\shapley) > \delta$
  for some $\delta>0$.
Recall that $\domMaxVal$ denotes the set of states of maximal value and let $R \coloneqq \max\{\HNorm{v}, \HNorm{w}\}$, where $v,w \in \R^{\domMaxVal}$ are any two vectors that satisfy $\ucw(\shapley) - \delta/8 + v \le \shapley^{\domMaxVal}(v)$ and $\ucw(\shapley) + \delta/8 + w \ge \shapley^{\domMaxVal}(w)$. Then,  Procedure~\textsc{DecideConstantValue} (\cref{algo-vi4}) is correct.
\end{theorem}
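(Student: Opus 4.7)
The plan is to dispatch on the dichotomy in the hypothesis. When $\ucw(\shapley) = \lcw(\shapley)$, every $\gameval_j$ is pinched between $\lcw(\shapley)$ and $\ucw(\shapley)$, so $\domMaxVal$ coincides with $[n]$; hence $\shapley^{\domMaxVal} = \shapley$ and $(v,w)$ are exactly the sub and super-eigenvectors used in the analysis of Procedure \textsc{ApproximateConstantMeanPayoff}. Reusing the estimate $\mytop(\tilde u^\ell) - \mybot(\tilde u^\ell) \le \ell\delta/2 + 2R$ derived in the proof of \cref{pr:approx_constant_value}, the first stopping condition triggers at some $\ell \le \lceil 8R/\delta\rceil$, strictly before the iteration budget $\ell_* \coloneqq 1 + \lceil 8R/\delta \rceil$. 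The procedure then enters the \emph{else} branch and returns $\varnothing$, as required.

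For the case $\ucw(\shapley) - \lcw(\shapley) > \delta$, I will first check that the first stopping condition is never met; this is a one-line calculation combining \cref{top_and_bot} with \cref{le:approx_value_iter} and $\epsilon = \delta/8$:
\[
\mytop(\tilde u^\ell) - \mybot(\tilde u^\ell) \;\ge\; \ell\bigl(\ucw(\shapley) - \lcw(\shapley)\bigr) - 2\ell\epsilon \;>\; (3/4)\delta\ell.
\]
Thus the loop runs to $\ell = \ell_*$ and the procedure returns $\setAlgo = \{i : \tilde u^{\ell_*}_i = \mybot(\tilde u^{\ell_*})\}$. The only remaining claim is then $\setAlgo \cap \domMaxVal = \emptyset$, which immediately gives $\gameval_i < \ucw(\shapley)$ for each $i \in \setAlgo$.

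The strategy for this last claim is to lift the sub-eigenvector $v \in \R^\domMaxVal$ to $\tilde v \coloneqq \revproj^{\domMaxVal}(v) \in \trop^n$ (so $\tilde v_j = v_j$ on $\domMaxVal$ and $\tilde v_j = -\infty$ otherwise) and iterate inside $\trop^n$ using the canonical extension $\bar\shapley$. The key observation is that $\bar\shapley(\tilde v) \ge (\ucw(\shapley) - \delta/8) + \tilde v$ holds entrywise in $\trop^n$: on $\domMaxVal$ it reads $\bar\shapley(\tilde v)_j = \shapley^{\domMaxVal}(v)_j \ge (\ucw(\shapley) - \delta/8) + v_j$ by the defining property of $v$, while outside $\domMaxVal$ the right-hand side equals $-\infty$. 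Iterating via the order-preservation and additive homogeneity of $\bar\shapley$ yields $\bar\shapley^\ell(\tilde v) \ge \ell(\ucw(\shapley) - \delta/8) + \tilde v$; sandwiching $\tilde v \le \mytop(v)\cdot \mathbf{1}$ then gives $\bar\shapley^\ell(\tilde v) \le \mytop(v) + \shapley^\ell(\zerovector)$. Restricting to $j \in \domMaxVal$ and using $\mytop(v) - v_j \le \HNorm{v} \le R$ produces the crucial lower bound $\shapley^\ell(\zerovector)_j \ge \ell(\ucw(\shapley) - \delta/8) - R$.

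To conclude, I will apply \cref{le:approx_value_iter} to obtain $\tilde u^\ell_j \ge \ell\,\ucw(\shapley) - \ell\delta/4 - R$ for $j \in \domMaxVal$, and compare with $\mybot(\tilde u^\ell) \le \ell\,\lcw(\shapley) + \ell\delta/8$, itself a consequence of \cref{top_and_bot} and \cref{le:approx_value_iter}. Subtracting gives
\[
\tilde u^\ell_j - \mybot(\tilde u^\ell) \;\ge\; \ell\bigl(\ucw(\shapley) - \lcw(\shapley)\bigr) - 3\ell\delta/8 - R \;>\; (5/8)\ell\delta - R,
\]
which is strictly positive at $\ell = \ell_*$. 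Hence no $j \in \domMaxVal$ achieves the minimum of $\tilde u^{\ell_*}$, so $\setAlgo \cap \domMaxVal = \emptyset$, finishing the proof. The step I expect to be most delicate is the $\trop$-valued lift: one cannot iterate the sub-eigenvector inequality of the reduced operator $\shapley^{\domMaxVal}$ directly within $\R^n$, and passing to $\bar\shapley$ requires some care with tropical arithmetic, but it is essentially forced by the monotonicity and additive homogeneity of the extension.
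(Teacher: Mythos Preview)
Your proof is correct and follows essentially the same approach as the paper: the paper packages your ``tropical lift'' argument (lifting $v$ to $\revproj^{\domMaxVal}(v)$ and iterating the sub-eigenvector inequality in $\trop^n$) into a separate lemma (\cref{le:value_iter_maxstates}), which it then combines with \cref{top_and_bot,le:approx_value_iter} and the same arithmetic to separate $\tilde u^{\ell_*}_j$ for $j\in\domMaxVal$ from $\mybot(\tilde u^{\ell_*})$. The only cosmetic difference is that the paper compares both quantities to $\ell\,\ucw(\shapley)$ with explicit multiples of $\delta/8$, whereas you subtract directly and check positivity of $(5/8)\ell_*\delta - R$; both arguments are equivalent.
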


The proof relies on the following lemma. 

\begin{lemma}\label{le:value_iter_maxstates}
Let $\apshapley \colon \R^{n} \to \R^{n}$ be such that $\supnorm{\apshapley(x) - \shapley(x)} \le \epsilon$ for all $x \in \R^n$. %
Suppose that $\set \subset [n]$, $v \in \R^{\set}$ and $\gamma > 0$ are such that $\ucw(\shapley) - \gamma + v \le \shapley^{\set}(v)$. Then, for all $\ell \in \N$ and $j \in \set$, we have $\apshapley^{\ell}_{j}(\zerovector) \ge \ell(\ucw(\shapley) - \gamma - \epsilon) - \HNorm{v}$.
\end{lemma}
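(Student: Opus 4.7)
My plan is to prove this lemma in three steps: first iterate the given sub-eigenvector inequality along $\shapley^{\set}$, then transfer the resulting bound from $\shapley^{\set}$ to the full operator $\shapley$ acting on the canonical extension to $\trop^n$, and finally pass from $\shapley^\ell(\zerovector)$ to $\apshapley^\ell(\zerovector)$ using \cref{le:approx_value_iter}.

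For the first step, since $\shapley^{\set}$ is order-preserving and additively homogeneous (by the first auxiliary lemma in \cref{sec:abstract_cst}), iterating the inequality $\ucw(\shapley) - \gamma + v \le \shapley^{\set}(v)$ in the usual way yields, by a simple induction, $\ell(\ucw(\shapley)-\gamma) + v \le (\shapley^{\set})^{\ell}(v)$ for every $\ell \in \N$.

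For the second step, I want to compare $(\shapley^{\set})^\ell(v)$ with $\shapley^\ell$ applied to the injection $\revproj^{\set}(v)\in\trop^n$ (which sets coordinates outside $\set$ to $-\infty$, legitimate thanks to \cref{e-extended}). The key observation is the coordinatewise inequality $\revproj^{\set}(\shapley^{\set}(x))\le \shapley(\revproj^{\set}(x))$: on coordinates $j\in \set$ both sides equal $\shapley_j(\revproj^{\set}(x))$, while outside $\set$ the left-hand side equals $-\infty$. Applying $\shapley$ to both sides and using order preservation, a straightforward induction shows that $\revproj^{\set}((\shapley^{\set})^\ell(x))\le \shapley^\ell(\revproj^{\set}(x))$ for every $\ell$, hence in particular $(\shapley^{\set})^\ell(v)_j \le \shapley^\ell(\revproj^{\set}(v))_j$ for all $j\in \set$. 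Combined with the previous step, this gives $\ell(\ucw(\shapley)-\gamma) + v_j \le \shapley^\ell(\revproj^{\set}(v))_j$ for $j\in\set$.

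For the third step, I note that $\revproj^{\set}(v)\le \mytop(v)+\zerovector$ entrywise (trivially on the complement of $\set$), so by order preservation and additive homogeneity of $\shapley^\ell$ on $\trop^n$, we obtain $\shapley^\ell(\revproj^{\set}(v))_j \le \mytop(v) + \shapley^\ell(\zerovector)_j$. Combining, $\shapley^\ell(\zerovector)_j \ge \ell(\ucw(\shapley)-\gamma) + v_j - \mytop(v) \ge \ell(\ucw(\shapley)-\gamma) - \HNorm{v}$ for $j\in\set$. Finally, \cref{le:approx_value_iter} gives $\apshapley^\ell_j(\zerovector)\ge \shapley^\ell_j(\zerovector)-\ell\epsilon$, yielding the claimed bound $\apshapley^\ell_j(\zerovector) \ge \ell(\ucw(\shapley)-\gamma-\epsilon)-\HNorm{v}$.

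The only subtle point is the comparison $(\shapley^{\set})^\ell(x)_j \le \shapley^\ell(\revproj^{\set}(x))_j$ for $j\in \set$: one must be careful with the extension to $\trop^n$ and with the fact that $\revproj^{\set}$ does not commute with $\shapley$. Everything else is a routine use of order preservation, additive homogeneity, and the triangle-style estimate from \cref{le:approx_value_iter}.
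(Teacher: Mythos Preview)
Your proof is correct and uses the same ingredients as the paper's, but the paper organizes them slightly more directly: instead of iterating the inequality in $\R^{\set}$ via $\shapley^{\set}$ and then transferring to $\trop^n$ through the comparison $\revproj^{\set}\bigl((\shapley^{\set})^\ell(x)\bigr)\le \shapley^\ell(\revproj^{\set}(x))$, the paper lifts the sub-eigenvector inequality to $\trop^n$ in one step, observing that $\ucw(\shapley)-\gamma+\revproj^{\set}(v)\le \shapley(\revproj^{\set}(v))$ holds trivially outside $\set$ and by hypothesis on $\set$, and then iterates $\shapley$ directly in $\trop^n$. This avoids the separate inductive comparison between $(\shapley^{\set})^\ell$ and $\shapley^\ell\circ\revproj^{\set}$ that you carry out in your second step. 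The remaining steps---comparing $\revproj^{\set}(v)$ with $\mytop(v)+\zerovector$ and invoking \cref{le:approx_value_iter}---are identical.
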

\begin{proof}
In the first place, note that $\ucw(\shapley) - \gamma + \revproj^{\set}(v) \le {\shapley}(\revproj^{\set}(v))$. Indeed, if $j \in \set$ we have $\ucw(\shapley) - \gamma + \revproj^{\set}_j(v) \le {\shapley}_j(\revproj^{\set}(v))$ because $\revproj^{\set}_j(v) = v_j$ and ${\shapley}_j(\revproj^{\set}(v)) = \shapley^{\set}_j(v)$, and if $j \notin \set$ we also have $\ucw(\shapley) - \gamma + \revproj^{\set}_j(v) \le {\shapley}_j(\revproj^{\set}(v))$ because $\revproj^{\set}_j(v) = \zero$. 

Now, since $\zerovector \ge -\mytop(v) + \revproj^{\set}(v)$ (note that $\mytop(v) \neq \zero$ because $v \in \R^{\set}$), we get ${\shapley}^{\ell}
(\zerovector) \ge -\mytop(v) + {\shapley}^{\ell}(\revproj^{\set}(v)) \ge -\mytop(v) + \ell (\ucw(\shapley) - \gamma) + \revproj^{\set}(v)$. Hence, for every $j \in \set$ we have ${\shapley}^{\ell}_j(\zerovector) \ge -\mytop(v) + \ell (\ucw(\shapley) - \gamma) + \revproj^{\set}_j(v) = -\mytop(v) + \ell (\ucw(\shapley) - \gamma) + v_j \ge \ell (\ucw(\shapley) - \gamma) - \HNorm{v}$. Therefore, by \cref{le:approx_value_iter}, we conclude that $\apshapley^{\ell}_{j}(\zerovector) \ge \ell (\ucw(\shapley) - \gamma - \epsilon) - \HNorm{v}$. 
\end{proof}

\begin{proof}[Proof of \cref{Th-algo-const-value}.]
Denote $\tilde{u}^{\ell} \coloneqq \tilde{\shapley}^{\ell}(\zerovector)$ for $\ell \in \N$. As in the proof of \cref{pr:approx_constant_value}, we note that \cref{top_and_bot,le:approx_value_iter} imply that we have the inequality
\begin{equation}\label{eq:approx_bot_nonconst}
\mybot(\tilde{u}^{\ell}) - \ell (\delta/8)  \le \ell \lcw(\shapley) \le \ell \ucw(\shapley) \le \mytop(\tilde{u}^{\ell}) + \ell(\delta/8)  \, .
\end{equation}
Hence, if the condition $\mytop(\tilde{u}^{\ell}) - \mybot(\tilde{u}^{\ell}) \le (3/4)\delta \ell$ is satisfied for some $\ell$, then $\ucw(\shapley) - \lcw(\shapley) \le \delta$ and so $\ucw(\shapley) = \lcw(\shapley)$ by our assumption on $\delta$. Conversely, if $\ucw(\shapley) = \lcw(\shapley)$, then $\domMaxVal = [n]$ and \cref{pr:approx_constant_value} shows that the condition $\mytop(\tilde{u}^{\ell}) - \mybot(\tilde{u}^{\ell}) \le (3/4)\delta \ell$ is satisfied for some $\ell \le \lceil 8R/\delta \rceil$. Hence, the algorithm correctly decides if the value of the game is constant. 

To finish the proof, suppose the value of the game is not constant, so $\ucw(\shapley) - \lcw(\shapley) > \delta$ by our assumption on $\delta$, and let $\ell \coloneqq \lceil 8R/\delta \rceil + 1$. Then, by \cref{le:value_iter_maxstates} for any $j \in \domMaxVal$ we have $\tilde{u}^{\ell}_{j} \ge \ell(\ucw(\shapley) - \delta/8 - \epsilon) - \HNorm{v} \ge \ell(\ucw(\shapley) - \delta/8 - \epsilon) - R > \ell\ucw(\shapley) - 3\ell(\delta/8) $ since $R < \ell (\delta/8)$ and $\epsilon = \delta/8$ in Procedure~\textsc{DecideConstantValue}. On the other hand, \cref{eq:approx_bot_nonconst} implies $\mybot(\tilde{u}^{\ell}) - \ell (\delta/8) \le \ell \lcw(\shapley) < \ell(\ucw(\shapley) - \delta)$. Hence, we have $\mybot(\tilde{u}^{\ell}) < \ell\ucw(\shapley) - 7 \ell (\delta/8)  < \tilde{u}^{\ell}_{j}$. We conclude that, if we take any $k \in [n]$ such that $\tilde{u}_{k} = \mybot(\tilde{u}^{\ell})$, then $k \notin \domMaxVal$. 
\end{proof}

\subsection{Finding the States of Maximal Value}\label{sec:abstract_topclass}

We now refine \cref{Th-algo-const-value}, showing in \cref{topclass_oracle} below that we can in fact extract
all the initial states with maximal value. To this end, we introduce and analyze two procedures, \Extend{} and \TopClass{}, which are used to obtain \cref{topclass_oracle}.

\begin{figure}[t]
\begin{footnotesize}
\begin{algorithmic}[1]
\Procedure {\Extend}{$\dominionAlgo,\setAlgo$}
\\ \Lcomment{$\setAlgo \subset \dominionAlgo$ is a nonempty set}
\While{\texttt{True}}
\State $\setpAlgo \coloneqq \emptyset$
\State define $x \in \trop^{\dominionAlgo}$ as $x_j \coloneqq -\infty$ for $j \in \setAlgo$ and $x_j = 0$ otherwise
\For{$j \in \dominionAlgo$}
\If{${\shapley}^{\dominionAlgo}_j(x)  = -\infty$ }
\State $\setpAlgo \coloneqq \setpAlgo \cup \{j\}$
\EndIf
\EndFor
\If{$\setAlgo \cup \setpAlgo \neq \setAlgo$}
\State $\setAlgo \coloneqq \setAlgo \cup \setpAlgo$
\Else
\State \Return $\setAlgo$
\EndIf
\EndWhile
\EndProcedure
\end{algorithmic}
\end{footnotesize}
\caption{Procedure that extends the set laying outside some dominion.} \label{fig:algorithm_extend}
\end{figure}

\begin{lemma}\label{extension_of_outsiders}
Procedure \Extend{} (\cref{fig:algorithm_extend}) has the following properties. Let $\dominion\subset [n]$ be a dominion and $\set \subset \dominion$ be a non-empty set. If $\dominionp$ is a dominion such that $\dominionp \subset \dominion$ and $\set \cap \dominionp = \emptyset$, then $\Extend(\dominion,\set) \cap \dominionp = \emptyset$. Besides, $\dominion \setminus \Extend(\dominion,\set)$ is a dominion.
\end{lemma}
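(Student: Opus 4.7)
The plan is to prove both statements by carefully tracking what the quantity ${\shapley}^{\dominionAlgo}_j(x) = -\infty$ means in terms of the extension operator~\cref{e-extended}. The key observation is that the vector $x$ built from $\setAlgo$ in the algorithm equals $-\infty$ precisely on $\setAlgo$, so $\revproj^{\dominionAlgo}(x)$ is the vector with $-\infty$ on $\setAlgo$ and on $[n]\setminus \dominionAlgo$, and $0$ on $\dominionAlgo \setminus \setAlgo$. Termination of the while loop is easy: at each iteration, either $\setAlgo$ strictly grows (since $\setpAlgo \not\subset \setAlgo$) or the procedure returns; since $\setAlgo$ stays inside the finite set $\dominionAlgo$, the loop stops in at most $|\dominionAlgo|$ iterations.

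For the first claim, I will proceed by induction on the iterations of the while loop, showing that the invariant $\setAlgo \cap \dominionp = \emptyset$ is preserved. The initial value of $\setAlgo$ (passed as argument) satisfies this by assumption. Suppose the invariant holds before an iteration; I must show no $j\in \dominionp$ is added to $\setAlgo$, i.e., that ${\shapley}^{\dominionAlgo}_j(x) > -\infty$ for every $j \in \dominionp$. Let $y \coloneqq \revproj^{\dominionAlgo}(x) \in \trop^n$ and $z \coloneqq \revproj^{\dominionp}(\zerovector) \in \trop^n$. Since $\setAlgo \cap \dominionp = \emptyset$ and $\dominionp \subset \dominionAlgo$, $y$ equals $0$ on all of $\dominionp$ (in particular $y_k \geq z_k$ there), while outside $\dominionp$ one has $z_k = -\infty \leq y_k$. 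Hence $y \geq z$ coordinatewise. Order-preservingness of $\shapley$ yields ${\shapley}^{\dominionAlgo}_j(x) = \shapley_j(y) \geq \shapley_j(z) = \shapley^{\dominionp}_j(\zerovector)$ for $j \in \dominionp$, and the latter is finite because $\dominionp$ is a dominion (using \cref{domin_at_zero}). This preserves the invariant and completes the induction.

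For the second claim, let $\setAlgo^*$ be the value of $\setAlgo$ at termination and let $\dominionp \coloneqq \dominionAlgo \setminus \setAlgo^*$. By \cref{domin_at_zero}, it suffices to show $\shapley^{\dominionp}(\zerovector) \in \R^{\dominionp}$. Let $x^*$ be the vector built from $\setAlgo^*$ in the final iteration, so $x^*_j = -\infty$ for $j \in \setAlgo^*$ and $x^*_j = 0$ for $j \in \dominionp$. A direct inspection of the definitions shows that $\revproj^{\dominionAlgo}(x^*) = \revproj^{\dominionp}(\zerovector)$ as elements of $\trop^n$: both equal $0$ exactly on $\dominionp$ and $-\infty$ elsewhere. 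Hence, for every $j \in \dominionp$, ${\shapley}^{\dominionAlgo}_j(x^*) = \shapley_j(\revproj^{\dominionAlgo}(x^*)) = \shapley_j(\revproj^{\dominionp}(\zerovector)) = \shapley^{\dominionp}_j(\zerovector)$. The termination condition $\setAlgo^* \cup \setpAlgo = \setAlgo^*$ means exactly that ${\shapley}^{\dominionAlgo}_j(x^*) > -\infty$ for every $j \in \dominionp$, whence $\shapley^{\dominionp}_j(\zerovector) \in \R$ for all $j \in \dominionp$, as required.

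The main subtlety — and the step I expect to be the most delicate to state cleanly — is the identification $\revproj^{\dominionAlgo}(x^*) = \revproj^{\dominionp}(\zerovector)$ at termination, together with the monotone comparison $y \geq z$ used in the induction. Both amount to careful bookkeeping between the two projection/embedding conventions $\proj^{\cdot}$ and $\revproj^{\cdot}$, but once these equalities are spelled out, the order-preserving property of $\shapley$ and \cref{domin_at_zero} do the rest.
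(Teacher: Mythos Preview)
Your proof is correct and follows essentially the same approach as the paper's: both argue termination by monotone growth of $\setAlgo$, prove the first claim by induction on iterations via the comparison $\revproj^{\dominionp}(\zerovector) \le \revproj^{\dominion}(x)$ together with order-preservation and \cref{domin_at_zero}, and prove the second claim by identifying $\revproj^{\dominion}(x^*) = \revproj^{\dominionp}(\zerovector)$ and reading off the termination condition.
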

\begin{proof}%
At each execution of the while loop, the procedure either stops or strictly increases the cardinality of $\set \subset \dominion$. Therefore, the procedure stops after at most $\card{\dominion}-1$ executions of the while loop. 

To prove the first property of the procedure, let $\dominionp \subset \dominion$ be a dominion such that $\set \cap \dominionp = \emptyset$. %
Since $\set \cap \dominionp = \emptyset$ and $\dominionp \subset \dominion$, we have $\revproj^{\dominionp}(\zerovector) \le \revproj^{\dominion}(x)$, where $x$ is the vector defined in the while loop of the procedure (i.e., $x \coloneqq \revproj^{\dominion \setminus \set}(\zerovector)$). Hence, ${\shapley}\bigl(\revproj^{\dominionp}(\zerovector)\bigr) \le {\shapley}\bigl(\revproj^{\dominion}(x)\bigr)$ because ${\shapley}$ is order-preserving. Since $\dominionp$ is a dominion, we get $-\infty < \shapley^{\dominionp}_j(\zerovector) = {\shapley}_j(\revproj^{\dominionp}(\zerovector)) \le {\shapley}_j\bigl(\revproj^{\dominion}(x)\bigr) = \shapley^{\dominion}_j(x)$ for all $j \in \dominionp$. Therefore, $\setp \cap \dominionp = \emptyset$ and so $(\set \cup \setp) \cap \dominionp = \emptyset$. In other words, the set $\set \cup \setp$ obtained by a single execution of the while loop in the procedure is disjoint from $\dominionp$. Therefore, we have $\Extend(\dominion,\set) \cap \dominionp = \emptyset$. 

To prove the second property, let $\dominionp \coloneqq \dominion \setminus  \Extend(\dominion,\set)$, and let $x \in \trop^n$ be the vector defined as $x_j = -\infty$ for $j \in \Extend(\dominion,\set)$ and $x_j = 0$ otherwise (i.e., let $x \coloneqq \revproj^{\dominionp}(\zerovector)$). The stopping criterion of the procedure implies that $\{j \in \dominion \colon \shapley_j^\dominion\bigl(\proj^\dominion(x)\bigr) = \zero\} \subset \Extend(\dominion,\set)$. Since $x = \revproj^\dominion\bigl(\proj^\dominion(x)\bigr)$, we have $\shapley_j^\dominion\bigl(\proj^\dominion(x)\bigr) = \shapley_j(x)$ for every $j \in \dominion$. Therefore, $\{j \in \dominion \colon \shapley_j(x) = \zero\} \subset \Extend(\dominion,\set)$ and so $\shapley_j(x) > \zero$ for all $j \in \dominionp$. In particular, we have $\shapley^{\dominionp}_j(\zerovector) = {\shapley}_j(\revproj^{\dominionp}(\zerovector)) = {\shapley}_j(x) > -\infty$ for all $j \in \dominionp$. Hence, $\shapley^{\dominionp}(\zerovector) \in \R^{\dominionp}$ and $\dominionp$ is a dominion by \cref{domin_at_zero}. 
\end{proof}

Let
\[
\operatorname{sep}(F)\coloneqq \inf_{\dominion} \big(\ucw(F^\dominion)-\lcw(F^\dominion)\big)
\]
where the infimum is taken over all the dominions $\dominion$ of $F$ which contain all the states of maximal value and satisfy $\ucw(F^\dominion)-\lcw(F^\dominion)>0$.

 \begin{figure}[t]
\begin{footnotesize}
\begin{algorithmic}[1]
\Procedure {\TopClass}{$\shapley,\delta,R$}
\State $\dominionAlgo \coloneqq[n]$
\While{\texttt{True}}
\State $\setAlgo
\coloneqq \Call{DecideConstantValue}{\shapley^{\dominionAlgo}, \delta, R}$ %
\If{$\setAlgo = \varnothing$}
\State \Return $\dominionAlgo$
\Lcomment{$\dominionAlgo$ is the set of states that have the maximal value}
\EndIf
\State $\dominionAlgo \coloneqq \dominionAlgo \setminus \Extend(\dominionAlgo,\setAlgo)$
\EndWhile
\EndProcedure
\end{algorithmic}
\end{footnotesize}
\caption{Procedure that finds the set of states with maximal value.} \label{fig:top_class_finder-abstract}
\end{figure}
  
\begin{theorem}\label{th:topclass}
Let $\delta > 0$ be such that $\delta<\operatorname{sep}(F)$, $\domMaxVal$ be the set of states of maximal value and $R \coloneqq \max\{\HNorm{v}, \HNorm{w}\}$, where $v,w \in \R^{\domMaxVal}$ are any two vectors that satisfy $\ucw(\shapley) - \delta/8 + v \le \shapley^{\domMaxVal}(v)$ and $\ucw(\shapley) + \delta/8 + w \ge \shapley^{\domMaxVal}(w)$.
  Then, Procedure \textsc{TopClass}($\shapley,\delta,R$) (\cref{fig:top_class_finder-abstract}) halts after at most $n$
  iterations of the while loop, and correctly computes
  the set of initial states with maximal value.
  \end{theorem}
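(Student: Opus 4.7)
The plan is to treat the while loop as maintaining the invariant that $\dominionAlgo$ is a dominion of $\shapley$ containing the set $\domMaxVal$ of states of maximal value, to check that each invocation of \textsc{DecideConstantValue} inside the loop is both legal and informative, and to argue termination by the fact that $|\dominionAlgo|$ strictly decreases at every non-terminating iteration.

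I would first establish the loop invariant by induction on the number of completed iterations. The base case $\dominionAlgo=[n]$ is immediate, and $\domMaxVal$ is a dominion by \cref{as:good_operators}. For the inductive step, assume $\dominionAlgo$ is a dominion containing $\domMaxVal$. \Cref{le:dominion_containing_topclass} then identifies the top class of the reduced operator $\shapley^{\dominionAlgo}$ with $\domMaxVal$ and yields $\ucw(\shapley^{\dominionAlgo})=\ucw(\shapley)$. A direct check from the definitions gives $(\shapley^{\dominionAlgo})^{\domMaxVal}=\shapley^{\domMaxVal}$, so the vectors $v,w$ supplied in the hypothesis serve as the sub and super-eigenvectors required by \cref{Th-algo-const-value} applied to $\shapley^{\dominionAlgo}$, and the parameter $R$ remains a valid bound. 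Moreover, the assumption $\delta<\operatorname{sep}(F)$, combined with the inclusion $\domMaxVal\subset\dominionAlgo$, forces either $\ucw(\shapley^{\dominionAlgo})=\lcw(\shapley^{\dominionAlgo})$ or $\ucw(\shapley^{\dominionAlgo})-\lcw(\shapley^{\dominionAlgo})>\delta$, so the hypotheses of \cref{Th-algo-const-value} hold and the call to \textsc{DecideConstantValue}$(\shapley^{\dominionAlgo},\delta,R)$ is both legal and correct. When this call returns a non-empty $\setAlgo$, its correctness yields $\gameval^{\dominionAlgo}_j<\ucw(\shapley^{\dominionAlgo})=\ucw(\shapley)$ for every $j\in\setAlgo$, and the equality in \cref{le:dominion_containing_topclass} then gives $\setAlgo\cap\domMaxVal=\emptyset$. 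Since $\domMaxVal$ is a dominion contained in $\dominionAlgo$, \cref{extension_of_outsiders} implies both $\Extend(\dominionAlgo,\setAlgo)\cap\domMaxVal=\emptyset$ and that $\dominionAlgo\setminus\Extend(\dominionAlgo,\setAlgo)$ is again a dominion, preserving the invariant.

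For termination, $\Extend(\dominionAlgo,\setAlgo)\supset\setAlgo$ is non-empty whenever $\setAlgo\ne\emptyset$, so $|\dominionAlgo|$ strictly decreases at each non-terminating iteration, and starting from $|\dominionAlgo|=n$ the loop performs at most $n$ iterations. Upon termination, \textsc{DecideConstantValue} returned $\varnothing$, so the value of $\shapley^{\dominionAlgo}$ is constant on $\dominionAlgo$ and equals $\ucw(\shapley^{\dominionAlgo})=\ucw(\shapley)$. For any $j\in\dominionAlgo$, \cref{le:smaller_val_domin} yields $\gameval_j\ge\gameval^{\dominionAlgo}_j=\ucw(\shapley)$, whereas the reverse inequality is trivial, so $j\in\domMaxVal$; combined with the invariant $\domMaxVal\subset\dominionAlgo$ this gives the desired equality $\dominionAlgo=\domMaxVal$.

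The main difficulty I expect is justifying that the single bound $R$ supplied at the outset remains a valid input to every call of \textsc{DecideConstantValue} on the shrunk operators $\shapley^{\dominionAlgo}$ encountered during the iterations. This is exactly where the identification of the top class of $\shapley^{\dominionAlgo}$ with $\domMaxVal$ via \cref{le:dominion_containing_topclass}, together with the equality $(\shapley^{\dominionAlgo})^{\domMaxVal}=\shapley^{\domMaxVal}$, becomes crucial: it guarantees that the same sub and super-eigenvectors $v,w$ continue to certify the Collatz--Wielandt inequalities needed by \cref{Th-algo-const-value} throughout the algorithm.
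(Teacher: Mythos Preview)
Your proof is correct and follows essentially the same approach as the paper's own proof: maintain the invariant that $\dominionAlgo$ is a dominion containing $\domMaxVal$, invoke \cref{Th-algo-const-value} on $\shapley^{\dominionAlgo}$ at each step, and use \cref{extension_of_outsiders} to show the invariant is preserved while $|\dominionAlgo|$ strictly decreases. Your treatment is in fact slightly more explicit than the paper's on the point you flag at the end---the identity $(\shapley^{\dominionAlgo})^{\domMaxVal}=\shapley^{\domMaxVal}$ combined with \cref{le:dominion_containing_topclass} is exactly what justifies reusing the single bound $R$ at every iteration, and the paper leaves this implicit.
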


 \begin{proof}%
  Fix any dominion $\dominion \subset [n]$ such that $\domMaxVal \subset \dominion$ and consider two cases. If $\dominion = \domMaxVal$, then we have $\ucw(\shapley) = \ucw(\shapley^\dominion) = \lcw(\shapley^\dominion)$ by \cref{as:good_operators}. By applying \cref{Th-algo-const-value} to $\shapley^\dominion$, Procedure $\textsc{DecideConstantValue}(\shapley^{\dominion},\delta,R)$ outputs $\varnothing$. If $\dominion \neq \domMaxVal$, then we have $\ucw(\shapley) = \ucw(\shapley^\dominion) \neq \lcw(\shapley^\dominion)$ by \cref{le:dominion_containing_topclass}. Thus, $\ucw(\shapley^\dominion) - \lcw(\shapley^\dominion) \ge \separ(\shapley) > \delta$ by the definition of $\separ(\shapley)$. Hence, by applying \cref{Th-algo-const-value} to $\shapley^\dominion$, Procedure $\textsc{DecideConstantValue}(\shapley^{\dominion},\delta,R)$ outputs some nonempty set $\set \subset \dominion$ such that $\gameval^\dominion_i < \ucw(\shapley)$ for all $i \in \set$. In particular, $\set \cap \domMaxVal = \emptyset$ by \cref{le:dominion_containing_topclass}. Therefore, by \cref{extension_of_outsiders}, the set $\dominion \setminus \textsc{Extend}(\dominion,\set)$ is a dominion that contains $\domMaxVal$ and is strictly smaller than $\dominion$. By induction, if $\dominion_\ell$ denotes the set $\dominion$ at the $\ell$th iteration of the while loop in $\TopClass(\shapley,\delta,R)$, then we have $[n] = \dominion_1 \supsetneq \dominion_2 \supsetneq \dots \supset \domMaxVal$ and all of the sets $\dominion_\ell$ are dominions. Furthermore, Procedure $\TopClass(\shapley,\delta,R)$ stops only when it finds a set $\dominion_p$ such that $\dominion_p = \domMaxVal$ and outputs $\dominion_p$ as the set of states with maximal value. 
 \end{proof}

To state the final result of this section, we will suppose that we have access to an oracle that approximates (the canonical extension~\cref{e-extended} of) $\shapley$ to a given precision $\epsilon > 0$. More precisely, given a point $x \in \trop^n$, the oracle is supposed to output a point $y \in \trop^n$ that satisfies $y_j = \zero \iff \shapley_j(x) = \zero$ and $|\shapley_j(x) - y_j| \le \epsilon$ for all $j$ such that $\shapley_j(x) \neq \zero$. 

\begin{theorem}\label{topclass_oracle}
  Let $\delta > 0$ be such that $\delta<\operatorname{sep}(F)$, $\domMaxVal$ be the set of states of maximal value and $R \coloneqq \max\{\HNorm{v}, \HNorm{w}\}$, where $v,w \in \R^{\domMaxVal}$ are any two vectors that satisfy $\ucw(\shapley) - \delta/8 + v \le \shapley^{\domMaxVal}(v)$ and $\ucw(\shapley) + \delta/8 + w \ge \shapley^{\domMaxVal}(w)$.
Then, the set of initial states of maximal value can be found by making at most $n^2 + n\lceil 8R/\delta\rceil$ calls to an oracle that approximates $\shapley$ to precision $\epsilon \coloneqq \delta/8$.%
\end{theorem}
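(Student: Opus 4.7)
The plan is to bound the number of oracle calls performed by Procedure \textsc{TopClass} by carefully counting, iteration by iteration, the cost of its two subroutines \textsc{DecideConstantValue} and \textsc{Extend}, using the bound on the number of outer iterations already provided by \cref{th:topclass}.

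First, I would show that each call to the oracle for $\shapley^{\dominion}$ at precision $\epsilon$ can be realized by a single call to the oracle for $\shapley$ at the same precision: indeed, by the canonical extension~\cref{e-extended} we have $\shapley^{\dominion}(y) = \proj^{\dominion}\bigl(\shapley(\revproj^{\dominion}(y))\bigr)$, and the projection preserves both the sup-norm approximation and the property $y_j = \zero \iff \shapley_j(\cdot)=\zero$. Consequently, counting calls to the $\shapley^{\dominion}$-oracle inside \textsc{DecideConstantValue} and \textsc{Extend} is the same as counting calls to the $\shapley$-oracle.

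Next, I would bound the cost of each subroutine. Inspecting \cref{algo-vi4}, the repeat-until loop of \textsc{DecideConstantValue} runs at most $1 + \lceil 8R/\delta \rceil$ times, each iteration performing one evaluation of $\tilde{\shapley}^{\dominion}$, hence at most $1 + \lceil 8R/\delta \rceil$ oracle calls per invocation. Inspecting \cref{fig:algorithm_extend}, each pass through the while loop of \textsc{Extend} does one evaluation of $\shapley^{\dominion}$ on the current $x$, and as noted in the proof of \cref{extension_of_outsiders} the set $\setAlgo \subset \dominion$ strictly grows at each non-terminating pass, so the loop makes at most $|\dominion|$ oracle calls, and in \textsc{TopClass} this is bounded by $n$.

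Finally, I would put the pieces together. By \cref{th:topclass}, the while loop of \textsc{TopClass} performs $L \le n$ iterations; iterations $1,\dots,L-1$ each trigger one call to \textsc{DecideConstantValue} followed by one call to \textsc{Extend}, while the final iteration only calls \textsc{DecideConstantValue}. The total number of oracle calls is therefore at most
\[
L\bigl(1+\lceil 8R/\delta\rceil\bigr) + (L-1)\cdot n \;\le\; n\bigl(1+\lceil 8R/\delta\rceil\bigr) + n(n-1) \;=\; n^{2} + n\lceil 8R/\delta\rceil,
\]
which is the stated bound. The argument is essentially a bookkeeping exercise, since all the technical work (correctness of \textsc{Extend}, the iteration bound of \textsc{DecideConstantValue}, and the fact that the dominions produced by \textsc{TopClass} form a strictly decreasing chain of length at most $n$ converging to $\domMaxVal$) has already been done; the only mildly delicate point is ensuring that the $+n$ arising from $L\cdot 1$ is absorbed into $n(n-1)$ by writing $L-1$ rather than $L$ in the \textsc{Extend} summand, which is exactly what makes the two terms collapse to $n^{2}$.
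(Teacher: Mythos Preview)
Your proposal is correct and follows essentially the same approach as the paper's proof: realize the $\shapley^{\dominion}$-oracle via a single call to the $\shapley$-oracle, bound \textsc{DecideConstantValue} by $1+\lceil 8R/\delta\rceil$ calls and \textsc{Extend} by a linear number of calls, and multiply by the at most $n$ outer iterations from \cref{th:topclass}. The only cosmetic difference is in the bookkeeping: the paper bounds each \textsc{Extend} by $n-1$ calls and charges all $n$ outer iterations for both subroutines, whereas you bound \textsc{Extend} by $n$ calls but observe that the final outer iteration skips \textsc{Extend}; both yield exactly $n^2 + n\lceil 8R/\delta\rceil$.
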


\begin{proof}%
Denote the oracle by $\apshapley \colon \trop^n \to \trop^n$ and let $\dominion_1, \dots, \dominion_p$ be as in the proof of \cref{th:topclass}. If $\dominion \coloneqq \dominion_{\ell}$ for some $\ell \in [p]$, then $|\shapley_j^\dominion(x) - \apshapley_j\bigl(\revproj^\dominion(x)\bigr)| \le \delta/8$ for every $x \in \R^\dominion$ and $j \in \dominion$. Indeed, since $\dominion$ is a dominion, we have $\shapley_j^\dominion(x) = \shapley_j\bigl(\revproj^\dominion(x)\bigr) > \zero$ for all $j \in \dominion$ and therefore $|\shapley_j^\dominion(x) - \apshapley_j\bigl(\revproj^\dominion(x)\bigr)| \le \delta/8$ by the definition of $\apshapley$. Hence, the map that approximates $\shapley^\dominion \colon \R^\dominion \to \R^\dominion$ as required by Procedure $\textsc{DecideConstantValue}(\shapley^{\dominion},\delta,R)$ is obtained by calling $\apshapley$. Furthermore, Procedure $\textsc{DecideConstantValue}(\shapley^{\dominion},\delta,R)$ finishes after making at most $1 + \lceil 8R/\delta \rceil$ calls to the oracle. Moreover, for every $x \in \trop^\dominion$ and $j \in \dominion$ we have $\shapley^\dominion_j(x) = \zero \iff \apshapley_j\bigl(\revproj^\dominion(x)\bigr) = \zero$ by the definition of $\apshapley$. Therefore, we can replace every call to $\shapley^\dominion(x)$ in $\textsc{Extend}(\dominion,\set)$ by a call to $\apshapley\bigl(\revproj^\dominion(x)\bigr)$. We also note that $\textsc{Extend}(\dominion,\set)$ makes at most $n-1$ calls to the oracle. Therefore, a single iteration of the while loop in $\TopClass(\shapley,\delta,R)$ can be done by making at most $n + \lceil 8R/\delta\rceil$ calls to the oracle. The claim follows from the fact that $\TopClass$ finishes after at most $n$ iterations of the while loop. 
\end{proof}

\section{Application to Stochastic Mean-Payoff Games}\label{sec:appl_smpg}

In this section, we apply our results to stochastic mean-payoff games. We follow the notation concerning stochastic mean-payoff games as introduced in \cref{subsec-introducing}. In particular, we suppose that $\shapley$ is of the form given in \cref{e-elemshapley}. Before starting, we note that the continuous extension of $\shapley$ to $\trop^\vertexsetMin$ is still given by the same formula \cref{e-elemshapley}, with the convention that $0 \cdot (\zero) = 0$. Indeed, under this convention, the operations $\min$, $\max$, and $x \to \alpha \cdot x$ for $\alpha \ge 0$ are continuous in $\trop$.\todo{MS: Is this good enough? SG: It seems to me ok}

\subsection{Dominions of Stochastic Mean-Payoff Games}\label{subsec:DomMPG}
We first describe the dominions in the special case of stochastic mean-payoff games, and deduce that this class of games satisfies \cref{as:good_operators}.

Intuitively speaking, $\dominion \subset \vertexsetMin$ is a dominion if player Max can force that the only states of $\vertexsetMin$ visited by the game are those in $\dominion$ provided that the initial state of the game belongs to $\dominion$.\todo{MS: We don't stay in $\dominion$ to be precise, but this is maybe not important. RK: I modified this sentence, I hope it is correct now.} To make this more precise, we use the following notation. If $\set \subset \vertexsetMin$ is any subset, then we denote by $\vertexsubset[Nat]{\set} \subset \vertexsetNat$ the set of vertices controlled by Nature whose all outgoing edges go to $\set$, i.e., $\vertexsubset[Nat]{\set} \coloneqq \{k \in \vertexsetNat \colon \sum_{l \in \set} P_{kl} = 1\}$. Moreover, we denote by $\vertexsubset[Max]{\set} \subset \vertexsetMax$ the set of vertices controlled by Max which have at least one outgoing edge that goes to $\vertexsubset[Nat]{\set}$, i.e., $\vertexsubset[Max]{\set} \coloneqq \{i \in \vertexsetMax \colon \exists k \in \vertexsubset[Nat]{\set}, (i,k) \in \edges\}$. Furthermore, we denote by $(\vertexset^\set, \edges^\set)$ the subgraph of $(\vertexset, \edges)$ induced by $\set \dunion \vertexsubset[Max]{\set} \dunion \vertexsubset[Nat]{\set}$.\todo{MS: Is this clear? SG: yes I think}

\begin{lemma}\label{le:dominions_SMPG}
A set $\dominion \subset \vertexsetMin$ is a dominion of a stochastic mean-payoff game if and only if all edges leaving $\dominion$ go to $\vertexsubset[Max]{\dominion}$, i.e., $\{i \in \vertexsetMax \colon \exists j \in \dominion, (j,i) \in \edges\} \subset \vertexsubset[Max]{\dominion}$. In particular, if $\dominion$ is a dominion, then the sets $\vertexsubset[Nat]{\dominion}$ and $\vertexsubset[Max]{\dominion}$ are nonempty. Furthermore, if $\dominion$ is a dominion, then we have the equality 
\[
\shapley^{\dominion}_j(x) = \min_{(j,i) \in \edges^\set} \Bigl(-A_{ij}+ \max_{(i,k) \in \edges^\set}\bigl(B_{ik}+ \sum_{(k,l) \in \edges^\set} P_{kl} x_l\bigr)\Bigr) \text{ for all } j\in \dominion \, .
\]
\end{lemma}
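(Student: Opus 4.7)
The plan is to reduce everything to computing the canonical extension of $\shapley$ on $\revproj^\dominion(\zerovector)$, exploiting the convention $0\cdot(\zero)=0$ mentioned at the start of the section, and then appeal to \cref{domin_at_zero}. The key observation is that for any $x\in \R^\dominion$, the vector $\revproj^\dominion(x)\in \trop^{\vertexsetMin}$ has coordinates in $\R$ on $\dominion$ and equals $\zero$ outside $\dominion$. Plugging this into the explicit formula~\cref{e-elemshapley}, I would analyze layer by layer when a $\zero$ appears.

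First, for every $k\in \vertexsetNat$, the inner sum $\sum_{(k,l)\in\edges} P_{kl}\,(\revproj^\dominion(x))_l$ is finite iff all edges $(k,l)\in \edges$ with $P_{kl}>0$ satisfy $l\in \dominion$. Since the $P_{kl}$ sum to one on the outgoing edges, this is equivalent to $k\in \vertexsubset[Nat]{\dominion}$; in that case the sum reduces to $\sum_{(k,l)\in\edges^\dominion} P_{kl}\,x_l$. Next, the term $\max_{(i,k)\in\edges}(B_{ik}+\cdots)$ is finite iff there is at least one $k\in \vertexsubset[Nat]{\dominion}$ with $(i,k)\in \edges$, i.e.~iff $i\in \vertexsubset[Max]{\dominion}$, and then it reduces to the max over $(i,k)\in \edges^\dominion$. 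Finally, the outer minimum $\min_{(j,i)\in\edges}(-A_{ij}+\cdots)$ is finite iff every successor $i$ of $j$ lies in $\vertexsubset[Max]{\dominion}$.

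Specializing to $x=\zerovector$, I conclude that $\shapley^\dominion_j(\zerovector)\in\R$ for all $j\in\dominion$ if and only if every edge leaving $\dominion$ (into $\vertexsetMax$) ends in $\vertexsubset[Max]{\dominion}$. By \cref{domin_at_zero}, this is exactly the dominion condition, yielding the stated characterization. The nonemptiness of $\vertexsubset[Max]{\dominion}$ and $\vertexsubset[Nat]{\dominion}$ then follows from \cref{assum-fin}: pick any $j\in \dominion$; the assumption guarantees some edge $(j,i)\in\edges$, which must land in $\vertexsubset[Max]{\dominion}$, and the definition of $\vertexsubset[Max]{\dominion}$ forces the existence of some $k\in \vertexsubset[Nat]{\dominion}$ reachable from $i$. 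For general $x\in \R^\dominion$, the same case analysis shows that none of the minima, maxima, or sums that would be finite in the $\zerovector$ case turn into $\zero$, so the formula for $\shapley^\dominion_j(x)$ is obtained by restricting the three combinatorial operations to $\edges^\dominion$.

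The analysis is essentially bookkeeping once the tropical convention is fixed, so I expect no real obstacle; the mild subtlety is to be consistent in arguing that the finiteness pattern detected at $x=\zerovector$ is the same as at an arbitrary $x\in \R^\dominion$. This is immediate here because the indicator of ``$(\revproj^\dominion(x))_l=\zero$'' depends only on whether $l\in\dominion$, not on the numerical values of $x$, so the combinatorial structure (which edges contribute finitely) is preserved, and \cref{domin_at_zero} legitimates passing from the check at $\zerovector$ to the check on all of $\R^\dominion$.
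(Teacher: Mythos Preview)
Your proposal is correct and follows essentially the same approach as the paper: both arguments compute $\shapley_j\bigl(\revproj^{\dominion}(x)\bigr)$ layer by layer, identify for each layer the combinatorial condition under which the result stays finite (namely $k\in \vertexsubset[Nat]{\dominion}$, then $i\in \vertexsubset[Max]{\dominion}$, then ``all successors of $j$ lie in $\vertexsubset[Max]{\dominion}$''), and then invoke \cref{domin_at_zero} together with \cref{assum-fin} to conclude. Your remark that the finiteness pattern depends only on membership in $\dominion$ and not on the numerical values of $x$ is exactly what makes the explicit formula for $\shapley^{\dominion}$ drop out of the same computation, as in the paper.
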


\begin{proof}
Let $\dominion \subset \vertexsetMin$ be any set. In the first place, note that for every $x \in \trop^{\dominion}$, by the definition of $\vertexsubset[Nat]{\dominion}$, for every $k \in \vertexsetNat$ we have %
\[
\sum_{(k,l) \in \edges}P_{kl}\revproj^{\dominion}(x)_l = \begin{cases}
\sum_{(k,l) \in \edges^\dominion}P_{kl}x_l  &\text{if $k \in \vertexsubset[Nat]{\dominion}$,}\\
\zero &\text{otherwise.}
\end{cases}
\]
Thus, by the definition of $\vertexsubset[Max]{\dominion}$, it follows that
\begin{align*}
\max_{(i,k) \in \edges}\Bigl(B_{ik}+ & \sum_{(k,l) \in \edges} P_{kl} \revproj^{\dominion}(x)_l \Bigr)  = \\ 
&\begin{cases}
 \max_{(i,k) \in \edges^\dominion}\bigl(B_{ik}+ \sum_{(k,l) \in \edges^\dominion} P_{kl} x_l\bigr) &\text{if $i \in \vertexsubset[Max]{\dominion}$,}\\
\zero &\text{otherwise,}
\end{cases}
\end{align*}
for all $i \in \vertexsetMax$. We conclude that\todo{MS: I put it in displaystyle to avoid going over the margin.}
\begin{align}
&\shapley_j\bigl(\revproj^{\dominion}(x)\bigr) = \label{Eq} \\
&\begin{dcases}
\min_{(j,i) \in \edges^\dominion} \Bigl(-A_{ij}+ \max_{(i,k) \in \edges^\dominion}\bigl(B_{ik}+ \sum_{(k,l) \in \edges^\dominion} P_{kl} x_l\bigr)\Bigr) & \text{if $i \in \vertexsubset[Max]{\dominion}$ for all $(j,i) \in \edges$,}\\
\zero &\text{otherwise,} \nonumber
\end{dcases}
\end{align}
for all $j \in \vertexsetMin$.

If\todo{MS: Maybe this part of the proof can be shortened?} we assume that $\dominion$ is a dominion, we necessarily have that $i \in \vertexsubset[Max]{\dominion}$ for every $j \in \dominion$ and $i \in \vertexsetMax$ such that $(j,i) \in \edges$, because otherwise~\cref{Eq} would imply that $\shapley^{\dominion}_{j}(\zerovector)=-\infty$ for some $j \in \dominion$, contradicting \cref{domin_at_zero}. Then, when $\dominion$ is a dominion,~\cref{Eq} also shows that 
\[
\shapley^{\dominion}_j(x) = \min_{(j,i) \in \edges^\dominion} \Bigl(-A_{ij}+ \max_{(i,k) \in \edges^\dominion}\bigl(B_{ik}+ \sum_{(k,l) \in \edges^\dominion} P_{kl} x_l\bigr)\Bigr) \,
\]
for every $x \in \trop^\dominion$ and $j \in \dominion$.

Assume now that $\dominion$ has the property that for every $j \in \dominion$ and $i \in \vertexsetMax$ such that $(j,i) \in \edges$ we have $i \in \vertexsubset[Max]{\dominion}$. %
Then, by~\cref{Eq} we have  $\shapley^{\dominion}_{j}(\zerovector)=\min_{(j,i) \in \edges^\dominion} \bigl(-A_{ij}+ \max_{(i,k) \in \edges^\dominion}(B_{ik})\bigr)$ for every $j \in \dominion$. Besides, since every vertex in $\vertexsetMin$ has at least one outgoing edge by \cref{assum-fin}, for every $j \in \dominion$ there exists $i\in \vertexsubset[Max]{\dominion}$ such that $(j,i) \in \edges^\dominion$. It follows that $\shapley^{\dominion}_{j}(\zerovector) > -\infty $ for every $j \in \dominion$ because $\max_{k \in \vertexsubset[Nat]{\dominion}}(B_{ik} ) > -\infty $ for every $i\in \vertexsubset[Max]{\dominion}$ by the definition of $\vertexsubset[Max]{\dominion}$. Thus, by \cref{domin_at_zero} we conclude that $\dominion$ is a dominion. 
\end{proof}

\Cref{le:dominions_SMPG} implies that if $\dominion$ is a dominion, then $\shapley^\dominion$ is the Shapley operator of a stochastic mean-payoff game on the subgraph $(\vertexset^\dominion, \edges^\dominion)$. We refer to this game as the \emph{subgame induced by $\dominion$}. We note that this subgame satisfies \cref{assum-fin}.

\begin{lemma}
If $\dominion \subset \vertexsetMin$ is a dominion, then the subgame induced by $\dominion$ satisfies \cref{assum-fin}.
\end{lemma}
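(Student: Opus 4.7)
The plan is to check each of the three non-emptiness conditions that make up \cref{assum-fin}, namely that every vertex of the subgame has at least one outgoing edge in $\edges^\dominion$ to the next player's vertex set. The three conditions correspond to Min vertices $j \in \dominion$, Max vertices $i \in \vertexsubset[Max]{\dominion}$, and Nature vertices $k \in \vertexsubset[Nat]{\dominion}$. I will verify each in turn, relying on \cref{assum-fin} for the ambient game together with the structural characterization of dominions given by \cref{le:dominions_SMPG} and the definitions of $\vertexsubset[Max]{\dominion}$ and $\vertexsubset[Nat]{\dominion}$.

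For the Min case, fix $j \in \dominion$. By \cref{assum-fin} applied to the original game there is some $i \in \vertexsetMax$ with $(j,i) \in \edges$. Now \cref{le:dominions_SMPG} tells us that every Max vertex reachable in one step from $\dominion$ lies in $\vertexsubset[Max]{\dominion}$, so $i \in \vertexsubset[Max]{\dominion}$ and therefore $(j,i) \in \edges^\dominion$. For the Max case, fix $i \in \vertexsubset[Max]{\dominion}$. By the very definition of $\vertexsubset[Max]{\dominion}$ there exists $k \in \vertexsubset[Nat]{\dominion}$ with $(i,k) \in \edges$, hence $(i,k) \in \edges^\dominion$.

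The Nature case uses the probabilistic definition of $\vertexsubset[Nat]{\dominion}$: for $k \in \vertexsubset[Nat]{\dominion}$ we have $\sum_{l \in \dominion} P_{kl} = 1$. Since the $P_{kl}$ are nonnegative, this sum being one forces the existence of at least one $l \in \dominion$ with $P_{kl} > 0$, which in particular means $(k,l) \in \edges$ and, since both $k$ and $l$ lie in $\vertexset^\dominion$, that $(k,l) \in \edges^\dominion$.

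I don't anticipate any real obstacle here: the lemma is essentially a bookkeeping statement, and all of the work has been done in \cref{le:dominions_SMPG}, which already extracts the condition ``edges leaving $\dominion$ go to $\vertexsubset[Max]{\dominion}$'' from the dominion property. The only subtle point worth stating explicitly is that the Nature non-emptiness follows from the probabilistic normalization $\sum_{l\in\dominion} P_{kl}=1$ rather than from \cref{assum-fin} directly, and that the edges of $\edges^\dominion$ from Nature vertices $k \in \vertexsubset[Nat]{\dominion}$ are automatically confined to $\dominion$ because all other probabilities from $k$ vanish.
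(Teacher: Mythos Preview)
Your proof is correct and follows essentially the same approach as the paper's own proof. The paper is more terse, simply stating that the Max and Nature cases hold ``by definition'' of $\vertexsubset[Max]{\dominion}$ and $\vertexsubset[Nat]{\dominion}$, whereas you spell out the probabilistic reason in the Nature case; both arguments are the same in substance.
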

\begin{proof}
\Cref{le:dominions_SMPG} implies that every vertex $j \in \dominion$ has an outgoing edge to a vertex $i \in \vertexsubset[Max]{\dominion}$. The vertices in $\vertexsubset[Nat]{\dominion}$ and $\vertexsubset[Max]{\dominion}$ have at least one outgoing edge in $\edges^\dominion$ by definition. 
\end{proof}

The following lemma characterizes the recession operators of Shapley
operators of stochastic mean-payoff games. Recall that
if $F$ is a self-map of $\R^n$, the recession operator $\hat{F}$ of $F$
is defined by~\eqref{e-def-recession}.

\begin{lemma}\label{le:recession_smpg}
For every $j \in \vertexsetMin$ we have\todo{MS: I put QED symbol in the equation for better spacing.}
\[
\recshapley_j(x) = \min_{(j,i) \in \edges}\max_{(i,k) \in \edges}\sum_{(k,l) \in \edges} P_{kl} x_l \, . \hfill
\]
\end{lemma}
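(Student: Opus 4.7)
The plan is to compute the recession operator directly from the explicit formula~\cref{e-elemshapley} for $\shapley_j$, exploiting the positive homogeneity of $\min$, $\max$, and the linear operations with nonnegative coefficients $P_{kl}$.

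First I would substitute $\beta x$ into the formula for $\shapley_j$ and multiply by $\beta^{-1}$. Since scalar multiplication by $\beta^{-1}>0$ commutes with $\min$, $\max$, and linear combinations with nonnegative coefficients, we obtain
\[
\beta^{-1}\shapley_j(\beta x) = \min_{(j,i) \in \edges} \Bigl(-A_{ij}/\beta + \max_{(i,k) \in \edges}\bigl(B_{ik}/\beta + \sum_{(k,l) \in \edges} P_{kl} x_l\bigr)\Bigr) \, .
\]
Then I would pass to the limit as $\beta \to \infty$. Because the sets of edges are finite and the operators $\min$ and $\max$ are continuous, the limit may be taken inside; since $A_{ij}/\beta$ and $B_{ik}/\beta$ tend to $0$ uniformly, the terms involving the payoffs vanish, leaving exactly the claimed formula
\[
\recshapley_j(x) = \min_{(j,i) \in \edges}\max_{(i,k) \in \edges}\sum_{(k,l) \in \edges} P_{kl} x_l \, .
\]

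There is no real obstacle here: the result is essentially an observation that the mean-payoff Shapley operator is a ``translated'' version of its recession (which forgets the payoffs $A_{ij}, B_{ik}$ but preserves the probabilistic transition structure). The only mild point to mention is that the existence of the pointwise limit defining $\recshapley$ is guaranteed by the fact that the formula above is piecewise affine with bounded coefficients in $1/\beta$, so the limit exists trivially and the recession operator is well-defined.
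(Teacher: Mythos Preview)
Your proposal is correct and is exactly the computation the paper has in mind; indeed, the paper states this lemma with a \qed{} and no proof, treating it as immediate from the definition of the recession operator and the explicit formula~\cref{e-elemshapley}. Your argument---pulling $\beta^{-1}$ through the finite $\min$/$\max$ and the nonnegative linear combination, then letting the payoff terms $A_{ij}/\beta$, $B_{ik}/\beta$ vanish---is the canonical one (and mirrors the proof the paper does spell out for the analogous entropy-game statement, \cref{le:recession_entropy}).
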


\begin{lemma}\label{le:maxval_domin_SMPG}
The Shapley operators of stochastic mean-payoff games satisfy \cref{as:good_operators}.
\end{lemma}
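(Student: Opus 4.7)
My plan is to use the piecewise affine structure of the stochastic mean-payoff Shapley operator together with Kohlberg's theorem (\cref{Theo:Kohlberg}), which produces an invariant half-line from which the ergodic data can be read off.

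For the existence of $\gameval^\dominion$, I would invoke \cref{le:dominions_SMPG}: the operator $\shapley^\dominion$ is itself the Shapley operator of a stochastic mean-payoff game on the subgraph $(\vertexset^\dominion,\edges^\dominion)$, and is therefore piecewise affine. \Cref{Theo:Kohlberg} then yields an invariant half-line for $\shapley^\dominion$, which guarantees the existence of $\gameval^\dominion = \lim_\ell (\shapley^\dominion)^\ell(\zerovector)/\ell \in \R^\dominion$.

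For the claim that $\domMaxVal$ is a dominion, I would apply \cref{Theo:Kohlberg} to $\shapley$ itself to obtain $z\in \R^{\vertexsetMin}$ satisfying $\shapley(z+\beta\gameval) = z + (\beta+1)\gameval$ for all large $\beta$. Fix $j\in \domMaxVal$ and set $\rho \defi \ucw(\shapley) = \gameval_j$. Expanding via \cref{e-elemshapley} shows that for every $(j,i)\in \edges$,
\begin{equation*}
\max_{(i,k)\in \edges}\Bigl(B_{ik} + \sum_{(k,l)\in\edges} P_{kl}\, z_l + \beta \sum_{(k,l)\in\edges} P_{kl}\,\gameval_l\Bigr) \ge z_j + A_{ij} + (\beta+1)\rho \, .
\end{equation*}
Dividing by $\beta$ and letting $\beta\to\infty$ forces $\max_{(i,k)\in \edges}\sum_l P_{kl}\gameval_l \ge \rho$, and since each such sum is at most $\rho$, equality is attained at some $k$. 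At this $k$, the convex combination $\sum_l P_{kl}\gameval_l = \rho$ of values bounded by $\rho$ forces $\gameval_l = \rho$ for every $l$ with $(k,l)\in\edges$. Thus $k\in \vertexsubset[Nat]{\domMaxVal}$ and $i\in \vertexsubset[Max]{\domMaxVal}$, and by \cref{le:dominions_SMPG} $\domMaxVal$ is a dominion.

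For the ergodicity on $\domMaxVal$, I would refine the previous asymptotic argument: for $\beta$ large enough, the inner maximum is attained exactly at the $k$ satisfying $\sum_l P_{kl}\gameval_l = \rho$, i.e.\ at the $k\in \vertexsubset[Nat]{\domMaxVal}$ with $(i,k)\in\edges$. Separating the $\beta\rho$ contribution and using that all edges $(j,i)\in\edges$ with $j \in \domMaxVal$ lie in $\edges^{\domMaxVal}$ (by the previous step), one recovers
\begin{equation*}
z_j + (\beta+1)\rho = \shapley_j(z+\beta\gameval) = \shapley^{\domMaxVal}_j(z|_{\domMaxVal}) + \beta\rho \, ,
\end{equation*}
for every $j \in \domMaxVal$. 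Hence $z|_{\domMaxVal}$ is a bias vector of $\shapley^{\domMaxVal}$ with ergodic constant $\rho$, yielding $\lcw(\shapley^{\domMaxVal}) = \ucw(\shapley^{\domMaxVal}) = \rho = \ucw(\shapley)$. The main technical point is the asymptotic identification of the active Nature edges: one needs to verify that no non-maximizing $k$ can contribute to the inner max for $\beta$ large, so that the identity above is exact rather than just an inequality, and hence collapses cleanly onto $\shapley^{\domMaxVal}$ restricted to $\edges^{\domMaxVal}$.
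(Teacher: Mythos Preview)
Your argument is correct and takes a somewhat different route from the paper. The paper fixes an optimal positional strategy $\tau$ of Max and uses the identity $\hat{\shapley}^\tau(\gameval)=\gameval$ for the recession operator of the one-player operator $\shapley^\tau$ (via \cref{le:recession_smpg}) to conclude that $\tau(i)\in\vertexsubset[Nat]{\domMaxVal}$ whenever $(j,i)\in\edges$ with $j\in\domMaxVal$, hence that $\domMaxVal$ is a dominion; for $\lcw(\shapley^{\domMaxVal})=\ucw(\shapley)$ it then runs a strategy-comparison argument, extending an optimal $\sigma^{\domMaxVal}$ of Min in the subgame to the full game and restricting $\tau$ to the subgame, so that plays starting in $\domMaxVal$ stay there and the payoffs match. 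You bypass optimal strategies altogether: working directly with the Kohlberg half-line $\shapley(z+\beta\gameval)=z+(\beta+1)\gameval$, you let $\beta\to\infty$ to identify the active Nature states (this is the recession computation done by hand on the full two-player operator), and then note that for large $\beta$ the equality collapses exactly to $\shapley^{\domMaxVal}(z|_{\domMaxVal})=\rho+z|_{\domMaxVal}$, yielding a bias vector on $\domMaxVal$ directly. Your route is arguably more self-contained here, since the ergodic equation on $\domMaxVal$ drops out without any appeal to game values or optimal strategies; the paper's strategy-based argument, on the other hand, transfers more readily to settings like entropy games (\cref{le:good_entropy}) where a Kohlberg half-line is not available.
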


\begin{proof}%
The fact that $\shapley$ satisfies the first part of \cref{as:good_operators} follows from \cref{rem:o-minimal-dominion}. Alternatively, to prove this part it is enough to note that $\shapley$ has an escape rate by \cref{Theo:Kohlberg} and, by \cref{le:dominions_SMPG}, the same is true for the operator $\shapley^{\dominion}$ when $\dominion$ is a dominion. 

In order to prove the second part of \cref{as:good_operators}, let $\tau$ be an optimal strategy of Max in the stochastic mean-payoff game described by $\shapley$. Consider the Shapley operator
\[
\shapley^{\tau}_j(x) \coloneqq \min_{(j,i) \in \edges} \bigl(-A_{ij}+ B_{i\tau(i)}+ \sum_{(\tau(i),l) \in \edges} P_{\tau(i)l} x_l\bigr), 
\]
of the game in which Max plays according to $\tau$. Since $\tau$ is optimal, we have $\gameval = \lim_{\ell \to \infty}\frac{(\shapley^{\tau})^{\ell}(\zerovector)}{\ell}$. Let $\recshapley^{\tau}$ be the recession operator of $\shapley^{\tau}$. By  \cref{le:recession_smpg} we have 
\[
\recshapley^{\tau}_j(x) = \min_{(j,i) \in \edges}\sum_{(\tau(i),l) \in \edges} P_{\tau(i)l} x_l
\]
Furthermore, $\recshapley^{\tau}(\gameval) = \gameval$ by \cref{prop-rec}.\todo{RK: I added this proposition to refer to this property}
Note that $\sum_{(k,l) \in \edges}P_{kl}\gameval_l = \ucw(\shapley)$ for every $k \in \vertexsubset[Nat]{\domMaxVal}$ and $\sum_{(k,l) \in \edges}P_{kl}\gameval_l < \ucw(\shapley)$ for all $k \in \vertexsetNat \setminus \vertexsubset[Nat]{\domMaxVal}$.\todo{MS: Should we recall somewhere that $\mytop(\gameval) = \ucw(\shapley)$?} Therefore, the equality $\recshapley^{\tau}(\gameval) = \gameval$ implies that for every $j \in \domMaxVal$ and every $i \in \vertexsetMax$ such that $(j,i) \in \edges$ we have $\tau(i) \in \vertexsubset[Nat]{\domMaxVal}$. In particular, every such $i$ belongs to $\vertexsubset[Max]{\domMaxVal}$ and $\domMaxVal$ is a dominion by \cref{le:dominions_SMPG}.

To complete the proof, recall that for all $j \in \domMaxVal$ we have $\gameval_j^{\domMaxVal} \le \ucw(\shapley)$ by \cref{le:smaller_val_domin}. To prove that $\ucw(\shapley) \le \gameval_j^{\domMaxVal}$ for $j \in \domMaxVal$, let $\sigma^{\domMaxVal} \colon \domMaxVal \to \vertexsubset[Max]{\domMaxVal}$ be an optimal strategy of Min in the subgame induced by $\domMaxVal$ and let $\sigma \colon \vertexsetMin \to \vertexsetMax$ be any strategy of Min that agrees with $\sigma^{\domMaxVal}$ on $\domMaxVal$. Likewise, let $\tau^{\domMaxVal} \colon \vertexsubset[Max]{\domMaxVal} \to \vertexsubset[Nat]{\domMaxVal}$ be any strategy of Max in the subgame induced by $\domMaxVal$ that agrees with $\tau$ on the set $\{i \in \vertexsubset[Max]{\domMaxVal}\colon \exists j \in \domMaxVal, (j,i) \in \edges \}$. Note that there is at least one such strategy, because we have shown above that $\tau(i) \in \vertexsubset[Nat]{\domMaxVal}$ for any $i$ that belongs to this set. Furthermore, if the initial state $j$ of the game belongs to $\domMaxVal$ and the players use the strategies $(\sigma, \tau)$, then the game never leaves the set of states $\domMaxVal \uplus \vertexsubset[Max]{\domMaxVal} \uplus \vertexsubset[Nat]{\domMaxVal}$ by the definition of $\vertexsubset[Nat]{\domMaxVal}$. Even more, $(\sigma, \tau)$ and $(\sigma^{\domMaxVal},\tau^{\domMaxVal})$ generate the same probability measures on the possible trajectories starting at $j$. In particular, we have $g_j(\sigma, \tau) = g_{j}(\sigma^{\domMaxVal},\tau^{\domMaxVal})$ for all $j \in \domMaxVal$. Hence, the optimality of $\sigma^{\domMaxVal}$ and $\tau$ implies $\gameval^{\domMaxVal}_j \ge g_{j}(\sigma^{\domMaxVal},\tau^{\domMaxVal}) = g_j(\sigma, \tau) \ge \gameval_j = \ucw(\shapley)$. 
\end{proof}

\subsection{Bit-Complexity Estimates for Stochastic Mean-Payoff Games}
We start by bounding the separation $\separ$
and the metric estimate
$R(\shapley)$, when $\shapley$ is the Shapley operator of a stochastic turn-based zero-sum game as in~\cref{e-elemshapley}.
We recall that the payoffs $A_{ij}$ and $B_{ik}$ are integers. This is not more special than assuming that they
are rational numbers (we may always rescale rational payments so that they become integers).\todo{MS: I changed these sentences slightly to go away from matrix notation.} We set
\begin{equation}\label{e-defw}
W\defi \max \left\{| A_{ij} - B_{ik} | \colon i \in \vertexsetMax, \, j\in \vertexsetMin, \, k \in \vertexsetNat \right\}\, .
\end{equation}
We also assume that the probabilities $\transition_{k j}$ are rational, and that they
have a common denominator $\comd \in \N_{> 0}$, i.e., $\transition_{k j} \in \N/\comd$,
  for all $k \in \vertexsetNat$ and $j \in \vertexsetMin$.
We say that a state $k \in \vertexsetNat$ is a \emph{significant random state} if there are at least two indices $j,j'\in \vertexsetMin$ such that
$P_{kj}>0$ and $P_{k j'}>0$. 
We denote by $\rstates(P)$ (or simply $\rstates$, when $P$ is clear from the context) the number of significant random states. %
The following separation bound improves an estimate in~\cite{boros_gurvich_makino}.

\begin{theorem}\label{cor:sep_smpg}
We have $\separ(\shapley) > 1/(\nstates\comd^{\min\{\rstates,\nstates-1\}})^2$.
\end{theorem}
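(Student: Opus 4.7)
The plan is to bound $\sep(\shapley)$ by reducing to a statement about the denominators of mean payoffs of Markov chains obtained by freezing pairs of positional strategies.

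First, by \cref{le:dominions_SMPG}, any dominion $\dominion$ induces a subgame of the same kind, whose Shapley operator is $\shapley^{\dominion}$, whose number of Min-states is at most $\nstates$, and whose number of significant random states is at most $\rstates$. Therefore it suffices to prove that for any Shapley operator $G$ of a stochastic turn-based game with parameters $(\nstates, \rstates, \comd)$ for which $\ucw(G) > \lcw(G)$, one has $\ucw(G) - \lcw(G) > 1/(\nstates \comd^{\min\{\rstates, \nstates-1\}})^2$. Combined with \cref{as:good_operators} and \cref{le:maxval_domin_SMPG}, this will yield the theorem.

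Second, by Liggett and Lippman's result recalled in \cref{subsec-introducing}, the value $\gameval$ of $G$ is realized by a pair $(\sigma^*, \tau^*)$ of positional strategies, and $\ucw(G) = \max_j \gameval_j$, $\lcw(G) = \min_j \gameval_j$ by \cref{Theo:UCW}. Fixing such strategies turns the game into a Markov chain on $\vertexset$ with rational transitions sharing the common denominator $\comd$ and integer one-step rewards $r_v \in \Z$ bounded by $\maxweight$. The mean payoff $\gameval_j = g_j(\sigma^*, \tau^*)$ at each initial state $j$ equals $\sum_v \invmes^{(j)}_v r_v$ where $\invmes^{(j)}$ is the invariant measure of the recurrent class accessible from $j$. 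Thus $\gameval_j$ is a rational number, and the goal reduces to bounding its denominator.

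Third, the crucial step is to prove that each $\gameval_j$ may be written as $p_j/q_j$ with $q_j \in \N_{>0}$ and $q_j \le \nstates\comd^{\min\{\rstates,\nstates-1\}}$. The invariant measure of an irreducible component is characterized by a linear system $\invmes^\transpose(I - \transition) = 0$ together with $\invmes^\transpose \unitvector = 1$. In the rows of $I - \transition$ corresponding to deterministic states (non-significant random states and the Min/Max states once strategies are frozen) the entries are integers in $\{-1, 0, 1\}$; only the rows indexed by the at most $\rstates$ significant random states carry denominators, each equal to $\comd$. Multiplying these rows by $\comd$ yields an integer matrix, so by Cramer's rule each entry of $\invmes$ has denominator dividing $\comd^{\rstates}$ times a positive integer coming from a minor of an integer matrix with $\pm 1$ entries of order $(n-1)$. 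This integer can be bounded using the Hadamard bound, yielding the factor $\nstates$ and the alternative denominator $\comd^{\nstates-1}$ when $\rstates \ge \nstates-1$. The integer payments preserve this bound when forming $\sum_v \invmes^{(j)}_v r_v$. I would collapse Max and Nature states into effective one-step transitions to make the count clean, and formally invoke the bit-complexity estimates for Fokker–Planck and Poisson equations mentioned in the introduction.

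Finally, two distinct rational numbers with positive denominators at most $D := \nstates \comd^{\min\{\rstates, \nstates-1\}}$ differ by at least $1/D^2$, which gives the stated separation. The main obstacle is the tight denominator bound in the third step: one must carefully exploit the fact that the non-significant random states behave deterministically and therefore do not contribute powers of $\comd$, and that the Cramer minor is over a matrix of effective size $\min\{\rstates + 1, \nstates\}$ after reducing through the deterministic skeleton. The rest consists of standard Markov-chain mean-payoff identities and basic separation of rationals.
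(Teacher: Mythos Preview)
Your proposal follows essentially the same route as the paper: reduce to subgames via \cref{le:dominions_SMPG}, fix optimal positional strategies to obtain a Markov chain with rewards, observe that $\ucw$ and $\lcw$ are then expectations of integer rewards against invariant measures of recurrent classes, bound the denominators of these rationals by $D:=\nstates\comd^{\min\{\rstates,\nstates-1\}}$, and conclude by the $1/D^{2}$ separation of distinct rationals. The paper packages the second and third steps as \cref{le:maxval_denom_smpg}, which in turn rests entirely on the black-box \cref{est_invariant_measure} from~\cite{skomra_bounds}.

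The only point where your write-up diverges is that you attempt to argue the denominator bound inline via Cramer's rule and a Hadamard estimate. That sketch, as stated, does not give the claimed factor $\nstates$: a Hadamard bound on an $(\nstates-1)\times(\nstates-1)$ integer matrix with entries in $\{-1,0,1\}$ yields $(\nstates-1)^{(\nstates-1)/2}$, not $\nstates$. Getting the tight bound $\nstates\comd^{\min\{\rstates,\nstates-1\}}$ genuinely requires the reduction through the deterministic skeleton that you flag as ``the main obstacle,'' and this is precisely the content of~\cite{skomra_bounds} that the paper invokes rather than reproves. Since you already note that you would ``formally invoke the bit-complexity estimates\ldots mentioned in the introduction,'' the cleanest fix is to drop the Cramer/Hadamard sketch and cite \cref{est_invariant_measure} directly, exactly as the paper does.
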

We will deduce this theorem from an optimal bit-complexity result for Markov
chains, established in~\cite{skomra_bounds}.
\begin{lemma}[\cite{skomra_bounds}]\label{est_invariant_measure}
  Suppose that a Markov chain with transition matrix $Q$ and $\nstates$ states is irreducible, and that the transition probabilities are rational numbers whose denominators divide the integer $M$. Let $\rstates\coloneqq\rstates(Q)$,
  and let $\invmes \in (0,1]^{\minstates \times 1}$ be the invariant measure of the chain. Then, the least common denominator of the rational numbers $\invmes_j$, $j \in \oneto{\states}$, is not greater than $\nstates\comd^{\min\{\rstates,\nstates-1\}}$.
\end{lemma}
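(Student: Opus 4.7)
The plan is to derive the bound from the Markov chain tree theorem (a Markov-chain form of Kirchhoff's matrix-tree theorem). Since the chain with transition matrix $Q$ is irreducible, this theorem expresses the unique invariant probability as $\invmes_j = \tau_j / \sum_{k=1}^{n} \tau_k$, where $\tau_j \coloneqq \sum_{T \in \mathcal{T}_j} \prod_{(i,k) \in T} Q_{ik}$ and $\mathcal{T}_j$ denotes the collection of spanning in-arborescences rooted at $j$, that is, spanning acyclic subgraphs in which every vertex $i \neq j$ has exactly one outgoing edge. I would control separately the denominator of each $\tau_j$, using the structure of non-significant random states, and the total sum $\sum_k \tau_k$, using a global upper bound, and then combine them.

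For the denominator, the key observation is that in every $T \in \mathcal{T}_j$ each non-root vertex $i$ carries exactly one outgoing edge; when $i$ is not a significant random state, this edge is forced, as its head must be the unique state $k$ with $Q_{ik}=1$. Hence the corresponding factor contributes $1$ to $\prod_{(i,k) \in T} Q_{ik}$, and the tree weight reduces to a product of rationals of the form $Q_{i, p_T(i)} = n_{i, p_T(i)}/\comd$ indexed only by the significant non-root vertices $i$. Since there are at most $\min\{\rstates, n-1\}$ such indices, I deduce that $\comd^{\min\{\rstates, n-1\}} \tau_j$ is a nonnegative integer for every $j$.

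To bound the sum, I would prove $\tau_j \le 1$ by enlarging the index set. Parameterizing an arborescence $T \in \mathcal{T}_j$ by the map $f_T \colon \{1,\dots,n\}\setminus \{j\} \to \{1,\dots,n\}$ sending each non-root vertex to the head of its unique outgoing edge in $T$, the set $\mathcal{T}_j$ is in bijection with those $f$ whose functional graph is acyclic. Dropping the acyclicity constraint, and using that each row of $Q$ sums to $1$, yields
\[
\tau_j \;=\; \sum_{f \text{ acyclic}} \prod_{i \neq j} Q_{i, f(i)} \;\le\; \sum_{f} \prod_{i \neq j} Q_{i, f(i)} \;=\; \prod_{i \neq j} \Big( \sum_{k} Q_{i,k}\Big) \;=\; 1 .
\]
Summing over $j$ then gives $\sum_{k=1}^{n} \tau_k \le n$.

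Combining these two estimates, the positive integer $D \coloneqq \comd^{\min\{\rstates, n-1\}} \sum_{k} \tau_k$ is a common denominator for every $\invmes_j$ and satisfies $D \le n\, \comd^{\min\{\rstates, n-1\}}$; hence the least common denominator of the $\invmes_j$ divides $D$ and is bounded by this quantity, as claimed. The main obstacle is the first step, namely invoking the Markov chain tree theorem and verifying the structural fact that non-significant rows force a unique outgoing edge in every spanning arborescence; once this is in place, the denominator bound and the elementary inequality $\tau_j \le 1$ do the rest of the work.
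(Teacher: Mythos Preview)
Your proof is correct. The paper itself does not prove this lemma; it is imported without proof from the external reference~\cite{skomra_bounds}, so there is no in-paper argument to compare against. Your route via the Markov chain tree theorem is the standard one for such estimates, and every step is valid: in any spanning in-arborescence rooted at $j$, each non-significant non-root vertex contributes a factor $1$, so the tree weight has denominator dividing $\comd^{\min\{\rstates,\nstates-1\}}$; the relaxation of the acyclicity constraint gives $\tau_j\le 1$ and hence $\sum_k\tau_k\le \nstates$; and then $D\coloneqq \comd^{\min\{\rstates,\nstates-1\}}\sum_k\tau_k$ is a positive integer common denominator bounded by $\nstates\comd^{\min\{\rstates,\nstates-1\}}$.
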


\begin{lemma}\label{le:maxval_denom_smpg}
Both $\ucw(\shapley)$ and $\lcw(\shapley)$ are rational numbers whose denominators are not greater than $\nstates\comd^{\min\{\rstates,\nstates-1\}}$.
\end{lemma}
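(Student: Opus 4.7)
The plan is to reduce the denominator bound to a classical invariant-measure estimate by fixing optimal positional strategies, and then apply \cref{est_invariant_measure}.

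First, by the Liggett--Lippman theorem recalled in \cref{subsec-introducing}, the game admits a pair $(\sigma^*,\tau^*)$ of optimal positional strategies. Once these are fixed, the game reduces to a finite Markov chain $Q$ on $\vertexsetMin$ with $Q_{j,l} = P_{k_j,l}$, where $k_j \coloneqq \tau^*(\sigma^*(j))$, and integer one-step rewards $r_j \coloneqq -A_{\sigma^*(j),j} + B_{\sigma^*(j),k_j}$. The entries of $Q$ share the common denominator $M$. By \cref{Theo:UCW} together with the identification of the escape rate $\chi(\shapley)$ with the mean-payoff vector $\chi$, we have $\ucw(\shapley) = \max_{j\in\vertexsetMin}\chi_j$ and $\lcw(\shapley) = \min_{j\in\vertexsetMin}\chi_j$.

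Second, by the classical decomposition of a finite Markov chain, $\chi_j$ is a convex combination of the recurrent-class mean rewards $\rho_R \coloneqq \sum_{l\in R}\pi^R_l r_l$, where $\pi^R$ denotes the invariant probability measure of $Q|_R$. Choosing $j$ inside a recurrent class $R$ yields $\chi_j = \rho_R$, so that both $\ucw(\shapley)$ and $\lcw(\shapley)$ coincide with $\rho_R$ for some recurrent class. Since $r_l \in \Z$, it thus suffices to bound the least common denominator of the entries of $\pi^R$.

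Third, for each recurrent class $R$ of $Q$, the chain $Q|_R$ is irreducible, has $|R|\le n$ states, and its transition probabilities have common denominator dividing $M$. A state $j\in R$ is significant in $Q|_R$ iff (by recurrence of $R$, all successors of $j$ stay in $R$) its Nature successor $k_j$ has at least two outgoing edges of positive probability, i.e., iff $k_j$ is a significant random state of the original game. Applying \cref{est_invariant_measure} to $Q|_R$ then yields the target bound $|R|\cdot M^{\min\{s(Q|_R),|R|-1\}} \le n\cdot M^{\min\{s,n-1\}}$, provided one can establish $s(Q|_R)\le s$, and the same denominator bound transfers to $\rho_R$.

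The main obstacle is precisely the inequality $s(Q|_R)\le s$: the map $j\mapsto k_j$ restricted to $R$ need not be injective on the significant $k_j$'s, so a naive count only gives $s(Q|_R)\le |R|$. This is handled by the strong lumpability of $Q|_R$ under the equivalence $j_1\sim j_2 \iff k_{j_1}=k_{j_2}$: states in the same class have identical rows of $Q$, so lumping produces an equivalent Markov chain whose significant states are in bijection with the significant Nature states reached from $R$, whose number is at most $s$. Alternatively, the refined estimate in~\cite{skomra_bounds} underlying \cref{est_invariant_measure} applies directly to this configuration of the chain and yields the required bound.
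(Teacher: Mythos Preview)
Your approach coincides with the paper's: fix optimal positional strategies, pass to the Markov chain $Q_{jl}=P_{\tau^*(\sigma^*(j)),l}$ on $\vertexsetMin$ with integer rewards $r_j$, identify $\ucw(\shapley)$ and $\lcw(\shapley)$ with $\rho_R=\sum_{l\in R}\pi^R_l\, r_l$ on suitable recurrent classes, and invoke \cref{est_invariant_measure}. The paper's own proof is equally terse at this final step, ending with ``Hence, the claim follows from \cref{est_invariant_measure}.''

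You go further by correctly flagging a subtlety the paper elides: \cref{est_invariant_measure} applied to $Q|_R$ involves $\rstates(Q|_R)$, the number of $j\in R$ for which $k_j:=\tau^*(\sigma^*(j))$ is significant, and since $j\mapsto k_j$ need not be injective this can exceed $\rstates$. However, your lumpability fix does not establish the claimed bound. Lumping under $j_1\sim j_2\iff k_{j_1}=k_{j_2}$ produces a chain on $K=\{k_j:j\in R\}$ with at most $\rstates$ significant states, and \cref{est_invariant_measure} then controls the denominator of its invariant measure $\mu_k=\sum_{j:k_j=k}\pi^R_j$. But $\rho_R=\sum_{k\in K}\mu_k\,c_k$ with $c_k=\sum_{l\in R}P_{k,l}\,r_l\in\tfrac{1}{\comd}\Z$, not in $\Z$, because states in the same lump need not share the same integer reward $r_j$. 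Passing from $\mu$ back to $\rho_R$ therefore costs an extra factor $\comd$: for instance with $\rstates=1$, $|K|=2$, $\nstates=3$, the lumping route yields only the bound $2\comd^{2}$ against the target $3\comd$. What is actually required---and what both your closing sentence and the paper implicitly defer to~\cite{skomra_bounds}---is a version of the estimate that is insensitive to duplicated rows of $Q$, so that applying it directly to $Q|_R$ already gives the exponent $\min\{\rstates,\nstates-1\}$; as written, this remains an appeal rather than an argument.
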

\begin{proof}
Let $\sigma$ and $\tau$ be optimal strategies of players Min and Max respectively. Consider the Shapley operator
\[
\shapley^{\sigma,\tau}_j(x) \coloneqq -A_{\sigma(j)j} + B_{\sigma(j)\tau(\sigma(j))} + \sum_{(\tau(\sigma(j)),l) \in \edges}P_{\tau(\sigma(j))l}x_l
\]
of the game in which players play according to $(\sigma,\tau)$. Note that $\ucw(\shapley) = \ucw(\shapley^{\sigma,\tau})$ and $\lcw(\shapley) = \lcw(\shapley^{\sigma,\tau})$ because the strategies $\sigma, \tau$ are optimal. Furthermore, let the vector $r \in \R^n$ and the row-stochastic matrix $Q \in \R^{n \times n}$ be defined as $r_j \coloneqq -A_{\sigma(j)j} + B_{\sigma(j)\tau(\sigma(j))}$ for all $j \in \vertexsetMin$ and 
\[
Q_{jl} = \begin{cases}
P_{\tau(\sigma(j))l} &\text{if $(\tau(\sigma(j)),l) \in \edges$},\\
0 &\text{otherwise}
\end{cases}
\]
for all $j,l \in \vertexsetMin$. In this way, we get $\shapley^{\sigma,\tau}(x) = r + Qx$ for all $x \in \trop^n$. In other words, $\shapley^{\sigma,\tau}$ is an operator describing a Markov chain with rewards, in which the Markov chain has transition probabilities given by $Q$ and rewards given by $r$. The ergodic theorem of finite Markov chains~\cite[Appendix~A.4]{puterman} implies that this Markov chain has two (possibly identical) recurrent classes $U_1, U_2$ such that $\ucw(\shapley^{\sigma,\tau}) = \sum_{j \in U_1}r_j\pi^{(1)}_j$, $\lcw(\shapley^{\sigma,\tau}) = \sum_{j \in U_2}r_j\pi^{(2)}_j$, where $\pi^{(1)},\pi^{(2)}$ are the invariant measures of the restrictions of the Markov chain to these classes.\todo{MS: How much details do we want here?} Hence, the claim follows from \cref{est_invariant_measure}. 
\end{proof}

\begin{proof}[Proof of \cref{cor:sep_smpg}.]
By \cref{le:maxval_denom_smpg}, both $\ucw(\shapley)$ and $\lcw(\shapley)$ are rational numbers with denominators not greater than $\nstates\comd^{\min\{\rstates,\nstates-1\}}$. Therefore, they are either equal or satisfy $\ucw(\shapley) - \lcw(\shapley) > 1/(\nstates\comd^{\min\{\rstates,\nstates-1\}})^2$. By \cref{le:dominions_SMPG}, the same is true for any operator $\shapley^{\dominion}$ where $\dominion \subset [n]$ is a dominion.
\end{proof}

We now provide a bit-complexity estimate for the bias vector
of the Shapley operator of a stochastic mean-payoff game.
\begin{theorem}\label{le:blackwell_bias_estimate}\label{th-bias}
  Suppose that $\lcw(\shapley) = \ucw(\shapley)$. Then, there exists a vector $\bias \in \R^\vertexsetMin$ such that $\shapley(\bias) = \ucw(\shapley) + \bias$ and
  \[ R(\shapley) \leq \HNorm{\bias} \le 8\nstates \Payoff \comd^{\min\{\rstates, \nstates - 1\}}
  \enspace .
  \]
\end{theorem}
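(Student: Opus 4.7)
The plan is to first invoke Kohlberg's theorem to obtain existence of a bias, and then bound its Hilbert seminorm by reducing to a Markov chain Poisson equation and leveraging the bit-complexity estimate of \cref{est_invariant_measure}.

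\emph{Step 1 (existence).} Since $\shapley$ is piecewise affine and nonexpansive, \cref{Theo:Kohlberg} yields an invariant half-line of $\shapley$; the hypothesis $\lcw(\shapley)=\ucw(\shapley)=\rho$ forces the direction of this half-line to be collinear with $\unitvector$, hence produces a vector $\bias$ such that $\shapley(\bias)=\rho+\bias$. This establishes existence, but without any quantitative control on $\HNorm{\bias}$.

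\emph{Step 2 (reduction to a Markov chain).} The goal now is to exhibit a specific bias with small seminorm. I would choose positional strategies $\sigma^*$ of Min and $\tau^*$ of Max that are simultaneously optimal for the value and for the bias (so-called bias-optimal or Blackwell-optimal strategies, whose existence in finite turn-based stochastic games is classical; one constructs them by a lexicographic refinement of mean-payoff optimal strategies). For such a pair, the operator $\shapley^{\sigma^*,\tau^*}$ takes the affine form $x\mapsto r+Qx$, where $Q$ is a row-stochastic matrix whose entries are the $P_{kl}$ (common denominator dividing $M$) and $r$ is an integer vector with $\supnorm{r}\le W$; moreover $\bias$ can be chosen to satisfy simultaneously $\shapley(\bias)=\rho+\bias$ and the Poisson equation $(I-Q)\bias=r-\rho\unitvector$, and each recurrent class of the Markov chain has average reward equal to $\rho$.

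\emph{Step 3 (bit-complexity of the Poisson solution).} Decompose the state space into the transient set $T$ and recurrent classes $C_1,\dots,C_K$. On each $C_k$, the chain is irreducible and \cref{est_invariant_measure} gives invariant measure denominator at most $D \defi n M^{\min\{s,n-1\}}$. The Poisson equation restricted to $C_k$, normalized so that $\bias$ vanishes on a chosen reference state of $C_k$, has the probabilistic representation $\bias_i=\mathbb{E}_i\bigl[\sum_{t=0}^{\tau-1}(r(X_t)-\rho)\bigr]$ with $\tau$ the hitting time of the reference state. Expected hitting/return times in the chain are bounded by the reciprocals of the invariant measure weights, i.e.\ by $D$, and each summand has magnitude at most $2W$, giving $\HNorm{\bias|_{C_k}}\le 4WD$. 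For transient states, the identity $(I-Q_{TT})^{-1}=\sum_{t\ge 0} Q_{TT}^t$ together with Cramer's rule represents the transient component of $\bias$ as $(I-Q_{TT})^{-1}(r_T-\rho\unitvector+Q_{TC}\bias_C)$, whose entries are rational numbers whose denominators again divide $D$ and whose numerators have controlled magnitude through the same hitting-time estimates. After a global additive normalization, one aggregates these contributions into the bound $\HNorm{\bias}\le 8nWD$.

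\emph{Main obstacle.} The delicate point is Step~2: justifying that the bias of the induced Markov chain can be identified with a bias of $\shapley$, which requires the choice of bias-optimal (not merely mean-payoff optimal) positional strategies. A secondary technical burden lies in Step~3, where one must patch together the biases defined on distinct recurrent classes into a single vector $\bias$ over all of $\vertexsetMin$, controlling the seminorm across transient excursions between classes; this is handled by using the same reference level $\rho$ throughout and exploiting that every transient trajectory terminates in some recurrent class with probability one, so the contributions of intra-class normalization constants cancel in a controlled way.
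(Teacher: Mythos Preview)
Your high-level plan---Blackwell optimality, reduction to a Poisson equation for the induced Markov chain, then a bit-complexity bound on the solution---matches the paper. The execution differs at the crucial point. The paper does not carry out a direct hitting-time/Cramer analysis as in your Step~3. Instead it realizes the bias concretely as the constant term of the Laurent expansion of the discounted value $x(\alpha)$ satisfying $\shapley\bigl((1-\alpha)x(\alpha)\bigr)=x(\alpha)$; for small $\alpha$ the argmin/argmax stabilize, giving the Blackwell-optimal pair $(\sigma,\tau)$ and the identity $x(\alpha)=r+(1-\alpha)Qx(\alpha)$. Multiplying by any invariant measure $\pi$ of $Q$ and reading off the coefficient of $\alpha^0$ yields the structural fact $\pi^{\transpose}u=0$ for \emph{every} invariant measure. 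This orthogonality is exactly the hypothesis of \cref{lemma-key-orth} (quoted from \cite{skomra_bounds}), which then delivers $\|u\|_\infty\le 4nWM^{\min\{s,n-1\}}$ as a black box.

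Your Step~3, by contrast, has a genuine gap. Within a single recurrent class the hitting-time representation does control $\HNorm{u|_{C_k}}$, but when $Q$ has several recurrent classes the Poisson equation leaves the relative offsets between classes free, and these offsets are what dominate the global $\HNorm{u}$. You write that the ``intra-class normalization constants cancel in a controlled way,'' but nothing in your argument fixes them: they are determined by the nonlinear constraint $\shapley(u)=\rho+u$ (not merely $\shapley^{\sigma^*,\tau^*}(u)=\rho+u$), which Step~3 never invokes. The transient analysis has a related problem: the assertion that entries of $(I-Q_{TT})^{-1}$ have denominators dividing $D=nM^{\min\{s,n-1\}}$ is unjustified---the naive bound is $M^{|T|}$, and getting the exponent down to $\min\{s,n-1\}$ is precisely the nontrivial content of \cref{lemma-key-orth}. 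The orthogonality condition is the missing ingredient that resolves both issues at once, and it comes for free from the Laurent-series construction rather than from any choice you make after the fact.
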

Before detailing the proof, we sketch the main ideas. 
The existence of the bias vector follows from Kohlberg's theorem (\cref{Theo:Kohlberg}). The bias is generally not unique (even up to an additive constant)
and then the main difficulty is to find a ``short'' bias. The one which
will be constructed in the proof of this theorem relies on the notion
of {\em Blackwell optimality}. This notion requires to consider
the {\em discounted} version of the game, in which the payment~\cref{e-def-mp}
is replaced by $\mathbb{E}_{\sigma \tau} \sum_{p = 0}^{\infty} (1-\alpha)^p (-A_{i_p j_p} + B_{i_p k_p})$, where $0<\alpha<1$ and $1-\alpha$ is the discount factor.  The discounted game
with initial state $j$ has a value, $x_j(\alpha)$, and the value
vector, $x(\alpha)=(x_j(\alpha))\in \R^n$, is the unique solution
of the fixed point problem $x(\alpha)=\shapley((1-\alpha)x(\alpha))$.
Then, a strategy of a player is {\em Blackwell optimal}
if it is optimal in all the discounted games with a discount
factor sufficiently close to $1$. It can be obtained
by selecting minimizing or maximizing
actions when evaluating the expression $\shapley((1-\alpha)x(\alpha))$,
for $\alpha>0$ close enough to $0$.
Kohlberg proved that $x(\alpha)$ admits a Laurent series expanstion
with a pole of order at most $1$. In fact, the result of Kohlberg
applies more generally to piecewise-affine maps that are nonexpansive
in any norm (not just to Shapley operators of stochastic mean-payoff games).

\begin{theorem}[{\cite{kohlberg}}]\label{th:kohlberg}
Let $F \colon \R^n \to \R^n$ be a piecewise-affine function that is nonexpansive in any norm. Then, for every $j \in [n]$ there exists a Laurent series $x_j \in \R((\alpha))$ of the form
\[
x_j(\alpha) = c_{j,-1} \alpha^{-1} + c_{j,0} + c_{j,1}\alpha + c_{j,2}\alpha^{2} + \dots
\]
such that $x_j(\alpha)$ converges for all small $\alpha > 0$ and $x(\alpha) = \bigl(x_1(\alpha),\dots, x_{n}(\alpha)\bigr) \in \R((\alpha))^n$ satisfies $F\bigl((1 - \alpha)x(\alpha) \big) = x(\alpha)$ for all small $\alpha > 0$. Furthermore, if we denote $\eta = (c_{1,-1}, \dots, c_{n,-1}) \in \R^{n}$ and $\bias = (c_{1,0}, \dots, c_{n,0}) \in \R^{n}$, then the equality $F\bigl(t\eta + \bias\bigr) = (t+1)\eta + \bias$ is satisfied for all sufficiently large $t > 0$.
\end{theorem}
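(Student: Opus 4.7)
\smallskip

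The plan is to construct the Laurent series $x(\alpha)$ as the value function of the discounted contraction $y\mapsto F((1-\alpha)y)$ and then extract $\eta$ and $\bias$ from its expansion at $\alpha=0$. First, I would show that for each $\alpha\in(0,1)$ the map $y\mapsto F((1-\alpha)y)$ is a strict sup-norm contraction with ratio $1-\alpha$ (since $F$ is nonexpansive in some norm, which for piecewise-affine maps we may upgrade to the sup-norm), so Banach's theorem produces a unique fixed point $x(\alpha)\in\R^n$. Using nonexpansiveness one has $\|x(\alpha)-F(0)\|_\infty =\|F((1-\alpha)x(\alpha))-F(0)\|_\infty\leq(1-\alpha)\|x(\alpha)\|_\infty$, so $\alpha\|x(\alpha)\|_\infty\leq\|F(0)\|_\infty$. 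This bound is crucial: it tells us that any Laurent expansion of $x$ at $\alpha=0$ has a pole of order at most $1$.

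Next I would use the piecewise-affine structure. Fix a finite polyhedral subdivision $P_1,\dots,P_k$ of $\R^n$ such that $F|_{P_i}(y)=A_iy+b_i$. Since $F$ is nonexpansive in some norm and each $P_i$ has nonempty interior, the operator norm of $A_i$ in that norm is $\leq 1$, so $I-(1-\alpha)A_i$ is invertible via a Neumann series for every $\alpha\in(0,1)$. Whenever $(1-\alpha)x(\alpha)\in P_i$, the fixed point equation becomes $x(\alpha)=(I-(1-\alpha)A_i)^{-1}b_i$, a \emph{rational} function of $\alpha$. The heart of the argument is to show that there is a single cell $P_{i_0}$ such that $(1-\alpha)x(\alpha)\in P_{i_0}$ for all sufficiently small $\alpha>0$. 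I would prove this by noting that the correspondence $\alpha\mapsto x(\alpha)$ is semialgebraic (its graph is cut out by the piecewise-linear equation $F((1-\alpha)x)=x$), hence the sets $\{\alpha\in(0,1):(1-\alpha)x(\alpha)\in P_i\}$ form a finite semialgebraic partition of $(0,1)$, so one of them contains an interval $(0,\alpha_0)$.

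With this stability in hand, on $(0,\alpha_0)$ the function $x(\alpha)=(I-(1-\alpha)A_{i_0})^{-1}b_{i_0}$ is rational in $\alpha$, and the bound $\alpha\|x(\alpha)\|_\infty=O(1)$ forces its Laurent expansion at $0$ to take the stated form $x(\alpha)=\eta\alpha^{-1}+\bias+c_1\alpha+\cdots$, convergent for small $\alpha>0$. Writing the identity $x(\alpha)=(1-\alpha)A_{i_0}x(\alpha)+b_{i_0}$ and matching powers of $\alpha$, the $\alpha^{-1}$ coefficient gives $A_{i_0}\eta=\eta$ and the constant coefficient gives $A_{i_0}\bias+b_{i_0}=\bias+\eta$. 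Hence on the cell $P_{i_0}$ we have $F(s\eta+\bias)=A_{i_0}(s\eta+\bias)+b_{i_0}=(s+1)\eta+\bias$.

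The last step is to verify that the half-line $\{s\eta+\bias:s\geq s_0\}$ actually lies in $P_{i_0}$ for some $s_0$, which is what makes the affine formula legitimate and yields the stated identity. Writing $s=(1-\alpha)/\alpha$, the containment $(1-\alpha)x(\alpha)\in P_{i_0}$ rewrites as $s\eta+\bias+O(1/s)\in P_{i_0}$ for all large $s$. For each defining inequality $\langle a_r,y\rangle\leq c_r$ of $P_{i_0}$, passing to the limit $s\to\infty$ forces $\langle a_r,\eta\rangle\leq 0$, and in the boundary case $\langle a_r,\eta\rangle=0$ one moreover gets $\langle a_r,\bias\rangle\leq c_r$; in both cases $\langle a_r,s\eta+\bias\rangle\leq c_r$ holds for all sufficiently large $s$. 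Taking the maximum of the finitely many thresholds produces the desired $s_0$. The main obstacle is the finiteness/stability step in the second paragraph: showing that $(1-\alpha)x(\alpha)$ stops switching cells as $\alpha\to 0^+$; this is where semialgebraicity (or equivalently, the o-minimal structure of piecewise-affine maps) is used essentially, without which one could a priori have infinite oscillation between cells.
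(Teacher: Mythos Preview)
The paper does not supply a proof of this theorem; it is quoted from Kohlberg's original article and used as a black box in the proof of \cref{le:blackwell_bias_estimate}. Your outline is essentially a faithful reconstruction of Kohlberg's argument: produce the discounted fixed point $x(\alpha)$ by Banach contraction, use semialgebraicity to trap $(1-\alpha)x(\alpha)$ in a single affine cell for small $\alpha$, read off the Laurent expansion from the resulting rational expression, and then verify that the half-line $t\mapsto t\eta+\bias$ eventually lies in that cell. The skeleton is sound, and you correctly identify the crucial finiteness step (no infinite cell-switching as $\alpha\to 0^+$) and handle it via semialgebraicity.

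One slip: the claim that nonexpansiveness ``in some norm'' can be ``upgraded to the sup-norm'' for piecewise-affine maps is false (a Euclidean rotation is a counterexample). Fortunately you never actually use this: the contraction argument and the bound $\alpha\|x(\alpha)\|=O(1)$ work in whatever norm $F$ is nonexpansive in, and equivalence of norms on $\R^n$ transfers the latter bound to any other norm. A second, more minor point: your assertion that each $P_i$ has nonempty interior (used to deduce $\|A_i\|\le 1$ and hence invertibility of $I-(1-\alpha)A_i$) is automatic for the maximal affinity regions, but if $(1-\alpha)x(\alpha)$ happens to sit on a lower-dimensional face you should pass to a full-dimensional neighboring cell, on which $F$ has the same affine expression by continuity; alternatively, note that the set of $\alpha$ where $I-(1-\alpha)A_{i_0}$ is singular is finite, so invertibility holds on a possibly smaller interval $(0,\alpha_0')$, which is all you need.
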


In fact, the vector $\eta$ collecting the coefficients of $\alpha^{-1}$ in the expansions of $x_j(\alpha)$ coincides with the escape rate vector of $F$, and the vector $u$ collecting the coefficients of the term of order $1$ yields a bias vector, in a special case.
\begin{proposition}\label{prop-chi=eta}
  Under the assumptions of \cref{th:kohlberg}, we have $\chi(F)=\eta$.
  Moreover, if $F$ is the Shapley operator of a stochastic mean-payoff game,
  and if $\ucw(F)=\lcw(F)$, then $F(u)=\eta +u$, i.e., $u$ is a bias vector.
\end{proposition}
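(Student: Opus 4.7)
The plan is to derive both assertions from the final identity in \cref{th:kohlberg}, namely $F(t\eta + u) = (t+1)\eta + u$ for all sufficiently large $t$.

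For the first assertion, I would iterate this identity to obtain $F^{N}(t\eta + u) = (t+N)\eta + u$ for any $N \in \N$ and $t$ large. Since $F$ is nonexpansive in the sup-norm, this gives
\[
\|F^{N}(\zerovector) - (t+N)\eta - u\|_\infty = \|F^{N}(\zerovector) - F^{N}(t\eta + u)\|_\infty \leq \|t\eta + u\|_\infty \enspace,
\]
so $F^{N}(\zerovector) = N\eta + O(1)$ as $N \to \infty$. Dividing by $N$ and passing to the limit yields $\chi(F) = \eta$ (and in particular the escape rate does exist, recovering \cref{Theo:Kohlberg}).

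For the second assertion, I would first argue that $\eta$ must be a constant vector under the hypothesis $\ucw(F) = \lcw(F)$. Indeed, writing $\lambda \coloneqq \ucw(F) = \lcw(F)$, \cref{Theo:UCW} gives $\mytop(F^{N}(\zerovector)/N) \to \lambda$ and $\mybot(F^{N}(\zerovector)/N) \to \lambda$, while the first part shows $F^{N}(\zerovector)/N \to \eta$ entrywise. Hence $\max_j \eta_j = \min_j \eta_j = \lambda$, so $\eta = \lambda \unitvector$.

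Now I would simply substitute $\eta = \lambda \unitvector$ into the Kohlberg identity and use the additive homogeneity of $F$: for any sufficiently large $t$,
\[
t\lambda + F(u) = F(t\lambda \unitvector + u) = F(t\eta + u) = (t+1)\eta + u = (t+1)\lambda \unitvector + u \enspace.
\]
Cancelling $t\lambda$ on both sides yields $F(u) = \lambda \unitvector + u = \eta + u$, as desired. The main (minor) obstacle is recognising that one needs $\ucw(F) = \lcw(F)$ precisely to force the ``leading coefficient'' vector $\eta$ to be constant, so that additive homogeneity can be applied to eliminate $t$; without this assumption, $\eta$ may have distinct entries and $u$ need not satisfy the ergodic equation.
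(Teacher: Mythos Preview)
Your proof is correct and follows essentially the same approach as the paper's: iterate the Kohlberg identity to compute $\chi(F)$, then observe that $\ucw(F)=\lcw(F)$ forces $\eta$ to be a constant vector so that additive homogeneity cancels the $t\eta$ term. Your write-up is in fact more explicit than the paper's, which compresses the second step into a single sentence (``using the fact that $F$ commutes with the addition of a constant vector'').

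One small point of precision: in the first part you invoke nonexpansiveness \emph{in the sup-norm}, but the hypotheses of \cref{th:kohlberg} only give nonexpansiveness in \emph{some} norm $\|\cdot\|$. The argument goes through verbatim with $\|\cdot\|$ in place of $\|\cdot\|_\infty$ (all norms on $\R^n$ being equivalent), so this does not affect correctness, but it would be cleaner to phrase it that way.
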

\begin{proof}
  For all sufficiently large $t$, and for all $N\in \mathbb{N}$,
  we have $F^N(t\eta +u)= (t+N)\eta +u$, from which we deduce
  that $\chi(F)=\lim_{N\to\infty} (F^N(t\eta +u))/N=\eta$.

  Moreover, we have $F(t\eta +u)=\eta + t\eta +u$. When
  $\ucw(F)=\lcw(F)$ and $F$ is the Shapley operator of a stochastic mean-payoff game, using the fact that $F$ commutes with the addition of a constant vector, we deduce that $F(u)=\eta + u$. 
\end{proof}

We shall see that the
bias vector $u$ satisfies a Poisson-type 
equation $\chi(\shapley)+u = r+Qu$,
and that this special bias vector has the remarkable
property of having a zero expectation
with respect to all invariant measures of $Q$.
Hence, the following lemma will allow us to bound the
bit-complexity of this bias vector.
\begin{lemma}[\cite{skomra_bounds}]\label{lemma-key-orth}
  Take a (possibly reducible) Markov chain with $\nstates$ states and transition probabilities given by the row-stochastic matrix $Q \in \R^{\nstates \times \nstates}$. Suppose that the transition
  probabilities are rational numbers, and let $\rstates$ and $\comd$ be as above. Let $r \in \Z^\nstates$ be an integer vector and suppose that $(\eta, \bias) \in \R^{2\nstates}$ is a solution of the system
\begin{align}
\begin{cases}
Q \eta &= \eta \\
Q \bias &= -r + \eta + \bias \, .
\end{cases}\label{e-poisson}
\end{align}
Furthermore, suppose that $\bias$ is orthogonal to all the invariant measures of the Markov chain, i.e., $\bias^{\transpose} \pi = 0$ whenever $\pi^{\transpose}Q = \pi^{\transpose}$. Then,
\[ \| \bias \|_\infty \le 4\| r \|_\infty \nstates\comd^{\min\{\rstates,\nstates-1\}}
\enspace .
\]
\end{lemma}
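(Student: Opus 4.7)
I would reduce to the recurrent--irreducible case by decomposing the chain, solve the Poisson system on each recurrent class using Cramer's rule together with \cref{est_invariant_measure}, and then extend to the transient states through the fundamental matrix $(I - Q_T)^{-1}$.

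First, I would decompose $[\nstates]$ into the recurrent classes $C_1, \ldots, C_p$ of $Q$ and the transient set $T$. Since $Q\eta = \eta$ and each restriction $Q^{(i)} \defi Q|_{C_i\times C_i}$ is irreducible stochastic, Perron--Frobenius forces $\eta|_{C_i}$ to be a constant $\eta_i$. Pairing the Poisson equation with the (unique normalized) invariant measure $\pi^{(i)}$ of $Q^{(i)}$ yields $\eta_i = (\pi^{(i)})^{\transpose} r|_{C_i}$, so $|\eta_i| \le \|r\|_\infty$. On $T$, the relation $\eta|_T = Q_T\, \eta|_T + Q_{TC}\, \eta|_C$ combined with $\specrad(Q_T)<1$ gives $\eta|_T = (I - Q_T)^{-1} Q_{TC}\, \eta|_C$, which is a convex combination of the $\eta_i$; consequently $\|\eta\|_\infty \le \|r\|_\infty$.

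Next, on each recurrent class $C_i$, the Poisson equation restricts to $(I - Q^{(i)})\, u|_{C_i} = -r|_{C_i} + \eta_i \mathbf{1}$, and $\ker(I - Q^{(i)}) = \R\mathbf{1}$; the normalization $(\pi^{(i)})^{\transpose} u|_{C_i} = 0$ pins down the remaining additive constant. Replacing one row of $I - Q^{(i)}$ by $(\pi^{(i)})^{\transpose}$ produces a square invertible system; scaling by $\comd^{|C_i|}$ and by the common denominator $D_i$ of $\pi^{(i)}$, which by \cref{est_invariant_measure} satisfies $D_i$ divides $|C_i|\comd^{\min\{\rstates(Q^{(i)}), |C_i|-1\}}$, makes all matrix entries integers. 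Applying Cramer's rule, using that $r$ itself is integer, and invoking $\|r|_{C_i} - \eta_i \mathbf{1}\|_\infty \le 2\|r\|_\infty$, a Hadamard-type bound on these integer determinants yields $\|u|_{C_i}\|_\infty \le 2 \|r\|_\infty |C_i|\, \comd^{\min\{\rstates(Q^{(i)}), |C_i|-1\}}$.

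For transient states, the Poisson equation rearranges to $u|_T = (I - Q_T)^{-1}\bigl(-r|_T + \eta|_T + Q_{TC}\, u|_C\bigr)$. The fundamental matrix $(I - Q_T)^{-1}$ is nonnegative with row sums equal to expected absorption times; its $\ell^\infty$ operator norm admits an analogous bound in $|T|$ and $\comd$, proved by collapsing all recurrent states of $Q$ into a single absorbing vertex so that \cref{est_invariant_measure} applies to the resulting stochastic chain. The right-hand side has sup-norm at most $2\|r\|_\infty + \max_i \|u|_{C_i}\|_\infty$. Summing the contributions, using $\sum_i |C_i| + |T| = \nstates$ together with the observation that the significant random states of the $Q^{(i)}$ and of the collapsed transient chain are disjoint subsets of the $\rstates$ significant random states of $Q$ (so the exponents on $\comd$ add up to at most $\rstates$), and absorbing constants yields the desired bound $\|\bias\|_\infty \le 4\|r\|_\infty \nstates \comd^{\min\{\rstates, \nstates-1\}}$.

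The main obstacle is the combinatorial bookkeeping of denominators in the third step: $Q_T$ is merely substochastic, so \cref{est_invariant_measure} must be applied to a stochastic surrogate, and one must verify that this surrogate uses no more than $\rstates - \sum_i \rstates(Q^{(i)})$ significant random states so that the various per-block exponents combine into the single $\min\{\rstates, \nstates-1\}$. The factor $4$ is tight only after carefully combining the two sources of a factor $2$: the bound $\|r - \eta\mathbf{1}\|_\infty \le 2\|r\|_\infty$ appearing both in the recurrent-class Cramer estimate and in the right-hand side seen by $(I-Q_T)^{-1}$ at transient states.
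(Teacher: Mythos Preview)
The paper does not prove this lemma; it is imported from \cite{skomra_bounds} (the citation is in the lemma header), so there is no in-paper argument to compare against.

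Your outline follows a natural line---decompose into recurrent classes and a transient block, solve the Poisson equation on each class via Cramer and \cref{est_invariant_measure}, then propagate through $(I-Q_T)^{-1}$---and you have correctly identified the delicate step: bounding $\|(I-Q_T)^{-1}\|_\infty$ by passing to a stochastic surrogate and keeping track of how the per-block exponents on $\comd$ recombine. Two points deserve more care than the sketch gives them. First, as written, the transient estimate multiplies the operator norm of $(I-Q_T)^{-1}$ by a right-hand side that already carries a factor $|C_i|\,\comd^{\rstates(Q^{(i)})}$ from $\max_i \|u|_{C_i}\|_\infty$; a naive product therefore yields a quadratic factor in $\nstates$ and a \emph{sum} of exponents, not the single linear factor $\nstates$ and the exponent $\min\{\rstates,\nstates-1\}$. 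Getting the sharp $4\nstates\comd^{\min\{\rstates,\nstates-1\}}$ requires either a more global Cramer argument or a finer accounting than ``absorbing constants.'' Second, \cref{est_invariant_measure} bounds denominators of invariant measures, not expected absorption times; the link you invoke (collapse recurrent states and read off row sums of $(I-Q_T)^{-1}$) needs an explicit argument, since the collapsed chain's invariant measure does not directly control these row sums. Both issues are presumably what \cite{skomra_bounds} handles in detail.
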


\begin{proof}[Proof of \cref{le:blackwell_bias_estimate}.]
Since $\shapley$ (understood as a function from $\R^{n}$ to $\R^{n}$) is piecewise-affine and nonexpansive in the sup-norm, \cref{th:kohlberg} shows that there exists two vectors $\eta, \bias \in \R^n$ such that the equality $\shapley(t\eta + \bias) = (t+1)\eta + \bias$ holds for all sufficiently large $t > 0$. If we take $v = t_0\eta + \bias$ for one such $t_0$, then we get $\shapley(v) = \eta + v$, which (by the fact that  $\lcw(\shapley) = \ucw(\shapley)$) implies that $\eta = \gameval = \ucw(\shapley)\unitvector$.\todo{MS: Some reference to earlier discussion?} In particular, we have $\shapley\bigl(t_0\ucw(\shapley) + \bias\bigr) = (t_0+1)\ucw(\shapley) + \bias$, which gives $\shapley(\bias) = \ucw(\shapley) + \bias$. Furthermore, we can suppose that the vector $\bias$ comes from a vector $x(\alpha) \in \R((\alpha))^n$ as described in \cref{th:kohlberg}. In particular, for all small $\alpha > 0$ we have
\begin{equation}\label{eq:fixed_point_discount}
\shapley\bigl((1 - \alpha)x(\alpha) \big) = x(\alpha)
\end{equation}
and, for every $j \in [n]$, the series $x_j(\alpha)$ satisfies $x_j(\alpha) = \ucw(\shapley)\alpha^{-1} + \bias_j + o(1)$. Define $y(\alpha) \coloneqq (1 - \alpha)x(\alpha) \in \R((\alpha))^n$ and note that this series satisfies $y(\alpha) = \ucw(\shapley)\alpha^{-1} + \bias - \ucw(\shapley) + o(1)$. For every $i \in \vertexsetMax$ consider the expression 
\begin{equation}\label{eq:stabilize_min}
\max_{(i,k) \in \edges}\bigl(B_{ik}+ \sum_{(k,l) \in \edges} P_{kl} y(\alpha)_l\bigr) \, .
\end{equation}
Observe that for all sufficiently small $\alpha$, the maxima in~\cref{eq:stabilize_min} are achieved by the same indices.\todo{MS: More details?} In other words, for every $i \in \vertexsetMax$ there exists $\tau(i) \in \vertexsetNat$ such that the equality $\max_{(i,k) \in \edges}\bigl(B_{ik}+ \sum_{(k,l) \in \edges} P_{kl} y(\alpha)_l\bigr) = B_{i\tau(i)} + \sum_{(\tau(i),l) \in \edges}P_{\tau(i)l}y_l(\alpha)$ holds for all small $\alpha$. In particular, for every $j \in \vertexsetMin$ we have the equality
\[
\shapley_j\bigl(y(\alpha)\bigr)  = \min_{(j,i) \in \edges}\bigl(-A_{ij} + B_{i\tau(i)} + \sum_{(\tau(i),l) \in \edges}P_{\tau(i)l}y_l(\alpha) \bigr) \, .
\]
As before, the minima on the right-hand side of this equality are achieved by the same indices if $\alpha$ is small. Hence, for every $j \in \vertexsetMin$ there exists $\sigma(j) \in \vertexsetMax$ such that the equality
\begin{equation}\label{eq:equality_fixedpoint}
\begin{aligned}
& x_j(\alpha)= \shapley_j\bigl(y(\alpha)\bigr) = -A_{\sigma(j)j} + B_{\sigma(j)\tau(\sigma(j))} + \sum_{(\tau(\sigma(j)),l) \in \edges}P_{\tau(\sigma(j))l}y_l(\alpha) \\
&= \ucw(\shapley)\alpha^{-1} -A_{\sigma(j)j} + B_{\sigma(j)\tau(\sigma(j))} - \ucw(\shapley) + \sum_{(\tau(\sigma(j)),l) \in \edges}P_{\tau(\sigma(j))l}\bias_l + o(1)
\end{aligned}
\end{equation}
holds for all small $\alpha > 0$. Therefore, we get
\[
\bias_j = -A_{\sigma(j)j} + B_{\sigma(j)\tau(\sigma(j))} - \ucw(\shapley) + \sum_{(\tau(\sigma(j)),l) \in \edges}P_{\tau(\sigma(j))l}\bias_l \,
\]
for all $j \in \vertexsetMin$. Let $Q \in \R^{n \times n}$ be the row-stochastic matrix defined as $Q_{jl} \coloneqq P_{\tau(\sigma(j))l}$ if $(\tau(\sigma(j)),l) \in \edges$ and $Q_{jl} \coloneqq 0$ otherwise, and let $r \in \R^n$ be the vector given by $r_j \coloneqq -A_{\sigma(j)j} + B_{\sigma(j)\tau(\sigma(j))}$. If $\pi \in \R^n$ satisfies $\pi^{\transpose} Q = \pi$, then~\cref{eq:equality_fixedpoint} gives $\pi^{\transpose} x(\alpha) = \pi^{\transpose} r + \pi^{\transpose} y(\alpha)$ for all small $\alpha$ and therefore $\pi^{\transpose} r = \alpha \pi^{\transpose}x(\alpha) = \ucw(\shapley) + \alpha \pi^{\transpose}\bias + o(\alpha)$ for all small $\alpha$, which gives $\pi^{\transpose} r = \ucw(\shapley)$ and $\pi^{\transpose}\bias = 0$. By defining $\eta = \ucw(\shapley)\unitvector$, the pair $(\eta,\bias)$ is a solution of the linear system given in \cref{le:blackwell_bias_estimate} for the pair $(r,Q)$. Hence, we have $\|\bias\|_{\infty} \le 4 \nstates \Payoff \comd^{\min\{\rstates, \nstates - 1\}}$. The claim follows from the fact that $\HNorm{\bias} \le 2\|\bias\|_{\infty}$. 
\end{proof}

Thanks to these estimates, we arrive at the following corollaries.
\begin{corollary}\label{coro-totalbound}
Let $F$ be a Shapley operator as above, supposing that $F$ has a bias vector and that $\ec(F)$ is nonzero. Then,
Procedure
\textsc{ValueIteration} stops
after
\begin{align}
N_{\textrm{vi}}\leq 8\nstates^{2} \Payoff \comd^{2\min\{\rstates, \nstates - 1\}} 
\label{e-firstbound}
\end{align}
iterations and correctly decides which of the two players is winning.
\end{corollary}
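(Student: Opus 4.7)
The plan is to combine three earlier ingredients: the universal bound $N_{\textrm{vi}} \le R(\shapley)/|\ec(\shapley)|$ from \cref{Th:NitsBound}, the bias estimate $R(\shapley) \le 8n W M^{\min\{s,n-1\}}$ from \cref{le:blackwell_bias_estimate}, and a lower bound on $|\ec(\shapley)|$ derived from the denominator control in \cref{le:maxval_denom_smpg}. Multiplying the numerator bound by the reciprocal of the denominator bound then yields precisely the $8n^2 W M^{2\min\{s,n-1\}}$ estimate in~\cref{e-firstbound}.

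First I would note that, since $\shapley$ admits a bias vector, the ergodic equation $\shapley(w) = \ec(\shapley) + w$ is solvable, so $\ucw(\shapley) = \lcw(\shapley) = \ec(\shapley)$. In particular, the assumption $\ec(\shapley) \neq 0$ forces either $\lcw(\shapley) > 0$ or $\ucw(\shapley) < 0$, which by \cref{Th:NitsBound} guarantees the correctness of Procedure~\textsc{ValueIteration} and gives the inequality $N_{\textrm{vi}} \le R(\shapley)/|\ec(\shapley)|$. Then \cref{le:blackwell_bias_estimate} directly supplies the upper bound $R(\shapley) \le 8 n W M^{\min\{s,n-1\}}$ on the numerator.

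For the denominator, I would invoke \cref{le:maxval_denom_smpg}: both $\ucw(\shapley)$ and $\lcw(\shapley)$ are rationals whose denominators divide an integer of size at most $n M^{\min\{s,n-1\}}$, so the same holds for $\ec(\shapley)$. Since $\ec(\shapley)$ is a nonzero rational with denominator at most $n M^{\min\{s,n-1\}}$, we have
\[
|\ec(\shapley)| \ge \frac{1}{n M^{\min\{s,n-1\}}} \enspace .
\]
This is the one nontrivial step in the argument, and I expect it to be the main (although still routine) obstacle, because it is precisely where the absence of a separation-type hypothesis has to be compensated by the arithmetic structure of the ergodic constant coming from the Markov chain underlying any fixed pair of optimal positional strategies.

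Combining the three pieces gives
\[
N_{\textrm{vi}} \;\le\; \frac{R(\shapley)}{|\ec(\shapley)|} \;\le\; \bigl(8 n W M^{\min\{s,n-1\}}\bigr) \cdot \bigl(n M^{\min\{s,n-1\}}\bigr) \;=\; 8 n^{2} W M^{2\min\{s,n-1\}},
\]
which is exactly~\cref{e-firstbound}, and the correctness of the sign decision by \textsc{ValueIteration} was already established in the first paragraph. This completes the proof plan.
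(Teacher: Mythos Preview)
Your proposal is correct and follows essentially the same approach as the paper: combine the bound $N_{\textrm{vi}}\le R(\shapley)/|\ec(\shapley)|$ from \cref{Th:NitsBound}, the bias estimate from \cref{le:blackwell_bias_estimate}, and the lower bound $|\ec(\shapley)|\ge (nM^{\min\{s,n-1\}})^{-1}$ obtained from \cref{le:maxval_denom_smpg}. One minor wording point: \cref{le:maxval_denom_smpg} asserts that the denominator is \emph{at most} $nM^{\min\{s,n-1\}}$, not that it divides such an integer, but this is exactly what you need for the lower bound on $|\ec(\shapley)|$.
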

\begin{proof}%
By \cref{le:blackwell_bias_estimate} we have $R(\shapley) \leq 8\nstates \Payoff \comd^{\min\{\rstates, \nstates - 1\}}$. Moreover, $\ec(\shapley)$ is a rational number with denominator at most $\nstates \comd^{\min\{\rstates, \nstates - 1\}}$ by \cref{le:maxval_denom_smpg}. In particular, $|\ec(\shapley)| \ge (\nstates \comd^{\min\{\rstates, \nstates - 1\}})^{-1}$. Hence, the claim follows from \cref{Th:NitsBound}. 
\end{proof}

\begin{remark}
  When specialized to deterministic mean-payoff games, i.e., when $\rstates=0$,
  \cref{coro-totalbound} yields $N_{\textrm{vi}}=O(\nstates^{2} \Payoff)$
  which is precisely the bound that follows from the analysis of value iteration
  by Zwick and Paterson~\cite{zwick_paterson}. 
\end{remark}

Procedure \textsc{ApproximateConstantMeanPayoff} returns approximate optimality certificates, i.e., sub and super-eigenvectors $x$ and $y$ that satisfy
$\kappa - \delta/8 + x \le \shapley(x)$ and $\lambda + \delta/8 + y \ge \shapley(y)$. Then, we synthesize strategies as follows.
For each state $j\in \vertexsetMin$, Min selects a minimizing state $i \in \vertexsetMax$ in the
expression
\begin{align*}
\shapley_j(y) = \min_{(j,i) \in \edges} \Bigl(-A_{ij}+ \max_{(i,k) \in \edges}\bigl(B_{ik}+ \sum_{(k,l) \in \edges} P_{kl} y_k\bigr)\Bigr) \enspace .
\end{align*}
In this way, one gets a positional strategy which guarantees to Min
a value at most $\lambda + \delta/8$. A similar method
is used to construct a positional strategy of Max.

\begin{corollary}\label{cor-perturb}
  Suppose that $\shapley$ has a bias vector and let $\mu \defi\nstates\comd^{\min\{\rstates,\nstates-1\}}$.
Then, Procedure
\textsc{ApproximateConstantMeanPayoff}, applied to $F$ with $\delta\coloneqq\mu^{-2}$, 
terminates in at most
\[
128 n^3 \Payoff \comd^{3\min\{\rstates, \nstates - 1\}} 
\] 
calls to the oracle. Moreover the interval returned by this procedure
contains a unique rational number of denominator at most $\mu$,
which coincides with the value, and the above synthesis
procedure, exploiting the approximate optimality certificats $x$ and $y$,
provides optimal strategies.
\end{corollary}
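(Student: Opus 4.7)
My plan is to combine three bounds already established for stochastic mean-payoff games: the bias estimate $R(\shapley) \le 8 n \Payoff \comd^{\min\{\rstates, n-1\}}$ from \cref{th-bias}, the denominator bound for $\ec(\shapley)$ furnished by \cref{le:maxval_denom_smpg}, and the separation bound $\sep(\shapley) > \mu^{-2}$ of \cref{cor:sep_smpg}. The existence of a bias vector gives $\ucw(\shapley) = \lcw(\shapley) = \ec(\shapley)$, so the hypothesis of \cref{pr:approx_constant_value} is satisfied.

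For the iteration count, I would substitute $\delta \coloneqq \mu^{-2}$ and the upper bound on $R$ into the estimate $\lceil 8R/\delta \rceil$ given by \cref{pr:approx_constant_value}, and account for both loops of Procedure \textsc{ApproximateConstantMeanPayoff}: the first loop runs for some $\ell \le \lceil 8R/\delta \rceil$ iterations, while the second loop performs $\ell - 1$ additional evaluations of the oracle. The resulting factor $2$ then yields the announced bound $128 n^3 \Payoff \comd^{3 \min\{\rstates, n-1\}}$ after routine arithmetic.

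For the uniqueness of a rational with denominator at most $\mu$ in the returned interval $[\kappa - \delta/8, \lambda + \delta/8]$, the argument is immediate: the interval has width at most $\delta = \mu^{-2}$, and any two distinct rationals with denominator at most $\mu$ differ by at least $\mu^{-2}$. Since $\ec(\shapley)$ lies in the interval by correctness of the procedure and has denominator at most $\mu$ by \cref{le:maxval_denom_smpg}, it is the unique such rational there.

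The step I expect to require the most care is the extraction of optimal positional strategies from the certificates $(x, y)$ returned by the procedure, which satisfy $\shapley(x) \ge (\kappa - \delta/8) + x$ and $\shapley(y) \le (\lambda + \delta/8) + y$. My plan is to let $\sigma$ be any positional strategy of Min selecting a minimizing edge in the outer $\min$ of $\shapley_j(y)$, and $\tau$ any positional strategy of Max selecting a maximizing edge in the inner $\max$ of $\shapley_j(x)$. Freezing one of the players reduces $\shapley$ to an operator of the same combinatorial type (the remaining game still has at most $n$ Min-states and nature-transition denominators dividing $M$), so \cref{le:maxval_denom_smpg} still forces its upper (resp.\ lower) Collatz--Wielandt number to be a rational with denominator at most $\mu$. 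Combined with the sub/super-eigenvector inequalities and the elementary game-theoretic fact that fixing one player's strategy cannot hurt the opponent, both of these Collatz--Wielandt numbers lie in the width-$\mu^{-2}$ interval identified above, hence coincide with $\ec(\shapley)$ by the uniqueness argument. That equality is precisely the certificate that $\sigma$ and $\tau$ are optimal positional strategies for the mean-payoff game associated with $\shapley$.
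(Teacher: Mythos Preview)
Your proposal is correct and follows essentially the same route as the paper: the iteration bound via \cref{th-bias} and \cref{pr:approx_constant_value}, the uniqueness of the rational via \cref{le:maxval_denom_smpg} and the width bound, and the extraction of optimal strategies by selecting argmin/argmax in $\shapley(y)$ and $\shapley(x)$ and then applying \cref{le:maxval_denom_smpg} to the one-player-fixed operators. The only cosmetic difference is that the paper concludes $\ucw(\shapley^{\sigma}) \le \ec(\shapley)$ directly (without invoking the game-theoretic lower bound you mention), and your reference to \cref{cor:sep_smpg} is superfluous since only \cref{le:maxval_denom_smpg} is actually used.
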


\begin{proof}%
By \cref{le:blackwell_bias_estimate}, the value $R \coloneqq 8\nstates \Payoff \comd^{\min\{\rstates, \nstates - 1\}}$ satisfies the conditions of \cref{pr:approx_constant_value}. Therefore, \textsc{ApproximateConstantMeanPayoff} stops after at most $\lceil 8R/\delta \rceil = 64 \nstates^3 \Payoff \comd^{3\min\{\rstates, \nstates - 1\}}$ iterations of its first loop. In particular, it makes no more than $128 \nstates^3 \Payoff \comd^{3\min\{\rstates, \nstates - 1\}}$ calls to the oracle during its entire execution. Furthermore, it outputs an interval $[a,b]$ that contains $\ec(\shapley)$ and is of width at most $\delta$. Since $\ec(\shapley)$ is a rational number of denominator at most $\mu$ by \cref{le:maxval_denom_smpg} and $\delta \coloneqq \mu^{-2}$, $\ec(\shapley)$ is the unique rational number in $[a,b]$ of denominator at most $\mu$. Furthermore, the procedure outputs two vectors $x,y \in \R^n$ such that $a + x \le \shapley(x)$ and $b + y \ge \shapley(y)$. A pair of optimal strategies $(\sigma, \tau)$ is then synthesized as follows. For every $i \in \vertexsetMax$ let $\tau(i) \in \vertexsetNat$ be a vertex such that
\[
\max_{(i,k) \in \edges}(B_{ik} + \sum_{(k,l) \in \edges}P_{kl}x_l) = B_{i\tau(i)} + \sum_{(\tau(i),l) \in \edges}P_{\tau(i)l}x_l \, .
\]
Analogously, for every $j \in \vertexsetMin$, let $\sigma(j) \in \vertexsetMax$ be a vertex such that
\[
\min_{(j,i) \in \edges}\bigl(-A_{ij} + \max_{(i,k) \in \edges}(B_{ik} + \sum_{(k,l) \in \edges}P_{kl}y_l) \bigr) = -A_{\sigma(j)j} + \max_{(i,k) \in \edges}(B_{\sigma(j)k} + \sum_{(k,l) \in \edges}P_{kl}y_l) \, .
\]
We claim that $\sigma$ is optimal for Min and $\tau$ is optimal for Max. Indeed, let $\shapley^{\sigma}$ be the Shapley operator of the game in which Min uses $\sigma$. Then, the definition of $\sigma$ gives the inequality $\shapley^{\sigma}(y) = \shapley(y) \le b + y$. Hence, $\ucw(\shapley^{\sigma}) \le b$. Even more, since \cref{le:maxval_denom_smpg} applies to $\shapley^{\sigma}$, we get $\ucw(\shapley^{\sigma}) \le \ec(\shapley)$. If $\tau'$ is any strategy of Max, and we denote by $\shapley^{\sigma, \tau'}$ the operator obtained by fixing both strategies, then we have $\shapley^{\sigma, \tau'}(z) \le \shapley^{\sigma}(z)$ for any $z \in \R^n$ and hence $g_j(\sigma, \tau') \le \ucw(\shapley^{\sigma,\tau'}) \le \ucw(\shapley^{\sigma}) \le \ec(\shapley)$ for every $j \in \vertexsetMin$.\todo{MS: Maybe a reference to the section in which we introduce SMPG.} Analogously, if $\tau$ is a strategy of Max defined above and $\sigma'$ is any strategy of Min, then $\ec(\shapley) \le g_j(\sigma',\tau)$ for all $j \in \vertexsetMin$. Hence, the strategies $(\sigma, \tau)$ are optimal. 
\end{proof}
\begin{remark}
  As explained in the proof of~\Cref{cor-perturb}, we obtain from
  Procedure~\textsc{ApproximateConstantMeanPayoff} an interval
  $[a,b]$ which contains the exact value $\ec(\shapley)$. Then, this value
  can be found by the rational search technique, see, e.g.,~\cite{kwek_mehlhorn,forisek}. Alternatively, since the synthesis method also returns a pair of optimal strategies, the exact value can be computed by solving the $0$-Player
  stochastic mean-payoff problem induced by these strategies.
  \end{remark}
\begin{corollary}\label{finding_top_class_complexity0}
The set of initial states with maximal value of a stochastic mean-payoff game can be found by performing at most $65\nstates^4 \Payoff \comd^{3\min\{\rstates, \nstates - 1\}}$
  \todo{SG: to be checked, may be an explicit constant. MS: Done.}
  calls to an oracle approximating its Shapley operator $\shapley$
  with precision $\epsilon \coloneqq 1/(8\mu^2)$, where $\mu \defi\nstates\comd^{\min\{\rstates,\nstates-1\}}$.
\end{corollary}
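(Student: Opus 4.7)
The plan is to apply \cref{topclass_oracle} to the Shapley operator $\shapley$ of the stochastic mean-payoff game, with appropriate choices of the parameters $\delta$ and $R$, and then to bound the resulting oracle-call count. The prerequisite \cref{as:good_operators} has been verified for this class of operators in \cref{le:maxval_domin_SMPG}, so the abstract procedure $\TopClass$ is indeed applicable here.

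For the separation parameter, I would set $\delta \coloneqq 1/\mu^2$. By \cref{cor:sep_smpg}, we have $\sep(\shapley) > 1/\mu^2 = \delta$, so the required strict inequality $\delta < \sep(\shapley)$ holds. Note that the corresponding oracle precision is then $\epsilon = \delta/8 = 1/(8\mu^2)$, which matches the precision stated in the corollary.

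For the metric estimate $R$, I would produce a bias vector for the subgame induced by the dominion $\domMaxVal$. By \cref{le:maxval_domin_SMPG}, $\domMaxVal$ is a dominion with $\lcw(\shapley^{\domMaxVal}) = \ucw(\shapley^{\domMaxVal}) = \ucw(\shapley)$, and by \cref{le:dominions_SMPG}, $\shapley^{\domMaxVal}$ is itself the Shapley operator of a stochastic mean-payoff game on the subgraph $(\vertexset^{\domMaxVal}, \edges^{\domMaxVal})$, whose payoff bound is at most $\Payoff$ and whose number of (significant) random states is at most $\rstates$. Applying \cref{th-bias} to $\shapley^{\domMaxVal}$ therefore furnishes a vector $\bias \in \R^{\domMaxVal}$ satisfying the exact ergodic equation $\shapley^{\domMaxVal}(\bias) = \ucw(\shapley) + \bias$, with $\HNorm{\bias} \leq 8 n \Payoff \comd^{\min\{\rstates, n-1\}}$. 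Setting $v \coloneqq w \coloneqq \bias$ trivially fulfills the inequalities $\ucw(\shapley) \pm \delta/8 + v \lesseqgtr \shapley^{\domMaxVal}(v)$ required by \cref{topclass_oracle}, so we may take $R \coloneqq 8 n \Payoff \comd^{\min\{\rstates, n-1\}}$.

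It then remains to plug these values into the bound $n^2 + n \lceil 8R/\delta \rceil$ from \cref{topclass_oracle}. With $\delta = 1/\mu^2$ and $\mu^2 = n^2 \comd^{2\min\{\rstates,n-1\}}$, we compute $8R/\delta = 64 n^3 \Payoff \comd^{3\min\{\rstates, n-1\}}$, so
\[
n^2 + n \lceil 8R/\delta \rceil \leq n^2 + n + 64 n^4 \Payoff \comd^{3\min\{\rstates, n-1\}} \leq 65 n^4 \Payoff \comd^{3\min\{\rstates, n-1\}},
\]
where the last inequality holds since $\Payoff, \comd, n \geq 1$. This yields the claimed complexity. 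The only non-routine step is the identification of $R$: one must notice that the bias bound of \cref{th-bias} transfers unchanged to the subgame on $\domMaxVal$, which is precisely what \cref{le:dominions_SMPG,le:maxval_domin_SMPG} are set up to guarantee.
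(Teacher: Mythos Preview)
Your proof is correct and follows essentially the same approach as the paper: choose $\delta \coloneqq 1/\mu^2$, invoke \cref{cor:sep_smpg} for the separation bound, apply \cref{th-bias} to $\shapley^{\domMaxVal}$ (which is legitimate by \cref{le:dominions_SMPG,le:maxval_domin_SMPG}) to obtain $R = 8\nstates\Payoff\comd^{\min\{\rstates,\nstates-1\}}$, and then feed these into \cref{topclass_oracle}. Your final arithmetic is slightly looser than the paper's (you keep the ``$+n$'' from the ceiling, whereas the paper drops it since $8R/\delta$ is an integer), but the resulting bound is the same.
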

\begin{proof}%
Let $\delta \coloneqq 1/\mu^2$. By \cref{cor:sep_smpg} we have $\delta < \separ(\shapley)$. Moreover, if we denote by $\domMaxVal \subset \vertexsetMin$ the set of initial states with maximal value, then by applying \cref{le:blackwell_bias_estimate} to $\shapley^{\domMaxVal}$ we see that the number $R\coloneqq 8\Payoff\mu$ satisfies the conditions of \cref{th:topclass}, so Procedure \TopClass{} applied to $\shapley$, with $\delta \coloneqq 1/\mu^2$ and $R \coloneqq 8\Payoff\mu$, is correct, and finds the set of initial states with maximal value. Furthermore, by \cref{topclass_oracle}, Procedure \TopClass{} can be executed by performing at most $n^2 + n\lceil8R/\delta\rceil = n^2 + 64\nstates^4 \Payoff \comd^{3\min\{\rstates, \nstates - 1\}}$ calls to the oracle that approximates $\shapley$. 
\end{proof}

\todo{MS: I added some remarks about the approximation oracles and about comparison with the pumping algorithm. Do we want to keep them in some form? SG: very important remarks, I quoted the second one in the introduction.}

\begin{remark}\todo{SG: consistency with new matrix-free presentation. MS: I changes the sentence a bit.}
  If we are given explicitly the graph of the game together with the probabilities $P_{kj}$ and the payoffs $A_{ij}$ and $B_{ik}$, then the operator $\shapley$ can be evaluated exactly in $O(E)$ complexity, where $E$ is the number of edges of the graph representing the stochastic mean-payoff game. In particular, there is no need to construct an approximation oracle in order to apply the results of this section. Nevertheless, even in this case it may be beneficial to use an approximation oracle. Indeed, if we evaluate $\shapley$ exactly, then each value iteration $u \coloneqq \shapley(u)$ increases the number of bits needed to encode $u$. As a result,
  value iteration
  would require exponential memory. In order to avoid this problem, one can replace $\shapley$ with an approximation oracle $\apshapley$ obtained as follows. Let $\mu \coloneqq \nstates\comd^{\min\{\rstates,\nstates-1\}}$ and $\epsilon \coloneqq 1/(8\mu^2)$. Given $x \in \trop^\vertexsetMin$, we first compute $y \coloneqq \shapley(x)$ exactly and then round the finite coordinates of $y$ in such a way that the rounded vector $\tilde{y}$ satisfies $|y_j - \tilde{y}_j| \le \epsilon$ whenever $y_j \neq \zero$ and $\tilde{y}_j$ is a rational number with denominator at most $8\mu^2$. One can check that if we use such a procedure as an approximation oracle, then all the algorithms presented in this section require $O\bigl(nE\log(n\comd\Payoff)\bigr)$ memory, which is polynomial in the size of the input.
\end{remark}

\begin{remark}\label{rk-compar-gurvich}
  Since a single call to the oracle approximating $\shapley$ can be done in $O(E)$ arithmetic operations, by combining \cref{finding_top_class_complexity0} with \cref{cor-perturb} we see that the set of states with maximal value, and a pair of optimal strategies within this set can be found in $O(n^4E\Payoff \comd^{3\min\{\rstates, \nstates - 1\}})$ complexity. This should be compared with the algorithm \textsc{BWR-FindTop} from \cite{boros_gurvich_makino} which achieves the same aim using a pumping algorithm instead of value iteration. If we combine the estimate from~\cite{skomra_bounds}
  with the complexity bound presented in~\cite{boros_gurvich_makino} for the pumping algorithm, then we get that \textsc{BWR-FindTop} has $O(V^6 EW\rstates 2^{\rstates}\comd^{4\rstates} + V^3E\Payoff \log \Payoff)$ complexity, where $V$ is the number of vertices of the graph of the game. In particular, our result gives a better complexity bound. Furthermore, the authors of \cite{boros_gurvich_makino} show that, given an oracle access to \textsc{BWR-FindTop} and to another oracle that solves deterministic mean-payoff games, one can completely solve stochastic mean-payoff games with pseudopolynomial number of calls to these oracles, provided that $\rstates$ is fixed. Hence, we can speed-up this algorithm by replacing the oracle \textsc{BWR-FindTop} with our algorithms.
\end{remark}
\todo{SG: new remark added}
\begin{remark}\label{rk-concurrent}
  \Cref{pr:approx_constant_value} leads to a complexity bound for the more general class of {\em concurrent}
  stochastic mean-payoff games, originally considered
  by Gillette~\cite{gillette}, see~\cite{RS01,Ney03,sorin_repeated_games} for more information.
  In these games, the two players play simultaneously using randomized strategies. This leads
  to a dynamic programming operator $\shapley:\R^n\to\R^n$ of the form
  \[
  \shapley_i(x) =\min_{\alpha\in \Delta(A_i)}\max_{\beta\in\Delta(B_i)}
  \sum_{a\in A_i,b\in B_i} \alpha(a)\beta(b) (r_i^{ab}+ \sum_{j\in[n]}P_{ij}^{ab}x_j)\enspace, i\in [n],
  \]
  where $A_i,B_i$ are finite action sets, $\Delta(A)$ denotes the set of probability
  measures on a finite set $A$, so that an element $\alpha$ of $\Delta(A)$ can be identified
  to a map $\alpha: A\to \R_{\geq 0}$ such that $\sum_{a\in A}\alpha(a)=1$, $r_i^{ab}$ denotes
  an instanteneous reward received by Player Max from Player Min in state $i$, and $P^{ab}_{ij}$
  the transition probability from $i$ to $j$, when action $a$ is selected by Player Min and action $b$ is selected
  by Player Max. We make {\em Assumption $(\mathcal{I})$}: there is a randomized positional strategy $\sigma^*$ of Player Min, such that for all deterministic positional strategies $\tau$ of Player Max, the stochastic transition matrix $P^{\sigma^*,\tau}$ obtained
by applying these two strategies  is irreducible. We set
$  \bar{T}_{\max}\coloneqq \max_{\tau\in \mathcal{T}} T_{\max}(P^{\sigma^*,\tau})$,
    where the maximum is taken over the set $\mathcal{T}$ of deterministic positional strategies $\tau$ of Player Max,
    and $T_{\max}(P)$ denotes the maximal expected first passage time between any two states,
    for a Markov chain with transition matrix $P$. We also set $\|r\|_\infty\coloneqq\max_{i\in[n],a\in A_i,b\in B_i}|r_i^{a,b}|$.
    Then, under Assumption $(\mathcal{I})$, 
    \begin{enumerate}\renewcommand{\theenumi}{\roman{enumi}}
    \item\label{e-exists-it}
      There exists $u\in \R^n$ and $\lambda\in \R$ such that $\shapley(u)=\lambda + u$;
      \item \label{e-bound-it} Any such vector $u$ satisfies
        $\|u\|_H \leq 2\|r\|_\infty \bar{T}_{\max}$.
    \end{enumerate}
    To show~\eqref{e-exists-it}, by~\Cref{th-exists-rec}, it suffices
    to check that the recession operator
    $\hat{\shapley}(x)$
  has only fixed points in $\R \unitvector$. Here,
  $\hat{\shapley}(x)\leq \sup_{\tau\in \mathcal{T}} P^{\sigma^*,\tau}x$. Hence,  if $\hat{\shapley}(x)=x$,
  selecting a positional strategy $\tau$ that achieves the latter supremum, we get $x\leq P^{\sigma^*,\tau}x$,
  and since $P^{\sigma^*,\tau}$ is irreducible, this entails that $x\in \R \unitvector$, showing
  that property~\eqref{e-exists-it} holds. Now, if $\shapley(u)= \lambda + u$, denoting by $r^{\sigma^*,\tau}_i$
  the expected instantaneous payment in state $i$ under the strategies $\sigma^*$ and $\tau$,  and setting
  $r^{\sigma^*,\tau}\coloneqq (r^{\sigma^*,\tau}_i)_{i\in [n]}\in\R^n$, 
  we get $\lambda + u \leq \sup_{\tau\in\mathcal{T}} r^{\sigma^*,\tau} + P^{\sigma^*,\tau}u$,
  and selecting a deterministic positional strategy $\tau$ which achieves
  the supremum, we deduce that $\lambda + u \leq r^{\sigma^*,\tau} + P^{\sigma^*,\tau}u$.
  Observe that if we have an inequality of the form $u\leq f + Pu$ where $P$ is an irreducible
  stochastic matrix, and $f\in \R^n$, then, $u_i -u_j\leq (\max_k f_k)T_{\max}(P)$
  for all $i,j\in[n]$.
  Applying this to $f=r^{\sigma^*,\tau}-\lambda$, and using the fact that $|\lambda|\leq \|r\|_\infty$, we deduce
  property~\eqref{e-bound-it}. Therefore, under Assumption~$(\mathcal{I})$, \Cref{pr:approx_constant_value}
  yields a $\delta$ approximation of the value $\lambda$ of a concurrent stochastic mean-payoff game
  in $O(\|r\|_\infty \bar{T}_{\max}/\delta)$ iterations. This saves an $O(\log\delta)$ factor
  by comparison with the bound
$O(\|r\|_\infty T'_{\max}\log(\delta)/\delta)$ proved in~\cite[Th.~18]{chatterjee_ibsen-jensen}, under
  the stronger assumption that all pairs of strategies yield an irreducible matrix,
  where $T'_{\max}\geq \bar{T}_{\max}$ denotes the maximal expected first passage time, taken
  now over {\em all pairs} of strategies, and still between all pairs of states. We note
  however that Assumption~$(\mathcal{I})$ is still restrictive.
Boros, Gurvich, Elbassioni and Makino considered in~\cite{Boros2016} the broader class
  of concurrent games in which the mean payoff is independent of the initial state,
  without making any irreducibility assumption. Whether the present methods can be applied to this
  broader class of games is left for future work.
  We also note that for the restricted class of stochastic games such that
  {\em every pair} of positional strategies yields a unichain matrix,
  the Krasnoselskii--Mann variant of relative value iteration discussed
  in~\Cref{rk-comparewithkm}  leads
  to an $O(|\log\delta|$) bound, as shown in the recent work~\cite{akian2023}.\todo[color=red!30]{SG: I added a pointer to the new paper and a sentence explaining the diff.}
\end{remark}
\section{Solving Entropy Games With Bounded Rank}\label{sec-entropy}
Recall that the dynamic programming operator $T$ of an entropy game,
as well as its conjugate $\shapley$, which we call the Shapley operator
of an entropy game, were defined in~\cref{eq:entropy_operator} and~\cref{e-def-conj}. As in the last section, we denote $n \coloneqq \card{\vertexsetD}$ and we put $\Weights \coloneqq \max_{(p,k) \in \edges}\eweight_{pk}$.

\subsection{Dominions of Entropy Games}
We first describe the dominions of entropy games, and verify
that this class of games satisfies \cref{as:good_operators}.
Given a set $\set \subset \vertexsetD$, we denote by $\vertexsubsetit[P]{\set} \subset \vertexsetP$ the set of states of People that have at least one outgoing edge that goes to $\set$, i.e., $\vertexsubsetit[P]{\set} \coloneqq \{p \in \vertexsetP \colon \exists l \in \set, (p,l) \in \edges\}$. In the same way, we denote by $\vertexsubsetit[T]{\set} \subset \vertexsetT$ the set of states of Tribune that have at least one outgoing edge that goes to $\vertexsubsetit[P]{\set}$, i.e., $\vertexsubsetit[T]{\set} \coloneqq \{t \in \vertexsetT \colon \exists p \in \vertexsubsetit[P]{\set}, (t,p) \in \edges\}$. The following lemma characterizes the dominions of entropy games.

\begin{lemma}\label{le:entropy_domin}
A set $\dominion \subset \vertexsetD$ is a dominion of the operator $\shapley$ if and only if every outgoing edge of every state of $\dominion$ goes to $\vertexsubsetit[T]{\dominion}$, i.e., if for every pair $(k,t) \in \dominion \times \vertexsetT$ we have $(k,t) \in \edges \implies t \in \vertexsubsetit[T]{\dominion}$. Furthermore, if $\dominion$ is a dominion, then $\shapley^\dominion = \log \circ \mshapley^\dominion \circ \exp$, where for all $x \in \R_{>0}^\dominion$ and all $k \in \dominion$ we define 
\[
\mshapley^\dominion_k (x) \coloneqq  \min_{\substack{(k,t) \in \edges \\ t \in \vertexsubsetit[T]{\dominion}}} \max_{\substack{(t,p) \in \edges\\ p \in \vertexsubsetit[P]{\dominion}}} \sum_{\substack{(p,l) \in \edges \\ l \in \dominion}} \eweight_{pl} x_{l} \, .
\]
\end{lemma}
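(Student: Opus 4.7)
The plan is to reduce the question to testing a single vector via \cref{domin_at_zero}, and then to exploit the conjugacy $\shapley = \log \circ \mshapley \circ \exp$ from \cref{e-def-conj} to translate the dominion condition into a positivity analysis for $\mshapley$. By the continuous extension \cref{e-extended}, with the conventions $\exp(\zero) = 0$ and $\log(0) = \zero$, this conjugacy extends to $\trop$-valued arguments. Consequently, $\shapley^\dominion(\zerovector) \in \R^\dominion$ is equivalent to $\mshapley_d(y) > 0$ for every $d \in \dominion$, where $y \in \R_{\ge 0}^{\vertexsetD}$ is the $0/1$-indicator of $\dominion$, i.e.\ $y = \exp(\revproj^\dominion(\zerovector))$.

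The second step is to track the zero/positive pattern through the three nested operations in \cref{eq:entropy_operator}. Since the multiplicities $\eweight_{pl}$ are positive integers and $y$ is the indicator of $\dominion$, the inner sum $\sum_{(p,l) \in \edges} \eweight_{pl} y_l$ is strictly positive iff $p \in \vertexsubsetit[P]{\dominion}$, and vanishes otherwise. Cascading, the middle $\max$ over $(t,p) \in \edges$ is positive iff $t \in \vertexsubsetit[T]{\dominion}$, and the outer $\min$ over $(d,t) \in \edges$ is positive iff every out-neighbor $t$ of $d$ lies in $\vertexsubsetit[T]{\dominion}$. This directly yields the claimed characterization of dominions.

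For the explicit formula, I would apply the same vanishing analysis to $y = \exp(\revproj^\dominion(x))$ for an arbitrary $x \in \R^\dominion$: this vector is strictly positive on $\dominion$ and zero outside, so the same three-level cascade shows that the inner sum vanishes for $p \notin \vertexsubsetit[P]{\dominion}$ and equals $\sum_{(p,l) \in \edges,\, l \in \dominion} \eweight_{pl} \exp(x)_l$ for $p \in \vertexsubsetit[P]{\dominion}$. For $t \in \vertexsubsetit[T]{\dominion}$ the middle $\max$ may therefore be restricted to $p \in \vertexsubsetit[P]{\dominion}$, and for $d \in \dominion$ the outer $\min$ automatically ranges only over $t \in \vertexsubsetit[T]{\dominion}$ thanks to the dominion characterization just established. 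This gives $\mshapley_d(y) = \mshapley^\dominion_d(\exp(x))$ for every $d \in \dominion$, and post-composing with $\log$ yields $\shapley^\dominion = \log \circ \mshapley^\dominion \circ \exp$.

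The main obstacle is bookkeeping: one must check that discarding the vanishing terms in a $\max$ does not alter its value, which requires at least one strictly positive term to survive. This rests on \cref{assum-fin} together with the dominion characterization, which guarantee that the restricted edge sets $\{t \in \vertexsubsetit[T]{\dominion} \colon (d,t)\in \edges\}$ and $\{p \in \vertexsubsetit[P]{\dominion} \colon (t,p)\in \edges\}$ are nonempty whenever $d \in \dominion$ and $t \in \vertexsubsetit[T]{\dominion}$. Once this is in place, the argument is mechanical.
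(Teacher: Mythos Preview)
Your proposal is correct and follows essentially the same route as the paper: reduce to the single test $\shapley^\dominion(\zerovector)\in\R^\dominion$ via \cref{domin_at_zero}, translate through the conjugacy $\shapley=\log\circ\mshapley\circ\exp$ so that the test becomes $\mshapley_k(\ind_\dominion)>0$ for all $k\in\dominion$, and then cascade the zero/positive dichotomy through the sum, the $\max$, and the $\min$ in \cref{eq:entropy_operator}. Your write-up is more explicit than the paper's (which simply points to the analogous \cref{le:dominions_SMPG} and leaves the formula for $\shapley^\dominion$ to the reader), but the underlying argument is the same.
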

\begin{proof}
The proof is similar to the proof of \cref{le:dominions_SMPG}. By \cref{domin_at_zero}, $\dominion$ is a dominion if and only if for every $k \in \dominion$ and every $t \in \vertexsetT$ such that $(k,t) \in \edges$ we have
\begin{equation}\label{eq:domin_entropy}
\max_{(t,p) \in \edges}\sum_{\substack{(p,l) \in \edges \\ l \in \dominion}}\eweight_{pl} \ind_l > 0 \, .
\end{equation}
By definition, we have $\sum_{\substack{(p,l) \in \edges \\ l \in \dominion}}\eweight_{pl} \ind_l > 0$ if and only if $p \in \vertexsubsetit[P]{\dominion}$, and so \cref{eq:domin_entropy} holds if and only if $t \in \vertexsubsetit[T]{\dominion}$. The second part of the claim follows from the definition of $\shapley^{\dominion}$.\todo{MS: This proof has considerably less details than the proof of the analogous claim for stochastic mean payoff games. Could we make them more uniform?} 
\end{proof}
In particular, as in the case of stochastic mean-payoff games, $\shapley^{\dominion}$ is the Shapley operator of a smaller entropy game that takes place on the state space $\dominion \dunion \vertexsubsetit[T]{\dominion} \dunion \vertexsubsetit[P]{\dominion}$.

\begin{lemma}\label{le:entropy_domin_nondeg}
If $\dominion$ is a dominion, then the subgame induced by $\dominion$ satisfies \cref{as:entropy_nondeg}.
\end{lemma}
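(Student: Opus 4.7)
The plan is to verify the three non-degeneracy conditions of \cref{as:entropy_nondeg} directly, one per type of state in the reduced state space $\dominion \dunion \vertexsubsetit[T]{\dominion} \dunion \vertexsubsetit[P]{\dominion}$. The entire proof should follow by unwinding definitions, using \cref{le:entropy_domin} for the case of Despot states and the very definitions of $\vertexsubsetit[T]{\dominion}$ and $\vertexsubsetit[P]{\dominion}$ for the other two.

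First, I would check Despot states. Fix $d \in \dominion$. By \cref{assum-fin} applied to the original game, the set $\{t \in \vertexsetT \colon (d,t) \in \edges\}$ is non-empty; pick any $t$ in it. \Cref{le:entropy_domin} tells us that every outgoing edge of $d$ goes to $\vertexsubsetit[T]{\dominion}$, so $t \in \vertexsubsetit[T]{\dominion}$. Hence $d$ has at least one outgoing edge in the subgame induced by $\dominion$.

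Next, I would handle Tribune states. If $t \in \vertexsubsetit[T]{\dominion}$, the very definition of $\vertexsubsetit[T]{\dominion}$ provides some $p \in \vertexsubsetit[P]{\dominion}$ with $(t,p) \in \edges$, so $t$ has an available move in the subgame. Finally, for People states, if $p \in \vertexsubsetit[P]{\dominion}$, the definition of $\vertexsubsetit[P]{\dominion}$ gives some $l \in \dominion$ with $(p,l) \in \edges$, so $p$ also has a legal move. Combining the three cases yields \cref{as:entropy_nondeg} for the subgame.

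I do not expect any real obstacle: the statement is a direct bookkeeping consequence of the definitions of $\vertexsubsetit[T]{\dominion}$ and $\vertexsubsetit[P]{\dominion}$ together with the characterization of dominions in \cref{le:entropy_domin}. The only point worth being careful about is the Despot case, where one must explicitly invoke \cref{le:entropy_domin} to guarantee that the outgoing edges of $d$ remain in the reduced state space.
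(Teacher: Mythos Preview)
Your proposal is correct and follows essentially the same approach as the paper, which simply states that the claim is an immediate consequence of \cref{le:entropy_domin}. You have spelled out the three cases (Despot, Tribune, People) that the paper leaves implicit, and each step is justified exactly as intended---via \cref{le:entropy_domin} for Despot states and via the defining properties of $\vertexsubsetit[T]{\dominion}$ and $\vertexsubsetit[P]{\dominion}$ for the others.
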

\begin{proof}
This is an immediate consequence of \cref{le:entropy_domin}. 
\end{proof}

The following immediate lemma characterizes the recession operators of Shapley
operators of entropy games.
\begin{lemma}\label{le:recession_entropy}
The recession operator of the conjugate operator $\shapley$ is given by 
\[
  \hat{\shapley}_k(x) =   \min_{(k,t)\in \edges} \max_{(t,p)\in \edges} \max_{(p,l)\in \edges} x_l,
  \qquad \forall k\in \vertexsetD \enspace .%
   \hfill\]
\end{lemma}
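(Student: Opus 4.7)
The plan is to write $F = \log \circ T \circ \exp$ and compute the pointwise limit $\hat F(x) = \lim_{\beta\to\infty} \beta^{-1} F(\beta x)$ explicitly, using the classical Maslov dequantization formula $\lim_{\beta\to\infty} \beta^{-1}\log\sum_i a_i e^{\beta x_i} = \max_i x_i$ (valid whenever the $a_i$ are positive). This is the only nontrivial ingredient; once it is in place, the rest is just the observation that $\min$ and $\max$ over finite sets commute with the limit, because they are continuous.

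Concretely, fix $k\in\vertexsetD$ and $x\in\R^{\vertexsetD}$. By \cref{eq:entropy_operator} and the definition of $F$,
\[
\beta^{-1} F_k(\beta x) \;=\; \beta^{-1}\log\min_{(k,t)\in\edges}\max_{(t,p)\in\edges} S_p(\beta),
\qquad
S_p(\beta) \defi \sum_{(p,l)\in\edges} \eweight_{pl}\, e^{\beta x_l}.
\]
For each fixed $p\in\vertexsetP$ with $\{l\colon(p,l)\in\edges\}\neq\emptyset$, set $M_p \defi \max_{(p,l)\in\edges} x_l$. Since the multiplicities $\eweight_{pl}$ are positive integers, we have the two-sided bound
\[
e^{\beta M_p} \;\leq\; S_p(\beta) \;\leq\; \bigl(\card{\edges}\cdot \Weights\bigr) e^{\beta M_p},
\]
so $\beta^{-1}\log S_p(\beta)\to M_p = \max_{(p,l)\in\edges} x_l$ as $\beta\to\infty$.

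Now I would pull this limit through the outer $\min$ and $\max$. Since $\log$ is continuous and strictly increasing, $\beta^{-1}\log\min_{(k,t)}\max_{(t,p)} S_p(\beta) = \min_{(k,t)}\max_{(t,p)} \beta^{-1}\log S_p(\beta)$ (using that $\beta^{-1}\log$ is order-preserving). As the outer $\min$ and $\max$ range over the finite sets $\{t\colon (k,t)\in\edges\}$ and $\{p\colon (t,p)\in\edges\}$, which by \cref{as:entropy_nondeg} are nonempty, the continuity of $\min$ and $\max$ yields
\[
\hat{\shapley}_k(x) \;=\; \lim_{\beta\to\infty} \min_{(k,t)\in\edges}\max_{(t,p)\in\edges}\beta^{-1}\log S_p(\beta) \;=\; \min_{(k,t)\in\edges}\max_{(t,p)\in\edges}\max_{(p,l)\in\edges} x_l,
\]
as claimed. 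The main (mild) obstacle is just the bookkeeping of the dequantization bound and verifying that the outer finite $\min/\max$ operations commute with the limit; both are standard and do not require any additional structural input beyond \cref{as:entropy_nondeg}, which guarantees that none of the inner index sets are empty.
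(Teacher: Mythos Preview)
Your proof is correct and follows essentially the same approach as the paper: write out $\beta^{-1}F_k(\beta x)$, pull the order-preserving map $\beta^{-1}\log$ through the finite $\min/\max$, and use the two-sided sandwich bound on $\sum_{(p,l)} m_{pl} e^{\beta x_l}$ to obtain the Maslov limit $\max_{(p,l)} x_l$. The only cosmetic differences are the choice of constant in the upper bound ($\card{\edges}\cdot W$ versus the paper's $nW$) and your explicit mention of \cref{as:entropy_nondeg} and of the continuity of $\min/\max$ over finite sets, which the paper leaves implicit.
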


\begin{lemma}\label{le:good_entropy_aux}
Let $\domMaxVal$ be the set of states of maximal value. If $\tau$ is an optimal strategy of Tribune, then for any $k \in \domMaxVal$ and $t \in \vertexsetT$ such that $(k,t) \in \edges$ we have $\tau(t) \in \vertexsubsetit[P]{\domMaxVal}$. Moreover, if $\sigma$ is an optimal strategy of Despot and $k \notin \domMaxVal$, then $\sigma(k) \notin \vertexsubsetit[T]{\domMaxVal}$.
\end{lemma}
\begin{proof}
Let $\gameval$ be the escape rate of $\shapley$ and $\tau \colon \vertexsetT \to \vertexsetP$ an optimal strategy of Tribune. We consider the reduced operator
\(
\mshapley_k^{\tau}(x) \coloneqq \min_{(k,t) \in \edges}\sum_{(\tau(t),l) \in \edges}\eweight_{\tau(t)l} x_{l} \),
obtained from the game in which Tribune plays according to $\tau$. Let $\shapley^{\tau} = \log \circ \mshapley^{\tau} \circ \exp$. By the optimality of $\tau$ we get $\chi(\shapley^{\tau}) = \gameval$. Since $\hat{\shapley^{\tau}}(\gameval) = \gameval$ by \cref{prop-rec},\todo{MS: Do we repeat the reference here? RK: I added \Cref{prop-rec} with this aim} \cref{le:recession_entropy} implies that
\(
\mytop(\gameval) = \gameval_k = \min_{(k,t)\in \edges} \max_{(\tau(t),l)\in \edges} \gameval_l \),
for all $k \in \domMaxVal$. Note that $\mytop(\gameval)  = \max_{(p,l)\in \edges} \gameval_l$ if and only if $p \in \vertexsubsetit[P]{\domMaxVal}$. Hence, for every $k \in \domMaxVal$ and $t \in \vertexsetT$ we have
\(%
(k,t) \in \edges \implies \tau(t) \in  \vertexsubsetit[P]{\domMaxVal}\).
This proves the first claim.

To prove the second claim, let $\sigma \colon \vertexsetD \to \vertexsetT$ be an optimal strategy of Despot. Again, we consider the reduced operator
\(
\mshapley_k^{\sigma}(x) \coloneqq \max_{(\sigma(k),p) \in \edges} \sum_{(p,l) \in \edges} \eweight_{pl} x_{l}\),
and its conjugate $\shapley^{\sigma} = \log \circ \mshapley^{\sigma} \circ \exp$. Since $\hat{\shapley^{\sigma}}(\gameval) = \gameval$, \cref{le:recession_entropy} implies that
\(%
\gameval_k = \max_{(\sigma(k),p) \in \edges} \max_{(p,l) \in \edges} \gameval_{l}\),
for all $k \in \vertexsetD$. Hence, we get $k \in \domMaxVal \iff \sigma(k) \in \vertexsubsetit[T]{\domMaxVal}$, proving the second claim. 
\end{proof}

\begin{lemma}\label{le:good_entropy}
The Shapley operator $\shapley$ of an entropy game satisfies \cref{as:good_operators}.
\end{lemma}\todo{RK: I divided the previous lemma in which it was stated that the Shapley operator of an entropy game satisfies \cref{as:good_operators} into two lemmas (\cref{le:good_entropy_aux} and \cref{le:good_entropy}) since it seems to me that in this way it easier to follow. If you disagree with this change, the previous version is still in the file.}
\begin{proof}
The operator $\shapley$ satisfies the first part of \cref{as:good_operators} by \cref{rem:o-minimal-dominion}. To deal with the second part, we proceed as in the proof of \cref{le:maxval_domin_SMPG}. Let $\gameval$ be the escape rate of $\shapley$ and $\domMaxVal$ be the set of states of maximal value. 

In the first place, note that if $k \in \domMaxVal$ and $(k,t) \in \edges$, then the first claim of \cref{le:good_entropy_aux} implies $t$ belongs to $\vertexsubsetit[T]{\domMaxVal}$. Thus, by \cref{le:entropy_domin} we conclude that $\domMaxVal$ is a dominion.

It remains to prove that $\lcw(\shapley^{\domMaxVal}) = \mytop(\gameval)$. To do so, let $\tau \colon \vertexsetT \to \vertexsetP$ be an optimal strategy of Tribune, $\sigma \colon \vertexsetD \to \vertexsetT$ be an optimal strategy of Despot, and $\tilde{\sigma} \colon \domMaxVal \to \vertexsubsetit[T]{\domMaxVal}$ be an optimal strategy of Despot in the subgame induced by $\domMaxVal$. We extend $\tilde{\sigma}$ to a strategy $\tilde{\sigma} \colon \vertexsetD \to \vertexsetT$ by setting $\tilde{\sigma}(k) \coloneqq \sigma(k)$ for all $k \notin \domMaxVal$. Let $\mshapley^{\tilde{\sigma},\tau}$
be the operator of the game in which the players play according to the strategies $\tilde{\sigma}$ and $\tau$, and let $\shapley^{\tilde{\sigma},\tau} \coloneqq \log \circ \mshapley^{\tilde{\sigma},\tau} \circ \exp$ be its conjugate. Then, we have $\chi(\shapley^{\tilde{\sigma},\tau}) \ge \gameval$ by the optimality of $\tau$. Furthermore, if $k \notin \domMaxVal$, the second claim of \cref{le:good_entropy_aux} implies that the dipaths in $\mathscr{G}^{\tilde{\sigma},\tau}$ that start at $k$ are the same as in the graph $\mathscr{G}^{\sigma,\tau}$. Therefore, by \cref{prop-rothblum}, $\chi(\shapley^{\tilde{\sigma},\tau})_k = \gameval_k$ for all $k \notin \domMaxVal$ and so 
\begin{equation}\label{eq:smaller_val_dipath}
k \notin \domMaxVal \implies \chi(\shapley^{\tilde{\sigma},\tau})_k < \mytop(\gameval) \, .
\end{equation}
Moreover, if $k \in \domMaxVal$, then $\chi(\shapley^{\tilde{\sigma},\tau})_k  \ge \mytop(\gameval)$ and \cref{prop-rothblum} shows that there exists a strongly connected component $\mathscr{C}$ of $\mathscr{G}^{\tilde{\sigma},\tau}$ such that $\specrad(M^{\tilde{\sigma},\tau}[\mathscr{C}]) \ge \mytop(\gameval)$, where $M^{\tilde{\sigma},\tau} \in \R^{n \times n}$ is the ambiguity matrix associated with the strategies $\tilde{\sigma}$ and $\tau$, and such that there exists a dipath from $k$ to $\mathscr{C}$ in $\mathscr{G}^{\tilde{\sigma},\tau}$. By~\cref{eq:smaller_val_dipath}, this dipath does not go through any vertex in $\vertexsetD \setminus \domMaxVal$. Therefore, this dipath only goes through vertices in 
\[
\set \coloneqq \domMaxVal \dunion \{\tilde{\sigma}(k) \colon k \in \domMaxVal\} \dunion \{\tau(\tilde{\sigma}(k)) \colon k \in \domMaxVal\} \subset \domMaxVal \dunion \vertexsubsetit[T]{\domMaxVal} \dunion \vertexsubsetit[P]{\domMaxVal}
\]
and the component $\mathscr{C}$ is included in $\set$. Consider the subgame induced by $\domMaxVal$ and suppose that in this game Tribune uses a strategy $\tilde{\tau} \colon \vertexsubsetit[T]{\domMaxVal} \to \vertexsubsetit[P]{\domMaxVal}$ that agrees with $\tau$ on the set $\{t \in \vertexsubsetit[T]{\domMaxVal} \colon \exists k \in \domMaxVal, (k,t) \in \edges\}$. Note that there exists at least one such strategy by the first claim of \cref{le:good_entropy_aux}. Furthermore, since $\mathscr{C}$ is included in $\set$, \cref{prop-rothblum} implies that the value of state $k$ in the game obtained by fixing $(\tilde{\sigma}, \tilde{\tau})$ is not smaller than its value in the original game. By the optimality of $\tilde{\sigma}$ and since $k \in \domMaxVal$ was arbitrary, we have $\lcw(\shapley^\domMaxVal) \ge \mytop(\gameval)$. The other inequality follows from \cref{le:smaller_val_domin}.
\todo{MS: I improved some details. RK: I slightly revised this proof.} 
\end{proof}

\subsection{Bit-Complexity Bounds for Entropy Games}
We define the {\em rank} of the
entropy game to be the maximum of the ranks of the ambiguity matrices,
see \cref{def-ambiguity}. The following result will be established
by combining a separation bound of Rump~\cite{rump} for algebraic
numbers, with bounds on determinants of nonnegative matrices
with entries in an interval, building on the study
of Hadamard's maximal determinant problem for matrices
with entries in $\{0,1\}$~\cite{cohn}.

\begin{theorem}\label{th-sepentropy}
  Suppose two pairs of strategies yield distinct values in an entropy game of rank $r$, with $n$ Despot's states. Then, these values differ at least by $\nu_{n,r}^{-1}$ where
  \[
  \nu_{n,r}\coloneqq
    2^{r} (r+1)^{8r}r^{-2r^2+r+1} (ne)^{4r^2}\max( 1,  \Weights/ 2)^{4r^2} \enspace .
  \]
\end{theorem}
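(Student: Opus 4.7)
The plan is to reduce the theorem to a separation bound for real algebraic numbers by exploiting the rank hypothesis to cut the ambient polynomial degree from $n$ down to $r$, and then to invoke Rump's separation bound for algebraic numbers combined with Hadamard-type determinant estimates.

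First, by \cref{prop-rothblum}, any value arising from a pair of strategies $(\sigma,\tau)$ is the Perron root of a principal submatrix $M^{\sigma,\tau}[\mathscr{C}]$ of the ambiguity matrix, where $\mathscr{C}$ is a strongly connected component of the subgraph $\mathscr{G}^{\sigma,\tau}$. Since taking a principal submatrix cannot increase the rank, each such submatrix is a nonnegative integer matrix of size at most $n$, of rank at most $r$, and with entries bounded by $W$. Hence the two distinct values of interest are spectral radii of two matrices $M_1, M_2$ of this form.

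Next, I would exploit the rank bound to factor the characteristic polynomials. For $M_i$ of size $k_i$, the coefficient of $\lambda^{k_i - j}$ in $\det(\lambda I - M_i)$ equals, up to sign, the sum of all $j \times j$ principal minors of $M_i$, and these vanish for $j > r$. Therefore $\det(\lambda I - M_i) = \lambda^{k_i - r'_i} q_i(\lambda)$ with $q_i \in \Z[\lambda]$ of degree $r'_i \leq r$, and each Perron root $\rho_i$ is a real positive root of $q_i$. I would then bound the height of $q_i$: its $j$-th coefficient is a sum of at most $\binom{n}{j} \leq (ne/j)^j$ principal minors, each being the determinant of a $j \times j$ nonnegative integer matrix with entries in $[0, W]$. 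The classical maximal-determinant inequality — via the standard reduction to $\{-1,+1\}$ matrices of size $j+1$ — bounds each such determinant by roughly $2(j+1)^{(j+1)/2} (1 \vee W/2)^j$, and multiplying by the combinatorial factor yields the required height estimate.

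Finally, I would apply Rump's separation bound \cite{rump} to the distinct real algebraic numbers $\rho_1$ and $\rho_2$, which are roots of $q_1, q_2 \in \Z[\lambda]$ of degrees at most $r$ and of heights bounded as above. Rump's bound, applied either directly or through the product polynomial $q_1 q_2$ of degree at most $2r$, provides an estimate of the form $|\rho_1 - \rho_2| \geq C(r) \cdot H_1^{-r} H_2^{-r}$, where $C(r)$ involves polynomial factors in $r$ and $H_i$ is the height of $q_i$; substituting the height bound from the previous step then produces the announced $\nu_{n,r}^{-1}$. The main obstacle will be matching the explicit constants in $\nu_{n,r}$: the factors $(ne)^{4r^2}$ and $(1 \vee W/2)^{4r^2}$ reflect that each height $H_i$ scales as approximately $(ne)^r (1 \vee W/2)^r$ with polynomial prefactors in $r$, and that Rump's bound effectively raises these to an aggregate power of order $r$ on each side, producing the quadratic-in-$r$ exponents; while the factors $2^r(r+1)^{8r} r^{-2r^2+r+1}$ collect the degree-dependent constants from Rump's bound together with the Hadamard prefactors. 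Tracking these terms carefully, rather than deriving the exponents themselves, will be the most tedious part of the proof.
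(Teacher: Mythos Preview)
Your proposal is correct and follows essentially the same route as the paper: factor out the $\lambda^{k-r}$ from the characteristic polynomial using the rank bound, bound the remaining coefficients via Hadamard-type determinant estimates on $j\times j$ minors combined with $\binom{n}{j}\le (ne/j)^j$, and then apply Rump's root-separation theorem to the product polynomial $q_1q_2$ of degree at most $2r$. The only cosmetic differences are that the paper works directly with the full $n\times n$ ambiguity matrix (whose nonzero eigenvalues already include the value) rather than the principal submatrix $M^{\sigma,\tau}[\mathscr{C}]$, and derives the determinant bound $(k+1)\sqrt{k}^{\,k}(W/2)^k$ by the multilinear expansion $C=B+\tfrac{W}{2}J$ rather than the $\{-1,+1\}$ reduction you mention; these lead to the same orders of magnitude.
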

\begin{proof}%
  Step~1. First, we note that if $C$ is a $k\times k$ matrix with entries in $[0,\Weights]$,
  we have 
  \begin{align}
|\det C |\leq   (k+1) \sqrt{k}^k (\Weights/2)^k \enspace .\label{e-nice}
  \end{align}
  Indeed, this follows by writing $C=B+\Weights J/2$, where $J$ is the matrix whose entries are all equal to one and $B=C-\Weights J/2$, so that
  $-\Weights/2\leq B_{ij}\leq \Weights/2$, then by expanding
  $\det C$ by multilinearity as a function of columns, i.e.,
  $\det C= \det B + \sum_{i=1}^k (\pm 1) \det (\hat{B}_i, (\Weights/2) \unitvector)$ in which %
  $\hat{B}_i$ denotes the matrix obtained by deleting column $i$ of $B$,
  noting that the other terms of this multilinear expansion are zero
  owing to repeated appearances of columns proportional to $\unitvector$.
  By Hadamard's inequality, $|\det B|\leq k^{k/2} (\Weights/2)^k$, and similarly
  for each $|\det (\hat{B}_i,(\Weights/2) \unitvector)|$, which gives~\cref{e-nice}. (We note for information that when all the entries of $C$ belong to $\{0,1\}$, the bound can be refined to $|\det C|\leq \sqrt{k+1}^{k+1}/2^k$, see~\cite{cohn}.)

  Step~2. Let $A$ be the $n\times n$ ambiguity matrix
  associated with a pair of strategies. Since $A$ has rank at most $r$,
  it has at most $r$ non-zero eigenvalues, and so, the $n-r$ first
  coefficients of the characteristic polynomial $Q^A\coloneqq\det (XI-A)=\sum_{k=0}^n Q^A_k X^k$ are zero. 
  The coefficient $Q^A_{n-k}$ of $X^{n-k}$ in this polynomial is given by the
  sum of principal minors of order $k$ of $A$. Hence, using the inequality
  established in Step~1, together
  with the inequality ${n\choose k}\leq (ne/k)^k$, and using the fact that the map $k\mapsto (ne/k)^k \sqrt{k}^k$ is nondecreasing for $k\leq n$, we get
  \begin{align*}
  S_{n,r}\coloneqq \sum_{k=0}^n |Q^A_{n-k}|
& \leq \sum_{k=0}^r {{n}\choose{k}} (k+1) \sqrt{k}^k (\Weights /2)^k \\ & \leq (r+1)^2 \big(\frac{ne}{r}\big)^r  \sqrt{r}^r \max(1, {\Weights}/{2})^r \enspace .
  \end{align*}

  Step~3. Now, if $A,B$ are two $n\times n$ ambiguity matrices, consider
  the product $Q^AQ^B$ of the characteristic polynomials of $A$ and $B$.
  Let $S$ denote the sum of absolute values of the coefficients of $Q^AQ^B$.
Then,
  \[
  S \leq S_{n,r}^2 \leq a_{n,r}:=(r+1)^4 \big(\frac{ne}{r}\big)^{2r}r^{r}
  \max(1, \Weights/2)^{2r}  \enspace .
  \]
  Now, we use a theorem of Rump, ~\cite[Th.~3]{rump}, showing that for a polynomial
  of degree $d$ with integer coefficients whose sum of absolute
  values is bounded by $S$, the distance between any two distinct
  real roots is at least
  \begin{align}\label{e-rump-ineq}
  2\sqrt{2}   \big(   d^{\frac d 2 +1} (S+1)^d \big)^{-1}
  \enspace .
\end{align}
We apply this result to the polynomial $\bar{Q}=X^{-\ell}Q^AQ^B$, where $\ell$ is the multiplicity of the root $0$ in $Q^A Q^B$, and observe that $\bar{Q}$
is of degree $d\leq 2r$, showing that the roots of this
  polynomial are separated at least by the inverse
  of the following quantity
\begin{align*}
2^{-3/2}  (2r)^{r+1} \Big( a_{n,r} +1 \Big)^{2r}
& \leq
2^{-3/2} (2r)^{r+1}
a_{n,r}^{2r} \exp\big(2r\log(1+a_{n,r}^{-1})\big)\\
& 
\leq
2^{-3/2} (2r)^{r+1}
a_{n,r}^{2r} \exp(2ra_{n,r}^{-1})
\leq
2^r r^{r+1}
a_{n,r}^{2r}\\
& = 2^{r}  r^{r+1} (r+1)^{8r} \big(\frac{ne}{r}\big)^{4r^2}r^{2r^2} \max( 1, \Weights/ 2)^{4r^2} 
  \enspace,
\end{align*}
in which the second inequality follows
from the concavity of the logarithm,
and the third one follows from
the fact that for all $n\geq r\geq 1$,
\[ 2r a_{n,r}^{-1} \leq 2r \big(
(r+1)^4 \big(\frac{ne}{r}\big)^{2r}r^{r}\big)^{-1}
= \frac{2}{(r+1)^3} \frac{r}{r+1}
\frac{\exp(-2r)}{n^r} \big(\frac{r}{n}\big)^r 
\leq \exp(-2)/4\enspace,
\]
together with $2^{-3/2}\exp(\exp(-2)/4)\leq 1/2$.
 Since the value of an entropy game coincides
  with a non-zero eigenvalue of the ambiguity matrix associated
  with a pair of strategies, the result is established.
\end{proof}
\begin{remark}
  When $W>2$, the separation bound of $\nu_{n,r}^{-1}$
  may be written as $C_{n,r}^{-1} W^{-4r^2}$.
  The exponent of $W$
  may be improved at the price of increasing the combinatorial
  factor $C_{n,r}$, by using a theorem of Mahler~\cite{mahler}, showing that
  for a polynomial of degree $d$, with integer coefficients bounded by $M$, and
  without multiple roots, the inverse of the distance
  between two distinct roots is bounded by
  \(
  \vartheta_{d,M} = 3^{-1/2} (d+1)^{ (2d+1)/2} M^{d-1} \).
  As explained in~\cite{rump}, one can deduce from this
  a separation root for polynomials $P$ with possibly multiple roots,
  after replacing $P$ by $\tilde{P}:=P/\operatorname{gcd}(P,P')$. 
  Using Malher's bound instead of~\cref{e-rump-ineq}, we would arrive
  at a bound $(C'_{n,r})^{-1} W^{-4r^2+2r}$ instead of $C_{n,r}^{-1} W^{-4r^2}$.
\end{remark}

We next deduce from \cref{th-sepentropy}  a separation bound for values
of strategies.
\begin{lemma}\label{lemma_entropy_value}
The values of an entropy game lie in the interval $[1,nW]$.
\end{lemma}
\begin{proof}
By \eqref{eq:entropy_operator}, we have $\mshapley(\unitvector) \le n\Weights \unitvector$, and so $\mshapley^{N}(\unitvector) \le (n\Weights)^{N}\unitvector$ for all $N$. Then, by \cref{th-valexists}, we conclude that $\mgameval \le n\Weights\unitvector$. Likewise, since entropy games satisfy \cref{as:entropy_nondeg} and the multiplicities $\eweight_{pl}$ are natural numbers, we have $\mshapley(\unitvector) \ge \unitvector$, and so $\mgameval \ge \unitvector$. 
\end{proof}

\begin{corollary}\label{coro-additive}
Suppose two pairs of strategies yield distinct values in an entropy game of rank $r$, with $n$ Despot's states. Then, the logarithms of these values differ at least by $\hat{\nu}_{n,r}^{-1}$ where
  \(
  \hat{\nu}_{n,r}\coloneqq n\Weights \nu_{n,r}\).
\end{corollary}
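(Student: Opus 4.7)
The plan is to handle the two assertions separately, the first by a direct Perron--Frobenius argument on the ambiguity matrices, the second by a mean value theorem reduction to the separation bound already established in \cref{th-sepentropy}.

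For the first assertion, I would invoke \cref{prop-rothblum}, which expresses the value of any pair of strategies $(\sigma,\tau)$ as the maximum of the Perron roots $\specrad(M^{\sigma,\tau}[\mathscr{C}_i])$ of the principal submatrices of the ambiguity matrix associated with the strongly connected components $\mathscr{C}_i$ of $\mathscr{G}^{\sigma,\tau}$ reachable from the initial state. Since the entries $m_{pl}$ are positive integers in $\{1,\dots,W\}$, the relevant principal submatrix on a strongly connected component has each row sum bounded above by $nW$ and bounded below by $1$ (each row has at least one nonzero integer entry by strong connectedness and \cref{as:entropy_nondeg}). The classical row-sum bounds on the spectral radius of a nonnegative irreducible matrix then yield $1 \le \specrad(M^{\sigma,\tau}[\mathscr{C}_i]) \le nW$. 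Taking maxima over components and over pairs of strategies, the value of the entropy game lies in $[1, nW]$.

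For the second assertion, let $v_1 \neq v_2$ denote the two distinct values. By \cref{th-sepentropy}, we have $|v_1 - v_2| \ge \nu_{n,r}^{-1}$. By the first part, both $v_1$ and $v_2$ belong to $[1,nW]$. The mean value theorem applied to $\log$ on the interval between $v_1$ and $v_2$ gives
\[
|\log v_1 - \log v_2| \ge \frac{|v_1 - v_2|}{\max(v_1,v_2)} \ge \frac{1}{nW\, \nu_{n,r}} = \hat{\nu}_{n,r}^{-1},
\]
which is the desired bound.

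There is no substantive obstacle here: the first part is a routine application of Perron--Frobenius bounds together with \cref{prop-rothblum}, and the second part is the standard log-Lipschitz trick. The only minor point requiring a bit of care is ensuring that the lower bound $v_i \ge 1$ indeed holds (so that the denominator in the mean value estimate is controlled by $nW$ rather than $\max(v_1,v_2)$ being potentially arbitrarily small); this is why the lower bound in the first assertion is stated and proved first.
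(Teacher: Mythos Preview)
Your proposal is correct and follows essentially the same route as the paper: Perron--Frobenius bounds on the ambiguity submatrices to pin the value in $[1,nW]$, then the mean value theorem for $\log$ combined with \cref{th-sepentropy}. One minor expository slip: in your final remark, it is the \emph{upper} bound $v_i\le nW$ (not the lower bound $v_i\ge 1$) that controls the denominator $\max(v_1,v_2)$ in the mean value estimate; the lower bound is part of the first assertion but plays no role in the second.
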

\begin{proof}
Let $\lambda,\mu$ denote the distinct values determined by two pairs of strategies, and suppose, without loss of generality, that $\lambda>\mu$. %
Then, \cref{lemma_entropy_value} implies $1\leq \mu < \lambda \leq n\Weights$.\todo{MS: Maybe explain this inequality? It is proven in the proof of \cref{pr:cw_in_entropy}, but this arrives later on. SG. I added a proof. RK: I modified the proof following the addition of \cref{lemma_entropy_value}, I hope this is fine.}  By the intermediate value theorem, we conclude that $\log \lambda -\log \mu = (\lambda-\mu)/\xi$ for
  some $\xi \in ]1,n\Weights[$, and so $\log \lambda -\log \mu \geq (\lambda-\mu)/(n\Weights) \geq \hat{\nu}_{n,r}^{-1}$ by \cref{th-sepentropy}. 
\end{proof}

\todo{MS: How explicit constant do we want to have?}
\begin{proposition}\label{pr:cw_in_entropy}
Let $0 < \delta < 1$. Then, there exist vectors $w,z \in \R_{>0}^{\vertexsetD}$ such that $e^{-\delta}\mybot(\mgameval) w \le \mshapley(w)$, $e^{\delta}\mytop(\mgameval)z \ge \mshapley(z)$, and $R_{n,r}\coloneqq 1200(n^3\log\Weights + n^2 \log \delta^{-1}) \geq \max\{\HNorm{\log w},\HNorm{\log z}\} $.
\end{proposition}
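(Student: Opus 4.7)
The plan is to construct $w$ and $z$ as suitably perturbed Perron eigenvectors of the ambiguity matrix of optimal positional strategies. First I would pass to the additive setting: letting $v = \log w$, $u = \log z$ and $F = \log \circ \mshapley \circ \exp$, the conditions become $F(v) \ge \lcw(F) - \delta + v$ and $F(u) \le \ucw(F) + \delta + u$, using the identifications $\lcw(F) = \log \mybot(\mgameval)$ and $\ucw(F) = \log \mytop(\mgameval)$ from \cref{Theo:UCW}. The Hilbert seminorm conditions carry over directly since $\HNorm{\log w} = \HNorm{v}$ and $\HNorm{\log z} = \HNorm{u}$.

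For the super-eigenvector $u$, I would fix a Blackwell-optimal pair of positional strategies $(\sigma^*, \tau^*)$, whose existence follows from \cref{th-valexists} combined with a standard discount analysis. Let $M = M^{\sigma^*, \tau^*}$; by \cref{prop-rothblum} the spectral radius of appropriate principal submatrices of $M$ equals $\mytop(\mgameval)$. Since $M$ may be reducible, I would perturb it to an irreducible nonnegative matrix $M_\alpha$ by adding a small $\alpha > 0$ to all graph-allowed entries, and take $z_\alpha > 0$ to be its Perron eigenvector with spectral radius $\rho_\alpha$. A calibration $\log\alpha^{-1} = \Theta(n^2 \log W + \log\delta^{-1})$, obtained from the separation bound \cref{th-sepentropy} applied to lower-bound the gap between $\tau^*(\sigma^*(k))$ and competing Tribune options when evaluated against $z_\alpha$, ensures both that $\tau^*$ remains a best response to $z_\alpha$ (so that $\mshapley(z_\alpha) \le M z_\alpha \le M_\alpha z_\alpha = \rho_\alpha z_\alpha$) and that $\rho_\alpha \le e^\delta \mytop(\mgameval)$. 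Classical Perron--Frobenius estimates for an irreducible matrix with entries in $[\alpha, nW]$ give $\log(\max z_\alpha / \min z_\alpha) \le n\log(nW/\alpha)$, and an additional factor of $n$ coming from extending the estimate beyond the top dominion $\domMaxVal$ (identified via \cref{le:good_entropy}) to the full state space yields $\HNorm{\log z_\alpha} = O(n^3 \log W + n^2 \log \delta^{-1})$. The sub-eigenvector $w$ is constructed symmetrically from the bottom spectral structure of $M$.

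The main obstacle is the simultaneous calibration of $\alpha$: it must be small enough that $\tau^*$ strictly beats the competing Tribune actions at $z_\alpha$, thereby preserving the crucial inequality $\mshapley(z_\alpha) \le M z_\alpha$, and that $\rho_\alpha$ is within a factor $e^\delta$ of $\mytop(\mgameval)$, but large enough that $\HNorm{\log z_\alpha}$ meets the target. The separation bound \cref{th-sepentropy}, which lower-bounds the gap between distinct strategy values by $\nu_{n,n}^{-1}$, provides the threshold that makes this balance quantitatively work out, with the numerical factor $1200$ accounting for the combination of the Perron--Frobenius estimates, the extension step across the dominion decomposition, and the separation bound.
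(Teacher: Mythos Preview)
Your approach has a genuine gap at the step where you claim the separation bound of \cref{th-sepentropy} ensures that $\tau^*$ remains a best response to $z_\alpha$. That theorem separates \emph{Perron roots} of ambiguity matrices, i.e., the long-run values of different strategy pairs. It says nothing about one-step action values evaluated against a particular vector. Concretely, even if every alternative Tribune strategy $\tau'$ satisfies $\specrad(M^{\sigma^*,\tau'}) < \specrad(M^{\sigma^*,\tau^*})$, it can still happen that for some index $k$ and some positive vector $x$ one has $(M^{\sigma^*,\tau'} x)_k > (M^{\sigma^*,\tau^*} x)_k$; the spectral radius is an asymptotic growth rate, not a one-step comparison. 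Hence there is no mechanism, via \cref{th-sepentropy} or otherwise, that forces $T^{\sigma^*}(z_\alpha) \le M z_\alpha$. Without that inequality you only get $T(z_\alpha) \le T^{\sigma^*}(z_\alpha)$ and $M z_\alpha \le \rho_\alpha z_\alpha$, and these do not combine to give $T(z_\alpha) \le \rho_\alpha z_\alpha$. (The situation is even worse when several Tribune strategies are co-optimal, since then the separation bound is vacuous.) The symmetric construction for $w$ has the same defect with the roles of Despot and Tribune interchanged.

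The paper's proof avoids this difficulty entirely by a polyhedral argument rather than a spectral perturbation. It first uses the Collatz--Wielandt characterization~\cref{cw-botmult} to assert that \emph{some} $w>0$ with $qw \le T(w)$ exists, for a rational $q$ chosen just below $\mybot(\mgameval)$ with denominator $O(\delta^{-1})$. It then freezes the argmin/argmax selections realised by $T$ at this particular $w$, which turns the condition $qx \le T(x)$ into a system of linear inequalities with integer and small rational coefficients. The resulting polyhedron is nonempty (it contains $w$), and every strictly positive point of it is again a sub-eigenvector. Standard bit-length bounds for generators of rational polyhedra then produce a point $\tilde w$ in it whose coordinates have encoding length $O(n^2)$ times the encoding length of a single inequality, yielding the stated Hilbert-seminorm bound. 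No control over which strategies are best responses to $\tilde w$ is needed.
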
\todo[color=red!30]{RK: Since I could not find where the parameter $R_{n,r}$ was defined, I added a definition in \cref{pr:cw_in_entropy}.}
This proposition will be established by observing that for a given value of $\delta$,
$w$ and $z$ are defined by semi-linear constraints, and by using
bitlength estimates on the generators and vertices of polyhedra
defined by inequalities. %
To do so, 
we first state, for convenience, the following consequences
of \cref{Theo:UCW}, which follow by applying this theorem
to the Shapley operator $F=\log\circ T\circ \exp$,
\begin{align}
  \mybot({\mgameval}) &= \sup \{\lambda>0\colon \exists w\in \R_{>0}^{n}, \;\lambda w\leq T(w)\} \; ,\label{cw-mult}
  \\
  \mytop({\mgameval}) &= \inf \{\mu>0\colon \exists z\in \R_{>0}^{n}, \;\mu z\geq T(z)\}
\; .
\end{align}

We also recall that the \emph{encoding length} of an integer number $r$ is defined as $\enc{r} \coloneqq \lceil \log_2 (\abs{r} + 1) \rceil + 1$. Moreover, if $r = p/q$ is a rational number then its encoding length is defined as $\enc{r} \coloneqq \enc{p} + \enc{q}$. The encoding length of an affine inequality with rational coefficients $ax \le b$ is defined as $\enc{b} + \enc{a_1} + \dots + \enc{a_n}$. In the proof, we use the following 
inequalities, which are stated in~\cite[\S~1.3]{grotschel_lovasz_schrijver_geometric_algorithms}:

\begin{lemma}\label{le:enc_ineqs}
If $r \ge 1$ is a natural number, then $\log_2(r) \le \enc{r} \le 3 + \log_2(r)$. If $r_1, \dots, r_m$ are rational numbers, then $\enc{r_1 + \dots + r_m} \le 2(\enc{r_1} + \dots + \enc{r_m})$ and $\enc{r_1 r_2} \le \enc{r_1} + \enc{r_2}$.\hfill 
\end{lemma}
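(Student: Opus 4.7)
The plan is to prove each inequality directly from the definitions $\enc{r}=\lceil\log_2(|r|+1)\rceil+1$ (for integers) and $\enc{p/q}=\enc{p}+\enc{q}$ (for rationals in lowest terms). The analytic inputs are ceiling subadditivity $\lceil\sum x_i\rceil\le\sum\lceil x_i\rceil$ and the elementary integer estimates $|ab|+1\le(|a|+1)(|b|+1)$ and $|a+b|+1\le(|a|+1)(|b|+1)$, together with the monotonicity remark: if $p/q=a/b$ with $\gcd(a,b)=1$, then $|a|\le|p|$ and $|b|\le|q|$, so reducing a rational cannot increase $\enc{}$. The first claim is immediate from the definition: for $r\ge 1$ natural, $\log_2 r\le\log_2(r+1)\le\lceil\log_2(r+1)\rceil\le\enc{r}$ gives the lower bound, while $\enc{r}\le\log_2(r+1)+2\le\log_2 r+3$ (using $r+1\le 2r$) gives the upper bound.

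For the multiplication bound, I would first establish the integer case $\enc{ab}\le\enc{a}+\enc{b}$: taking $\log_2$ of $|ab|+1\le(|a|+1)(|b|+1)$ and applying subadditivity of the ceiling gives $\lceil\log_2(|ab|+1)\rceil\le\lceil\log_2(|a|+1)\rceil+\lceil\log_2(|b|+1)\rceil$, and absorbing the two "$+1$" constants yields the claim. For rationals $r_i=p_i/q_i$ the product $r_1r_2=(p_1p_2)/(q_1q_2)$ need not be in lowest terms, but the monotonicity remark applied to the reduced form gives $\enc{r_1r_2}\le\enc{p_1p_2}+\enc{q_1q_2}\le\enc{p_1}+\enc{p_2}+\enc{q_1}+\enc{q_2}=\enc{r_1}+\enc{r_2}$.

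For the sum bound, I clear denominators and write $\sum_i r_i=N/Q$ where $Q\coloneqq\prod_i q_i$ and $N\coloneqq\sum_i p_i\prod_{j\ne i}q_j$. The crucial combinatorial step is
\[
|N|+1 \;\le\; \prod_i(|p_i|+|q_i|) \;\le\; \prod_i(|p_i|+1)(|q_i|+1),
\]
obtained by expanding $\prod_i(|p_i|+|q_i|)=\sum_{S\subseteq[m]}\prod_{i\in S}|p_i|\prod_{j\notin S}|q_j|$: the singleton-$S$ terms sum to $\sum_i|p_i|\prod_{j\ne i}|q_j|\ge|N|$ by the triangle inequality, while the empty-$S$ term $\prod_j|q_j|\ge 1$ supplies the extra unit. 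Taking $\log_2$, applying $\lceil\sum x_i\rceil\le\sum\lceil x_i\rceil$, and using $\lceil\log_2(|p_i|+1)\rceil=\enc{p_i}-1$ and analogously for $q_i$ yields $\enc{N}\le\sum_i(\enc{p_i}+\enc{q_i})-2m+1=\sum_i\enc{r_i}-2m+1$, and similarly $\enc{Q}\le\sum_i\enc{q_i}-m+1\le\sum_i\enc{r_i}-2m+1$ (the last step using $\enc{p_i}\ge 1$). Adding these two bounds gives $\enc{N}+\enc{Q}\le 2\sum_i\enc{r_i}-4m+2\le 2\sum_i\enc{r_i}$ for $m\ge 1$, and $\enc{\sum_i r_i}\le\enc{N}+\enc{Q}$ by the monotonicity remark applied to the reduction of $N/Q$.

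The main obstacle is recovering the sharp constant $2$ in the sum bound: the naive approach of clearing denominators and then bounding the common numerator by iterating the integer triangle estimate $\enc{\sum a_i}\le\sum\enc{a_i}$ introduces a factor of order $m$ through the product terms $\prod_{j\ne i}|q_j|$. Only the grouping $(|p_i|+|q_i|)$ in the product-expansion identity, which handles all cross-products simultaneously, produces an $m$-independent bound, and one must carefully harvest the "$-1$" savings released each time ceiling subadditivity is applied in order to absorb the trailing constants into the slack provided by the factor $2$.
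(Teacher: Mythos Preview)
Your proof is correct. The paper does not actually prove this lemma: it is stated with an immediate \qed{} and treated as a routine fact, so there is no argument in the paper to compare against. Your write-up supplies a careful verification of all three inequalities directly from the definition $\enc{r}=\lceil\log_2(|r|+1)\rceil+1$, and in particular your product-expansion bound $|N|+1\le\prod_i(|p_i|+|q_i|)$ is exactly the right device to get the constant $2$ in the sum estimate without an $m$-dependent loss.
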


\begin{proof}[Proof of \cref{pr:cw_in_entropy}.]
By \cref{lemma_entropy_value}, we have $\mybot(\mgameval) \ge 1$, so the open interval $]e^{-\delta}\mybot(\mgameval),\mybot(\mgameval)[$ is of length at least $1 - e^{-\delta} \ge \frac{\delta}{1 + \delta}$. Thus, it contains a rational number $q \in \Q$ with denominator at most $\frac{2(1+\delta)}{\delta} = 2 + \frac{2}{\delta} \le \frac{4}{\delta}$. Since $\mybot(\mgameval) \le n\Weights$ by \cref{lemma_entropy_value}, the numerator of $q$ is not greater than $\frac{4n\Weights}{\delta}$, and so the encoding length of $q$ satisfies 
\[
\enc{q} \le 10 + \log_2 n + \log_2 \Weights + 2\log_2 \delta^{-1} \, .
\]
Let $w \in \R_{>0}^{n}$ be any vector that satisfies $q w \le \mshapley(w)$.
Since $q<\mybot(\mgameval)$, the existence of such a vector
follows from~\cref{cw-mult}.
For every $t\in\vertexsetT$, let $p_t$ be such that $(t,p_t) \in \edges$ and $\max_{(t,p) \in \edges} \sum_{(p,l) \in \edges} \eweight_{pl} w_{l} = \sum_{(p_t,l) \in \edges} \eweight_{p_tl} w_{l}$. Likewise, for $k\in\vertexsetD $ let $t_k$ be such that $(k,t_k) \in \edges$ and
$$\min_{(k,t) \in \edges} \max_{(t,p) \in \edges} \sum_{(p,l) \in \edges} \eweight_{pl} w_{l} = \max_{(t_k,p) \in \edges} \sum_{(p,l) \in \edges} \eweight_{pl} w_{l}\enspace.$$
In this way, we have
\(
\mshapley_k(w) =  \sum_{(p_{(t_k)},l) \in \edges} \eweight_{p_{(t_k)}l} w_{l} \),
for all $k$. Consider now the polyhedron $\polyh \subset \R^{n}$ defined by the inequalities
\begin{equation}\label{le:cw_polyhedron}
\begin{aligned}
\forall (t,p) \in \edges \cap (\vertexsetT\times \vertexsetP),  \, \sum_{(p,l) \in \edges} \eweight_{pl} x_{l} &\le \sum_{(p_t,l) \in \edges} \eweight_{p_tl} x_{l} \\
\forall (k,t) \in \edges\cap (\vertexsetD\times \vertexsetT),  \, \sum_{(p_t,l) \in \edges} \eweight_{p_tl} x_{l} &\ge \sum_{(p_{(t_k)},l) \in \edges} \eweight_{p_{(t_k)}l} x_{l} \\
\forall k\in\vertexsetD,  \,\qquad 0\leq qx_k &\le  \sum_{(p_{(t_k)},l) \in \edges} \eweight_{p_{(t_k)}l} x_{l}\enspace . %
\end{aligned}
\end{equation}
Then, we have $w \in \polyh \cap \R^{n}_{>0}$ so $\polyh$ is nonempty. Even more, any vector $x \in \polyh \cap \R^{n}_{>0}$ satisfies $q x \le \mshapley(x)$. Let 
\[
\Enc 
\coloneqq 1 + 4n(3 + \log_2 \Weights) + 2\langle q \rangle \le 50(n \log_2 \Weights + \log_2 \delta^{-1}) \, .
\]
Then, \cref{le:enc_ineqs} shows that every inequality that describes $\polyh$ has encoding length at most $\Enc$. 
Hence, by \cite[Lemma~6.2.4]{grotschel_lovasz_schrijver_geometric_algorithms}, there exists two finite sets $X,Y \subset \Q^{n}$ such that $\polyh = \conv(X) + \cone(Y)$ and every entry of every vector in $X \cup Y$ has encoding length at most $4n\Enc$. 
Furthermore, we have $X \subset \Q^{n}_{\ge 0}$ because $X \subset \polyh$ and $Y \subset \Q^{n}_{\ge 0}$ because $x + \lambda y \in \polyh$ for any $(x,y) \in X \times Y$ and $\lambda \ge 0$. By Carath{\'e}odory's theorem~\cite[Section~7.7]{schrijver}, there exist $x_0, \dots x_n \in X$, $y_1,\dots,y_n \in Y$ such that $w \in \conv(x_0, \dots, x_n) + \cone(y_1, \dots, y_n)$. Since $w \in \R^{n}_{>0}$ and $X,Y \subset \Q^{n}_{\ge 0}$, the point $\tilde{w} \in \polyh$ defined as
\[
\tilde{w} \coloneqq \frac{1}{n+1}x_0 + \dots + \frac{1}{n+1}x_{n+1} + y_1 + \dots + y_n  
\]
also satisfies $\tilde{w} \in \R_{>0}^{n}$. Moreover, the inequalities from \cref{le:enc_ineqs} show that every entry of $\tilde{w}$ has encoding length at most $60n^2 \Enc$. 
Therefore, for all $k$ we can write $\tilde{w}_k = p_k/q_k$, where the numbers $p_k,q_k$ are natural and their encoding length is bounded by $60n^2\Enc$. 
In particular, we have $\HNorm{\log_2 \tilde{w}} \le 2\|\log_2 \tilde{w}\|_{\infty} \le 2\max_k (\log_2 p_k + \log_2 q_k)\le 240n^2\Enc \le 1200(n^3 \log_2 \Weights + n^2\log_2 \delta^{-1})$
 and $\tilde{w}$ satisfies the claim. The proof of the other part is analogous, using the fact that the open interval $]\mytop(\mgameval),e^{\delta}\mytop(\mgameval)[$ has length at least $e^{\delta} - 1 \ge \delta$, and so it contains a rational number with denominator at most $2/\delta$. 
\end{proof}

We deduce the following parameterized complexity result.
\begin{theorem}\label{th-entropy}
In an entropy game of rank at most $r$, we can find
the set of initial states with maximal value
by performing $O(nR_{n,r}\hat{\nu}_{n,r})$ calls to an oracle
approximating $F$ with precision $\delta/8$
where $\delta=(\hat{\nu}_{n,r})^{-1}$.
\end{theorem}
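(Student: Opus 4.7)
The plan is to apply \cref{topclass_oracle} to the Shapley operator $F = \log\circ T \circ \exp$ of the entropy game, with parameters $\delta \coloneqq \hat{\nu}_{n,r}^{-1}$ (possibly divided by a constant to ensure strict inequality) and $R$ bounding Hilbert seminorms on the top dominion. Three hypotheses need verification: the abstract dominion assumption \cref{as:good_operators}, a lower bound $\delta < \operatorname{sep}(F)$, and an upper bound $R$ on the Hilbert seminorms of approximate sub- and super-eigenvectors of $F^{\domMaxVal}$.

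The first hypothesis is delivered by \cref{le:good_entropy}. For the second, I would argue as follows. Fix any dominion $\dominion$ containing $\domMaxVal$. By \cref{le:entropy_domin}, $F^\dominion$ is the Shapley operator of a sub-entropy-game, and any ambiguity matrix arising in this subgame is a principal submatrix of some ambiguity matrix of the original game, and hence has rank at most $r$. By \cref{prop-rothblum}, the values $\gameval^\dominion_j$ coincide with logarithms of Perron roots of principal submatrices of these rank-$r$ ambiguity matrices; so whenever $\ucw(F^\dominion) > \lcw(F^\dominion)$, the gap is a separation between two distinct such algebraic quantities, and is therefore bounded below by $\hat{\nu}_{n,r}^{-1}$ thanks to \cref{coro-additive}. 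Taking the infimum over admissible $\dominion$ gives $\operatorname{sep}(F) \ge \hat{\nu}_{n,r}^{-1}$, and replacing $\delta$ by $\delta/2$ if needed for strict inequality does not affect the asymptotic bound.

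For the third hypothesis, the dominion of maximal value satisfies $\lcw(F^{\domMaxVal}) = \ucw(F^{\domMaxVal})$ by \cref{le:good_entropy}, so the subgame induced by $\domMaxVal$ has constant value. I would apply \cref{pr:cw_in_entropy} to $T^{\domMaxVal}$ with tolerance $\delta/8$ to obtain positive vectors $w$ and $z$ on $\domMaxVal$ satisfying the corresponding multiplicative sub- and super-optimality inequalities, with Hilbert seminorms of $\log w$ and $\log z$ bounded by
\[
R_{n,r} \coloneqq 1200\bigl(n^3 \log W + n^2 \log(8\hat{\nu}_{n,r})\bigr) \, .
\]
Setting $v \coloneqq \log w$ and $\tilde v \coloneqq \log z$ and using that $\log \mybot(\mgameval^{\domMaxVal}) = \log \mytop(\mgameval^{\domMaxVal}) = \ucw(F)$, these translate, via the identity $F^{\domMaxVal} = \log \circ T^{\domMaxVal} \circ \exp$, into the additive sub- and super-eigenvector certificates required by \cref{topclass_oracle}.

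Combining the three inputs, \cref{topclass_oracle} yields a cost of $n^2 + n \lceil 8 R_{n,r} / \delta \rceil = O(n R_{n,r} \hat{\nu}_{n,r})$ calls to the oracle, completing the proof. The main conceptual obstacle is the separation step: one has to ensure that restricting to a dominion preserves the rank-$r$ structure of the ambiguity matrices, so that the algebraic bound of \cref{coro-additive} transfers uniformly across all subgames in the definition of $\operatorname{sep}(F)$. The other two verifications are essentially translations via $\log$/$\exp$ of statements already established for $T$.
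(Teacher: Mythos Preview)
Your proposal is correct and follows essentially the same route as the paper's proof, which consists of a single sentence invoking \cref{topclass_oracle} together with \cref{coro-additive} and \cref{pr:cw_in_entropy}. You have supplied the details that the paper leaves implicit: verifying \cref{as:good_operators} via \cref{le:good_entropy}, arguing that restriction to a dominion yields a sub-entropy-game whose ambiguity matrices are principal submatrices of the original ones (hence of rank at most $r$) so that \cref{coro-additive} bounds $\operatorname{sep}(F)$, and translating the multiplicative certificates of \cref{pr:cw_in_entropy} on $\domMaxVal$ into additive ones for $F^{\domMaxVal}$ via $\log$.
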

\begin{proof}%
The claim follows by combining \cref{topclass_oracle} with the estimates from \cref{coro-additive} and \cref{pr:cw_in_entropy}. 
  \end{proof}
  \begin{remark}\label{rk:borwein}
    We note that the oracle used in \cref{th-entropy}, approximating the Shapley operator of an entropy game
    up to a given precision, can be implemented in polynomial time, this follows by using
a result of Borwein and Borwein~\cite{borwein}
on the approximation of the $\log$ and $\exp$ maps,
together with a scaling argument, see~\cite[Lemma~27]{entropygamejournal}.
  \end{remark}
We now solve strategically the special case of entropy games whose value
is independent of the initial state. To do so, we suppose that we have access to the graph of the game as well as to an oracle that approximates the operator $\shapley$.

  \begin{proposition}\label{prop-ent-cstvalue}
    In an entropy game of rank at most $r$, such that the value is independent of the initial state,
    we can find a pair of optimal strategies for both players by performing
    $O(nR_{n,r}\hat{\nu}_{n,r})$ calls to an oracle
approximating $F$ with precision $\delta/16$
where $\delta=(\hat{\nu}_{n,r})^{-1}$.
  \end{proposition}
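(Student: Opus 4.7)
The plan is to call Procedure~\textsc{ApproximateConstantMeanPayoff} (\cref{algo-vi3}) on the Shapley operator $F = \log \circ T \circ \exp$ of the entropy game with input parameter $\delta' \coloneqq \delta/2$, so that the oracle it accesses has precision $\delta'/8 = \delta/16$ as prescribed, and then to extract a pair of optimal positional strategies from the resulting sub- and super-optimality certificates by invoking the separation bound of \cref{coro-additive}. This mirrors the template of \cref{cor-perturb} from the stochastic-game setting, with the additive separation on mean payoffs replaced here by the multiplicative separation on Perron roots of ambiguity matrices.

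The first step is to observe that the constant-value hypothesis gives $\lcw(F) = \ucw(F) = \rho$, where $\rho$ is the common logarithm of the value. \Cref{pr:cw_in_entropy}, applied with the parameter $\delta/2$, supplies sub- and super-eigenvectors of $T$ whose logarithms have Hilbert seminorm bounded by some quantity $R_{n,r}$ polynomial in $n$, $\log\Weights$, and $\log\hat\nu_{n,r}$. Plugging this into \cref{pr:approx_constant_value}, the procedure terminates in $O(R_{n,r}/\delta) = O(R_{n,r}\hat\nu_{n,r})$ calls to the oracle and returns an interval $[a,b]$ of width at most $\delta/2$ containing $\rho$, together with vectors $x,y \in \R^{\vertexsetD}$ satisfying $a + x \le F(x)$ and $F(y) \le b + y$.

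Next I would extract the strategies exactly as in the proof of \cref{cor-perturb}: for each $d \in \vertexsetD$, pick $\sigma(d) \in \vertexsetT$ attaining the outer minimum in $F_d(y)$; for each $t \in \vertexsetT$, pick $\tau(t) \in \vertexsetP$ attaining the maximum in $\max_{(t,p)\in\edges}\sum_{(p,l)\in\edges}\eweight_{pl}\exp(x_l)$. Let $F^{\sigma,\tau'}$ denote the Shapley operator with Despot and Tribune fixed to $\sigma$ and $\tau'$ respectively. By construction, the definition of $\sigma$ gives $F^\sigma(y) = F(y) \le b + y$, and since the inner maximum dominates any particular Tribune choice, $F^{\sigma,\tau'} \le F^\sigma$ pointwise; iterating and using that $\chi(F^{\sigma,\tau'})_d = \log\mgameval_d(\sigma,\tau')$ (the operator $F^{\sigma,\tau'}$ being the logarithmic conjugate of the action of a fixed nonnegative matrix), I deduce $\log\mgameval_d(\sigma,\tau') \le b$ for every $d$ and every Tribune strategy $\tau'$. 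The symmetric argument, starting from $F^\tau(x) = F(x) \ge a + x$ and observing $F^{\sigma',\tau} \ge F^\tau$ pointwise, yields $\log\mgameval_d(\sigma',\tau) \ge a$ for every $d$ and every Despot strategy $\sigma'$.

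The hard step will be to upgrade these $\delta/2$-approximate inequalities into exact optimality, and this is where the separation bound enters. By \cref{prop-rothblum}, $\mgameval_d(\sigma,\tau')$ is a Perron root of a principal submatrix of the ambiguity matrix $M^{\sigma,\tau'}$, and the true value $e^\rho$ is itself realised by an optimal pair of positional strategies furnished by \cref{th-valexists}. Hence \cref{coro-additive} applies: either $\log\mgameval_d(\sigma,\tau') = \rho$, or $|\log\mgameval_d(\sigma,\tau') - \rho| \ge \hat\nu_{n,r}^{-1} = \delta$. Since I have $\log\mgameval_d(\sigma,\tau') - \rho \le b - \rho \le \delta/2 < \delta$, the second alternative is impossible in the regime $\log\mgameval_d(\sigma,\tau') \ge \rho$, so $\log\mgameval_d(\sigma,\tau') \le \rho$ holds unconditionally. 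The symmetric argument gives $\log\mgameval_d(\sigma',\tau) \ge \rho$, so $(\sigma,\tau)$ is a pair of optimal positional strategies. The precise choice $\delta' = \delta/2$ (and hence oracle precision $\delta/16$) is dictated exactly by the need to make $b - a < \hat\nu_{n,r}^{-1}$ strict, which provides the slack required for this separation argument.
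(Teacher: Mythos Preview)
Your overall plan is the paper's plan: run \textsc{ApproximateConstantMeanPayoff} with parameter $\delta/2$, read off sub/super-eigenvectors $x,y$, extract positional strategies from them, and invoke the separation bound of \cref{coro-additive} to upgrade the $\delta/2$-approximate guarantees to exact optimality. Your separation argument at the end is correct.

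There is, however, a genuine gap in the strategy-extraction step. You propose to ``pick $\sigma(d)$ attaining the outer minimum in $F_d(y)$'' and similarly for $\tau$, so that $F^\sigma(y)=F(y)$ exactly. This is what the proof of \cref{cor-perturb} does for stochastic mean-payoff games, and there it is legitimate: the operator is piecewise affine with rational data, so once $y$ is a rational vector the inner expressions and their argmins are computable in exact arithmetic. For entropy games this fails. Evaluating $F_d(y)=\min_t\max_p\log\bigl(\sum_l m_{pl}e^{y_l}\bigr)$ requires computing transcendental functions of the coordinates of $y$; there is no finite procedure that returns the exact argmin, and the oracle only approximates $F$ as a whole, not the individual inner terms. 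So ``pick the exact argmin'' is not an algorithm in this model.

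The paper fixes this by approximating each quantity $Q_p\approx\log\bigl(\sum_l m_{pl}e^{y_l}\bigr)$ to precision $\delta/8$ (via the Borwein--Borwein estimate mentioned in \cref{rk:borwein}), and defining $\sigma$ as the argmin over these approximations. This yields only $F^\sigma(y)\le b+\delta/4+y$, i.e.\ an extra $\delta/4$ error compared to your claimed $F^\sigma(y)\le b+y$. The interval $[a,b+\delta/4]$ then has width at most $3\delta/4<\delta$, and the separation argument still goes through. This extra $\delta/4$ is the actual reason the oracle precision must be $\delta/16$ rather than $\delta/8$; your stated justification (``to make $b-a<\hat\nu_{n,r}^{-1}$ strict'') would be satisfied by any $\delta'<\delta$ and does not by itself force the factor of two.
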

  \begin{proof}
  The proof is similar to the proof of \cref{cor-perturb}. We use  \textsc{ApproximateConstantMeanPayoff} to obtain two vectors $x,y \in \R^{\vertexsetD}$
  such that $a + x \le \shapley(x)$ and $b + y \ge \shapley(y)$,
and $[a,b]$ is an interval of width at most $\delta/2$ containing $\ec(F)$.
By \cref{pr:approx_constant_value} combined with the estimates from \cref{coro-additive} and \cref{pr:cw_in_entropy}, this requires $O(nR_{n,r}\hat{\nu}_{n,r})$ calls to an oracle approximating $F$.
  For every $k \in \vertexsetD$ we have
 \[
 b + y_k \ge \min_{(k,t) \in \edges}\max_{(t,p) \in \edges} \log\Bigl(\sum_{(p,l) \in \edges} \eweight_{pl}e^{y_l}\Bigr) \, .
 \]
 For every $p \in \vertexsetP$ we approximate the expression $\log\Bigl(\sum_{(p,l) \in \edges} e^{y_l}\Bigr)$ to precision $\delta/8$ using the procedure from~\cite[Lemma~27]{entropygamejournal}. Let $Q_p \in \R$ denote this approximation and let $\sigma$ be a strategy of Despot that satisfies 
 \(
 \min_{(k,t) \in \edges}\max_{(t,p) \in \edges} Q_p = \max_{(\sigma(k),p) \in \edges}Q_p
 \)
 for all $k \in \vertexsetD$. We have
 \begin{align*}
  b + y_k &\ge \min_{(k,t) \in \edges}\max_{(t,p) \in \edges} \log\Bigl(\sum_{(p,l) \in \edges} \eweight_{pl}e^{y_l}\Bigr) \ge -\delta/8 + \min_{(k,t) \in \edges}\max_{(t,p) \in \edges} Q_p \\
  &= -\delta/8 + \max_{(\sigma(k),p) \in \edges}Q_p \ge -\delta/4 + \max_{(\sigma(k),p) \in \edges}\log\Bigl(\sum_{(p,l) \in \edges} \eweight_{pl}e^{y_l}\Bigr) \, .
 \end{align*}
 Hence, if we denote by $\shapley^{\sigma}$ the Shapley operator obtained by fixing $\sigma$, then $b + \delta/4  + y \ge \shapley^{\sigma}(y)$. In particular, $\ucw(\shapley^{\sigma}) \le b + \delta/4$. Since the interval $[a, b +\delta/4]$ is of length smaller than $\delta$ and contains $\ec(\shapley)$, \cref{coro-additive} implies that $\ucw(\shapley^{\sigma}) \le \ec(\shapley)$ and $\sigma$ is optimal. We can construct an optimal strategy of Tribune in an analogous way. 
    \end{proof}

The following decomposition property for entropy games
extends a classical property of deterministic
mean-payoff games. Once the set of Despot's states
with maximal value is known, it allows one to
determine the value of the other states
by reduction to an entropy game induced
by the other states of Despot. %
\begin{lemma}[Decomposition property]\label{le:decomposition}
Let $\set_1 \coloneqq \domMaxVal \dunion \vertexsubsetit[T]{\domMaxVal} \dunion \vertexsubsetit[P]{\domMaxVal}$ and $\set_2 \coloneqq \vertexset \setminus \set_1$. Furthermore, suppose that $\set_2$ is nonempty. Consider the induced digraphs $\dgraph[\set_1]$ and $\dgraph[\set_2]$ of the original graph $\dgraph = (\vertexset, \edges)$. Then, the entropy games arising by restricting the graph to $\dgraph[\set_1]$ and $\dgraph[\set_2]$ satisfy \cref{as:entropy_nondeg}. Furthermore, if $(\sigma_1,\tau_1)$ are optimal strategies of Despot and Tribune in the induced entropy game on $\dgraph[\set_1]$ and $(\sigma_2,\tau_2)$ are optimal strategies of Despot and Tribune in the induced entropy game on $\dgraph[\set_2]$, then the joint strategies
\begin{equation}\label{eq:decomposition}
\forall k \in \vertexsetD, \ \sigma(k) = \begin{cases}
\sigma_1(k) &\text{if $k \in \domMaxVal$,} \\
\sigma_2(k) &\text{otherwise,}
\end{cases}\\
\; \;  \forall t \in \vertexsetT, \ \tau(t) = \begin{cases}
\tau_1(t) &\text{if $t \in \vertexsubsetit[T]{\domMaxVal}$,} \\
\tau_2(t) &\text{otherwise},
\end{cases}
\end{equation}
are optimal in the original game.
\end{lemma}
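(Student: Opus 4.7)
The argument will proceed in three stages: verify nondegeneracy of both subgames; establish the value identities $\mgameval^{\set_i}_d = \mgameval_d$ for $d \in \vertexsetD \cap \set_i$ and $i=1,2$; and deduce optimality of the joint strategy from \cref{prop-rothblum} together with a structural and spectral analysis of the ambiguity matrices $M^{\sigma, \tau'}$ and $M^{\sigma', \tau}$.

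The subgraph $\dgraph[\set_1]$ coincides with the graph of the subgame induced by the dominion $\domMaxVal$ (which is a dominion by \cref{le:good_entropy}), so its nondegeneracy follows from \cref{le:entropy_domin_nondeg}. For $\dgraph[\set_2]$, each vertex admits an outgoing edge within $\set_2$: a Despot vertex $k \in \vertexsetD \cap \set_2$ does so because, by \cref{le:good_entropy}, any optimal Despot strategy $\sigma^*$ satisfies $\sigma^*(k) \in \vertexsetT \setminus \vertexsubsetit[T]{\domMaxVal} = \vertexsetT \cap \set_2$; a Tribune vertex $t \in \vertexsetT \setminus \vertexsubsetit[T]{\domMaxVal}$ has, by the very definition of $\vertexsubsetit[T]{\domMaxVal}$, all its outgoing edges in $\vertexsetP \setminus \vertexsubsetit[P]{\domMaxVal} = \vertexsetP \cap \set_2$, and at least one such edge exists by \cref{as:entropy_nondeg}; the argument for People vertices is analogous. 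The value identity for $i=1$ follows from \cref{le:good_entropy}, which gives $\mgameval^{\set_1}_d = \mytop(\mgameval) = \mgameval_d$ on $\domMaxVal$. For $i=2$, \cref{le:good_entropy} implies that under any optimal pair $(\sigma^*, \tau^*)$ in the original game, a play starting at $d \in \vertexsetD \cap \set_2$ never leaves $\set_2$: $\sigma^*$ sends $\vertexsetD \cap \set_2$ into $\vertexsetT \cap \set_2$, whose outgoing edges land in $\vertexsetP \cap \set_2$, whose outgoing edges in turn land in $\vertexsetD \cap \set_2$. The restrictions of $(\sigma^*, \tau^*)$ to $\set_2$ are therefore valid strategies in the subgame on $\set_2$ realizing $\mgameval_d$, and combined with the optimality of $(\sigma_2, \tau_2)$ this yields $\mgameval^{\set_2}_d = \mgameval_d$.

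For $\mgameval_d(\sigma, \tau') \le \mgameval_d$, the crucial observation is that $M^{\sigma, \tau'}$ is upper block-triangular with respect to the partition $\vertexsetD = \domMaxVal \sqcup (\vertexsetD \setminus \domMaxVal)$: rows indexed by $\vertexsetD \setminus \domMaxVal$ vanish in the $\domMaxVal$-columns, because $\sigma_2$ sends these states into $\vertexsetT \cap \set_2$, whose Tribune-People successors remain in $\set_2$. A strongly connected component $\mathscr{C}$ reachable from $d$ in $\dgraph^{\sigma, \tau'}$ therefore lies in exactly one block. If $\mathscr{C} \subseteq \domMaxVal$, then $\tau'$ must map $\{\sigma_1(k) : k \in \mathscr{C}\}$ into $\vertexsubsetit[P]{\domMaxVal}$ (otherwise $M^{\sigma, \tau'}[\mathscr{C}]$ would contain a zero row, contradicting that it is irreducible on an SCC with positive spectral radius), so $M^{\sigma, \tau'}[\mathscr{C}]$ coincides with an ambiguity matrix of the subgame on $\set_1$, and the optimality of $\sigma_1$ bounds $\specrad(M^{\sigma, \tau'}[\mathscr{C}])$ by $\mytop(\mgameval)$. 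If $\mathscr{C} \subseteq \vertexsetD \setminus \domMaxVal$, then $M^{\sigma, \tau'}[\mathscr{C}]$ coincides with an ambiguity matrix of the subgame on $\set_2$ with Despot $\sigma_2$, and the optimality of $\sigma_2$ bounds its Perron root by $\mgameval^{\set_2}_k = \mgameval_k$. In both cases the bound is at most $\mgameval_d$.

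For the reverse inequality $\mgameval_d(\sigma', \tau) \ge \mgameval_d$, there is no clean block structure, since from $\vertexsetD \setminus \domMaxVal$ the strategy $\sigma'$ may enter $\vertexsubsetit[T]{\domMaxVal}$. For $d \in \domMaxVal$, however, the principal submatrix $M^{\sigma', \tau}[\domMaxVal]$ coincides with an ambiguity matrix of the subgame on $\set_1$ with strategies $(\sigma'|_{\domMaxVal}, \tau_1)$, because the dominion property forces $\sigma'(k) \in \vertexsubsetit[T]{\domMaxVal}$ and $\tau = \tau_1$ there; the optimality of $\tau_1$ provides a reachable $\mathscr{C} \subseteq \domMaxVal$ with $\specrad(M^{\sigma', \tau}[\mathscr{C}]) \ge \mytop(\mgameval)$, which by monotonicity of the Perron root on principal submatrices lifts to an SCC $\mathscr{C}'$ of $\dgraph^{\sigma', \tau}$ containing $\mathscr{C}$ and having $\specrad(M^{\sigma', \tau}[\mathscr{C}']) \ge \mytop(\mgameval)$. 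For $d \in \vertexsetD \cap \set_2$, I distinguish two subcases according to whether the set $R_d$ reachable from $d$ in $\dgraph^{\sigma', \tau}$ meets $\domMaxVal$: if so, apply the previous argument at any reached state of $\domMaxVal$ to get a reachable SCC with spectral radius $\ge \mytop(\mgameval) \ge \mgameval_d$; otherwise, every $k \in R_d$ must satisfy $\sigma'(k) \notin \vertexsubsetit[T]{\domMaxVal}$ (for if $\sigma'(k) \in \vertexsubsetit[T]{\domMaxVal}$ then $\tau_1(\sigma'(k)) \in \vertexsubsetit[P]{\domMaxVal}$ would have an edge into $\domMaxVal$, contradicting $R_d \cap \domMaxVal = \emptyset$), so the portion of $M^{\sigma', \tau}$ indexed by $R_d$ matches the corresponding portion of an ambiguity matrix of the subgame on $\set_2$ against $\tau_2$, and the optimality of $\tau_2$ combined with the value identity from stage~two yields the bound $\ge \mgameval^{\set_2}_d = \mgameval_d$. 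The main obstacle is this last subcase: the absence of block structure in $M^{\sigma', \tau}$ forces the reachability-based case analysis, and the monotonicity of the Perron root on principal submatrices is the essential tool for transferring spectral bounds from the subgames to the full game.
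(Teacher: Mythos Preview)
Your proof is correct, and it proceeds along the same overall lines as the paper's—both arguments rest on \cref{le:good_entropy} for the structural facts and on \cref{prop-rothblum} for the spectral characterization of values—but the organization differs in a way worth noting.

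The paper argues optimality via \emph{optimal responses}: to show $\tau$ is optimal, it fixes a best response $\bar{\sigma}$ to $\tau$ and proves $\gameval^{\bar{\sigma},\tau} \ge \gameval$. The key step is that $\bar{\sigma}(k) \notin \vertexsubsetit[T]{\domMaxVal}$ for $k \notin \domMaxVal$, obtained by contradiction from the optimality of $\bar{\sigma}$ itself (if $\bar{\sigma}(k) \in \vertexsubsetit[T]{\domMaxVal}$ then a dipath into $\domMaxVal$ would force $\gameval^{\bar{\sigma},\tau}_k \ge \mytop(\gameval)$, contradicting $\gameval^{\bar{\sigma},\tau}_k \le \gameval_k$). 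This replaces your reachability case analysis by a single clean dichotomy. Your approach instead works directly against an \emph{arbitrary} opponent strategy, makes the block-triangular structure of the ambiguity matrix explicit, and uses monotonicity of the Perron root on principal submatrices to transfer bounds between the full game and the subgames. Both routes are valid; the paper's is more concise, yours is more transparent about why the spectral bounds propagate.

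One expository gap to tighten: in your second stage, the sentence ``combined with the optimality of $(\sigma_2,\tau_2)$ this yields $\mgameval^{\set_2}_d = \mgameval_d$'' does not quite close the argument. Exhibiting a single pair of subgame strategies realizing $\mgameval_d$ does not by itself pin down the subgame value. What you need (and what your trajectory-confinement observation actually gives) is that $\sigma^*|_{\set_2}$ guarantees $\le \mgameval_d$ against \emph{every} subgame $\tau'$, and $\tau^*|_{\set_2}$ guarantees $\ge \mgameval_d$ against every subgame $\sigma'$, by extending the opponent's strategy arbitrarily to the full game and using that play from $d$ stays in $\set_2$ regardless. This is a one-line fix, but the optimality of $(\sigma_2,\tau_2)$ plays no role in it.
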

\begin{proof}%
The game on $\dgraph[\set_1]$ satisfies \cref{as:entropy_nondeg} by \cref{le:good_entropy,le:entropy_domin_nondeg}. The other game satisfies this assumption by the definition of the set $\set_2$. Indeed, if $p \in \vertexsetP \cap \set_2$, then all the outgoing edges of $p$ go to $\vertexsetD \setminus \domMaxVal$, so they are in $\dgraph[\set_2]$. Likewise, if $t \in \vertexsetT \cap \set_2$, then all the outgoing edges of $t$ go to $\vertexsetP \setminus \vertexsubsetit[P]{\domMaxVal}$, so they are in $\dgraph[\set_2]$. If $k \in \vertexsetD \cap \set_2$, then \cref{le:good_entropy_aux} shows that $k$ has an outgoing edge that goes to $\vertexsetT \setminus \vertexsubsetit[T]{\domMaxVal}$, so this edge is in $\dgraph[\set_2]$.

To prove the second part of the claim, let $\hat{\sigma}, \hat{\tau}$ be any pair of optimal strategies of Despot and Tribune in the original game and let $\gameval$ be the escape rate of the operator $\shapley$. Let $\gameval^{\sigma,\tau}$ be the escape rate of the operator obtained by fixing $(\sigma,\tau)$. We use analogous notation for other pairs of strategies. By \cref{le:good_entropy_aux} for every $k \in \vertexsetD \setminus \domMaxVal$ we have $\hat{\sigma}(k) \notin \vertexsubsetit[T]{\domMaxVal}$. Hence, \cref{prop-rothblum} combined with the optimality of $\hat{\sigma},\tau_2,\sigma_2,\hat{\tau}$ gives the inequality $\gameval_k \ge \gameval^{\hat{\sigma},\tau}_k \ge \gameval^{\sigma,\tau}_k \ge \gameval^{\sigma,\hat{\tau}}_k \ge \gameval_k$. Therefore, $\gameval^{\sigma,\tau}_k = \gameval_k$ for all such $k$.

Let $\bar{\sigma}$ be an optimal response to $\tau$, i.e., an optimal strategy of Despot in the game in which Tribune plays according to $\tau$. To prove the optimality of $\tau$ it is enough to show that $\gameval^{\bar{\sigma},\tau} \ge \gameval$. Consider the game obtained by fixing $(\bar{\sigma},\tau)$. Note that if we remove from the graph $\dgraph^{\bar{\sigma},\tau}$ the edges that go from $\vertexsubsetit[P]{\domMaxVal}$ to $\vertexsetD \setminus \domMaxVal$, then the value of this game can only decrease.
Moreover, by \cref{le:entropy_domin,le:good_entropy} we have $\bar{\sigma}(k) \in \vertexsubsetit[T]{\domMaxVal}$ for all $k \in \domMaxVal$ and $\lcw(\shapley^{\domMaxVal}) = \ucw(\shapley^{\domMaxVal}) = \mytop(\gameval)$. Hence, the optimality of $\tau_1$ combined with \cref{prop-rothblum} give $\gameval^{\bar{\sigma},\tau}_k \ge \mytop(\gameval) = \gameval_k$ for all $k \in \domMaxVal$. Furthermore, we have $\bar{\sigma}(k) \notin \vertexsubsetit[T]{\domMaxVal}$ for all $k \in \vertexsetD \setminus \domMaxVal$. Indeed, suppose that $\bar{\sigma}(k) \in \vertexsubsetit[T]{\domMaxVal}$. Then, there is a dipath in $\dgraph^{\bar{\sigma},\tau}$ that goes from $k$ to $\domMaxVal$ and so $\gameval^{\bar{\sigma},\tau}_k \ge \mytop(\gameval)$ by the previous observation. However, $\bar{\sigma}$ is an optimal response to $\tau$ and therefore we have $\gameval^{\bar{\sigma},\tau}_k \le \gameval_k < \mytop(\gameval)$, which gives a contradiction. Thus,  $\bar{\sigma}(k) \notin \vertexsubsetit[T]{\domMaxVal}$ for all $k \in \vertexsetD \setminus \domMaxVal$. In particular, the optimality of $\tau_2$ gives $\gameval^{\bar{\sigma},\tau}_k \ge \gameval^{\sigma,\tau}_k = \gameval_k$.

Analogously, let $\bar{\tau}$ be an optimal response to $\sigma$ and consider the game obtained by fixing $(\sigma, \bar{\tau})$. The optimality of $\sigma_2$ gives $\gameval^{\sigma,\bar{\tau}}_k \le \gameval^{\sigma,\tau}_k = \gameval_k$ for all $k \in \vertexsetD \setminus \domMaxVal$. Fix $k \in \domMaxVal$ and let $\mathscr{C}$ be any strongly connected component of $\dgraph^{\sigma,\bar{\tau}}$ that can be reached form $k$ in this graph. If $\mathscr{C}$ contains a vertex from $\vertexsetD \setminus \domMaxVal$, then the previous observation combined with \cref{prop-rothblum} gives $\specrad(M^{\sigma,\bar{\tau}}[\mathscr{C}]) < \mytop(\gameval)$. If $\mathscr{C}$ does not contain any such vertex, then it is included in $\set_1$. Hence, the optimality of $\sigma_1$ combined with \cref{prop-rothblum} and the equality $\lcw(\shapley^{\domMaxVal}) = \ucw(\shapley^{\domMaxVal}) = \mytop(\gameval)$ give $\specrad(M^{\sigma,\bar{\tau}}[\mathscr{C}]) \le \mytop(\gameval)$. Hence, $\gameval^{\sigma,\bar{\tau}}_k \le \mytop(\gameval) = \gameval_k$ for all $k \in \domMaxVal$ and $\sigma$ is optimal. 
\end{proof}

Then, by combining \cref{th-entropy} and \cref{le:decomposition}, we get:
\begin{theorem}\label{cor-fp}
  A pair of optimal strategies of an entropy game of rank $r$
  can be found in $O(n^2R_{n,r}\hat{\nu}_{n,r})$ calls to an oracle
  that returns $\shapley$ with a precision of $1/(16\hat{\nu}_{n,r})$.
  Then, entropy games in the original model of Asarin et al.~\cite{asarin_entropy} and with a fixed rank are polynomial-time solvable, whereas entropy games with weights, in the model of Akian et al.~\cite{entropygamejournal}, and with a fixed rank, are pseudo-polynomial time solvable.
\end{theorem}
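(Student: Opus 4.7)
The plan is to prove \cref{cor-fp} by induction on $|\vertexsetD|$, using \cref{th-entropy,prop-ent-cstvalue} as the two algorithmic ingredients and gluing their outputs via the decomposition property of \cref{le:decomposition}.

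At each level of the recursion, I would first invoke \cref{th-entropy} to identify the set $\domMaxVal$ of Despot's states of maximal value, at the cost of $O(nR_{n,r}\hat{\nu}_{n,r})$ oracle calls. By \cref{le:good_entropy}, $\domMaxVal$ is a nonempty dominion of $\shapley$ and the restricted operator $\shapley^{\domMaxVal}$ has a constant escape rate equal to $\ucw(\shapley)$. Together with \cref{le:entropy_domin,le:entropy_domin_nondeg}, this means that the subgame supported on $\set_1 \coloneqq \domMaxVal \dunion \vertexsubsetit[T]{\domMaxVal} \dunion \vertexsubsetit[P]{\domMaxVal}$ is a bona fide entropy game satisfying \cref{as:entropy_nondeg} and with value independent of the initial state. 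I would then apply \cref{prop-ent-cstvalue} to this subgame to extract optimal positional strategies $(\sigma_1,\tau_1)$, at an additional cost of $O(nR_{n,r}\hat{\nu}_{n,r})$ oracle calls.

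If $\set_2 \coloneqq \vertexset \setminus \set_1$ is nonempty, the decomposition property \cref{le:decomposition} ensures that the induced subgraph $\dgraph[\set_2]$ defines an entropy game that still satisfies \cref{as:entropy_nondeg} on a strictly smaller Despot state space. I would recurse on this residual game to obtain optimal strategies $(\sigma_2,\tau_2)$, and \cref{le:decomposition} then certifies that the piecewise-defined strategies from~\cref{eq:decomposition} are optimal for the whole game; if instead $\set_2=\emptyset$, the pair $(\sigma_1,\tau_1)$ already solves the game. A key point to check along the recursion is the rank-preservation fact: every ambiguity matrix of a subgame is a principal submatrix of an ambiguity matrix of the original game (obtained by lifting the subgame strategies arbitrarily outside $\set_2$), and the rank of a principal submatrix is bounded by that of the ambient matrix. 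Hence \cref{th-entropy,prop-ent-cstvalue} may be reapplied at each level with the same rank parameter $r$, and since the recursion has depth at most $n$ and $R_{n',r},\hat{\nu}_{n',r}$ are non-decreasing in $n'$, the total number of oracle calls telescopes to $O(n^2R_{n,r}\hat{\nu}_{n,r})$.

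The complexity-class consequences follow immediately. By \cref{rk:borwein}, one approximate evaluation of $\shapley$ with precision $1/(16\hat{\nu}_{n,r})$ can be performed in time polynomial in the input size and in $\log\hat{\nu}_{n,r}$. With $r$ fixed, the explicit formula of \cref{th-sepentropy} together with the identity $\hat{\nu}_{n,r}=n\Weights\nu_{n,r}$ shows that $\hat{\nu}_{n,r}$ is polynomial in $n$ and $\Weights$, while its logarithm is polynomial in the input size. In the Asarin et al.\ model all multiplicities equal $1$, so $\Weights=1$ and the full algorithm runs in polynomial time; in the Akian et al.\ weighted model, $\Weights$ is encoded in binary and $\hat{\nu}_{n,r}$ grows polynomially in $\Weights$, yielding a pseudo-polynomial-time procedure. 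The main obstacle in the whole plan is the rank-preservation step, but as noted above it reduces to the elementary observation that principal submatrices do not increase rank.
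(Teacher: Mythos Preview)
Your proposal is correct and follows essentially the same approach as the paper's own proof sketch: find $\domMaxVal$ via \cref{th-entropy}, solve the constant-value subgame on $\set_1$ via \cref{prop-ent-cstvalue}, then recurse on $\set_2$ and glue via \cref{le:decomposition}. You also supply two details the paper leaves implicit---the rank-preservation step (ambiguity matrices of the subgame are principal submatrices of ambiguity matrices of the full game) and the monotonicity of $R_{n',r},\hat{\nu}_{n',r}$ in $n'$---both of which are correct and indeed needed for the recursion to go through with the stated bound.
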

\begin{proof}%
  We solve entropy games by the following algorithm. We find the states with maximal value using \TopClass{}, find optimal strategies on this dominion using \cref{prop-ent-cstvalue}, we split the game in two as in the decomposition property (\cref{le:decomposition}) and continue recursively on the smaller game. The bound on the number of calls to the oracle follows
from \cref{th-entropy}, \cref{le:decomposition,prop-ent-cstvalue}. Since the approximating oracle can be constructed in polynomial time as discussed in \cref{rk:borwein}, this algorithm can be implemented to work in polynomial memory. Furthermore, as in \cref{rk-compar-gurvich}, the time complexity of this algorithm is dominated by the number of calls to the oracle multiplied by the time cost of a single call. 
  \end{proof}

\begin{corollary}\label{cor-constpeople}
  Entropy games with weights and with a fixed number of People's positions are pseudo-polynomial
  time solvable.
\end{corollary}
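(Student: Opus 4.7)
The plan is to reduce to \cref{cor-fp} by showing that when the number of People's positions is fixed, the rank of the entropy game (in the sense of the maximal rank of an ambiguity matrix) is also fixed, and in fact bounded by $|\vertexsetP|$.

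The key observation is the following factorization of the ambiguity matrix. Fix strategies $\sigma\colon \vertexsetD\to \vertexsetT$ and $\tau\colon \vertexsetT\to \vertexsetP$. By \cref{def-ambiguity}, the $(k,l)$ entry of $M^{\sigma,\tau}$ is $m_{\tau(\sigma(k)),l}$ if $(\tau(\sigma(k)),l)\in\edges$, and $0$ otherwise. Thus the $k$th row of $M^{\sigma,\tau}$ depends on $k$ only through the composite map $\tau\circ\sigma\colon \vertexsetD\to \vertexsetP$. Concretely, if we let $N\in\R_{\geq 0}^{\vertexsetP\times\vertexsetD}$ be the matrix whose row indexed by $p\in \vertexsetP$ equals $(m_{pl})_{l\in\vertexsetD}$ (with the convention that entries corresponding to missing edges are zero), and let $S^{\sigma,\tau}\in\{0,1\}^{\vertexsetD\times\vertexsetP}$ be the matrix with $S^{\sigma,\tau}_{k,p}=1$ iff $p=\tau(\sigma(k))$, then
\[
M^{\sigma,\tau} = S^{\sigma,\tau}\, N.
\]
Since $S^{\sigma,\tau}$ has $|\vertexsetP|$ columns and $N$ has $|\vertexsetP|$ rows, this factors $M^{\sigma,\tau}$ through a space of dimension $|\vertexsetP|$, so $\operatorname{rank}(M^{\sigma,\tau})\leq |\vertexsetP|$.

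Consequently, the rank $r$ of the entropy game satisfies $r\leq |\vertexsetP|$. If $|\vertexsetP|$ is fixed, then $r$ is fixed, and the conclusion follows directly from \cref{cor-fp}: entropy games with weights and fixed rank are pseudo-polynomial time solvable. The main conceptual step is just the rank bound above; no new complexity analysis is required, as all bit-complexity estimates are already absorbed in \cref{cor-fp} through the quantities $\nu_{n,r}$, $\hat\nu_{n,r}$ and $R_{n,r}$, all of which become polynomial in $n$ and $\Weights$ once $r$ is treated as a constant.
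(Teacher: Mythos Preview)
Your proof is correct and follows exactly the same approach as the paper: the paper's proof simply states that the rank of an entropy game is bounded by the number of People's states and then invokes \cref{cor-fp}. Your explicit factorization $M^{\sigma,\tau}=S^{\sigma,\tau}N$ is a clean justification of that rank bound (equivalently, one can note that the rows of $M^{\sigma,\tau}$ depend on $k$ only through $\tau(\sigma(k))\in\vertexsetP$, so there are at most $|\vertexsetP|$ distinct rows).
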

\begin{proof}%
  The claim follows from \cref{cor-fp} by noting that the rank of an entropy game is bounded by the number People's states. 
  \end{proof}

\subsection{A Lower Complexity Bound for Value Iteration for Entropy Games}
  We say that a state of a game is {\em significant} if there are several
  options in this state, in particular, a state $p$ of People is significant
  if there are at least two distinct arcs $(p,k)$ and $(p,l)$ in $\edges$. We say that an entropy game is {\em Despot-free} (resp. {\em Tribune-free}) when Despot (resp. Tribune) has no significant states.\todo{RK: I added the last definition.} 
  
  We may ask whether the statement of \cref{cor-constpeople} carries
  over to entropy games with a fixed number of significant People's states.
  The following result shows that this can not be derived
  from the universal value iteration bounds, since value
  iteration needs $\Omega(\Weights^{n-1})$ iterations
 to recognize the optimal strategy. 

\begin{theorem}\label{th-cex}
  There is a family of Despot-free entropy games $G_n(\Weights)$,
  and a constant $C>0$, with the following
  properties:
  \begin{enumerate}
\item $G_n(\Weights)$
  has arc weights $\leq \Weights$, only one significant Tribune's state,
  with two actions,
and $2n+1$ People's states among which
there are only $4$ significant states;
\item The action of Tribune that is optimal
  in the mean-payoff entropy game is never played,
  if Tribune plays optimally
  in the entropy game of finite horizon $k$, for all $k\leq C\Weights^{n-1}$.\todo{MS: The index ``$k$'' is used above for states of Despot. SG. Not easy to find an unused letter. Perhaps leave it as it like this (or use $h$?). RK: In \cref{sec-preliminaries} we used $N$ for the horizon}
  \end{enumerate}
\end{theorem}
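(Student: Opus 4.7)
\emph{Plan.} The plan is to exhibit an explicit family $(G_n(\Weights))_{n \ge 2}$ of Despot-free entropy games meeting the structural constraints, along with an initial Despot state $d^\star$ and two positional Tribune strategies $\tau_1, \tau_2$ whose ambiguity matrices $M^{\sigma,\tau_1}, M^{\sigma,\tau_2}$ have Perron roots $\rho_1 < \rho_2$, so that $\tau_2$ is mean-payoff optimal, yet $[(M^{\sigma,\tau_1})^k \unitvector]_{d^\star} > [(M^{\sigma,\tau_2})^k \unitvector]_{d^\star}$ for every $k \le C\Weights^{n-1}$. This last inequality forces finite-horizon value iteration, started from $\unitvector$, to keep selecting the suboptimal action $\tau_1$ throughout that range.

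\emph{Construction.} Take $n$ Despot states $d_1,\dots,d_n$ with $d^\star = d_1$, one significant Tribune state $t^\star$ reached from $d_1$, and the fixed chain $d_2 \to d_3 \to \cdots \to d_n \to d_1$ of unit-weight arcs realized through non-significant Tribune and People relay positions. The two actions at $t^\star$ lead to two significant People positions $p_1, p_2$, each carrying several outgoing arcs. Specifically, $p_1$ encodes a chord routing $d_1$ into the long chain (creating a cycle of length $n$ through $d_1$), while $p_2$ encodes a chord creating a much shorter cycle (e.g.\ length $2$) through $d_1$. Together with two further significant People positions on the two return branches, this realizes the required four significant People positions, the non-significant relays filling up to $2n+1$ People positions in total. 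Integer arc weights $\le \Weights$ are chosen so that the characteristic polynomials are \emph{near-unimodular} at $\lambda = \Weights$, i.e.\ $|\det(\Weights I - M^{\sigma,\tau_i})|$ is bounded by a small constant. By Newton's method this forces $\rho_i = \Weights + a_i \Weights^{-(\ell_i - 1)} + O(\Weights^{-\ell_i})$, where $\ell_i$ is the length of $\tau_i$'s induced cycle through $d_1$, so that $\rho_2 > \rho_1$ (the shorter cycle gives the larger boost) and $\tau_2$ is mean-payoff optimal.

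\emph{Analysis.} Setting $V_i^k := [(M^{\sigma,\tau_i})^k \unitvector]_{d^\star}$ and using the spectral decomposition, $V_i^k = c_i \rho_i^k + O(\rho_i^k \theta^k)$ for some $\theta < 1$, where $c_i$ is the projection of $\unitvector$ onto the Perron direction at $d^\star$. The crossover time is then $k^\star = \log(c_1/c_2)/\log(\rho_2/\rho_1)$, and the construction is tuned so that $k^\star = \Theta(\Weights^{n-1})$, after which $V_2$ finally overtakes $V_1$. The uniform inequality $V_1^k > V_2^k$ for every $k \le k^\star$ (and not merely asymptotically) is proved by induction on $k$ using the linear recurrences satisfied by $V_i^k$, coming from the characteristic polynomials of $M^{\sigma,\tau_i}$, or equivalently by a term-by-term comparison of the Taylor coefficients of the rational generating functions $(1 - \Weights z - z^{\ell_i})^{-1}$ associated to the two cycles.

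\emph{Main obstacle.} The main difficulty is the joint calibration of the Perron-root gap, the initial projection ratio $c_1/c_2$, and the structural bound of four significant People positions. The near-unimodular condition on the characteristic polynomials is delicate to enforce with integer entries $\le \Weights$ on such a constrained graph, and the precise choice of chord weights must simultaneously control the Perron gap and the Perron vectors (hence the coefficients $c_i$) so that the ratio of logarithms lands exactly at $\Theta(\Weights^{n-1})$. A secondary difficulty is to verify the short-horizon inequality $V_1^k > V_2^k$ in the pre-asymptotic regime where the spectral expansion is not yet valid; this is handled by the generating-function comparison described above, which reduces the question to a combinatorial identity between the coefficients of $(1 - \Weights z - z^{\ell_1})^{-1}$ and $(1 - \Weights z - z^{\ell_2})^{-1}$.
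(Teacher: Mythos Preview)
Your proposal is a plan rather than a proof, and it has a genuine gap at the point you yourself flag as the ``main obstacle.'' With cycles of lengths $\ell_1=n$ and $\ell_2=2$ and characteristic polynomials of the shape $x^{\ell}-\Weights x^{\ell-1}-1$, the Perron roots satisfy $\rho_i\approx \Weights+\Weights^{-(\ell_i-1)}$, so the gap is $\rho_2-\rho_1\approx \Weights^{-1}$ and $\log(\rho_2/\rho_1)\approx \Weights^{-2}$. With $c_1/c_2$ of constant order this yields a crossover time $k^\star=\Theta(\Weights^{2})$, not $\Theta(\Weights^{n-1})$. To land at $\Theta(\Weights^{n-1})$ you would need the two spectral radii to differ only at order $\Weights^{-(n-2)}$, which forces the two ``cycle lengths'' to be $n$ and $n-1$ (or something equivalent), contradicting your ``much shorter cycle (e.g.\ length $2$)'' choice. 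The shared-chain topology you describe does not obviously accommodate this while keeping exactly four significant People positions, and you do not explain how it would. Likewise, the claim that the pre-asymptotic inequality $V_1^k>V_2^k$ follows from a ``term-by-term comparison'' of $(1-\Weights z-z^{\ell_i})^{-1}$ is asserted but not argued; for the parameter range you need, this comparison is not automatic.

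The paper takes a different and more explicit route that sidesteps all of this calibration. Tribune's two options lead to two \emph{disjoint} sub-games whose ambiguity matrices are the companion matrices $A_m(\Weights)$ of $p_m(x)=x^m-\Weights(x^{m-1}+\cdots+1)$, for $m=n$ and $m=n-1$; the initial bias toward the suboptimal branch is produced by a single constant weight $\alpha=8$ on the arcs entering the $A_{n-1}$ block. A Puiseux expansion gives $\lambda_m=\Weights+1-\Weights^{-(m-1)}+o(\Weights^{-(m-1)})$, so $\log(\lambda_n/\lambda_{n-1})\sim \Weights^{-(n-1)}$. The pre-asymptotic issue is handled not by generating functions but by a short lemma bounding the left Perron eigenvector of $A_m(\Weights)$ entrywise: $\tfrac12 u_1\le u_i\le u_1$. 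This immediately gives the two-sided estimate $\tfrac{m}{2}\lambda_m^k\le \unitvector_m^\top A_m^k\unitvector_m\le 2m\lambda_m^k$, valid for \emph{every} $k\ge 0$, from which the crossover bound $k^\star\ge (\log 2)\Weights^{n-1}+o(\Weights^{n-1})$ follows by elementary algebra. The state count works out because the two blocks contribute $n+1$ and $n$ People states respectively (one per matrix row, plus one encoding each starting row vector), with exactly two significant People states in each block.
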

To construct this game, we need an estimate of
the positive root of a special polynomial $p_n$.
\begin{proposition}\label{prop-puiseux}
  Consider the polynomial $p_n(x)=x^n-\Weights(x^{n-1}+\dots +x + 1)$, where $\Weights \geq 1$.
  Then, for all $n\geq 1$, $p_n$ has a unique positive
  root, $x_n(\Weights)$. Moreover, $x_{n-1}(\Weights)<x_n(\Weights)<W+1$ and
  \[
  x_n(\Weights)=\Weights+1-1/\Weights^{n-1}+ o(1/\Weights^{n-1}) \enspace ,
  \qquad \text{as }\Weights\to \infty \enspace .
  \]
\end{proposition}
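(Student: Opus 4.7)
The plan is to pass to the cleaner polynomial
\[ (x-1)p_n(x) = x^{n+1} - (\Weights+1)x^n + \Weights =: q_n(x), \]
which follows at once from the telescoping identity $x^{n-1}+\cdots+1 = (x^n-1)/(x-1)$. Uniqueness of the positive root of $p_n$ is then immediate from Descartes' rule of signs: the coefficient sequence $(1,-\Weights,\ldots,-\Weights)$ of $p_n$ exhibits exactly one sign change. Existence follows either from the intermediate value theorem applied to $q_n$, or from the observation that $p_n(x)\to +\infty$ as $x\to+\infty$ while $p_n(0)=-\Weights<0$.

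Next I would localize $x_n(\Weights)$ by direct evaluation: one checks $q_n(\Weights)=-\Weights^n+\Weights<0$ and $q_n(\Weights+1)=\Weights>0$ as soon as $\Weights>1$. Since $p_n(1)=1-n\Weights\neq 0$ for $\Weights$ large, the root $x=1$ of $q_n$ is spurious, and the unique positive root of $p_n$ lies in $(\Weights,\Weights+1)$.

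For the asymptotic expansion, I would write $x_n(\Weights)=\Weights+1-\epsilon$ with $\epsilon\in(0,1)$. Rewriting $q_n(x)=x^n(x-(\Weights+1))+\Weights=0$ yields the fixed-point relation
\[ \epsilon = \frac{\Weights}{x^n}. \]
The coarse bound $\Weights<x<\Weights+1$ gives $\epsilon = O(1/\Weights^{n-1})$. Substituting this back and expanding $(\Weights+1-\epsilon)^n = (\Weights+1)^n\bigl(1+O(\epsilon/\Weights)\bigr)$, followed by $\Weights/(\Weights+1)^n = \Weights^{-(n-1)}(1+1/\Weights)^{-n} = \Weights^{-(n-1)}+O(\Weights^{-n})$, one obtains $\epsilon = \Weights^{-(n-1)}+O(\Weights^{-n})$, which is the claimed expansion.

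There is no real obstacle here: the argument is elementary, and the only mild point to be careful about is feeding the crude $O(1/\Weights^{n-1})$ bound for $\epsilon$ back into the fixed-point equation to extract the correct leading constant, which amounts to noting that $(1+O(\Weights^{-1}))$ correction factors do not disturb the $\Weights^{-(n-1)}$ main term.
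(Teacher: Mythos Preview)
Your proof is correct and, in fact, more elementary than the paper's. Both arguments use Descartes' rule for uniqueness and ultimately rely on the same identity $x^n(x-\Weights-1)+\Weights=0$ (your $q_n(x)=0$) to extract the $\Weights^{-(n-1)}$ correction via a bootstrap. The difference lies in how the leading-order behaviour is obtained: the paper invokes the Newton--Puiseux theorem and a Newton polygon analysis to conclude first that $x_n(\Weights)=\Weights+o(\Weights)$, and only then substitutes into the identity. You bypass this entirely by evaluating $q_n$ at $\Weights$ and $\Weights+1$, which localizes the root to $(\Weights,\Weights+1)$ directly by the intermediate value theorem --- a sharper starting point obtained with far less machinery. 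Your explicit fixed-point iteration $\epsilon = \Weights/x^n$ then makes the bootstrap transparent. The Newton--Puiseux approach has the virtue of being systematic (it would mechanically produce further terms of the expansion), but for the statement at hand your direct argument is both shorter and self-contained.
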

\begin{proof}%
  Recall that Descartes' rule of sign states
  that the number of positive roots of a polynomial
  is bounded by the number of variations of signs
  of the sequence of its coefficients, and that
  it is equal modulo $2$ to this number.
  It follows that for all $n\geq 1$, and
  for all $\Weights\geq 1$,
  $p_n$ has a unique positive root, $x_n(W)$;
moreover, $p_n(W+1)=1>0$, which 
  entails that $x_n(W)<W+1$. Furthermore,
  $p_n(x_{n-1}(W))=x_{n-1}(W)^{n-1}(x_{n-1}(W)-W-1)<0$
  entails that $x_{n}(W)>x_{n-1}(W)$.
Let us define the Newton polygon $\Delta$ as
  the upper boundary of the convex hull of the points $(i,j)$ such that
  $p_n$ has a monomial of degree $i$ in $x$ and of degree $j$ in $\Weights$.
  Then, the Newton--Puiseux theorem~\cite{walker_algebraic_curves}
  shows that all the roots of $p_n$ have Puiseux series expansions,
  with a leading term $a\Weights^\alpha + o(\Weights^\alpha)$ where
  $\alpha$ is the opposite of the slope of an edge
  of the Newton polygon $\Delta$ of $p_n$, and $a \in \mathbb{C}\setminus \{0\}$. Moreover, the horizontal width of an edge determines
  the number of roots of order $\Weights^\alpha$, counted with multiplicities.
  Here, the monomials $-\Weights,\dots,-\Weights x^{n-1}$ determine the edge $[(0,1),(n-1,1)]$,
which is of slope $0$, and has horizontal width $n-1$.
  The monomials $-\Weights x^{n-1}$ and $x^n$ determine the edge
  $[(n-1,1),(n,0)]$, which is of slope $-1$ and has horizontal width $1$. 
So there are $n-1$ roots $x_i(\Weights)= a_i \Weights^0+o(\Weights^0)=a_i+o(1)$ as $\Weights\to \infty$,
with $1\leq i\leq n-1$, and there is one root $x_n(\Weights)=a_n \Weights^1 + o(\Weights)$.
Substituting $x_i(\Weights)$ in $p_n(x_i(\Weights))=0$, for $1\leq i\leq n-1$,
we find that $a_i$ must be a root of $1+\dots+x^{n-1}$.
Thus, $a_i$ cannot be positive for $i<n$. So, the unique positive root of $p_n$
is $x_n(\Weights)=a_n\Weights +o(\Weights)$. Substituting
  $x_n(\Weights)$ in $p_n(x_n(\Weights))=0$, and cancelling negligible terms,
  we get $a_n=1$. Hence, we can write $x_n(\Weights)=\Weights + y(\Weights)$, where
  $y(\Weights)=o(\Weights)$. 
   Moreover, using
 $(x^{n-1}+\dots + 1)=(x^n-1)/(x-1)$,
 $p_n(x)=0$ can be rewritten
 as $x^n(x-\Weights-1)+\Weights=0$. Substituting $x_n(\Weights)=\Weights+y(\Weights)$,
 we end up with $y(\Weights)=1- \Weights/x_n(\Weights)^n =1 -1/\Weights^{n-1}+ o(1/\Weights^n)$,
 hence,
 $x_n(\Weights)=1+\Weights-1/\Weights^{n-1}+o(1/\Weights^{n-1})$. 
\end{proof}

We denote by
\[ A_n(\Weights)
= \left(\begin{array}{ccccc}\Weights & \dots & \dots & &\Weights \\ 1 & 0 & \dots & \dots &0\\
  0& \ddots & \ddots & &\vdots\\
\vdots & \ddots & \ddots & \ddots&\vdots \\
0   &\dots & 0 & 1 &0
  \end{array}
\right)
\]
the $n\times n$ companion matrix of the polynomial
$p_n$ considered in \cref{prop-puiseux}. 
\begin{lemma}\label{lemma-perroneig}
  The left Perron eigenvector $u^n$ of the matrix $A_n(\Weights)$
  satisfies
  \[
  \frac{1}{2} u_1^n \leq u_i^n \leq u_1^n \enspace \text{for all }2\leq i\leq n
  \enspace .
  \]
\end{lemma}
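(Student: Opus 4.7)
The plan is to reduce the left-eigenvector equation $(u^n)^{\top} A_n(W) = \lambda (u^n)^{\top}$, where $\lambda = x_n(W)$ is the Perron root, to a scalar linear recurrence, then solve it in closed form by exploiting the characteristic identity $p_n(\lambda)=0$. Bounding the entries of $u^n$ then reduces to controlling a simple rational function of $\lambda$ within the band $W<\lambda<W+1$, which follows immediately from $p_n(W)<0<p_n(W+1)$.

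Reading $(u^n)^{\top} A_n(W) = \lambda (u^n)^{\top}$ column by column, the structure of $A_n(W)$ produces the system
\[
W u_1 + u_{k+1} = \lambda u_k \quad (1\le k\le n-1), \qquad W u_1 = \lambda u_n.
\]
Since $u_1=0$ would propagate through the recurrence to give $u\equiv 0$, I may normalize by setting $v_k \coloneqq u_k^n/u_1^n$ and reduce to the scalar recurrence $v_1 = 1$, $v_{k+1} = \lambda v_k - W$ for $k=1,\dots,n-1$, together with the closing condition $v_n = W/\lambda$. The fixed point of the recurrence is $W/(\lambda-1)$, so $v_k = \alpha\lambda^{k-1} + W/(\lambda-1)$; imposing $v_1=1$ and eliminating $W/(\lambda-1)$ via the characteristic identity $\lambda^n(\lambda-1)=W(\lambda^n-1)$ gives $\alpha = -W/(\lambda^n(\lambda-1))$, and hence the closed form
\[
v_k = \frac{W}{\lambda-1}\bigl(1 - \lambda^{k-1-n}\bigr), \qquad 1\le k\le n.
\]
One may then check that the closing condition $v_n=W/\lambda$ is automatically satisfied; this consistency is equivalent to $p_n(\lambda)=0$.

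The bounds on $v_k$ become transparent consequences of $W<\lambda<W+1$ (which itself follows from $p_n(W)=-W(W^{n-2}+\dots+1)<0$ and $p_n(W+1)=1>0$). For the upper bound I compare $v_k$ to $v_1=1$:
\[
v_k - v_1 = \frac{W}{\lambda^n(\lambda-1)}\bigl(1-\lambda^{k-1}\bigr)<0 \quad \text{for } k\ge 2,
\]
since $\lambda>1$. For the lower bound, $k\le n$ gives $\lambda^{k-1-n}\le \lambda^{-1}$, hence
\[
v_k \ge \frac{W}{\lambda-1}\Bigl(1-\frac{1}{\lambda}\Bigr) = \frac{W}{\lambda} > \frac{W}{W+1} \ge \frac{1}{2},
\]
using $W\ge 1$ (multiplicities being positive integers). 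The only subtle point, and the step I would most carefully double-check, is that $W/(\lambda-1)>1$ here, so the upper bound cannot be obtained by dominating $v_k$ by the fixed point of the recurrence; instead one must exploit the sign of $\alpha$ together with the monotonicity $\lambda^{k-1}>1$ for $k\ge 2$. No genuine obstacle is expected; once the closed form is in hand, the rest is a brief computation.
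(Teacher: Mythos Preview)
Your proof is correct. Both you and the paper start from the same left-eigenvector system
\[
Wu_1+u_{k+1}=\lambda u_k\ (1\le k\le n-1),\qquad Wu_1=\lambda u_n,
\]
and both use the bracketing $W<\lambda<W+1$ (from $p_n(W)<0<p_n(W+1)$) together with $W\ge 1$. The difference is methodological: the paper never solves the recurrence. For the lower bound it simply observes $\lambda u_k\ge Wu_1$, hence $u_k\ge (W/\lambda)u_1\ge \tfrac12 u_1$; for the upper bound it runs a one-line induction, $u_2=(\lambda-W)u_1\le u_1$, and if $u_k\le u_1$ then $u_{k+1}=\lambda u_k-Wu_1\le(\lambda-W)u_1\le u_1$. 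You instead derive the closed form $v_k=\tfrac{W}{\lambda-1}(1-\lambda^{k-1-n})$ via the characteristic identity $\lambda^n(\lambda-1)=W(\lambda^n-1)$, and then read off both bounds from the sign of $1-\lambda^{k-1}$ and the monotonicity $\lambda^{k-1-n}\le\lambda^{-1}$. Your route is a bit more algebra but yields an explicit formula for the eigenvector entries, which could be handy elsewhere; the paper's route is shorter and avoids invoking $p_n(\lambda)=0$ beyond localizing $\lambda$.
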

\begin{proof}
Noting that $p_n(\Weights+1)>0>p_n(\Weights)$,
and using the intermediate value theorem,
we deduce that the unique positive root $x_n(\Weights)$ of $p_n$ satisfies
$\Weights<x_n(\Weights)<\Weights+1$. Since $A_n(W)$ is a companion matrix,
its characteristic polynomial is precisely $p_n$, and so,
the Perron root $\lambda$ of $A_n(\Weights)$ coincides with $x_n(\Weights)$.
Since $A_n(W)$ is irreducible, it has a (left) positive eigenvector $u$, the Perron eigenvector, which is unique up to a multiplicative constant.
  We have $\lambda u_1=\Weights u_1+u_2$,
  $\lambda u_2=\Weights u_1+u_3$,\dots,
  $\lambda u_{n-1}=\Weights u_1+u_n$,
  $\lambda u_n = \Weights u_1$.
  Hence,
  $u_i\geq u_1\Weights/\lambda\geq u_1 \Weights/(\Weights+1)\geq u_1/2$,
  for all $2\leq i\leq n$.
  From $\lambda u_1 = \Weights u_1+u_2$, we deduce
  that $u_2 = (\lambda -\Weights)u_1 \leq u_1$.
  Then, from $\lambda u_2 = \Weights u_1 +u_3$,
  we deduce that $u_3 = \lambda u_2 -\Weights u_1 \leq \lambda u_1 -\Weights u_1 \leq u_1$.
  Continuing in this way, we get that $u_i \leq u_1$ for all $2\leq i\leq n$. 
\end{proof}

The proof of \cref{th-cex} also relies on the following lemma.
We set
$\lambda_i\coloneqq x_i(\Weights)$ for all $i\geq 1$.
\begin{lemma}
  Let $\alpha>1$. Then, in the expression
  \[
z(k) = \max( \unitvector^\top_n A_n^k \unitvector_n, \alpha \unitvector^\top_{n-1}  A^k_{n-1}\unitvector_{n-1} )
\enspace,
\]
the maximum is achieved by the rightmost term, 
for all $k$ such that
\begin{align}
  k \leq  k^*\coloneqq \frac{ \log(\alpha (n-1)/(4n))}{\log(\lambda_n/\lambda_{n-1})} \enspace.
\label{e-stop}
\end{align}
\end{lemma}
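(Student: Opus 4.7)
The plan is to sandwich both competing quantities via the left Perron eigenvector bounds of \cref{lemma-perroneig}, applied to $A_n$ and to $A_{n-1}$ separately. Concretely, let $u^n$ and $u^{n-1}$ denote the left Perron eigenvectors of $A_n$ and $A_{n-1}$ respectively, each normalized so that its first entry equals $1$. \Cref{lemma-perroneig} (applied once to each matrix) then guarantees $u^m_i \in [1/2,1]$ for every index $i$ and $m \in \{n-1,n\}$, whence the two-sided inequality $\tfrac{1}{2}\unitvector_m \leq u^m \leq \unitvector_m$ and the pinching $m/2 \leq (u^m)^\top \unitvector_m \leq m$.

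From these bounds, combined with the left-eigenvalue identity $(u^m)^\top A_m^k = \lambda_m^k (u^m)^\top$, I would derive the two estimates that drive the proof. Using $\unitvector_n \leq 2u^n$ on the left and $(u^n)^\top \unitvector_n \leq n$ on the right,
\[
  \unitvector_n^\top A_n^k \unitvector_n \;\leq\; 2(u^n)^\top A_n^k \unitvector_n \;=\; 2\lambda_n^k\, (u^n)^\top \unitvector_n \;\leq\; 2n\,\lambda_n^k.
\]
Symmetrically, using $u^{n-1} \leq \unitvector_{n-1}$ on the left and $(u^{n-1})^\top \unitvector_{n-1} \geq (n-1)/2$ on the right,
\[
  \unitvector_{n-1}^\top A_{n-1}^k \unitvector_{n-1} \;\geq\; (u^{n-1})^\top A_{n-1}^k \unitvector_{n-1} \;=\; \lambda_{n-1}^k\, (u^{n-1})^\top \unitvector_{n-1} \;\geq\; \tfrac{n-1}{2}\lambda_{n-1}^k.
\]

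Plugging these into the target inequality $\alpha \unitvector_{n-1}^\top A_{n-1}^k \unitvector_{n-1} \geq \unitvector_n^\top A_n^k \unitvector_n$ reduces it to the sufficient condition $\alpha(n-1)\lambda_{n-1}^k/2 \geq 2n\lambda_n^k$, equivalently $(\lambda_n/\lambda_{n-1})^k \leq \alpha(n-1)/(4n)$. Since $\lambda_n > \lambda_{n-1}$ (evaluating $p_n$ at the positive root of $p_{n-1}$ gives $p_n(\lambda_{n-1}) = \lambda_{n-1}^{n-1}(\lambda_{n-1}-W-1)<0$, while $p_n(x)\to +\infty$ as $x\to +\infty$; and $\lambda_{n-1}<W+1$ follows from \cref{prop-puiseux}), the base of the logarithm is positive and rearranging yields exactly the threshold~\cref{e-stop}. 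The step requiring the most care is the bookkeeping of the twin factors of $2$ coming from the upper and lower halves of \cref{lemma-perroneig}: they have to combine into the factor $4$ in the denominator of $k^*$. Everything else follows by routine left-eigenvector dominance.
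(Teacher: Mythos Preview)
Your proof is correct and follows essentially the same approach as the paper: both arguments pin $\unitvector_m^\top A_m^k \unitvector_m$ between $\tfrac{m}{2}\lambda_m^k$ and $2m\lambda_m^k$ via the two-sided control of the left Perron eigenvector from \cref{lemma-perroneig}, and then reduce to the sufficient condition $2n\lambda_n^k \leq \tfrac{n-1}{2}\alpha\lambda_{n-1}^k$. The only cosmetic difference is the normalization (you fix $u_1^m=1$ so that $\tfrac12\unitvector_m\leq u^m\leq\unitvector_m$, whereas the paper takes $\unitvector_m\leq u^m\leq 2\unitvector_m$), and you add an explicit verification that $\lambda_n>\lambda_{n-1}$; note that the bound $\lambda_{n-1}<W+1$ you need there is more cleanly read off from the proof of \cref{lemma-perroneig} (where $p_m(W+1)=1>0$ is checked) than from the asymptotic statement of \cref{prop-puiseux}.
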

\begin{proof}
  The Perron eigenvector $u^n$ is defined only up
  to a positive multiplicative constant, so 
  by \cref{lemma-perroneig}, we may assume
  that $\unitvector_n^\top  \leq u^n \leq 2 \unitvector_n^\top $.
  Hence,
  \[\frac{n}{2}\lambda _n^k \leq  \frac{1}{2} \lambda_n^k u^n\unitvector_n=\frac{1}{2} u^n A_n^k \unitvector_n \leq \unitvector^\top_n A_n^k\unitvector_n \leq u^n A_n^k \unitvector_n = \lambda_n^k u^n\unitvector_n \leq 2n \lambda_n ^k
  \enspace .
  \]
  Hence, $\unitvector^\top_n A_n^k \unitvector_n \leq \alpha \unitvector^\top_{n-1}  A_{n-1}^k \unitvector_{n-1} $
  holds as soon as
  \( 2n\lambda_n^k \leq \frac{n-1}{2} \alpha \lambda_{n-1}^k \),
which is the case when~\cref{e-stop} holds. 
\end{proof}

\begin{figure}
\centering
\includegraphics{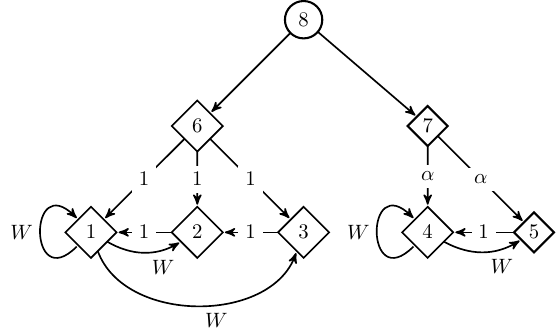}
\caption{The entropy game constructed in the proof of \cref{th-cex}, here for $n=3$. The diamonds represent People's states, whereas the circle represents the unique significant Tribune's state. The weights are indicated on the arcs. For $\alpha\gg 1$, when the horizon is small, it is optimal for Tribune to play ``bottom right'', whereas when the horizon is large, the optimal move of Tribune is ``bottom left''. \label{fig-entropy}}
\end{figure}

\begin{proof}[Proof of \cref{th-cex}.]
  We set $\alpha\coloneqq 12$, so that $\log(\alpha (n-1)/(4n))\geq \log 2$ for all $n\geq 3$.
Using the asymptotics of $\lambda_n=x_n(\Weights)$ given by \cref{prop-puiseux},
we find that
\begin{align}
   k^*=  (\log 2)\Weights^{n-1} +o(\Weights^{n-1}) \, .\label{e-stop2-appendix}
\end{align}
We now claim that $z(k)$ can be interpreted as the value in horizon
$k+1$ of an entropy game satisfying the conditions
of the theorem. \Cref{fig-entropy} illustrates
the proof when $n=3$.

  First, we note that the term $\unitvector^\top_n A_n^k \unitvector_n$ can be interpreted
  as the value in horizon $k+1$ of a Despot-free and Tribune-free
  entropy game, with $n+1$ People's states,
  among which there are only two significant states,
  one encoding the first row of $A_n$, and another one
  encoding the row vector $\unitvector^\top_n$. Recall every turn
  of an entropy game involves a succession of three stages,
  with moves made by Despot, Tribune, and People.
  So, this interpretation
  requires to insert dummy states of Despot and Tribune for each
  transition between People states. E.g., in the figure, in which dummy states
  are not represented,
  the value of the entropy game in horizon $k+1$ with initial state
  $6$ is precisely $ \unitvector^\top_3 A_3^k \unitvector_3$.

  The term $\alpha \unitvector^\top_{n-1} A_{n-1}^k \unitvector_{n-1}$ admits
  a similar interpretation, with $n$ People's states
  instead of $n+1$. One significant state encodes
  the first row of $A_{n-1}$, whereas the other
  significant state encodes the row vector $\alpha \unitvector^\top_{n-1}$.
  E.g., in the figure,
  the value of the entropy game in horizon $k+1$ with initial state
  $7$ is precisely $\alpha \unitvector^\top_2 A_2^k \unitvector_2$.\todo[color=red!30]{RK: I added $\alpha$ here.}

  We complete the construction of the entropy game $G_n(\Weights)$ by adding a 
  significant state of Tribune, with only two options:
  moving to the state encoding $\unitvector^\top_n$,
  or moving to the state encoding $\alpha \unitvector^\top_{n-1}$.
In the figure, this significant state of Tribune is labeled $8$.

Then, using the dynamic programming equation~\cref{e-dp-entropygame},
we see that the value of the corresponding entropy game in horizon $k$,
starting from the significant state of Tribune,
is precisely $z(k-1)$. Since $\lambda_{n}>\lambda_{n-1}$
(by~\Cref{prop-puiseux}),
  in the mean-payoff entropy game, the optimal action
  for Tribune is to move to the state encoding $\unitvector^\top_n$
  (move ``bottom left'' in the figure)
  which guarantees a geometric growth of $\lambda_n$. However,
  for $k\leq k^*+1$, the optimal action, for the initial move,
  is to select the term achieving the maximum in the expression
  of $z(k)$, and so, to move to the state encoding
  $\alpha \unitvector^\top_{n-1}$ (move ``bottom right''). 
\end{proof}

\section{Concluding Remarks}
We developed generic value iteration algorithms,
which apply to various classes of zero-sum games with mean
payoffs. These algorithms admit universal complexity
bounds, in an approximate oracle model -- we only need
an oracle evaluating approximately the Shapley operator.
These bounds involve
three fundamental ingredients: the number of states,
a separation bound between the values induced by different strategies,
and a bound on the norms of Collatz--Wielandt vectors.
We showed that entropy games with a fixed rank
(and in particular, entropy games with a fixed number
of People's states) are pseudo-polynomial time
solvable.
This should be compared with the result
of~\cite{entropygamejournal}, showing that entropy games
with a fixed number of Despot positions are polynomial-time
solvable. Since fixing the number of states of Despot or People
leads to improved complexity bounds, one may ask
whether entropy games with a fixed number of significant Tribune states
are polynomial or at least pseudo-polynomial, this is still an open question.

\section*{Acknowledgement}
Mateusz Skomra was partially supported by a PGMO young researcher grant of Fondation Mathématique Jacques Hadamard. The four authors thank the reviewers of the ICALP version of this work, as well as the reviewer of the present paper,
for helpful comments.
\bibliographystyle{elsarticle-num}
\bibliography{phd_bibliography}
\end{document}